\title{Characterizing Omega-Regularity through Finite-Memory Determinacy of Games on Infinite Graphs}
\tikzset{decoration={snake,amplitude=.5mm,segment length=3mm,post length=0.8mm,pre length=0mm}}
\newcommand{\IN}{\mathbb{N}}
\newcommand{\IZ}{\mathbb{Z}}
\newcommand{\IQ}{\mathbb{Q}}
\newcommand{\IR}{\mathbb{R}}
\renewcommand{\epsilon}{\varepsilon}
\DeclarePairedDelimiter\ceil{\lceil}{\rceil}
\DeclarePairedDelimiter\floor{\lfloor}{\rfloor}
\newcommand{\emptyPth}{\ensuremath{\lambda}}
\newcommand{\hist}{\ensuremath{h}}
\newcommand{\play}{\ensuremath{\rho}}
\newcommand{\lang}{\ensuremath{L}}
\newcommand{\clr}{\ensuremath{c}}
\newcommand{\colors}{\ensuremath{C}}
\newcommand{\s}{\ensuremath{s}}
\newcommand{\states}{\ensuremath{S}}
\newcommand{\edge}{\ensuremath{e}}
\newcommand{\edges}{\ensuremath{E}}
\newcommand{\edgeIn}{\ensuremath{\mathsf{in}}}
\newcommand{\edgeOut}{\ensuremath{\mathsf{out}}}
\newcommand{\arena}{\ensuremath{\mathcal{A}}}
\newcommand{\col}{\ensuremath{\mathsf{col}}}
\newcommand{\colHatFin}{\ensuremath{\mathsf{col}}}
\newcommand{\colHatInf}{\ensuremath{\mathsf{col}}}
\newcommand{\arenaFull}{\ensuremath{(\states, \states_1, \states_2, \edges)}}
\newcommand{\inverse}[1]{#1^{-1}}
\newcommand{\game}{\ensuremath{\mathcal{G}}}
\newcommand{\gameFull}{\ensuremath{(\arena, \wc)}}
\newcommand{\strat}{\ensuremath{\sigma}}
\newcommand{\strats}{\ensuremath{\Sigma}}
\newcommand*{\player}[1]{\ensuremath{\mathcal{P}_{#1}}}
\newcommand{\Pone}{\ensuremath{\player{1}}}
\newcommand{\Ptwo}{\ensuremath{\player{2}}}
\newcommand{\Plays}{\ensuremath{\mathsf{Plays}}}
\newcommand{\Hists}{\ensuremath{\mathsf{Hists}}}
\newcommand{\win}{\ensuremath{\mathsf{win}}}
\newcommand{\lose}{\ensuremath{\mathsf{lose}}}
\newcommand{\memSet}{\ensuremath{\Lambda}}
\newcommand*{\winCyc}[1]{\ensuremath{\Cycles_{#1}^\win}}
\newcommand*{\loseCyc}[1]{\ensuremath{\Cycles_{#1}^\lose}}
\newcommand*{\winCycWord}[1]{\ensuremath{\Cycles_\memSkel^{\win, #1}}}
\newcommand*{\loseCycWord}[1]{\ensuremath{\Cycles_\memSkel^{\lose, #1}}}
\newcommand*{\winCycWordAtmtn}[2]{\ensuremath{\Cycles_{#2}^{\win, #1}}}
\newcommand*{\loseCycWordAtmtn}[2]{\ensuremath{\Cycles_{#2}^{\lose, #1}}}
\newcommand{\cc}[1]{\ensuremath{\col(#1)}}
\newcommand{\emptyPthSymbol}{\ensuremath{\varepsilon}}
\newcommand{\word}{\ensuremath{w}}
\newcommand{\memState}{\ensuremath{m}}
\newcommand{\memStates}{\ensuremath{M}}
\newcommand{\memUpd}{\ensuremath{\alpha_{\mathsf{upd}}}}
\newcommand{\memNxt}{\ensuremath{\alpha_{\mathsf{nxt}}}}
\newcommand{\memUpdHat}{\ensuremath{\memUpd^*}}
\newcommand{\memInit}{\ensuremath{\memState_{\mathsf{init}}}}
\newcommand{\memSkel}{\ensuremath{\mathcal{M}}}
\newcommand{\memSkelFull}{\ensuremath{(\memStates, \memInit, \memUpd)}}
\newcommand{\memCycles}{\ensuremath{\Cycles_\memSkel}}
\newcommand*{\memCyclesOn}[1]{\ensuremath{\Cycles_{#1}}}
\newcommand*{\memPathsOn}[2]{\ensuremath{\Pi_{#1, #2}}}
\newcommand{\memWinCycles}{\ensuremath{\memCycles^\win}}
\newcommand{\memLoseCycles}{\ensuremath{\memCycles^\lose}}
\newcommand{\minStateAtmtn}{\ensuremath{\memSkel_\prefEq}}
\newcommand{\minStateStates}{\ensuremath{\memStates_\prefEq}}
\newcommand{\minStateInit}{\ensuremath{\memInit^\prefEq}}
\newcommand{\minStateUpd}{\ensuremath{\memUpd^\prefEq}}
\newcommand{\minStateAtmtnFull}{\ensuremath{(\minStateStates, \minStateInit, \minStateUpd)}}
\newcommand{\memWordSolo}{\ensuremath{\mathsf{skel}_\memSkel}}
\newcommand*{\memWord}[1]{\ensuremath{\memWordSolo(#1)}}
\newcommand{\parAtmtn}{\ensuremath{(\memSkel, \pri)}}
\newcommand{\parLang}{\ensuremath{\lang_{\parAtmtn}}}
\newcommand{\memProduct}{\ensuremath{\otimes}}
\newcommand*{\Buchi}[1]{\ensuremath{\mathsf{B\text{ü}chi}(#1)}}
\newcommand{\wc}{\ensuremath{W}} 
\newcommand*{\comp}[1]{\overline{#1}} 
\newcommand{\memSkelTriv}{\ensuremath{\memSkel_{\mathsf{triv}}}}
\newcommand*{\card}[1]{\ensuremath{\lvert#1\rvert}}
\newcommand{\Cycles}{\ensuremath{\Gamma}}
\newcommand{\prefOrd}{\ensuremath{\preceq}}
\newcommand{\invPrefOrd}{\ensuremath{\succeq}}
\newcommand{\cycOrd}{\ensuremath{\mathrel{\lhd}}}
\newcommand{\cycOrdInv}{\ensuremath{\mathrel{\rhd}}}
\newcommand{\cyc}{\ensuremath{\gamma}}
\newcommand{\memPth}{\ensuremath{\pi}}
\newcommand{\cycBis}{\ensuremath{\nu}}
\newcommand*{\cycStates}[1]{\ensuremath{\mathsf{st}(#1)}}
\newcommand*{\cycVal}[1]{\ensuremath{\mathsf{val}(#1)}}
\newcommand{\dom}{\ensuremath{\mathsf{domBy}}} 
\newcommand*{\compar}[1]{\ensuremath{\mathsf{comp(#1)}}} 
\newcommand{\wit}{\ensuremath{\overline{\cyc}}} 
\newcommand{\witBisBar}{\ensuremath{\overline{\cycBis}}}
\newcommand{\witBar}{\ensuremath{\wit}}
\newcommand{\emptyWord}{\ensuremath{\varepsilon}}
\newcommand{\pri}{\ensuremath{p}}
\newcommand{\priCyc}{\ensuremath{p_\Cycles}}
\newcommand{\maxPri}{\ensuremath{p^*}}
\newcommand{\prefEq}{\ensuremath{\sim}}
\newcommand*{\eqClass}[1]{\ensuremath{[#1]_\prefEq}}
\newcommand*{\eqClassCyc}[1]{\ensuremath{[#1]_\cycEq}}
\newcommand{\cycEq}{\ensuremath{\simeq}}
\newcommand{\prefOrdBis}{\ensuremath{\succ}}
\newcommand{\strictPrefOrd}{\ensuremath{\prec}}
\newcommand{\strictInvPrefOrd}{\ensuremath{\succ}}
\newcommand*{\quotient}[2]{{\raisebox{.2em}{$#1$}\!\left/\raisebox{-.2em}{$#2$}\right.}}
\newcommand{\disc}{\ensuremath{\lambda}}
\newcommand*{\DSSolo}[1]{\ensuremath{\mathsf{DS}_{#1}}}
\newcommand*{\DS}[2]{\ensuremath{\DSSolo{#1}(#2)}}
\newcommand{\DSObj}{\ensuremath{\mathsf{DS}_{\disc}^{\ge 0}}}
\newcommand{\gapSolo}{\ensuremath{\mathsf{gap}}}
\newcommand*{\gap}[1]{\ensuremath{\gapSolo(#1)}}
\newcommand{\MPSolo}{\ensuremath{\mathsf{MP}}}
\newcommand*{\MP}[1]{\ensuremath{\MPSolo(#1)}}
\newcommand{\MPObj}{\ensuremath{\mathsf{MP}^{\ge 0}}}
\newcommand{\TPSolo}{\ensuremath{\mathsf{TP}}}
\newcommand*{\TP}[1]{\ensuremath{\TPSolo(#1)}}
\newcommand{\TPObj}{\ensuremath{\mathsf{TP}^{\ge 0}}}
\newcommand{\maxDS}{\ensuremath{\mathsf{maxDS}}}
\newcommand{\minDS}{\ensuremath{\mathsf{minDS}}}
\newcommand{\disjUnion}{\ensuremath{\uplus}}
\newcommand{\bigDisjUnion}{\ensuremath{\biguplus}}
\newcommand{\intervalcc}[1]{\mathopen[#1\mathclose]}
\newcommand{\intervalco}[1]{\mathopen[#1\mathclose)}
\newcommand{\intervaloo}[1]{\mathopen(#1\mathclose)}
\tikzstyle{rond}=[draw,circle,minimum height=7mm]
\tikzstyle{oval}=[draw,ellipse,minimum height=7mm]
\tikzstyle{diamant}=[draw,diamond,minimum height=9mm,minimum width=9mm,aspect=1]
\tikzstyle{carre}=[draw,minimum width=6mm,minimum height=6mm]
\begin{document}

\maketitle

\begin{abstract}
	We consider zero-sum games on infinite graphs, with objectives specified as sets of infinite words over some alphabet of \emph{colors}.
	A well-studied class of objectives is the one of \emph{$\omega$-regular objectives}, due to its relation to many natural problems in theoretical computer science.
	We focus on the strategy complexity question: given an objective, how much memory does each player require to play as well as possible?
	A classical result is that finite-memory strategies suffice for both players when the objective is $\omega$-regular.
	We show a reciprocal of that statement: when both players can play optimally with a \emph{chromatic} finite-memory structure (i.e., whose updates can only observe colors) in all infinite game graphs, then the objective must be $\omega$-regular.
	This provides a game-theoretic characterization of $\omega$-regular objectives, and this characterization can help in obtaining memory bounds.
	Moreover, a by-product of our characterization is a new \emph{one-to-two-player lift}: to show that chromatic finite-memory structures suffice to play optimally in two-player games on infinite graphs, it suffices to show it in the simpler case of one-player games on infinite graphs.
	We illustrate our results with the family of discounted-sum objectives, for which $\omega$-regularity depends on the value of some parameters.
\end{abstract}

\section{Introduction} \label{sec:intro}
\subparagraph*{Games on graphs and synthesis.}
We study \emph{zero-sum turn-based games on infinite graphs}.
In such games, two players, $\Pone$ and $\Ptwo$, interact for an infinite duration on a graph, called an \emph{arena}, whose state space is partitioned into states controlled by $\Pone$ and states controlled by $\Ptwo$.
The game starts in some state of the arena, and the player controlling the current state may choose the next state following an edge of the arena.
Moves of the players in the game are prescribed by their \emph{strategy}, which can use information about the past of the play.
Edges of the arena are labeled with a (possibly infinite) alphabet of \emph{colors}, and the interaction of the players in the arena generates an \emph{infinite word} over this alphabet of colors.
These infinite words can be used to specify the players' objectives: a \emph{winning condition} is a set of infinite words, and $\Pone$ wins a game on a graph if the infinite word generated by its interaction with $\Ptwo$ on the game graph belongs to this winning condition --- otherwise, $\Ptwo$ wins.

This game-theoretic model has applications to the \emph{reactive synthesis} problem~\cite{BCJ18}: a system (modeled as $\Pone$) wants to guarantee some specification (the winning condition) against an uncontrollable environment (modeled as $\Ptwo$).
Finding a \emph{winning strategy} in the game for $\Pone$ corresponds to building a controller for the system that achieves the specification against all possible behaviors of the environment.

\subparagraph*{Strategy complexity.}
We are interested in the \emph{strategy complexity} question: given a winning condition, how \emph{complex} must winning strategies be, and how \emph{simple} can they be?
We are interested in establishing the sufficient and necessary amount of memory to play \emph{optimally}.
We consider in this work that an \emph{optimal strategy} in an arena must be winning from any state from which winning is possible (a property sometimes called \emph{uniformity} in the literature).
The amount of memory relates to how much information about the past is needed to play in an optimal way.
With regard to reactive synthesis, this has an impact in practice on the resources required for an optimal controller.

Three classes of strategies are often distinguished, depending on the number of states of memory they use: memoryless, finite-memory, and infinite-memory strategies.
A notable subclass of finite-memory strategies is the class of strategies that can be implemented with finite-memory structures that only observe the sequences of colors (and not the sequences of states nor edges).
Such memory structures are called \emph{chromatic}~\cite{KopThesis}.
By contrast, finite-memory structures that have access to the states and edges of arenas are called \emph{general}.
Chromatic memory structures are syntactically less powerful and may require more states than general ones~\cite{Cas22}, but have the benefit that they can be defined independently of arenas.

We seek to characterize the winning conditions for which chromatic-finite-memory strategies suffice to play optimally against arbitrarily complex strategies, for both players, in all finite and infinite arenas.
We call this property \emph{chromatic-finite-memory determinacy}.
This property generalizes \emph{memoryless determinacy}, which describes winning conditions for which memoryless strategies suffice to play optimally for both players in all arenas.
Our work follows a line of research~\cite{BLORV20,BORV21} giving various characterizations of chromatic-finite-memory determinacy for games on \emph{finite} arenas (see Remark~\ref{rem:AIFMvsFM} for more details).

\subparagraph*{\texorpdfstring{$\omega$}{Omega}-regular languages.}
A class of winning conditions commonly arising as natural specifications for reactive systems (it encompasses, e.g., linear temporal logic specifications~\cite{Pnu77}) consists of the \emph{$\omega$-regular languages}.
They are, among other characterizations, the languages of infinite words that can be described by a \emph{finite parity automaton}~\cite{Mos84}.
It is known that all $\omega$-regular languages are chromatic-finite-memory determined, which is due to the facts that an $\omega$-regular language is expressible with a parity automaton, and that \emph{parity conditions} admit memoryless optimal strategies~\cite{Kla94,Zie98}.
Multiple works study the strategy complexity of $\omega$-regular languages, giving, e.g., precise general memory requirements for all Muller conditions~\cite{DJW97} or a characterization of the chromatic memory requirements of Muller conditions~\cite[Theorem~28]{Cas22}.

A result in the other direction is given by Colcombet and Niwi\'nski~\cite{CN06}: they showed that if a \emph{prefix-independent} winning condition is memoryless-determined in infinite arenas, then this winning condition must be a parity condition.
As parity conditions are memoryless-determined, this provides an elegant characterization of parity conditions from a strategic perspective, under prefix-independence assumption.

\subparagraph*{Congruence.}
A well-known tool to study a language $\lang$ of finite (resp.\ infinite) words is its \emph{right congruence relation $\prefEq_\lang$}: for two finite words $\word_1$ and $\word_2$, we write $\word_1 \prefEq_\lang \word_2$ if for all finite (resp.\ infinite) words $\word$, $\word_1\word \in \lang$ if and only if $\word_2\word\in\lang$.
There is a natural deterministic (potentially infinite) automaton recognizing the equivalence classes of the right congruence, called the \emph{minimal-state automaton of $\prefEq_\lang$}~\cite{Sta83,MS97}.

The relation between a regular language of \emph{finite} words and its right congruence is given by the Myhill-Nerode theorem~\cite{Ner58}, which provides a natural bijection between the states of the minimal deterministic automaton recognizing a regular language and the equivalence classes of its right congruence relation.
Consequences of this theorem are that a language is regular if and only if its right congruence has finitely many equivalence classes, and a regular language can be recognized by the minimal-state automaton of its right congruence.

For the theory of languages of \emph{infinite} words, the situation is not so simple: $\omega$-regular languages have a right congruence with finitely many equivalence classes, but having finitely many equivalence classes does not guarantee $\omega$-regularity (for example, a language is \emph{prefix-independent} if and only if its right congruence has exactly one equivalence class, but this does not imply $\omega$-regularity).
Moreover, $\omega$-regular languages cannot necessarily be recognized by adding a natural acceptance condition (parity, Rabin, Muller\ldots) to the minimal-state automaton of their right congruence~\cite{AF21}.
There has been multiple works about the links between a language of infinite words and the minimal-state automaton of its right congruence; one relevant question is to understand when a language can be recognized by this minimal-state automaton~\cite{Sta83,MS97,AF21}.

\subparagraph*{Contributions.}
We characterize the $\omega$-regularity of a language of infinite words $\wc$ through the strategy complexity of the zero-sum turn-based games on infinite graphs with winning condition $\wc$: the $\omega$-regular languages are \emph{exactly} the chromatic-finite-memory determined languages (seen as winning conditions) (Theorem~\ref{thm:regular}).
As discussed earlier, it is well-known that $\omega$-regular languages admit chromatic-finite-memory optimal strategies~\cite{Mos84,Zie98,Cas22} --- our results yield the other implication.
This therefore provides a characterization of $\omega$-regular languages through a game-theoretic and strategic lens.

Our technical arguments consist in providing a precise connection between the representation of $\wc$ as a parity automaton and a chromatic memory structure sufficient to play optimally.
If strategies based on a chromatic finite-memory structure are sufficient to play optimally for both players, then $\wc$ is recognized by a parity automaton built on top of the direct product of the \emph{minimal-state automaton of the right congruence} and this \emph{chromatic memory structure} (Theorem~\ref{thm}).
This result generalizes the work from Colcombet and Niwi\'nski~\cite{CN06} in two ways: by relaxing the prefix-independence assumption about the winning condition, and by generalizing the class of strategies considered from memoryless to chromatic-finite-memory strategies.
We recover their result as a special case.

Moreover, we actually show that chromatic-finite-memory determinacy in \emph{one-player} games of both players is sufficient to show $\omega$-regularity of a language.
As $\omega$-regular languages are chromatic-finite-memory determined in two-player games, we can reduce the problem of chromatic-finite-memory determinacy of a winning condition in \emph{two-player} games to the easier chromatic-finite-memory determinacy in \emph{one-player} games (Theorem~\ref{thm:lift}).
Such a \emph{one-to-two-player lift} holds in multiple classes of zero-sum games, such as deterministic games on finite arenas~\cite{GZ05,BLORV20,Koz22} and stochastic games on finite arenas~\cite{GZ09,BORV21}.
The proofs for finite arenas all rely on an \emph{edge-induction technique} (also used in other works about strategy complexity in finite arenas~\cite{Kop06,GK14,CD16}) that appears unfit to deal with infinite arenas.
Although not mentioned by Colcombet and Niwi\'nski, it was already noticed~\cite{KopThesis} that for prefix-independent winning conditions in games on infinite graphs, a one-to-two-player lift for \emph{memoryless} determinacy follows from~\cite{CN06}.

\subparagraph*{Related works.}
We have already mentioned~\cite{DJW97,Zie98,CN06,Kop07,Cas22} for fundamental results on the memory requirements of $\omega$-regular conditions, ~\cite{GZ05,GZ09,BLORV20,BORV21} for characterizations of ``low'' memory requirements in finite (deterministic and stochastic) arenas, and~\cite{Sta83,MS97,AF21} for links between an $\omega$-regular language and the minimal-state automaton of its right congruence.

One stance of our work is that our assumptions about strategy complexity affect \emph{both} players.
Another intriguing question is to understand when the memory requirements of only \emph{one} player are finite.
In finite arenas, a few results in this direction are sufficient conditions for the existence of memoryless optimal strategies for one player~\cite{Kop06,BFMM11}, and a proof by Kopczy\'nski that the chromatic memory requirements of prefix-independent $\omega$-regular conditions are computable~\cite{Kop07,KopThesis}.

Other articles study the strategy complexity of (non-necessarily $\omega$-regular) winning conditions in infinite arenas; see, e.g.,~\cite{Gim04,GW06,CFH14}.
In such non-$\omega$-regular examples, as can be expected given our main result, at least one player needs infinite memory to play optimally, or the arena model is different from ours (e.g., only allowing finite branching --- we discuss such differences in more depth after Theorem~\ref{thm}).
A particularly interesting example w.r.t.\ our results is considered by Chatterjee and Fijalkow~\cite{CF13}.
They study the strategy complexity of \emph{finitary B\"uchi and parity conditions}, and show that $\Pone$ has memoryless optimal strategies for finitary B\"uchi and chromatic-finite-memory optimal strategies for finitary parity.%
\footnote{We argue in Appendix~\ref{app:CF13} that their result also applies to our slightly different setting.}
However, for these (non-$\omega$-regular) winning conditions, $\Ptwo$ needs infinite memory.
This example illustrates that our main result would not hold if we just focused on the strategy complexity of one player.

We mention other works on finite-memory determinacy in different contexts: finite arenas~\cite{LPR18}, non-zero-sum games~\cite{LP18}, countable one-player stochastic games~\cite{KMSTW20}, concurrent games~\cite{LeR20,BLT22}.

This paper extends and complements a preceding conference version~\cite{BRV22} with additional details and complete proofs of all the statements.

\subparagraph*{Structure.}
We fix definitions and notations in Section~\ref{sec:preliminaries}.
Our main results are discussed in Section~\ref{sec:thm}, and their proofs lie in Sections~\ref{sec:proof1} and~\ref{sec:proof2}.
We provide applications of our results to discounted-sum, mean-payoff, and total-payoff winning conditions in Section~\ref{sec:application}.

\section{Preliminaries} \label{sec:preliminaries}
Let $\colors$ be an arbitrary non-empty set of \emph{colors}.
Given a set $A$, we write $A^*$ for the set of finite sequences of elements of $A$ and $A^\omega$ for the set of infinite sequences of elements of $A$.

\subparagraph*{Arenas.}
We consider two players $\Pone$ and $\Ptwo$.
An arena is a tuple $\arena = \arenaFull$ such that $\states = \states_1 \disjUnion \states_2$ (disjoint union) is a non-empty set of \emph{states} and $\edges\subseteq\states\times\colors\times\states$ is a set of \emph{edges}.
The sets of colors, of states, and of edges may be infinite (of arbitrary cardinality), and we allow arenas with infinite branching.
States in $\states_1$ are controlled by $\Pone$ and states in $\states_2$ are controlled by $\Ptwo$.
Given $\edge\in\edges$, we denote by $\edgeIn$, $\col$, and $\edgeOut$ the projections to its first, second, and third component, respectively (i.e., $\edge = (\edgeIn(\edge), \col(\edge), \edgeOut(\edge))$).
We assume arenas to be \emph{non-blocking}: for all $\s\in\states$, there exists $\edge\in\edges$ such that $\edgeIn(\edge) = \s$.

Let $\arena = \arenaFull$ be an arena with $\s\in\states$.
We denote by $\Plays(\arena, \s)$ the set of \emph{plays of~$\arena$ from~$\s$}, that is, infinite sequences of edges $\play = \edge_1\edge_2\ldots \in \edges^\omega$ such that $\edgeIn(\edge_1) = \s$ and for all $i\ge 1$, $\edgeOut(\edge_i) = \edgeIn(\edge_{i+1})$.
For $\play\in\Plays(\arena, \s)$, we denote by $\colHatInf(\play)$ the infinite sequence of colors obtained from applying the $\col$ function to each edge in $\play$.
We denote by $\Hists(\arena, \s)$ the set of \emph{histories of $\arena$ from $\s$}, which are all finite prefixes of plays of $\arena$ from $\s$.
We write $\Plays(\arena)$ and $\Hists(\arena)$ for the sets of all plays of $\arena$ and all histories of $\arena$ (from any state), respectively.
If $\hist = \edge_1\ldots\edge_k$ is a history of $\arena$, we define $\edgeIn(\hist) = \edgeIn(\edge_1)$ and $\edgeOut(\hist) = \edgeOut(\edge_k)$.
For convenience, for every $\s\in\states$, we also consider the \emph{empty history $\emptyPth_\s$} from $\s$, and we set $\edgeIn(\emptyPth_\s) = \edgeOut(\emptyPth_\s) = \s$.
For $i\in\{1, 2\}$, we denote by $\Hists_i(\arena)$ the set of histories $\hist$ such that~$\edgeOut(\hist)\in\states_i$.

An arena $\arena = \arenaFull$ is a \emph{one-player arena of $\Pone$} (resp.\ \emph{$\Ptwo$}) if $\states_2 = \emptyset$ (resp.\ $\states_1 = \emptyset$).

\subparagraph*{Skeletons.}
A \emph{skeleton} is a tuple $\memSkel = \memSkelFull$ such that $\memStates$ is a finite set of \emph{states}, $\memInit\in\memStates$ is an \emph{initial state}, and $\memUpd\colon\memStates\times\colors\to\memStates$ is an \emph{update function}.
We denote by $\memUpdHat$ the natural extension of $\memUpd$ to finite sequences of colors.
We always assume that all states of skeletons are reachable from their initial state.
We define the trivial skeleton $\memSkelTriv$ as the only skeleton with a single state.
Notice that although we require skeletons to have finitely many states, we allow them to have infinitely many transitions (which happens when $\colors$ is infinite).

For $\word = \clr_1\clr_2\ldots\in\colors^\omega$, we define $\memWord{\word}$ as the infinite sequence $(\memState_1, \clr_1)(\memState_2, \clr_2)\ldots \in (\memStates\times\colors)^\omega$ that $\word$ induces in the skeleton ($\memState_1 = \memInit$ and for all $i\ge 1$, $\memUpd(\memState_i, \clr_i) = \memState_{i+1}$).

Let $\memSkel_1 = (\memStates_1, \memInit^1, \memUpd^1)$ and $\memSkel_2 = (\memStates_2, \memInit^2, \memUpd^2)$ be two skeletons.
Their \textit{(direct) product} $\memSkel_1 \memProduct \memSkel_2$ is the skeleton $\memSkelFull$ where $\memStates = \memStates_1 \times \memStates_2$, $\memInit = (\memInit^1, \memInit^2)$, and, for all $\memState_1 \in \memStates_1$, $\memState_2 \in \memStates_2$, $\clr \in \colors$, $\memUpd((\memState_1, \memState_2), \clr) = (\memUpd^1(\memState_1, \clr), \memUpd^2(\memState_2, \clr))$.

\subparagraph*{Strategies.}
Let $\arena = \arenaFull$ be an arena and $i\in\{1, 2\}$.
A \emph{strategy of $\player{i}$ on $\arena$} is a function $\strat_i\colon \Hists_i(\arena) \to \edges$ such that for all $\hist\in\Hists_i(\arena)$, $\edgeOut(\hist) = \edgeIn(\strat_i(\hist))$.
We denote by $\strats_i(\arena)$ the set of strategies of $\player{i}$ on $\arena$.
Given a strategy $\strat_i$ of $\player{i}$, we say that a play $\play = \edge_1\edge_2\ldots$ is \emph{consistent with $\strat_i$} if for all finite prefixes $\hist = \edge_1\ldots\edge_j$ of $\play$ such that $\edgeOut(\hist) \in \states_i$, $\strat_i(\hist) = \edge_{j+1}$.
For $\s\in\states$, we denote by $\Plays(\arena, \s, \strat_i)$ the set of plays from $\s$ that are consistent with $\strat_i$.

For $\memSkel = \memSkelFull$ a skeleton, a strategy $\strat_i\in\strats_i(\arena)$ is \emph{based on (memory) $\memSkel$} if there exists a function $\memNxt\colon \states \times \memStates \to \edges$ such that for all $\s\in\states_i$, $\strat_i(\emptyPth_\s) = \memNxt(\s, \memInit)$, and for all non-empty paths $\hist\in\Hists_i(\arena)$, $\strat_i(\hist) = \memNxt(\edgeOut(\hist), \memUpdHat(\memInit, \colHatFin(\hist)))$.
A strategy is \emph{memoryless} if it is based on $\memSkelTriv$.

\begin{remark}
	Our memory model is \emph{chromatic}~\cite{KopThesis}, i.e., it observes the sequences of colors and not the sequences of edges of arenas, due to the fact that the argument of the update function of a skeleton is in $\memStates\times\colors$.
	It was recently shown that the amount of memory states required to play optimally for a winning condition using chromatic skeletons may be strictly larger than using \emph{general} memory structures (i.e., using memory structures observing edges)~\cite[Proposition~32]{Cas22}, disproving a conjecture by Kopczy\'nski~\cite{KopThesis}.
	The example provided is a Muller condition (hence an $\omega$-regular condition), in which both kinds of memory requirements are still finite.
	A result in this direction is also provided by Le Roux~\cite[Corollary~1]{LeR20} for games on \emph{finite} arenas: it shows that in many games, a strategy using general finite memory can be swapped for a (larger) chromatic finite memory.

	For games on infinite arenas, which we consider in this article, we do not know whether there exists a winning condition with \emph{finite} general memory requirements, but \emph{infinite} chromatic memory requirements.
	Our results focus on chromatic memory requirements.
\end{remark}

\subparagraph*{Winning conditions.}
A \emph{(winning) condition} is a set $\wc\subseteq \colors^\omega$.
When a winning condition $\wc$ is clear in the context, we say that an infinite word $\word\in\colors^\omega$ is winning if $\word\in\wc$, and losing if $\word\not\in\wc$.
For a winning condition $\wc$ and a word $\word\in\colors^*$, we write $\inverse{\word}\wc = \{\word'\in\colors^\omega\mid \word\word'\in\wc\}$ for the set of \emph{winning continuations of $\word$}.
We write $\comp{\wc}$ for the complement $\colors^\omega\setminus\wc$ of a winning condition $\wc$.

A \emph{game} is a tuple $\game = \gameFull$ where $\arena$ is an arena and $\wc$ is a winning condition.

\subparagraph*{Optimality and determinacy.}
Let $\game = (\arena = \arenaFull, \wc)$ be a game, and $\s\in\states$.
We say that \emph{$\strat_1\in\strats_1(\arena)$ is winning from $\s$} if $\colHatInf(\Plays(\arena, \s, \strat_1)) \subseteq \wc$, and we say that \emph{$\strat_2\in\strats_2(\arena)$ is winning from $\s$} if $\colHatInf(\Plays(\arena, \s, \strat_2)) \subseteq \comp{\wc}$.

A strategy of $\player{i}$ is \emph{optimal for $\player{i}$ in $(\arena, \wc)$} if it is winning from all the states from which~$\player{i}$ has a winning strategy.
We often write \emph{optimal for $\player{i}$ in $\arena$} if the winning condition $\wc$ is clear from the context.
We stress that this notion of optimality requires a \emph{single} strategy to be winning from \emph{all} the winning states (a property sometimes called \emph{uniformity}).

A winning condition $\wc$ is \emph{determined} if for all games $\game = (\arena = \arenaFull, \wc)$, for all $\s\in\states$, either $\Pone$ or $\Ptwo$ has a winning strategy from $\s$.
Let $\memSkel$ be a skeleton.
We say that a winning condition $\wc$ is \emph{$\memSkel$-determined} if $(i)$ $\wc$ is determined and $(ii)$ in all arenas $\arena$, both players have an optimal strategy based on $\memSkel$.
A winning condition $\wc$ is \emph{one-player $\memSkel$-determined} if in all one-player arenas $\arena$ of $\Pone$, $\Pone$ has an optimal strategy based on $\memSkel$ \emph{and} in all one-player arenas $\arena$ of $\Ptwo$, $\Ptwo$ has an optimal strategy based on $\memSkel$.
A winning condition $\wc$ is (one-player) \emph{memoryless-determined} if it is (one-player) $\memSkelTriv$-determined.
A winning condition~$\wc$ is \emph{(one-player) chromatic-finite-memory determined} if there exists a skeleton $\memSkel$ such that it is (one-player) $\memSkel$-determined.

\begin{remark}
	In the (one-player) $\memSkel$-determinacy definition, we enforce that $\memSkel$ suffices to play optimally for \emph{both} players (in their respective one-player arenas).
	The memory requirements of players differ in general: a typical example is the case of \emph{Rabin conditions}, for which $\Pone$ has memoryless optimal strategies, but for which $\Ptwo$ (who tries to achieve a \emph{Streett condition}) may need memory~\cite{Zie98}.
	In practice, if in all arenas, $\Pone$ has optimal strategies based on a skeleton~$\memSkel_1$ and $\Ptwo$ has optimal strategies based on a skeleton $\memSkel_2$, then both players have optimal strategies based on $\memSkel_1 \memProduct \memSkel_2$ (they can play the same strategies, simply not taking into account the information given to them by the other skeleton); hence, $\wc$ is $(\memSkel_1 \memProduct \memSkel_2)$-determined.
\end{remark}

\begin{remark} \label{rem:AIFMvsFM}
	It might seem surprising that for chromatic-finite-memory determinacy, we require the existence of a \emph{single} skeleton that suffices to play optimally in \emph{all} arenas, rather than the seemingly weaker existence, for each arena, of a finite skeleton (which may depend on the arena) that suffices to play optimally.
	In infinite arenas, it turns out that these notions are equivalent.
	\begin{lemma} \label{lem:AIFMisFM}
		Let $\wc\subseteq\colors^\omega$ be a winning condition.
		The following are equivalent:
		\begin{enumerate}
			\item for all arenas $\arena$, there exists a skeleton $\memSkel^\arena$ such that both players have an optimal strategy based on $\memSkel^\arena$ in $\arena$;\label{enum:1}
			\item $\wc$ is chromatic-finite-memory determined.\label{enum:2}
		\end{enumerate}
	\end{lemma}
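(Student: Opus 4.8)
The direction \ref{enum:2}${}\Rightarrow{}$\ref{enum:1} is immediate: if $\wc$ is $\memSkel$-determined, then choosing $\memSkel^\arena = \memSkel$ for every arena $\arena$ witnesses \ref{enum:1}. The content lies in \ref{enum:1}${}\Rightarrow{}$\ref{enum:2}, and the plan is a \emph{disjoint-union argument}.

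I would first isolate a routine \emph{localization} sub-claim about disjoint unions: if $\arena = \bigDisjUnion_{j\in J}\arena_j$ for an arbitrary index set $J$, then no play of $\arena$ ever leaves the component in which it starts; consequently, for each $j$ and each player $\player{i}$, the set of states of $\arena_j$ from which $\player{i}$ wins in $\arena$ coincides with the set of states from which $\player{i}$ wins in $\arena_j$, and restricting any strategy $\strat_i \in \strats_i(\arena)$ to the histories that stay inside $\arena_j$ yields a strategy on $\arena_j$ that is winning from a state $\s$ of $\arena_j$ exactly when $\strat_i$ is winning from $\s$ in $\arena$, and that is based on $\memSkel$ whenever $\strat_i$ is (the restriction leaves the skeleton and its update function untouched). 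In particular, if both players have an optimal strategy based on a single skeleton $\memSkel$ in $\arena$, then so do they in each $\arena_j$.

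Next, assume \ref{enum:1} and suppose, for contradiction, that no single skeleton works for all arenas, where I say that $\memSkel$ \emph{works for $\arena$} if both players have an optimal strategy based on $\memSkel$ in $\arena$. Since $\colors$ is a set and isomorphic skeletons are interchangeable for every notion at play, there is a \emph{set} $\mathcal{S}$ of skeletons over $\colors$ that meets every isomorphism type. For each $\memSkel \in \mathcal{S}$, our assumption for contradiction yields an arena $\arena_\memSkel$ for which $\memSkel$ does not work; let $\arena^\star = \bigDisjUnion_{\memSkel \in \mathcal{S}} \arena_\memSkel$, which is a legitimate arena precisely because $\mathcal{S}$ is a set. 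Applying \ref{enum:1} to $\arena^\star$ yields a skeleton $\memSkel^\star$, which we may take to lie in $\mathcal{S}$, such that both players have an optimal strategy based on $\memSkel^\star$ in $\arena^\star$; by the localization sub-claim, both players then have an optimal strategy based on $\memSkel^\star$ in the component $\arena_{\memSkel^\star}$, contradicting the choice of $\arena_{\memSkel^\star}$. Hence some skeleton works for all arenas; together with determinacy of $\wc$ --- which, as in the definition of $\memSkel$-determinacy, we read as part of both \ref{enum:1} and \ref{enum:2} --- this is exactly \ref{enum:2}.

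I expect the only genuinely delicate point to be the localization sub-claim, specifically its clause about \emph{optimality} (a statement that quantifies over \emph{all} winning states), which hinges on the winning region of a disjoint union splitting componentwise; the rest is bookkeeping, the one further ingredient being the small but essential set-theoretic observation that makes the large disjoint union $\arena^\star$ a well-defined arena.
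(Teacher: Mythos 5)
Your proof is correct and follows essentially the same route as the paper: the paper also argues by assuming no single skeleton works, picking for each skeleton $\memSkel$ a counterexample arena $\arena^\memSkel$, and taking the disjoint union of all these arenas to contradict item~\ref{enum:1}. The only difference is that you spell out two points the paper leaves implicit --- the componentwise localization of winning regions and optimal strategies in a disjoint union, and the set-theoretic remark that one may restrict to a set of skeletons up to isomorphism so that the big union is a legitimate arena --- which is a reasonable tightening but not a different argument.
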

	\begin{proof}
		It is clear that \ref{enum:2}.$\implies$\ref{enum:1}., as~\ref{enum:2}.\ means that there is a (fixed) skeleton that suffices in each arena.
		We now show \ref{enum:1}.$\implies$\ref{enum:2}.
		We proceed by contraposition.
		Assume that~\ref{enum:2}.\ does not hold, i.e., that for all skeletons $\memSkel$, there exists $\arena^\memSkel = (\states^\memSkel, \states_1^\memSkel, \states_2^\memSkel, \edges^\memSkel)$ such that at least one player does not have an optimal strategy based on $\memSkel$ in $\arena^\memSkel$.
		We consider the arena $\arena = (\bigDisjUnion_\memSkel \states^\memSkel, \bigDisjUnion_\memSkel \states_1^\memSkel, \bigDisjUnion_\memSkel \states_2^\memSkel, \bigDisjUnion_\memSkel \edges^\memSkel)$ consisting in the ``disjoint union'' over all skeletons $\memSkel$ of the arenas $\arena^\memSkel$.
		Clearly, no strategy based on a skeleton suffices to play optimally in $\arena$; this shows that~\ref{enum:1}.\ does not hold.
	\end{proof}
 \begin{sloppypar}
	When restricted to finite arenas, we do not have an equivalence between these two notions (hence the distinction between finite-memory determinacy and \emph{arena-independent} finite-memory determinacy~\cite{BLORV20,BORV21}).
	Our proof of Lemma~\ref{lem:AIFMisFM} exploits that an infinite union of arenas is still an arena, which is not true when restricted to finite arenas.
 \end{sloppypar}
\end{remark}

\subparagraph*{\texorpdfstring{$\omega$}{Omega}-regular languages.}
We define a \emph{parity automaton} as a pair $\parAtmtn$ where $\memSkel$ is a skeleton and $\pri\colon \memStates\times\colors \to \{0, \ldots, n\}$; function $\pri$ assigns \emph{priorities} to every transition of $\memSkel$.
This definition implies that we consider deterministic and complete parity automata (i.e., in every state, reading a color leads to exactly one state).
Following~\cite{CCF21}, if $\memSkel$ is a skeleton, we say that a parity automaton $(\memSkel', \pri)$ is \emph{defined on top of $\memSkel$} if $\memSkel' = \memSkel$.

A parity automaton $\parAtmtn$ defines a language $\parLang$ of all the infinite words $\word\in\colors^\omega$ such that, for $\memWord{\word} = (\memState_1, \clr_1)(\memState_2, \clr_2)\ldots$, $\limsup_{i\ge 1} \pri(\memState_i, \clr_i)$ is even.
We say that $\wc\subseteq \colors^\omega$ is \emph{recognized by $\parAtmtn$} if $\wc = \parLang$.
We emphasize that we consider here \emph{transition-based} parity acceptance conditions: we assign priorities to transitions, and not to states of $\memSkel$.
For further information on links between state-based and transition-based acceptance conditions, we refer to~\cite{Cas22}.
If a language of infinite words can be recognized by a parity automaton, it is called \emph{$\omega$-regular}.

\begin{remark} \label{rem:finColors}
	Formally, a deterministic parity automaton should be defined on a finite set of colors, whereas here our set of colors $\colors$ can have any cardinality.
	However, given a parity automaton $(\memSkel = \memSkelFull, \pri)$, as there are finitely many states in $\memSkel$ and finitely many priorities, there are in practice only finitely many ``truly different'' classes of colors: two colors $\clr_1, \clr_2\in\colors$ can be assumed to be equal (w.r.t.\ $\parAtmtn$) if $\memUpd(\cdot, \clr_1) = \memUpd(\cdot, \clr_2)$ and $\pri(\cdot, \clr_1) = \pri(\cdot, \clr_2)$.
\end{remark}

\subparagraph*{Right congruence.}
For $\sim$ an equivalence relation, we call the \emph{index of $\sim$} the number of equivalence classes of $\sim$.
We denote by $\eqClass{a}$ the equivalence class of an element $a$ for $\prefEq$.

Let $\wc$ be a winning condition.
We define the \emph{right congruence ${\prefEq_\wc} \subseteq \colors^*\times\colors^*$ of $\wc$} as $\word_1\prefEq_\wc\word_2$ if $\inverse{\word_1}\wc = \inverse{\word_2}\wc$ (meaning that $\word_1$ and $\word_2$ have the same winning continuations).
Relation $\prefEq_\wc$ is an equivalence relation.
When $\wc$ is clear from the context, we write $\prefEq$ for $\prefEq_\wc$.
We denote by $\emptyWord$ the empty word.
When $\prefEq$ has finite index, we can associate a natural skeleton $\minStateAtmtn = \minStateAtmtnFull$ to $\prefEq$ such that $\minStateStates$ is the set of equivalence classes of $\prefEq$, $\minStateInit = \eqClass{\emptyWord}$, and $\minStateUpd(\eqClass{\word}, \clr) = \eqClass{\word\clr}$.
This transition function is well-defined since it follows from the definition of $\prefEq$ that if $\word_1 \prefEq \word_2$, then for all $\clr\in\colors$, $\word_1\clr \prefEq \word_2\clr$.
Hence, the choice of representatives for the equivalence classes does not have an impact in this definition.
We call skeleton $\minStateAtmtn$ the \emph{minimal-state automaton of $\prefEq$}~\cite{Sta83,MS97}.

\section{Concepts and characterization} \label{sec:thm}
We define two concepts at the core of our characterization, one of them dealing with \emph{prefixes} and the other one dealing with \emph{cycles}.
Let $\wc\subseteq\colors^\omega$ be a winning condition and $\memSkel = \memSkelFull$ be a skeleton.
We first introduce some notations to refer to sequences of transitions of skeletons.

We say that a non-empty sequence $\memPth = (\memState_1, \clr_1)\ldots(\memState_k,\clr_k)\in (\memStates\times\colors)^+$ is a \emph{path of $\memSkel$ (from $\memState_1$ to $\memUpd(\memState_k, \clr_k)$)} if for all $i\in\{1,\ldots,k-1\}$, $\memUpd(\memState_i, \clr_i) = \memState_{i+1}$.
For convenience, we also consider every element $(\memState, \emptyPthSymbol)$ for $\memState\in\memStates$ and $\emptyPthSymbol\notin\colors$ to be an \emph{empty path of $\memSkel$ (from $\memState$ to $\memState$)}.
A non-empty path of $\memSkel$ from $\memState$ to $\memState'$ is a \emph{cycle of $\memSkel$ (on $\memState$)} if $\memState = \memState'$.
Cycles of $\memSkel$ are usually denoted by letter $\cyc$.
For $\memPth = (\memState_1, \clr_1)\ldots(\memState_k,\clr_k)$ a path of $\memSkel$, we define $\cycStates{\memPth}$ to be the set $\{\memState_1,\ldots,\memState_k\}$, and $\colHatFin(\memPth)$ to be the sequence $\clr_1\ldots\clr_k\in\colors^*$.
For an infinite sequence $(\memState_1, \clr_1)(\memState_2, \clr_2)\ldots\in(\memStates\times\colors)^\omega$, we also write $\colHatInf((\memState_1, \clr_1)(\memState_2, \clr_2)\ldots)$ for the infinite sequence $\clr_1\clr_2\ldots\in\colors^\omega$.
If $(\memState, \clr)\in\memStates\times\colors$ occurs in a path $\memPth$ of $\memSkel$, we call $(\memState, \clr)$ a \emph{transition of $\memPth$} and we write $(\memState, \clr)\in\memPth$.

For $\memState,\memState'\in\memStates$, we write $\memPathsOn{\memState}{\memState'}$ for the set of paths of $\memSkel$ from $\memState$ to $\memState'$, $\memCyclesOn{\memState}$ for the set of cycles of $\memSkel$ on $\memState$, and $\memCycles$ for the set of all cycles of $\memSkel$ (on any skeleton state).
We extend notation $\col$ to sets of paths or cycles (e.g., $\cc{\memCycles} = \{\colHatFin(\cyc) \in \colors^+ \mid \cyc \in \memCycles\}$).

\subparagraph*{Prefix-independence.}
Let $\prefEq$ be the right congruence of $\wc$.

\begin{definition}
	Condition $\wc$ is \emph{$\memSkel$-prefix-independent} if for all $\memState\in\memStates$, for all $\word_1, \word_2\in\cc{\memPathsOn{\memInit}{\memState}}$, $\word_1 \prefEq \word_2$.
\end{definition}
In other words, $\wc$ is $\memSkel$-prefix-independent if finite words reaching the same state of $\memSkel$ from its initial state have the same winning continuations.
The classical notion of \emph{prefix-independence} is equivalent to $\memSkelTriv$-prefix-independence (as all finite words have the exact same set of winning continuations, which is $\wc$).
If $\prefEq$ has finite index, $\wc$ is in particular $\minStateAtmtn$-prefix-independent: indeed, two finite words reach the same state of $\minStateAtmtn$ (if and) only if they are equivalent for $\prefEq$.
Any skeleton $\memSkel$ such that $\wc$ is $\memSkel$-prefix-independent must have at least one state for each equivalence class of $\prefEq$, but may have multiple states tied to the same equivalence class.

\subparagraph*{Cycle-consistency.}
For $\word\in\colors^*$, we define
\[
	\winCycWord{\word} = \{\cyc \in \memCyclesOn{\memState} \mid \memState = \memUpdHat(\memInit, \word)\ \text{and}\ (\colHatFin(\cyc))^\omega \in \inverse{\word}\wc\}
\]
as the cycles on the skeleton state reached by $\word$ in $\memSkel$ that induce winning words when repeated infinitely many times after $\word$.
We define
\[
\loseCycWord{\word} = \{\cyc \in \memCyclesOn{\memState} \mid \memState = \memUpdHat(\memInit, \word)\ \text{and}\ (\colHatFin(\cyc))^\omega \in \inverse{\word}\comp{\wc}\}
\]
as their losing counterparts.
We emphasize that cycles are allowed to go through the same edge multiple times.

\begin{definition}
	Condition $\wc$ is \emph{$\memSkel$-cycle-consistent} if for all $\word\in\colors^*$, $(\cc{\winCycWord{\word}})^\omega \subseteq \inverse{\word}\wc$ and $(\cc{\loseCycWord{\word}})^\omega \subseteq \inverse{\word}\comp{\wc}$.
\end{definition}
What this says is that after any finite word, if we concatenate infinitely many winning (resp.\ losing) cycles on the skeleton state reached by that word, then it only produces winning (resp.\ losing) infinite words.

\begin{example} \label{ex:buchiABuchiB}
	For $x\in\colors$, let $\Buchi{x}$ be the set of infinite words on $\colors$ that see color $x$ infinitely often.
	Let $\colors = \{a, b, c\}$.
	Condition $\wc = \Buchi{a}\cap\Buchi{b}$ is $\memSkelTriv$-prefix-independent, but not $\memSkelTriv$-cycle-consistent: for any $\word\in\colors^*$, $a$ and $b$ are both in $\cc{\loseCycWordAtmtn{\word}{\memSkelTriv}}$ (as $\word a^\omega$ and $\word b^\omega$ are losing), but word $\word(ab)^\omega$ is winning.
	However, $\wc$ is $\memSkel$-cycle-consistent for the skeleton~$\memSkel$ with two states $\memInit$ and $\memState_2$ represented in Figure~\ref{fig:buchiABuchiBSkel}.
	For finite words reaching $\memInit$, the losing cycles only see $a$ and $c$, and combining infinitely many of them gives an infinite word without~$b$, which is a losing continuation of any finite word.
	The winning cycles are the ones that go to~$\memState_2$ and then go back to $\memInit$, as they must see both $a$ and $b$; combining infinitely many of them guarantees a winning continuation after any finite word.
	A similar reasoning applies to state $\memState_2$.
	Notice that $\wc$ is also $\memSkel$-prefix-independent.
	With regard to memory requirements, condition~$\wc$ is not $\memSkelTriv$-determined but is $\memSkel$-determined.%
	\begin{figure}[t]
		\centering
		\begin{tikzpicture}[every node/.style={font=\small,inner sep=1pt}]
			\draw (0,0) node[diamant] (s1) {$\memInit$};
			\draw ($(s1)+(2,0)$) node[diamant] (s2) {$\memState_2$};
			\draw (s1) edge[-latex',out=30,in=150] node[above=2pt] {$b$} (s2);
			\draw ($(s1)+(0,1.1)$) edge[-latex'] (s1);
			\draw (s1) edge[-latex',out=150,in=210,distance=0.8cm] node[left=2pt] {$a, c$} (s1);
			\draw (s2) edge[-latex',out=-150,in=-30] node[below=2pt] {$a$} (s1);
			\draw (s2) edge[-latex',out=30,in=-30,distance=0.8cm] node[right=2pt] {$b, c$} (s2);
		\end{tikzpicture}
		\caption{Skeleton $\memSkel$ such that $\wc = \Buchi{a}\cap\Buchi{b}$ is $\memSkel$-cycle-consistent (Example~\ref{ex:buchiABuchiB}).
		In figures, we use rhombuses (resp.\ circles, squares) to depict skeleton states (resp.\ arena states controlled by $\Pone$, arena states controlled by $\Ptwo$).
  }
		\label{fig:buchiABuchiBSkel}
	\end{figure}
\end{example}

\subparagraph*{Properties of these concepts.}
Both $\memSkel$-prefix-independence and $\memSkel$-cycle-consistency hold symmetrically for a winning condition and its complement, and are stable by product with an arbitrary skeleton (as products generate even smaller sets of prefixes and cycles to consider).

\begin{lemma} \label{lem:stableProduct}
	Let $\wc\subseteq\colors^\omega$ be a winning condition and $\memSkel$ be a skeleton.
	Then, $\wc$ is $\memSkel$-prefix-independent (resp.\ $\memSkel$-cycle-consistent) if and only if $\comp{\wc}$ is $\memSkel$-prefix-independent (resp.\ $\memSkel$-cycle-consistent).
	If $\wc$ is $\memSkel$-prefix-independent (resp.\ $\memSkel$-cycle-consistent), then for all skeletons $\memSkel'$, $\wc$ is $(\memSkel\memProduct\memSkel')$-prefix-independent (resp.\ $(\memSkel\memProduct\memSkel')$-cycle-consistent).
\end{lemma}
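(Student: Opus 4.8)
The plan is to prove each of the three assertions separately, since they are essentially independent and each reduces to unwinding the relevant definition.

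\textbf{Symmetry under complementation.} The first claim—that $\wc$ is $\memSkel$-prefix-independent iff $\comp{\wc}$ is—is immediate from the observation that $\word_1 \prefEq_\wc \word_2$ iff $\word_1 \prefEq_{\comp{\wc}} \word_2$. Indeed, $\inverse{\word_1}\wc = \inverse{\word_2}\wc$ is equivalent to $\inverse{\word_1}\comp{\wc} = \inverse{\word_2}\comp{\wc}$, because $\inverse{\word}\comp{\wc} = \colors^\omega \setminus \inverse{\word}\wc$. So the right congruences of $\wc$ and $\comp{\wc}$ coincide, and $\memSkel$-prefix-independence is stated purely in terms of that congruence. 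For $\memSkel$-cycle-consistency, the key point is that the definition is already self-dual: it asserts $(\cc\winCycWord{\word})^\omega \subseteq \inverse{\word}\wc$ \emph{and} $(\cc\loseCycWord{\word})^\omega \subseteq \inverse{\word}\comp{\wc}$, and swapping $\wc$ with $\comp{\wc}$ merely swaps the roles of $\winCycWord{\word}$ and $\loseCycWord{\word}$ (since a cycle is winning for $\comp{\wc}$ after $\word$ exactly when it is losing for $\wc$ after $\word$), leaving the conjunction unchanged.

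\textbf{Stability under product.} For the second claim, I would fix a skeleton $\memSkel' = \memSkelIndexFull{'}$ and work with $\memSkel \memProduct \memSkel'$. The main structural fact to exploit is that the projection to the first component maps paths (resp.\ cycles) of $\memSkel \memProduct \memSkel'$ to paths (resp.\ cycles) of $\memSkel$ with the same color projection, and that this projection sends the initial state $(\memInit, \memInit')$ to $\memInit$. Concretely: if $\memState = (\memState_1, \memState_2)$ is a state of $\memSkel\memProduct\memSkel'$, then $\cc\memPathsOn{(\memInit,\memInit')}{\memState} \subseteq \cc\memPathsOn{\memInit}{\memState_1}$ (read in $\memSkel$), since following colors from $(\memInit,\memInit')$ to $(\memState_1,\memState_2)$ in the product in particular follows those same colors from $\memInit$ to $\memState_1$ in $\memSkel$. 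Hence if all words in $\cc\memPathsOn{\memInit}{\memState_1}$ are $\prefEq$-equivalent, so are all words in $\cc\memPathsOn{(\memInit,\memInit')}{\memState}$; this gives $\memSkel$-prefix-independence $\Rightarrow$ $(\memSkel\memProduct\memSkel')$-prefix-independence. For cycle-consistency, the analogous containment is $\cc\winCycWordAtmtn{\word}{\memSkel\memProduct\memSkel'} \subseteq \cc\winCycWordAtmtn{\word}{\memSkel}$ for every $\word \in \colors^*$: a cycle of $\memSkel\memProduct\memSkel'$ on the state reached by $\word$ projects to a cycle of $\memSkel$ on the state reached by $\word$ with the same color word, and the membership condition $(\colHatFin(\cyc))^\omega \in \inverse{\word}\wc$ only depends on that color word. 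The same holds for the losing cycles. Therefore $(\cc\winCycWordAtmtn{\word}{\memSkel\memProduct\memSkel'})^\omega \subseteq (\cc\winCycWordAtmtn{\word}{\memSkel})^\omega \subseteq \inverse{\word}\wc$, using $\memSkel$-cycle-consistency for the last inclusion, and symmetrically for the losing side; this is exactly $(\memSkel\memProduct\memSkel')$-cycle-consistency.

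\textbf{Anticipated obstacle.} I do not expect a genuine obstacle here—the lemma is a bookkeeping statement—but the one place to be careful is the treatment of \emph{empty} paths and the reachability convention. States of skeletons are assumed reachable from the initial state, but in $\memSkel\memProduct\memSkel'$ not every pair $(\memState_1,\memState_2)$ need be reachable; this is harmless, since the definitions of $\memSkel$-prefix-independence and $\memSkel$-cycle-consistency quantify over states via sets like $\cc\memPathsOn{\memInit}{\memState}$ and $\memUpdHat(\memInit,\word)$, which are automatically empty or well-defined for unreachable states, so the implications go through verbatim. One should also note that the "if and only if" in the product statement is not claimed—only the forward direction—so there is nothing to prove in the other direction. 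I would write the proof in two short paragraphs mirroring the two sentences of the statement, citing the projection-of-paths observation as the single reusable fact.
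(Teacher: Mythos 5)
Your proposal is correct and follows essentially the same route as the paper's proof: complementation is handled by noting that $\inverse{\word}\comp{\wc} = \comp{\inverse{\word}\wc}$ (so the congruences coincide) and that winning and losing cycles simply swap roles, and the product case is handled by the same projection observation, namely $\cc\memPathsOn{(\memInit,\memInit')}{(\memState,\memState')} \subseteq \cc\memPathsOn{\memInit}{\memState}$ and $\cc\winCycWordAtmtn{\word}{\memSkel\memProduct\memSkel'} \subseteq \cc\winCycWordAtmtn{\word}{\memSkel}$ (and likewise for losing cycles). Your extra remark about unreachable product states is a harmless refinement not needed in the paper's argument.
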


\begin{proof}
	Let $\memSkel = \memSkelFull$.

	We assume that $\wc$ is $\memSkel$-prefix-independent.
	Thus, for all $\memState\in\memStates$, for all $\word_1, \word_2\in\cc{\memPathsOn{\memInit}{\memState}}$, $\word_1 \prefEq \word_2$, i.e., $\inverse{\word_1}\wc = \inverse{\word_2}\wc$.
	This last equality is equivalent to $\comp{\inverse{\word_1}\wc} = \comp{\inverse{\word_2}\wc}$, which can be rewritten as $\inverse{\word_1}\comp{\wc} = \inverse{\word_2}\comp{\wc}$.
	This shows that $\comp{\wc}$ is $\memSkel$-prefix-independent.

	To show that $\wc$ is $\memSkel$-cycle-consistent if and only if $\comp{\wc}$ is $\memSkel$-cycle-consistent, notice that the winning cycles for $\wc$ are exactly the losing cycles for $\comp{\wc}$, and vice versa.

	Let $\memSkel' = (\memStates', \memInit', \memUpd')$ be a skeleton.
	We assume that $\wc$ is $\memSkel$-prefix-independent and we show that $\wc$ is $(\memSkel\memProduct\memSkel')$-prefix-independent.
	The sets of prefixes to consider are smaller in $\memSkel\memProduct\memSkel'$ than in $\memSkel$: for all $(\memState, \memState')\in\memStates\times\memStates'$, $\cc{\memPathsOn{(\memInit, \memInit')}{(\memState, \memState')}} \subseteq \cc{\memPathsOn{\memInit}{\memState}}$.
	Therefore, for all $\word_1, \word_2\in\cc{\memPathsOn{(\memInit, \memInit')}{(\memState, \memState')}}$, we also have $\word_1, \word_2\in\cc{\memPathsOn{\memInit}{\memState}}$, so by $\memSkel$-prefix-independence, $\word_1\prefEq\word_2$.

	We now assume that $\wc$ is $\memSkel$-cycle-consistent and we show that $\wc$ is $(\memSkel\memProduct\memSkel')$-cycle-consistent.
	The sets of winning and losing cycles to consider are smaller in $\memSkel\memProduct\memSkel'$ than in $\memSkel$: for all $\word\in\colors^*$, $\cc{\winCycWordAtmtn{\word}{\memSkel\memProduct\memSkel'}} \subseteq \cc{\winCycWordAtmtn{\word}{\memSkel}}$ and $\cc{\loseCycWordAtmtn{\word}{\memSkel\memProduct\memSkel'}} \subseteq \cc{\loseCycWordAtmtn{\word}{\memSkel}}$.
	By $\memSkel$-cycle-consistency, for all $\word\in\colors^*$, we have $(\cc{\winCycWord{\word}})^\omega \subseteq \inverse{\word}\wc$ and $(\cc{\loseCycWord{\word}})^\omega \subseteq \inverse{\word}\comp{\wc}$, so we also have $(\cc{\winCycWordAtmtn{\word}{\memSkel\memProduct\memSkel'}})^\omega \subseteq \inverse{\word}\wc$ and $(\cc{\loseCycWordAtmtn{\word}{\memSkel\memProduct\memSkel'}})^\omega \subseteq \inverse{\word}\comp{\wc}$.
\end{proof}

An interesting property of languages defined by a parity automaton $\parAtmtn$ is that they satisfy both aforementioned concepts with skeleton $\memSkel$.
\begin{lemma} \label{lem:parConcepts}
	Let $\wc\subseteq\colors^\omega$ be a winning condition and $\parAtmtn$ be a parity automaton.
	If $\wc$ is recognized by $\parAtmtn$, then $\wc$ is $\memSkel$-prefix-independent and $\memSkel$-cycle-consistent.
\end{lemma}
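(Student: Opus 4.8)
The plan is to verify the two properties directly from the definition of the language $\parLang$ recognized by the parity automaton $\parAtmtn$. Throughout, write $\memSkel = \memSkelFull$ and recall that $\word \in \parLang$ iff, for $\memWord{\word} = (\memState_1, \clr_1)(\memState_2, \clr_2)\ldots$, the value $\limsup_{i \ge 1} \pri(\memState_i, \clr_i)$ is even. The key observation underpinning both parts is that the run of $\memSkel$ on a word $\word\word'$ factors through the state $\memUpdHat(\memInit, \word)$: the run on $\word'$ read from that state determines membership of $\word\word'$ in $\wc$ via the limsup of priorities along that tail, and this tail run depends only on $\word'$ and on $\memUpdHat(\memInit, \word)$, not on $\word$ itself.

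\textbf{$\memSkel$-prefix-independence.} Fix $\memState \in \memStates$ and $\word_1, \word_2 \in \cc\memPathsOn{\memInit}{\memState}$, so $\memUpdHat(\memInit, \word_1) = \memUpdHat(\memInit, \word_2) = \memState$. I must show $\inverse{\word_1}\wc = \inverse{\word_2}\wc$. Let $\word' \in \colors^\omega$ be arbitrary. The run of $\memSkel$ on $\word_1\word'$ is the concatenation of the (finite) run on $\word_1$ with the infinite run obtained by reading $\word'$ starting from state $\memState$; likewise for $\word_2\word'$. Since the limsup of priorities is unaffected by a finite prefix, $\word_1\word' \in \parLang$ iff the limsup of priorities along the run of $\word'$ from $\memState$ is even, and the same criterion governs $\word_2\word'$. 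Hence $\word_1\word' \in \wc \iff \word_2\word' \in \wc$, which gives $\word_1 \prefEq \word_2$.

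\textbf{$\memSkel$-cycle-consistency.} Fix $\word \in \colors^*$ and let $\memState = \memUpdHat(\memInit, \word)$. Take any sequence $\cyc_1, \cyc_2, \ldots \in \winCycWord{\word}$ of winning cycles on $\memState$; I must show $\colHatFin(\cyc_1)\colHatFin(\cyc_2)\cdots \in \inverse{\word}\wc$, i.e.\ $\word\,\colHatFin(\cyc_1)\colHatFin(\cyc_2)\cdots \in \parLang$. Reading this infinite word in $\memSkel$, after the prefix $\word$ we are at state $\memState$, and then each $\cyc_i$ is traversed in turn, returning to $\memState$ after each one (since each $\cyc_i$ is a cycle on $\memState$). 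For a single cycle $\cyc$ on $\memState$, let $\maxPri(\cyc)$ denote the maximum priority of a transition of $\cyc$; because $(\colHatFin(\cyc))^\omega \in \inverse{\word}\wc$ and reading $(\colHatFin(\cyc))^\omega$ from $\memState$ simply loops around $\cyc$ forever with limsup exactly $\maxPri(\cyc)$, the fact that $\cyc \in \winCycWord{\word}$ forces $\maxPri(\cyc)$ to be even. Now along the run on $\word\,\colHatFin(\cyc_1)\colHatFin(\cyc_2)\cdots$, every transition seen infinitely often lies in infinitely many of the $\cyc_i$, so the limsup of priorities equals $\max\{\, p : p = \maxPri(\cyc_i) \text{ for infinitely many } i \,\}$, a maximum of even numbers, hence even. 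Therefore the word lies in $\parLang$. The losing case is symmetric, using that $\parLang$'s complement is also recognized by a parity automaton on $\memSkel$ (shift all priorities by one), or equivalently that $\cyc \in \loseCycWord{\word}$ forces $\maxPri(\cyc)$ to be odd and a limsup of odd-valued $\maxPri(\cyc_i)$'s — wait, that needs care: the limsup is the max of values appearing infinitely often, and a max of odd numbers is odd, so the argument goes through identically.

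\textbf{Main obstacle.} There is no deep difficulty here; the one point requiring slight attention is the cycle-consistency computation, namely correctly identifying $\limsup_{i} \pri(\memState_i, \clr_i)$ along the run on $\word\,\colHatFin(\cyc_1)\colHatFin(\cyc_2)\cdots$ as the largest priority occurring infinitely often, and checking that this equals $\max\{ \maxPri(\cyc_i) : i \text{ occurs infinitely often among indices with that } \maxPri \text{ value}\}$ — i.e.\ that finitely many "bad" cycles cannot spoil the limsup and that the relevant maximum is over a finite set of priority values (so the max is attained). This is routine given that $\pri$ takes finitely many values, but it is the only place where one must be a little careful rather than purely formal.
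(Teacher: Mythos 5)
Your proof is correct and follows essentially the same route as the paper's (which only sketches it in two sentences): prefix-independence because a finite prefix cannot affect the limsup of priorities, and cycle-consistency because a cycle on $\memState$ is winning (resp.\ losing) exactly when its maximal priority is even (resp.\ odd), so concatenating infinitely many such cycles yields an even (resp.\ odd) limsup. Your extra care about the limsup being attained as the maximal priority of cycles occurring infinitely often is exactly the routine detail the paper leaves implicit.
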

\begin{proof}
	By definition of the parity acceptance condition, any two finite words reaching the same state of the skeleton have the same winning continuations.
	Therefore, $\wc$ is $\memSkel$-prefix-independent.

	Also, the winning (resp.\ losing) cycles of $\memSkel$ after any finite word are exactly the ones that have an even (resp.\ odd) maximal priority.
	Therefore, combining infinitely many winning (resp.\ losing) cycles can only produce a winning (resp.\ losing) infinite word.
\end{proof}

\subparagraph*{Main results.}
We state our main technical tool.
We recall that one-player $\memSkel$-determinacy of a winning condition $\wc$ is both about one-player arenas of $\Pone$ (trying to achieve a word in $\wc$) \emph{and} of $\Ptwo$ (trying to achieve a word in $\comp{\wc}$).
\newcommand{\firstItem}{If there exists a skeleton $\memSkel$ such that $\wc$ is one-player $\memSkel$-determined, then $\prefEq$ has finite index (in particular, $\wc$ is $\minStateAtmtn$-prefix-independent) and~$\wc$ is $\memSkel$-cycle-consistent.}
\newcommand{\secondItem}{If there exists a skeleton $\memSkel$ such that $\wc$ is $\memSkel$-prefix-independent and $\memSkel$-cycle-consistent, then $\wc$ is $\omega$-regular and can be recognized by a deterministic parity automaton defined on top of $\memSkel$.}
\begin{theorem} \label{thm}
	Let $\wc\subseteq\colors^\omega$ be a winning condition and $\prefEq$ be its right congruence.
	\begin{enumerate}
		\item \firstItem
		\item \secondItem
	\end{enumerate}
\end{theorem}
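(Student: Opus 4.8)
The plan is to build a priority function $\pri$ on top of $\memSkel$ directly from the structure of winning and losing cycles, and then verify that the resulting parity automaton $\parAtmtn$ recognizes $\wc$. The starting observation is that, by $\memSkel$-prefix-independence, every state $\memState$ of $\memSkel$ has a well-defined set of winning continuations: if $\word_1,\word_2\in\cc\memPathsOn{\memInit}{\memState}$ then $\inverse{\word_1}\wc=\inverse{\word_2}\wc$, so we may write $\inverse{\memState}\wc$ for this common set. In particular the sets $\cc\winCycWord{\word}$ and $\cc\loseCycWord{\word}$ depend only on the state $\memState=\memUpdHat(\memInit,\word)$, not on $\word$ itself; write $\winCyc{\memState}$, $\loseCyc{\memState}$ for them. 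First I would record the basic consequences of $\memSkel$-cycle-consistency in this language: from any state $\memState$, concatenating infinitely many cycles from $\winCyc{\memState}$ (returning to $\memState$ each time) yields a word in $\inverse{\memState}\wc$, and similarly for $\loseCyc{\memState}$ and $\inverse{\memState}\comp\wc$.

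Next I would set up a Zielonka-style / latest-appearance-record argument to assign priorities. The key combinatorial input is that cycle-consistency forces the winning cycles to be ``closed under concatenation up to $\omega$-repetition'' in a way that mimics a parity condition. Concretely, I would consider the finitely many classes of colors (Remark~\ref{rem:finColors} ensures finiteness is not an issue once $\memSkel$ is fixed), and for each state $\memState$ build a decreasing chain of subsets of the transitions reachable while staying ``recurrent'' at $\memState$, alternating between sets from which all infinite recurrent behavior is winning and sets from which all such behavior is losing. This is exactly the recursive structure underlying the classical theorem that parity conditions are precisely the Muller conditions with a certain chain structure; here the role of ``Muller family'' is played by $\{\colHatFin(\cyc):\cyc\in\winCyc{\memState}\}$. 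The output is a priority labelling $\pri\colon\memStates\times\colors\to\{0,\dots,n\}$ such that a cycle $\cyc$ on $\memState$ has even maximal priority iff $\cyc\in\winCyc{\memState}$ and odd maximal priority iff $\cyc\in\loseCyc{\memState}$. (One must check these two possibilities are exhaustive and exclusive — exclusivity is immediate since $\wc$ and $\comp\wc$ are disjoint, and exhaustivity holds because $(\colHatFin(\cyc))^\omega$ lies in $\inverse{\memState}\wc$ or in $\inverse{\memState}\comp\wc$.)

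Finally I would verify $\wc=\parLang$. Fix $\word\in\colors^\omega$ and let $\memWord{\word}=(\memState_1,\clr_1)(\memState_2,\clr_2)\ldots$ be the induced run in $\memSkel$. Since $\memStates$ is finite, some state $\memState$ occurs infinitely often, say at positions $i_0<i_1<i_2<\cdots$, all mapping to the same state $\memState$ by prefix-independence of the run. The infinite suffix $\clr_{i_0}\clr_{i_0+1}\cdots$ decomposes as an infinite concatenation of cycles $\cyc_1\cyc_2\cdots$ on $\memState$ (the block between consecutive returns). The largest priority seen infinitely often along $\memWord{\word}$ equals the largest priority that appears in infinitely many of the $\cyc_j$; passing to a further subsequence on which this maximal priority is attained by every block, I would use cycle-consistency to conclude: if that priority is even, infinitely many $\cyc_j$ lie in $\winCyc{\memState}$ — but I actually need \emph{all but finitely many} blocks winning, which I get by merging the finitely many non-maximal blocks into adjacent maximal ones (a concatenation of a winning cycle with an arbitrary cycle having no higher priority is again a cycle whose maximal priority is that of the winning one, hence winning), so $\word$ is winning by cycle-consistency; symmetrically for odd. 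Combined with $\memSkel$-prefix-independence (which lets us replace the prefix $\clr_1\ldots\clr_{i_0-1}$ without affecting membership), this gives $\word\in\wc\iff\limsup_i\pri(\memState_i,\clr_i)$ even, i.e. $\wc=\parLang$, so $\wc$ is $\omega$-regular and recognized by a parity automaton on top of $\memSkel$.

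The main obstacle I expect is the second step: extracting a genuine priority function from the cycle structure, i.e. proving that the family of winning cycles at each state really does have the nested chain structure of a parity condition rather than that of a general Muller condition. The delicate point is that a priori winning and losing cycles at $\memState$ could ``interleave'' badly; cycle-consistency must be leveraged to rule this out, by showing that if a color class can extend a winning cycle to another winning cycle and also extend a losing cycle to another losing cycle in an incompatible way, one can splice these together into an $\omega$-word that cycle-consistency forbids. Making this Zielonka-tree extraction work uniformly across all states of $\memSkel$, and with the priorities assigned to \emph{transitions} $(\memState,\clr)$ consistently (the same transition may occur in cycles on several states), is where the real care is needed.
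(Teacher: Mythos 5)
Your overall architecture (pass to the sets $\winCyc{\memState}$, $\loseCyc{\memState}$ via prefix-independence, build a priority function $\pri$ on transitions so that a cycle is winning iff its maximal priority is even, then conclude by decomposing any run of $\memSkel$ into cycles on a recurrent state and invoking cycle-consistency) matches the paper's skeleton, and your final verification step is essentially the paper's concluding proposition. The genuine gap is the middle step, which is where almost all of the paper's work lies. First, your appeal to Remark~\ref{rem:finColors} is circular: that remark yields finitely many color classes only \emph{given} a parity automaton $(\memSkel,\pri)$, i.e.\ after $\pri$ exists. With only $\memSkel$ fixed and $\colors$ infinite, two colors inducing the same transition function on $\memStates$ may perfectly well have different winning behaviour (the paper stresses this in the proof of Lemma~\ref{lem:finWidth}: two cycles through the same states may have different values), and indeed the collapse to finitely many color classes is a \emph{by-product} of the completed proof, not an input to it. So you cannot start a Zielonka-tree / Muller-to-parity extraction, which presupposes a finite set of transitions (or at least finitely many relevant letters), without first proving some finiteness statement about how cycles of $\memSkel$ interact.

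Second, even granting finiteness, "the role of Muller family is played by the winning cycles" is not yet available: a Muller condition requires that the value of a run depend only on the set of transitions visited infinitely often, and under $\memSkel$-cycle-consistency alone this is exactly what has to be proved, not assumed (your own closing paragraph identifies this interleaving problem but does not resolve it). The paper's route is to define a domination preorder $\cycOrd$ on $\memCycles$ via competition witnesses, prove it is well defined and a strict preorder (witness and crossing-point independence, Lemmas~\ref{lem:indWit} and~\ref{lem:indCycle}), prove it has finite height and finite width on the quotient (Lemmas~\ref{lem:finHeight} and~\ref{lem:finWidth}) --- this is where finiteness of the relevant cycle classes, hence of color classes, is actually established --- then take a monotone map $\priCyc$ into $\{0,\dots,n\}$ and set $\pri(\memState,\clr)=\min\{\priCyc(\cyc)\mid(\memState,\clr)\in\cyc\}$, and finally prove the nontrivial Lemma~\ref{lem:winCyc} that maximal priority characterizes the value of every cycle (your "merging" step in the verification silently uses exactly this lemma). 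Without a concrete replacement for this chain of lemmas, the proposal does not constitute a proof of item~2.
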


We prove this theorem in Sections~\ref{sec:proof1} and~\ref{sec:proof2}.
We state two consequences of this result that were already mentioned in the introduction: a strategic characterization of $\omega$-regular languages, and a novel one-to-two-player-lift.

\begin{theorem}[Characterization] \label{thm:regular}
	Let $\wc\subseteq\colors^\omega$ be a language of infinite words.
	Language $\wc$ is $\omega$-regular if and only if it is chromatic-finite-memory determined (in infinite arenas).
\end{theorem}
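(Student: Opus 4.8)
The proof is a direct corollary of Theorem~\ref{thm}, and the strategy is to chain the two implications of that theorem together with the classical facts about parity automata already recalled in the introduction. The forward direction (if $\wc$ is $\omega$-regular, then it is chromatic-finite-memory determined) is the ``well-known'' one: take a deterministic parity automaton $\parAtmtn$ recognizing $\wc$, and recall that parity conditions are memoryless-determined in all arenas~\cite{Kla94,Zie98}. Composing the arena with the skeleton $\memSkel$ underlying $\parAtmtn$ yields that $\wc$ is $\memSkel$-determined, hence chromatic-finite-memory determined; this is the content of~\cite{Mos84,Zie98,Cas22} as cited after Theorem~\ref{thm:regular}. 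I would spell out only that the product construction turns a memoryless optimal strategy in the product game into a $\memSkel$-based optimal strategy in the original game, and that this works for both players since the complement of a parity condition is again a parity condition.

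\textbf{The backward direction} is where Theorem~\ref{thm} does the work. Suppose $\wc$ is chromatic-finite-memory determined; then there is a skeleton $\memSkel$ such that $\wc$ is $\memSkel$-determined, hence in particular one-player $\memSkel$-determined. By Theorem~\ref{thm}.1, $\prefEq$ has finite index, $\wc$ is $\minStateAtmtn$-prefix-independent, and $\wc$ is $\memSkel$-cycle-consistent. Now I would pass to the product skeleton $\minStateAtmtn \memProduct \memSkel$: by Lemma~\ref{lem:stableProduct}, $\memSkel$-cycle-consistency of $\wc$ implies $(\minStateAtmtn\memProduct\memSkel)$-cycle-consistency, and $\minStateAtmtn$-prefix-independence implies $(\minStateAtmtn\memProduct\memSkel)$-prefix-independence (applying the lemma with the roles arranged so that the minimal-state automaton is the first factor, which is legitimate since the product is symmetric up to renaming of states). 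So $\wc$ is both $(\minStateAtmtn\memProduct\memSkel)$-prefix-independent and $(\minStateAtmtn\memProduct\memSkel)$-cycle-consistent, and Theorem~\ref{thm}.2 then gives that $\wc$ is $\omega$-regular (recognized, in fact, by a parity automaton on top of $\minStateAtmtn\memProduct\memSkel$). This closes the equivalence.

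\textbf{Main obstacle.} At the level of this corollary there is essentially no obstacle: both directions are short once Theorem~\ref{thm} and Lemma~\ref{lem:stableProduct} are in hand, and the only mild subtlety is the bookkeeping needed to combine $\minStateAtmtn$-prefix-independence (coming from Theorem~\ref{thm}.1 via finite index of $\prefEq$) with $\memSkel$-cycle-consistency into a single skeleton satisfying both hypotheses of Theorem~\ref{thm}.2 --- that is exactly what the product and Lemma~\ref{lem:stableProduct} are for. The real difficulty is of course hidden inside Theorem~\ref{thm} itself (particularly part~1, which must extract finite index of $\prefEq$ and cycle-consistency purely from one-player chromatic-finite-memory determinacy, and part~2, which must synthesize a parity labelling of $\memSkel$ out of the prefix-independence and cycle-consistency data), but that proof is deferred to Sections~\ref{sec:proof1} and~\ref{sec:proof2} and may be assumed here.
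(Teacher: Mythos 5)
Your proposal is correct and follows essentially the same route as the paper: the forward direction by the classical memoryless determinacy of parity conditions composed with the automaton's skeleton, and the backward direction by extracting one-player $\memSkel$-determinacy, applying Theorem~\ref{thm}.1, passing to $\minStateAtmtn\memProduct\memSkel$ via Lemma~\ref{lem:stableProduct}, and concluding with Theorem~\ref{thm}.2. Your remark about the product being symmetric up to renaming of states is a harmless explicit justification of a step the paper leaves implicit.
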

\begin{proof}
	One implication is well-known~\cite{Mos84,Zie98}: if $\wc$ is $\omega$-regular, then it can be recognized by a deterministic parity automaton whose skeleton can be used as a memory that suffices to play optimally for both players, in arenas of any cardinality.
	For the other direction, if $\wc$ is chromatic-finite-memory determined, then there exists a skeleton $\memSkel$ such that $\wc$ is $\memSkel$-determined.
	In particular, $\wc$ is one-player $\memSkel$-determined, so by Theorem~\ref{thm}, $\prefEq$ has finite index and $\wc$ is $\memSkel$-cycle-consistent.
	In particular, by Lemma~\ref{lem:stableProduct}, $\wc$ is $(\minStateAtmtn\memProduct\memSkel)$-prefix-independent and $(\minStateAtmtn\memProduct\memSkel)$-cycle-consistent, so $\wc$ is $\omega$-regular and can be recognized by a deterministic parity automaton defined on top of~$\minStateAtmtn\memProduct\memSkel$.
\end{proof}

\begin{theorem}[One-to-two-player lift] \label{thm:lift}
	Let $\wc\subseteq\colors^\omega$ be a winning condition.
	Language $\wc$ is \textbf{one-player} chromatic-finite-memory determined if and only if it is chromatic-finite-memory determined.
\end{theorem}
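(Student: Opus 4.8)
The plan is to derive Theorem~\ref{thm:lift} directly from Theorem~\ref{thm} together with the already-known easy implication that $\omega$-regular languages are chromatic-finite-memory determined. One direction is trivial: if $\wc$ is chromatic-finite-memory determined, then there is a skeleton $\memSkel$ that suffices to play optimally for both players in \emph{all} arenas, in particular in all one-player arenas of either player, so $\wc$ is one-player $\memSkel$-determined, hence one-player chromatic-finite-memory determined. So the content is entirely in the reverse implication.

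For the reverse implication, I would argue as follows. Suppose $\wc$ is one-player chromatic-finite-memory determined, i.e., there exists a skeleton $\memSkel$ such that $\wc$ is one-player $\memSkel$-determined. By Theorem~\ref{thm}(1), the right congruence $\prefEq$ of $\wc$ has finite index, so the minimal-state automaton $\minStateAtmtn$ is a (finite) skeleton and $\wc$ is $\minStateAtmtn$-prefix-independent; moreover $\wc$ is $\memSkel$-cycle-consistent. By Lemma~\ref{lem:stableProduct}, $\wc$ is then $(\minStateAtmtn\memProduct\memSkel)$-prefix-independent and $(\minStateAtmtn\memProduct\memSkel)$-cycle-consistent. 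Applying Theorem~\ref{thm}(2) to the skeleton $\minStateAtmtn\memProduct\memSkel$, we conclude that $\wc$ is $\omega$-regular (and in fact recognized by a deterministic parity automaton defined on top of $\minStateAtmtn\memProduct\memSkel$).

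It then remains to invoke the classical result (cited in the proof of Theorem~\ref{thm:regular} via~\cite{Mos84,Zie98}): every $\omega$-regular language is chromatic-finite-memory determined, because it is recognized by a deterministic parity automaton whose skeleton, equipped with the memoryless-optimality of parity conditions, provides a chromatic memory structure that suffices to play optimally for both players in arenas of arbitrary cardinality. Composing these facts yields that $\wc$ is chromatic-finite-memory determined, completing the equivalence. In fact this is essentially a restatement of Theorem~\ref{thm:regular}: one-player chromatic-finite-memory determinacy implies $\omega$-regularity (Theorem~\ref{thm}), and $\omega$-regularity implies full chromatic-finite-memory determinacy (Theorem~\ref{thm:regular}), which trivially implies one-player chromatic-finite-memory determinacy.

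The main (and only) subtlety is bookkeeping: Theorem~\ref{thm}(1) gives $\memSkel$-cycle-consistency for the \emph{given} skeleton $\memSkel$ but only $\minStateAtmtn$-prefix-independence, so one must pass to the product $\minStateAtmtn\memProduct\memSkel$ via Lemma~\ref{lem:stableProduct} before Theorem~\ref{thm}(2) is applicable; there is no genuine obstacle, as the heavy lifting is entirely contained in Theorem~\ref{thm}, whose proof is deferred to Sections~\ref{sec:proof1} and~\ref{sec:proof2}.

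\begin{proof}
	The forward implication is immediate: if $\wc$ is chromatic-finite-memory determined, there is a skeleton $\memSkel$ that is $\memSkel$-determined, and in particular both players play optimally with $\memSkel$ in all one-player arenas, so $\wc$ is one-player chromatic-finite-memory determined.

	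For the converse, assume $\wc$ is one-player chromatic-finite-memory determined: there exists a skeleton $\memSkel$ such that $\wc$ is one-player $\memSkel$-determined.
	By Theorem~\ref{thm}~(1), $\prefEq$ has finite index (so that $\minStateAtmtn$ is a well-defined skeleton and $\wc$ is $\minStateAtmtn$-prefix-independent) and $\wc$ is $\memSkel$-cycle-consistent.
	By Lemma~\ref{lem:stableProduct}, $\wc$ is therefore $(\minStateAtmtn\memProduct\memSkel)$-prefix-independent and $(\minStateAtmtn\memProduct\memSkel)$-cycle-consistent.
	By Theorem~\ref{thm}~(2), $\wc$ is $\omega$-regular.
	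Finally, as recalled in the proof of Theorem~\ref{thm:regular}, every $\omega$-regular language is chromatic-finite-memory determined~\cite{Mos84,Zie98}: it is recognized by a deterministic parity automaton, whose skeleton serves as a chromatic memory structure with which both players have optimal strategies in arenas of any cardinality.
	Hence $\wc$ is chromatic-finite-memory determined.
\end{proof}
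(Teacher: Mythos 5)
Your proof is correct and follows essentially the same route as the paper: the trivial direction plus Theorem~\ref{thm}(1), Lemma~\ref{lem:stableProduct}, and Theorem~\ref{thm}(2) to obtain a parity automaton on top of $\minStateAtmtn\memProduct\memSkel$, then the classical memoryless determinacy of parity conditions~\cite{Mos84,Zie98} to conclude that this skeleton suffices as memory in all two-player arenas. The bookkeeping step of passing to the product before applying Theorem~\ref{thm}(2) is exactly what the paper does as well.
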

\begin{proof}
	The implication from two-player to one-player determinacy is trivial.
	The other implication is given by Theorem~\ref{thm}: if $\wc$ is one-player $\memSkel$-determined, then $\prefEq$ has finite index and $\wc$ is $\memSkel$-cycle-consistent.
	Again by Lemma~\ref{lem:stableProduct} and Theorem~\ref{thm}, as $\wc$ can be recognized by a parity automaton defined on top of $\minStateAtmtn\memProduct\memSkel$, $\wc$ is determined and strategies based on $\minStateAtmtn\memProduct\memSkel$ suffice to play optimally in all two-player arenas.
\end{proof}

We discuss two specific situations in which we can easily derive interesting consequences using our results: the prefix-independent case, and the case where the minimal-state automaton suffices to play optimally.
\subparagraph*{Prefix-independent case.}
If a condition $\wc$ is prefix-independent (i.e., $\prefEq$ has index $1$ and $\minStateAtmtn = \memSkelTriv$), and skeleton $\memSkel$ suffices to play optimally in one-player games, then $\wc$ is recognized by a parity automaton defined on top of $\memSkelTriv\memProduct\memSkel$, which is isomorphic to $\memSkel$.
This implies that the exact same memory $\memSkel$ can be used by both players to play optimally in two-player arenas, with no increase in memory.
Note that, in general, when $\memSkel$ suffices to play optimally in one-player arenas, we do not know whether taking the product of $\memSkel$ with $\minStateAtmtn$ is necessary to play optimally in two-player arenas.
Still, the question is automatically solved for prefix-independent conditions.

If, moreover, $\memSkel = \memSkelTriv$ (i.e., memoryless strategies suffice to play optimally in one-player arenas), we recover exactly the result from Colcombet and Niwi\'nski~\cite{CN06}: $\wc$ can be recognized by a parity automaton defined on top of $\memSkelTriv$, so we can directly assign a priority to each color with a function $\pri \colon \colors \to \{0, \ldots, n\}$ such that an infinite word $\word = \clr_1\clr_2\ldots \in \colors^\omega$ is in $\wc$ if and only if $\limsup_{i\ge 1} \pri(\clr_i)$ is even.

\subparagraph*{$\minStateAtmtn$-determined case.}
An interesting property of some $\omega$-regular languages is that they can be recognized by defining an acceptance condition on top of the minimal-state automaton of their right congruence~\cite{MS97}, which is a useful property for the learning of languages~\cite{AF21}.
Here, Theorem~\ref{thm} implies that $\wc$ can be recognized by defining a transition-based parity acceptance condition on top of the minimal-state automaton $\minStateAtmtn$ if and only if $\wc$ is $\minStateAtmtn$-determined (and more precisely, if and only if $\wc$ is $\minStateAtmtn$-cycle-consistent).
The transition-based parity acceptance condition was not considered in the cited results~\cite{MS97,AF21}.

\begin{corollary}
	Let $\wc\subseteq\colors^\omega$ be an $\omega$-regular language and $\minStateAtmtn$ be the minimal-state automaton of its right congruence.
	The following are equivalent:
	\begin{enumerate}
		\item $\wc$ is recognized by defining a transition-based parity acceptance condition on top of~$\minStateAtmtn$; \label{enum:informative}
		\item $\wc$ is $\minStateAtmtn$-determined;\label{enum:determined}
		\item $\wc$ is $\minStateAtmtn$-cycle-consistent.\label{enum:cyc}
	\end{enumerate}
\end{corollary}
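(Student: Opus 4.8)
The plan is to prove the corollary as a cycle of implications, leaning almost entirely on Theorem~\ref{thm} and Lemma~\ref{lem:parConcepts}, so that the only ``new'' work is the bookkeeping of which skeleton the parity automaton is defined on.

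\textbf{Implication \ref{enum:informative}.$\implies$\ref{enum:cyc}.} Assume $\wc$ is recognized by a transition-based parity automaton $(\minStateAtmtn, \pri)$ defined on top of the minimal-state automaton $\minStateAtmtn$. Then Lemma~\ref{lem:parConcepts}, applied with skeleton $\memSkel = \minStateAtmtn$, immediately gives that $\wc$ is $\minStateAtmtn$-prefix-independent and $\minStateAtmtn$-cycle-consistent; in particular \ref{enum:cyc}.\ holds. (This is essentially free once we observe the automaton's skeleton is exactly $\minStateAtmtn$.)

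\textbf{Implication \ref{enum:cyc}.$\implies$\ref{enum:determined}.} Assume $\wc$ is $\minStateAtmtn$-cycle-consistent. Since $\wc$ is $\omega$-regular, $\prefEq$ has finite index, so $\minStateAtmtn$ is well-defined, and as noted in the paragraph on $\memSkel$-prefix-independence, $\wc$ is automatically $\minStateAtmtn$-prefix-independent (two finite words reach the same state of $\minStateAtmtn$ iff they are $\prefEq$-equivalent). Thus $\wc$ is both $\minStateAtmtn$-prefix-independent and $\minStateAtmtn$-cycle-consistent, and the second item of Theorem~\ref{thm} (applied with $\memSkel = \minStateAtmtn$) tells us $\wc$ is $\omega$-regular and recognized by a deterministic parity automaton defined on top of $\minStateAtmtn$. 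From this recognizer, the skeleton $\minStateAtmtn$ suffices for both players to play optimally in all arenas (the skeleton of a parity recognizer always does, by memoryless determinacy of parity conditions~\cite{Kla94,Zie98}), so $\wc$ is $\minStateAtmtn$-determined. This also simultaneously yields \ref{enum:cyc}.$\implies$\ref{enum:informative}.

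\textbf{Implication \ref{enum:determined}.$\implies$\ref{enum:cyc}.} Assume $\wc$ is $\minStateAtmtn$-determined. Then in particular $\wc$ is one-player $\minStateAtmtn$-determined, so the first item of Theorem~\ref{thm} (with $\memSkel = \minStateAtmtn$) gives that $\wc$ is $\minStateAtmtn$-cycle-consistent. Together with the previous two implications this closes the cycle \ref{enum:informative}.$\Leftrightarrow$\ref{enum:determined}.$\Leftrightarrow$\ref{enum:cyc}. The one point requiring a little care --- and the only genuine subtlety --- is to make sure that throughout we instantiate Theorem~\ref{thm} with the \emph{same} skeleton $\minStateAtmtn$ rather than with an arbitrary product $\minStateAtmtn \memProduct \memSkel$: the value of the corollary is precisely that no extra memory beyond $\minStateAtmtn$ is needed, and this works only because $\minStateAtmtn$-prefix-independence is free for $\omega$-regular $\wc$, so $\minStateAtmtn$-cycle-consistency alone activates the hypotheses of Theorem~\ref{thm}'s second item without any product. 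No calculation beyond this matching of skeletons is needed.
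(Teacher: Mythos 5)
Your proposal is correct and uses essentially the same ingredients as the paper: the first item of Theorem~\ref{thm} for \ref{enum:determined}.$\implies$\ref{enum:cyc}., the second item (with the observation that $\minStateAtmtn$-prefix-independence is automatic) for getting a parity automaton on top of $\minStateAtmtn$ from \ref{enum:cyc}., and memoryless determinacy of parity games to conclude $\minStateAtmtn$-determinacy. The only cosmetic difference is the arrangement of the implications (you prove \ref{enum:informative}.$\implies$\ref{enum:cyc}.\ directly via Lemma~\ref{lem:parConcepts} instead of the paper's cycle \ref{enum:informative}.$\implies$\ref{enum:determined}.$\implies$\ref{enum:cyc}.$\implies$\ref{enum:informative}.), which changes nothing of substance.
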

\begin{proof}
	Implication \ref{enum:informative}.$\implies$\ref{enum:determined}.\ follows from the memoryless determinacy of parity games~\cite{Zie98}.
	Implication \ref{enum:determined}.$\implies$\ref{enum:cyc}.\ follows from the first item of Theorem~\ref{thm}.
	Implication \ref{enum:cyc}.$\implies$\ref{enum:informative}.\ follows from the second item of Theorem~\ref{thm}: we have by definition that $\wc$ is $\minStateAtmtn$-prefix-independent, so if it is additionally $\minStateAtmtn$-cycle-consistent, then $\wc$ can be recognized by a parity automaton defined on top of $\minStateAtmtn$.
\end{proof}

\subparagraph*{Classes of arenas.}
We discuss how much Theorem~\ref{thm} depends upon our model of arenas.

There are multiple conditions that are chromatic-finite-memory determined if we only consider \emph{finite arenas} (finitely many states and edges), but which are not in infinite arenas.
A few examples are discounted-sum games~\cite{Sha53}, mean-payoff games~\cite{EM79}, total-payoff games~\cite{GZ04}, one-counter games~\cite{BBE10} which are all memoryless-determined in finite arenas but which require infinite memory to play optimally in some infinite arenas (we discuss some of these in Section~\ref{sec:application}).
In particular, Theorem~\ref{thm:regular} tells us that the derived winning conditions are not $\omega$-regular.

Strangely, the fact that our arenas have colors on \emph{edges} and not on \emph{states} is crucial for the result.
Indeed, there exists a winning condition (a generalization of a parity condition with infinitely many priorities~\cite{GW06}) that is memoryless-determined in state-labeled infinite arenas, but not in edge-labeled infinite arenas (as we consider here).
This particularity was already discussed~\cite{CN06}, and it was also shown that the same condition is memoryless-determined in edge-labeled arenas with finite branching.
Therefore, the fact that we allow \emph{infinite branching} in our arenas is also necessary for Theorem~\ref{thm:regular}.
Another example of a winning condition with finite memory requirements in finitely branching arenas for one player but infinite memory requirements in infinitely branching arenas is presented in~\cite[Section~4]{CFH14}.

In Sections~\ref{sec:proof1} and~\ref{sec:proof2}, we prove respectively the first item and the second item of Theorem~\ref{thm}.

\section{Two properties of one-player chromatic-finite-memory determinacy} \label{sec:proof1}
Let $\wc \subseteq \colors^\omega$ be a winning condition, $\prefEq$ be the right congruence of $\wc$, and $\memSkel = \memSkelFull$ be a skeleton, fixed for this section.
We aim to show the first item of Theorem~\ref{thm}, which is that for a skeleton $\memSkel$, one-player $\memSkel$-determinacy of $\wc$ implies that $\prefEq$ has finite index and that $\wc$ is $\memSkel$-cycle-consistent.

\subparagraph*{Finite index of \texorpdfstring{$\prefEq$}{\~}.}
For $\word_1, \word_2\in\colors^*$, we define $\word_1\prefOrd\word_2$ if $\inverse{\word_1}\wc \subseteq \inverse{\word_2}\wc$ (meaning that any continuation that is winning after $\word_1$ is also winning after $\word_2$).
Relation ${\prefOrd} \subseteq {\colors^*\times \colors^*}$ is a preorder.
Notice that $\prefEq$ is equal to $\prefOrd \cap \invPrefOrd$.
We also define the strict preorder $\strictPrefOrd {=} \prefOrd {\setminus} \prefEq$.

We will use preorder $\prefOrd$ to deduce that $\prefEq$ has finite index by showing that under hypotheses about the optimality of strategies based on $\memSkel$ in one-player arenas, $(i)$ on each subset $\cc{\memPathsOn{\memInit}{\memState}}$ of $\colors^*$ for $\memState\in\memStates$, preorder $\prefOrd$ is total (Lemma~\ref{lem:total}) $(ii)$ on each subset $\cc{\memPathsOn{\memInit}{\memState}}$ of~$\colors^*$ for $\memState\in\memStates$, preorder $\prefOrd$ has no infinite increasing nor decreasing sequence (Lemma~\ref{lem:infDec}).

\begin{lemma} \label{lem:total}
	Assume $\Pone$ has optimal strategies based on $\memSkel$ on all its \textbf{one-player} arenas.
	Then, for all $\memState\in\memStates$, preorder $\prefOrd$ is total on $\cc{\memPathsOn{\memInit}{\memState}}$.
\end{lemma}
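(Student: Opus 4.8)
The plan is to argue by contraposition: assuming that $\prefOrd$ is \emph{not} total on some $\cc\memPathsOn{\memInit}{\memState}$, I will build a one-player arena of $\Pone$ in which no $\memSkel$-based strategy is optimal. So suppose $\word_1, \word_2 \in \cc\memPathsOn{\memInit}{\memState}$ are $\prefOrd$-incomparable: there exist $\word_1', \word_2' \in \colors^\omega$ with $\word_1\word_1' \in \wc$ but $\word_2\word_1' \notin \wc$, and symmetrically $\word_2\word_2' \in \wc$ but $\word_1\word_2' \notin \wc$. The idea is to design an arena where $\Pone$ chooses, \emph{after} having committed to reading either $\word_1$ or $\word_2$ (two disjoint simple paths from the initial state, both ending in a common state $\s$ that "corresponds" to $\memState$), which continuation among $\word_1'$ and $\word_2'$ to follow. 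Crucially, $\Pone$ must only be allowed to make this latter choice from $\s$, with a single edge out of $\s$ per continuation; then the decision is forced to depend only on the memory state reached after $\word_1$ versus $\word_2$ — but both words reach the \emph{same} skeleton state $\memInit' := \memUpdHat(\memInit, \word_1) = \memUpdHat(\memInit, \word_2) = \memState$, so any $\memSkel$-based strategy makes the \emph{same} choice in both cases.

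Concretely I would take states: an initial fork state controlled by $\Pone$, two chains of fresh states spelling out $\word_1$ and $\word_2$ respectively (these are forced, so their owner is irrelevant), merging into a single $\Pone$-state $\s$; from $\s$, two edges leave, one beginning a forced chain spelling $\word_1'$, the other a forced chain spelling $\word_2'$ (using fresh states with the appropriate colors; for infinite words $\word_i'$ these are infinite but that is allowed). From the fork, $\Pone$ can reach $\s$ via the $\word_1$-chain or the $\word_2$-chain. From the fork state, $\Pone$ has a winning strategy: go via $\word_1$, then from $\s$ take the $\word_1'$-edge, producing $\word_1\word_1' \in \wc$. Hence $\Pone$ wins from the fork. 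But any $\memSkel$-based strategy $\strat_1$ is determined at $\s$ by the pair $(\s, \memUpdHat(\memInit, \colHatFin(\hist)))$ where $\hist$ is the history; after $\word_1$ the memory is $\memState$, and after $\word_2$ it is also $\memState$, so $\strat_1$ picks the same outgoing edge of $\s$ in both cases — say the $\word_1'$-edge. Then routing through $\word_2$ (which $\Pone$ may be forced to allow if we also give $\Ptwo$... ) — here is the subtlety: to make the $\memSkel$-based strategy fail, the play through $\word_2$ must actually be \emph{reachable} and \emph{consistent} with $\strat_1$, yet losing.

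The clean fix, which I would adopt, is to put the first choice under $\Pone$'s control too but add a second reachable start state: rather than one fork, have the winning-state set include a vertex $v_2$ from which $\Pone$ can \emph{only} reach $\s$ via the $\word_2$-chain. From $v_2$, $\Pone$ has a winning strategy (go via $\word_2$, then take the $\word_2'$-edge from $\s$). So both the "$\word_1$-fork" vertex and $v_2$ are winning for $\Pone$, hence an optimal $\memSkel$-based strategy must win from both. But such a strategy picks a fixed edge $e$ out of $\s$: if $e$ is the $\word_1'$-edge, the play from $v_2$ yields $\word_2\word_1' \notin \wc$ — not winning, contradiction; if $e$ is the $\word_2'$-edge, the play from the $\word_1$-fork yields $\word_1\word_2' \notin \wc$ — contradiction. (If $\strat_1$'s choice out of the fork is not "go via $\word_1$" we still reach $\s$ with memory $\memState$, so the argument is unaffected; and since the continuation chains are forced, the only freedom $\strat_1$ has is the edge out of $\s$.) This contradicts one-player $\memSkel$-determinacy for $\Pone$, completing the contraposition.

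I expect the main obstacle to be bookkeeping the construction so that (a) the finite words $\word_1, \word_2$ are realized by honest paths in a non-blocking arena whose added states do not accidentally identify, (b) the memory state after reading $\colHatFin$ of each of the two paths is genuinely the \emph{same} skeleton state — this is exactly where the hypothesis $\word_1, \word_2 \in \cc\memPathsOn{\memInit}{\memState}$ is used, since $\memUpdHat(\memInit, \word_i) = \memState$ — and (c) the continuation chains for $\word_1', \word_2'$ (possibly infinite words) are encoded as forced paths so $\Pone$'s \emph{only} decision point with any freedom is $\s$. Everything else is routine. The symmetric roles of $\word_1', \word_2'$ and of the two winning source vertices make the final contradiction immediate once the arena is set up.
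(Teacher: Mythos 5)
Your core idea is the paper's: merge the two chains for $\word_1$ and $\word_2$ in a single state $\s$, exploit that both words drive $\memSkel$ to the same state $\memState$ so any $\memSkel$-based strategy makes the same choice at $\s$, and use two sources so that the single choice cannot serve both. However, the arena you actually describe has a genuine flaw. You keep the original fork, from which $\Pone$ may reach $\s$ via \emph{either} chain, and only add a second source $v_2$ that forces the $\word_2$-chain. Then the case ``$e$ is the $\word_2'$-edge'' does not yield a contradiction: the $\memSkel$-based strategy that, at the fork, enters the $\word_2$-chain and, at $\s$ (memory $\memState$), takes the $\word_2'$-edge produces $\word_2\word_2'\in\wc$ from the fork and $\word_2\word_2'\in\wc$ from $v_2$, so it wins from both of your designated winning sources. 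Your parenthetical ``if $\strat_1$'s choice out of the fork is not `go via $\word_1$' we still reach $\s$ with memory $\memState$, so the argument is unaffected'' is exactly where this breaks: reaching $\s$ with memory $\memState$ and then taking $\word_2'$ is fine after $\word_2$. Likewise, ``the only freedom $\strat_1$ has is the edge out of $\s$'' is false for your fork, which has two outgoing options. Since the only remaining states are chain-interior states whose winning status you cannot control, the constructed arena need not witness failure of $\memSkel$-based optimality, so the contraposition does not go through as written.

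The repair is immediate and lands you on the paper's construction: drop the choice at the fork and use two start states, each forcing its own chain ($v_1$ forcing $\word_1$ and $v_2$ forcing $\word_2$), both merging at the $\Pone$-state $\s$ from which the two forced continuation chains for $\word_1'$ and $\word_2'$ leave. Then both sources are winning, an optimal $\memSkel$-based strategy must win from both, it plays the same edge at $\s$ in both cases, and either choice loses from one source --- contradiction. (The paper phrases this directly rather than by contraposition: from $\word_1\not\prefOrd\word_2$ it fixes the witness $\word_1'$, takes an \emph{arbitrary} $\word_2'\in\inverse{\word_2}\wc$, and concludes $\word_2'\in\inverse{\word_1}\wc$, hence $\word_2\prefOrd\word_1$; your contrapositive framing with two witnesses is logically equivalent once the arena is fixed.)
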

\begin{proof}
	Let $\memState\in\memStates$.
	Let $\word_1, \word_2 \in \cc{\memPathsOn{\memInit}{\memState}}$; we show that $\word_1\not\prefOrd\word_2$ implies $\word_2\prefOrd\word_1$.
	If $\word_1\not\prefOrd\word_2$, then there exists $\word_1' \in \colors^\omega$ such that $\word_1\word_1' \in \wc$ and $\word_2\word_1' \notin \wc$.
	We show that $\word_2 \prefOrd \word_1$, i.e., that $\inverse{\word_2}\wc \subseteq \inverse{\word_1}\wc$.
	Let $\word_2'\in\inverse{\word_2}\wc$.
	We build an infinite one-player arena of $\Pone$, depicted in Figure~\ref{fig:total}, that merges the ends of finite chains for $\word_1$ and $\word_2$ and the starts of the infinite chains for $\word_1'$ and for $\word_2'$ in a state $\s$.%
	\begin{figure}[t]
		\centering
		\begin{tikzpicture}[every node/.style={font=\small,inner sep=1pt}]
			\draw (0,0) node[rond] (s1) {};
			\draw ($(s1)-(0,1.2)$) node[rond] (s2) {};
			\draw ($(s1)!0.5!(s2)+(2,0)$) node[rond] (s3) {$\s$};
			\draw ($(s1)+(4,0)$) node[inner sep=3pt] (s4) {$\ldots$};
			\draw ($(s2)+(4,0)$) node[inner sep=3pt] (s5) {$\ldots$};
			\draw ($(s1)-(0.8,0)$) edge[-latex'] (s1);
			\draw ($(s2)-(0.8,0)$) edge[-latex'] (s2);
			\draw (s1) edge[-latex',decorate] node[above=4pt] {$\word_1$} (s3);
			\draw (s2) edge[-latex',decorate] node[below=4pt] {$\word_2$} (s3);
			\draw (s3) edge[-latex',decorate] node[above=4pt] {$\word_1'$} (s4);
			\draw (s3) edge[-latex',decorate] node[below=4pt] {$\word_2'$} (s5);
		\end{tikzpicture}
		\caption{Arena built in the proof of Lemma~\ref{lem:total}.
				 Squiggly arrows indicate a sequence of edges.}
		\label{fig:total}
	\end{figure}

	It is possible to win after seeing $\word_1$ or $\word_2$, by choosing respectively $\word_1'$ or $\word_2'$ in the merged state $\s$.
	Moreover, there must be a strategy based on $\memSkel$ that wins from the starts of the chains of both $\word_1$ and $\word_2$, which means that in both cases the same choice has to be made in $\s$ (as memory state $\memState$ is reached in both cases).
	Continuing to $\word_1'$ in $\s$ would be losing after $\word_2$, so $\word_2'$ must be winning after $\word_1$.
	Therefore, $\word_2'\in\inverse{\word_1}\wc$.
\end{proof}

\begin{lemma} \label{lem:infDec}
	Assume $\Pone$ has optimal strategies based on $\memSkel$ in all its \textbf{one-player} arenas.
	For all $\memState \in \memStates$, there is no infinitely decreasing sequence of finite words for $\prefOrd$ in $\cc{\memPathsOn{\memInit}{\memState}}$.
\end{lemma}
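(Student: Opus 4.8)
The plan is to argue by contradiction, in the same spirit as the proof of Lemma~\ref{lem:total} but merging countably many chains and branches instead of two. Suppose that for some $\memState\in\memStates$ there is a sequence $(\word_i)_{i\ge 1}$ of finite words in $\cc\memPathsOn{\memInit}{\memState}$ with $\word_{i+1}\strictPrefOrd\word_i$ for all $i$. Unfolding $\strictPrefOrd = \prefOrd\setminus\prefEq$, this is exactly the chain of strict inclusions $\inverse{\word_1}\wc \supsetneq \inverse{\word_2}\wc \supsetneq \cdots$, so for each $i$ I can fix an infinite word $\word_i' \in \inverse{\word_i}\wc\setminus\inverse{\word_{i+1}}\wc$. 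The nesting of the sets $\inverse{\word_i}\wc$ then immediately yields the combinatorial fact that drives everything: $\word_j\word_i'\in\wc$ if and only if $j\le i$.

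Next I would build a one-player arena of $\Pone$: for each $i\ge 1$, a finite path reading $\word_i$ from a fresh state $q_i$ to a common state $\s$, together with an infinite path reading $\word_i'$ that leaves $\s$ through fresh states (so that distinct branches leave $\s$ through distinct edges). All states are controlled by $\Pone$, and the arena is non-blocking thanks to the infinite branches. The design ensures that every history from some $q_i$ to $\s$ has color projection $\word_i$, hence reaches memory state $\memUpdHat(\memInit, \word_i) = \memState$ in $\memSkel$; consequently, any strategy based on $\memSkel$ must play the very same edge out of $\s$ regardless of which $q_i$ the play started from. (The degenerate case where some $\word_i$ is empty forces $\memState = \memInit$ and is handled identically, with $q_i = \s$.)

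Finally I would derive the contradiction. From each $q_i$, $\Pone$ wins by following the $\word_i$-path to $\s$ and then the $\word_i'$-branch, producing $\word_i\word_i'\in\wc$; so an optimal strategy based on $\memSkel$, which exists by hypothesis, is winning from every $q_i$. But at $\s$ with memory $\memState$ such a strategy selects a single outgoing edge, which is the first edge of exactly one branch, say the $\word_{i_0}'$-branch; the unique play consistent with this strategy from $q_{i_0+1}$ then reads $\word_{i_0+1}\word_{i_0}'$, which lies outside $\wc$ since $i_0+1 > i_0$. Hence the strategy is not winning from $q_{i_0+1}$ --- a contradiction.

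The load-bearing step is the middle one, and it rests precisely on the chromaticity of the memory: because the update function observes colors rather than states or edges, an $\memSkel$-based strategy is forced into a single choice at $\s$ and cannot exploit knowing which $\word_i$ has been read. The remaining steps are routine bookkeeping about set inclusions and about the arena being well-defined; in particular I do not expect to need Lemma~\ref{lem:total} here, since the strict inclusions among the $\inverse{\word_i}\wc$ follow directly from the definition of $\strictPrefOrd$.
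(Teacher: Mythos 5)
Your proof is correct and follows essentially the same route as the paper: extract the continuations $\word_i'\in\inverse{\word_i}\wc\setminus\inverse{\word_{i+1}}\wc$ from the strictly decreasing chain, merge all the $\word_i$-chains and $\word_i'$-branches at a single $\Pone$-state, and observe that an $\memSkel$-based strategy is forced into one choice there (since every $\word_i$ reaches memory state $\memState$), contradicting optimality via $\word_{i_0+1}\word_{i_0}'\notin\wc$. The paper only sketches this construction, and your write-up correctly fills in the details (freshness of branch states, the uniformity notion of optimality, the degenerate empty-word case).
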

\begin{proof}
	Let $\memState\in\memStates$.
	Assume by contraposition that there is an infinitely decreasing sequence of finite words $\word_1 \prefOrdBis \word_2 \prefOrdBis \word_3 \prefOrdBis \ldots$, with $\word_i\in\cc{\memPathsOn{\memInit}{\memState}}$ for $i\ge 1$.
	Then for all $i \ge 1$, there exists $\word_i'\in\colors^\omega$ such that $\word_i\word_i'\in\wc$ and $\word_{i+1}\word_i'\notin\wc$.
	We create an infinite one-player arena of~$\Pone$ in which we merge the ends of chains for all $\word_i$ to the starts of chains for all $\word_i'$ --- as was done in Figure~\ref{fig:total}, but with infinitely many words entering $\s$ and leaving $\s$.
	In this arena, for all $i\ge 1$, it is always possible to win from the start of the chain for $\word_i$, but there is no strategy based on~$\memSkel$ winning from all the starts of the chains simultaneously.
	Therefore, $\memSkel$ is not sufficient to play optimally in all one-player arenas of $\Pone$.
\end{proof}

We will also use this last lemma from the point of view of $\Ptwo$.
If we were to define a preorder $\prefOrd'$ for $\Ptwo$ (using winning condition $\comp{\wc}$), symmetrically to $\prefOrd$ for $\Pone$, we would obtain $\word_1\prefOrd'\word_2$ if and only if $\word_2\prefOrd\word_1$ because for any finite word $\word\in\colors^*$, $\inverse{\word}\comp{\wc} = \comp{\inverse{\word}\wc}$.

We can now combine the results of Lemmas~\ref{lem:total} and~\ref{lem:infDec} to find that $\prefEq$ has finite index if $\wc$ is one-player $\memSkel$-determined.

\begin{lemma} \label{lem:prefInd}
	If both $\Pone$ and $\Ptwo$ have optimal strategies based on $\memSkel$ in their one-player arenas (i.e., if $\wc$ is one-player $\memSkel$-determined), then the right congruence $\prefEq$ has finite index.
\end{lemma}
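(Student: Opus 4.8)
The plan is to feed Lemmas~\ref{lem:total} and~\ref{lem:infDec} into a short order-theoretic argument, using the finiteness of $\memStates$ to upgrade ``finitely many $\prefEq$-classes on each $\cc\memPathsOn{\memInit}{\memState}$'' to ``finitely many $\prefEq$-classes overall''. The starting observation is that every finite word $\word\in\colors^*$ induces in $\memSkel$ a path from $\memInit$ to $\memUpdHat(\memInit,\word)$, so $\colors^* = \bigcup_{\memState\in\memStates}\cc\memPathsOn{\memInit}{\memState}$, and this union is \emph{finite} since $\memSkel$ has finitely many states. Hence it suffices to prove that, for each fixed $\memState\in\memStates$, only finitely many $\prefEq$-classes meet $\cc\memPathsOn{\memInit}{\memState}$; the index of $\prefEq$ is then bounded by the sum of these finitely many finite numbers, hence finite.

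Fix $\memState\in\memStates$. Since $\wc$ is one-player $\memSkel$-determined, $\Pone$ has optimal strategies based on $\memSkel$ in all its one-player arenas, so Lemma~\ref{lem:total} applies and $\prefOrd$ is total on $\cc\memPathsOn{\memInit}{\memState}$; quotienting by $\prefEq = {\prefOrd}\cap{\invPrefOrd}$ thus yields a genuine \emph{linear} order on $\cc\memPathsOn{\memInit}{\memState}/{\prefEq}$. Lemma~\ref{lem:infDec}, also applicable to $\Pone$, rules out any infinitely decreasing sequence for $\prefOrd$ in $\cc\memPathsOn{\memInit}{\memState}$, i.e.\ the quotient order has no infinite strictly descending chain. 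Applying Lemma~\ref{lem:infDec} from $\Ptwo$'s viewpoint — with $\comp{\wc}$, whose right congruence is still $\prefEq$ and whose associated preorder is the reverse of $\prefOrd$, as observed right after that lemma — analogously rules out an infinite strictly ascending chain in the quotient order.

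It then remains to invoke the elementary fact that a linearly ordered set with neither an infinite strictly ascending chain nor an infinite strictly descending chain is finite: being well-founded (no infinite descending chain), an infinite such set would, by iteratively removing the minimum element, produce an infinite strictly ascending chain, a contradiction. Applying this to $\cc\memPathsOn{\memInit}{\memState}/{\prefEq}$ shows that $\cc\memPathsOn{\memInit}{\memState}$ meets finitely many $\prefEq$-classes, and summing over the finitely many $\memState\in\memStates$ concludes. I do not anticipate a genuine obstacle at this stage: all the real work — the arena constructions that force totality and the absence of infinite chains — has already been carried out in Lemmas~\ref{lem:total} and~\ref{lem:infDec}. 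The only points needing a touch of care are checking that the ``reverse'' instance of Lemma~\ref{lem:infDec} for $\Ptwo$ indeed corresponds to ascending chains of $\prefOrd$, and that the finiteness lemma is applied to the \emph{quotient} $\cc\memPathsOn{\memInit}{\memState}/{\prefEq}$, so that $\prefEq$-equivalent words do not create spurious failures of strictness.
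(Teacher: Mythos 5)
Your proposal is correct and follows essentially the same route as the paper: apply Lemma~\ref{lem:total} for totality of $\prefOrd$ on each $\cc\memPathsOn{\memInit}{\memState}$, Lemma~\ref{lem:infDec} for $\Pone$ and (via the reversed preorder for $\comp{\wc}$) for $\Ptwo$ to exclude infinite descending and ascending chains, and then sum over the finitely many states of $\memSkel$. Your explicit spelling-out of the order-theoretic finiteness fact on the quotient is just a more detailed rendering of the paper's concluding step.
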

\begin{proof}
	Using Lemma~\ref{lem:infDec} along with the hypothesis about $\Pone$, we have that for all $\memState\in\memStates$, there are no infinitely decreasing sequence of words in $\cc{\memPathsOn{\memInit}{\memState}}$ for $\prefOrd$.
	Using the same result replacing $\Pone$ with $\Ptwo$, we obtain that there is no infinitely decreasing sequence for $\prefOrd'$, or in other words, that there is no infinitely increasing sequence for $\prefOrd$.
	For $\memState\in\memStates$, as $\prefOrd$ is total in $\cc{\memPathsOn{\memInit}{\memState}}$ (Lemma~\ref{lem:total}), we conclude that there are only finitely many equivalence classes of $\prefEq$ in $\cc{\memPathsOn{\memInit}{\memState}}$.
	As $\memStates$ is finite, there are only finitely many equivalence classes of $\prefEq$ in~$\bigcup_{\memState\in\memStates} \cc{\memPathsOn{\memInit}{\memState}} = \colors^*$.
\end{proof}
Under one-player chromatic-finite-memory determinacy of $\wc$, we can therefore consider the minimal-state automaton $\minStateAtmtn$ of $\prefEq$.

\subparagraph*{\texorpdfstring{$\memSkel$}{M}-cycle-consistency of \texorpdfstring{$\wc$}{W}.}
We now prove in a straightforward way that one-player $\memSkel$-determinacy of $\wc$ implies $\memSkel$-cycle-consistency of $\wc$.

\begin{lemma} \label{lem:repeating}
	If both $\Pone$ and $\Ptwo$ have optimal strategies based on $\memSkel$ in their one-player arenas (i.e., if $\wc$ is one-player $\memSkel$-determined), then winning condition $\wc$ is $\memSkel$-cycle-consistent.
\end{lemma}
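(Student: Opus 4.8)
The plan is to prove the contrapositive: if $\wc$ is \emph{not} $\memSkel$-cycle-consistent, then one of the two players fails to have an optimal strategy based on $\memSkel$ in some one-player arena, contradicting one-player $\memSkel$-determinacy. By the symmetric part of Lemma~\ref{lem:stableProduct} (and the symmetry between $\Pone$ and $\Ptwo$, $\wc$ and $\comp\wc$), it suffices to handle the case where the winning-cycle condition fails: there is some $\word\in\colors^*$ such that $(\cc\winCycWord{\word})^\omega \not\subseteq \inverse{\word}\wc$, i.e., there exist cycles $\cyc_1, \cyc_2, \ldots \in \winCycWord{\word}$, all on the skeleton state $\memState = \memUpdHat(\memInit, \word)$, such that the infinite concatenation $\colHatFin(\cyc_1)\colHatFin(\cyc_2)\cdots$ is \emph{not} in $\inverse{\word}\wc$, that is, $\word\,\colHatFin(\cyc_1)\colHatFin(\cyc_2)\cdots \notin \wc$.

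Next I would build a one-player arena of $\Pone$ witnessing the failure of optimality. Start from an initial state, run a finite chain of edges reading the word $\word$, arriving at a ``hub'' state $\s$; the memory state reached at $\s$ (via the chain labeled $\word$) is exactly $\memUpdHat(\memInit, \word) = \memState$. From $\s$ I attach, for each $i \ge 1$, a separate loop of edges that starts and ends at $\s$ and reads the color sequence $\colHatFin(\cyc_i)$; since each $\cyc_i \in \winCycWord{\word}$, repeating any single loop $i$ forever from $\s$ produces $\word\,(\colHatFin(\cyc_i))^\omega \in \wc$, so $\Pone$ wins from $\s$ (and hence from the initial state) by committing to a single loop forever — this is a memoryless winning strategy. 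Thus $\Pone$ wins from every state of this arena, so an optimal strategy based on $\memSkel$ must in particular be winning from the initial state.

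Now comes the key observation, which is also where the main subtlety lies: a strategy of $\Pone$ based on $\memSkel$ makes its choice at $\s$ purely as a function of the current arena state $\s$ and the current memory state. Every time a play returns to $\s$ having traversed the loops read so far, the memory state is $\memUpdHat(\memInit, \word\,\colHatFin(\cyc_{j_1})\cdots\colHatFin(\cyc_{j_k}))$; but since each $\cyc_{j_\ell}$ is a \emph{cycle on $\memState$} in $\memSkel$, this memory state is always $\memState$, independent of which loops were taken and how often. Hence a $\memSkel$-based strategy always makes the \emph{same} choice of edge at $\s$, i.e., it forces the play into one fixed loop (say the one reading $\colHatFin(\cyc_i)$) and stays there forever, producing $\word\,(\colHatFin(\cyc_i))^\omega \in \wc$ — so actually every $\memSkel$-based strategy \emph{is} winning here, and this particular arena does not yield the contradiction. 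The fix is to force $\Pone$ to interleave the loops: instead of letting $\Pone$ freely pick a loop at $\s$, route the loops so that after completing loop $\cyc_i$ the play is obliged to move toward completing loop $\cyc_{i+1}$, using \emph{$\Ptwo$-controlled} branching is not allowed (this is a one-player arena of $\Pone$), so instead we use the structure of the arena itself: make the arena a single long chain/DAG whose unique play from the initial state reads $\word\,\colHatFin(\cyc_1)\colHatFin(\cyc_2)\cdots$. Concretely, the arena has one state per position of this infinite word, each with a single outgoing edge; $\Pone$ has no real choices, the unique play is losing, yet — and here is the point — I must instead arrange that $\Pone$ \emph{could} win with infinite memory but not with $\memSkel$.

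So the correct construction, following the template of Lemma~\ref{lem:infDec}, is: take countably many disjoint copies, the $i$-th copy being the chain reading $\word\,(\colHatFin(\cyc_i))^\omega$ from an initial state $\s_i$, but all sharing a common ``tail'' region, together with one extra arena that is the single chain reading the bad word $\word\,\colHatFin(\cyc_1)\colHatFin(\cyc_2)\cdots$ — no, more simply: build one arena where $\Pone$ at the hub may choose to extend by the next block $\colHatFin(\cyc_i)$ (for the smallest $i$ not yet used) or instead ``settle'' into repeating the last used block forever. From any state, $\Pone$ wins by settling (yielding $\word\,(\colHatFin(\cyc_i))^\omega \in \wc$), so optimality requires a $\memSkel$-based strategy winning everywhere. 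But a $\memSkel$-based strategy, seeing the same memory state $\memState$ at the hub every time, makes the same choice each visit; if that choice is ``settle'' it wins, but then from the ``extend'' successor states it must also eventually settle, and by finiteness of $\memStates$ one shows that on the branch where it is forced through infinitely many ``extend'' choices (a branch that exists because the strategy's uniform choice at each hub-like state propagates) the resulting play is $\word\,\colHatFin(\cyc_1)\colHatFin(\cyc_2)\cdots \notin \wc$. I expect the main obstacle to be exactly this bookkeeping: making the arena force the interleaving of the $\cyc_i$'s on some play consistent with every $\memSkel$-based strategy, while keeping $\Pone$ able to win (with memory) from every state. Once the arena is set up correctly, the contradiction with one-player $\memSkel$-determinacy is immediate, and the losing-cycle case follows symmetrically via Lemma~\ref{lem:stableProduct} applied to $\comp\wc$ and the corresponding one-player arena of $\Ptwo$.
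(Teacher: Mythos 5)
There is a genuine gap, and it is the choice of player. The winning-cycle half of $\memSkel$-cycle-consistency ($(\cc\winCycWord{\word})^\omega \subseteq \inverse{\word}\wc$) cannot be extracted from $\Pone$'s one-player arenas at all: in any one-player arena of $\Pone$ whose loops are built from cycles of $\winCycWord{\word}$, $\Pone$ always has a memoryless winning strategy (commit to a single loop forever), so no contradiction with $\Pone$'s $\memSkel$-determinacy can ever arise --- which is exactly the dead end you ran into and then tried to patch. Your final ``settle or extend'' construction does not repair this: to force the interleaving $\word\,\colHatFin(\cyc_1)\colHatFin(\cyc_2)\cdots$ the arena must distinguish, in its state space, how many blocks have been consumed, so the hub-like states at different stages are \emph{distinct} arena states; a strategy based on $\memSkel$ is a function of the pair (arena state, memory state) and is therefore free to choose ``extend'' at the first few hubs and ``settle'' later. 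Indeed, settling immediately is a memoryless optimal strategy in that arena, so no play forced through infinitely many ``extend'' moves is consistent with every $\memSkel$-based strategy, and the claimed contradiction evaporates. Your symmetry bookkeeping is also inverted: the winning-cycle inclusion is tied to $\Ptwo$'s one-player arenas, and the losing-cycle inclusion to $\Pone$'s, not the other way around.

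The paper's proof hands the choice at the hub to the \emph{opponent}. For the winning-cycle half, build the one-player arena of $\Ptwo$: a chain reading $\word$ into a hub $\s_2$, and from $\s_2$ a $\Ptwo$-choice among loops reading $\colHatFin(\cyc_1), \colHatFin(\cyc_2), \ldots$, each returning to $\s_2$. Your key observation (every return to the hub has memory state $\memState = \memUpdHat(\memInit,\word)$, so an $\memSkel$-based strategy repeats the same loop forever) is exactly right, but applied to $\Ptwo$ it shows that \emph{no} $\memSkel$-based strategy of $\Ptwo$ wins from the start, since repeating one winning cycle forever yields a word in $\inverse{\word}\wc$. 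Since $\Ptwo$ has an optimal strategy based on $\memSkel$ in its one-player arenas, the initial state is not winning for $\Ptwo$ at all; hence every play --- in particular the one interleaving $\word_1\word_2\ldots$ --- produces a word of $\wc$, which is precisely the desired inclusion, obtained directly rather than by contradiction about $\Pone$'s memory. The losing-cycle half is symmetric, using a one-player arena of $\Pone$.
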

\begin{proof}
	Let $\word\in\colors^*$ and $\memState = \memUpdHat(\memInit, \word)$.
	We show that $(\cc{\winCycWord{\word}})^\omega \subseteq \inverse{\word}\wc$.
	If $\cc{\winCycWord{\word}}$ is empty, this is true.
	If not, let $\word_1, \word_2,\ldots{}$ be an infinite sequence of finite words in $\cc{\winCycWord{\word}}$ --- we show that the infinite word $\word_1\word_2\ldots{}$ is in $\inverse{\word}\wc$.
	We consider the infinite one-player arena of $\Ptwo$ depicted in Figure~\ref{fig:cycCons}: it starts with a chain for $\word$ from a state $\s_1$ to a state $\s_2$, and $\s_2$ offers a choice between cycles for each finite word in $\{\word_1,\word_2,\ldots\}$.
	In this arena, $\Ptwo$ has no winning strategy based on $\memSkel$ from $\s_1$, since the same memory state $\memState$ is always reached in $\s_2$ (hence the same choice must always be made in $\s_2$), and repeating any cycle in $\cc{\winCycWord{\word}}$ forever after~$\word$ is winning for $\Pone$ by definition of $\cc{\winCycWord{\word}}$.
	Therefore, $\Ptwo$ also has no winning strategy at all, which means in particular that the infinite word $\word_1\word_2\ldots{}$ must be a winning continuation of $\word$ --- $\word_1\word_2\ldots{}$ is in $\inverse{\word}\wc$.
	Hence, $(\cc{\winCycWord{\word}})^\omega \subseteq \inverse{\word}\wc$.%
	\begin{figure}[t]
		\centering
		\begin{tikzpicture}[every node/.style={font=\small,inner sep=1pt}]
			\draw (0,0) node[carre] (s1) {$\s_1$};
			\draw ($(s1)+(2,0)$) node[carre] (s2) {$\s_2$};
			\draw ($(s1)-(0.8,0)$) edge[-latex'] (s1);
			\draw (s1) edge[-latex',decorate] node[above=3pt] {$\word$} (s2);
			\draw (s2) edge[-latex',decorate,in=30,out=-30,distance=1.2cm,very thick] node[right=4pt] {$\word_1, \word_2, \ldots$} (s2);
		\end{tikzpicture}
		\caption{Infinite one-player arena of $\Ptwo$ used in the proof of Lemma~\ref{lem:repeating}.
		The thick squiggly arrow indicates a choice between sequences of edges for any word in $\{\word_1, \word_2, \ldots\}$.}
		\label{fig:cycCons}
	\end{figure}

	Using a similar one-player arena of $\Pone$, we can show symmetrically that $(\cc{\loseCycWord{\word}})^\omega \subseteq \inverse{\word}\comp{\wc}$ for all $\word\in\colors^*$.
\end{proof}

The reciprocal of this result is false, as shown in the following example.

\begin{example} \label{ex:abC}
	Let $\colors = \{a, b\}$ and $\wc = ab\colors^\omega$.
	If we consider the skeleton $\memSkel$ in Figure~\ref{fig:cycConsInsufficientToPlay} (left), then $\wc$ is $\memSkel$-cycle-consistent: for all finite words $\word$ except for $\emptyWord$ and $a$, either all continuations are winning (if $\word \in abC^*$) or all continuations are losing.
	If $\word$ is $\emptyWord$ or $a$, then it reaches state $\memInit$ of $\memSkel$, and the only cycles on $\memInit$ are in $a^+$, are losing, and are losing when infinitely many of them are combined into an infinite word.
	But this automaton does not suffice to play optimally in arena $\arena$ in Figure~\ref{fig:cycConsInsufficientToPlay} (center), as seeing $a$ does not change the state.

	Notice that the minimal-state automaton $\minStateAtmtn$, in Figure~\ref{fig:cycConsInsufficientToPlay} (right), has four states (corresponding to equivalence classes $\eqClass{\emptyWord}$, $\eqClass{a}$, $\eqClass{ab}$, and $\eqClass{b}$) and suffices to play optimally.%
\end{example}
\begin{figure}[tbh]
\centering
\begin{minipage}{0.33\columnwidth}
	\centering
	\begin{tikzpicture}[every node/.style={font=\small,inner sep=1pt}]
		\draw (0,0) node[diamant] (s1) {$\memInit$};
		\draw ($(s1)+(2,0)$) node[diamant] (s2) {$\memState_2$};
		\draw ($(s1)+(0,1.1)$) edge[-latex'] (s1);
		\draw (s1) edge[-latex'] node[above=2pt] {$b$} (s2);
		\draw (s1) edge[-latex',in=-150,out=150,distance=0.8cm] node[left=2pt] {$a$} (s1);
		\draw (s2) edge[-latex',in=30,out=-30,distance=0.8cm] node[right=2pt] {$a, b$} (s2);
	\end{tikzpicture}
\end{minipage}%
\begin{minipage}{0.26\columnwidth}
	\centering
	\begin{tikzpicture}[every node/.style={font=\small,inner sep=1pt}]
		\draw (0,0) node[rond] (s1) {};
		\draw ($(s1)+(0,0.8)$) edge[-latex'] (s1);
		\draw (s1) edge[-latex',in=-150,out=150,distance=0.8cm] node[left=2pt] {$a$} (s1);
		\draw (s1) edge[-latex',in=30,out=-30,distance=0.8cm] node[right=2pt] {$b$} (s1);
	\end{tikzpicture}
\end{minipage}%
\begin{minipage}{0.41\columnwidth}
	\centering
	\begin{tikzpicture}[every node/.style={font=\small,inner sep=1pt}]
		\draw (0,0) node[diamant] (eps) {$\eqClass{\emptyWord}$};
		\draw ($(eps)+(2,1)$) node[diamant] (a) {$\eqClass{a}$};
		\draw ($(a)+(2,0)$) node[diamant] (ab) {$\eqClass{ab}$};
		\draw ($(eps)+(2,-1)$) node[diamant] (b) {$\eqClass{b}$};
		\draw ($(eps)+(0,1.1)$) edge[-latex'] (eps);
		\draw (eps) edge[-latex'] node[above=2pt] {$a$} (a);
		\draw (eps) edge[-latex'] node[below=2pt] {$b$} (b);
		\draw (a) edge[-latex'] node[above=2pt] {$b$} (ab);
		\draw (b) edge[-latex',in=30,out=-30,distance=0.8cm] node[right=2pt] {$a, b$} (b);
		\draw (ab) edge[-latex',in=30,out=-30,distance=0.8cm] node[right=2pt] {$a, b$} (ab);
	\end{tikzpicture}
\end{minipage}
\caption{Skeleton $\memSkel$ (left), arena $\arena$ (center) and skeleton $\minStateAtmtn$ (right) used in Example~\ref{ex:abC}.}
\label{fig:cycConsInsufficientToPlay}
\end{figure}

\begin{remark}
	As discussed in Section~\ref{sec:thm}, Theorem~\ref{thm} does not hold if we assume chromatic-finite-memory determinacy in arenas in which \emph{states} rather than edges are labeled with colors.
	Lemma~\ref{lem:repeating} is an example of a step in the proof of Theorem~\ref{thm} that would not work with state-labeled arenas: the construction in Figure~\ref{fig:cycCons} would not work (there would have to be a color labeling $\s_2$ seen at the start of every cycle, but words $\word_i$ cannot all start with the same color in general).
	There is a winning condition that is memoryless-determined in state-labeled arenas~\cite{GW06} for which it is straightforward to show that it is not $\memSkelTriv$-cycle-consistent.
\end{remark}

We will often use a weaker implication of $\memSkel$-cycle-consistency, which is that a finite combination of winning cycles is still a winning cycle (i.e., if $\cyc, \cyc'\in\winCycWord{\word}$, then $\cyc\cyc'\in\winCycWord{\word}$).%

\subparagraph*{Wrap-up of the section.}
Thanks to the results from this section, we deduce the first item of Theorem~\ref{thm}.

\begin{corollary}[First item of Theorem~\ref{thm}]
	\firstItem
\end{corollary}
\begin{proof}
	Follows from Lemmas~\ref{lem:prefInd} and~\ref{lem:repeating}.
\end{proof}

In particular, we obtain from the previous result that if both players have optimal strategies based on $\memSkel$ in their one-player arenas, then $\wc$ is both $(\minStateAtmtn\memProduct\memSkel)$-prefix-independent and $(\minStateAtmtn\memProduct\memSkel)$-cycle-consistent (using Lemma~\ref{lem:stableProduct}).

\begin{remark}
	If we compare Example~\ref{ex:buchiABuchiB} ($\wc = \Buchi{a}\cap\Buchi{b}$) and Example~\ref{ex:abC} ($\wc = ab\colors^\omega$), we see that we can easily classify the prefixes of the former, but that information is not sufficient to play optimally: we need some more information to classify cycles.
	For the latter, it is possible to find a skeleton classifying cycles that is insufficient to play optimally, but a good classification of the prefixes suffices to play optimally.
	In general, in order to understand $\wc$, we need to have information about prefixes and about cycles, which is why, intuitively, skeleton $\minStateAtmtn\memProduct\memSkel$ turns out to be useful.
\end{remark}

\begin{remark}
In the proofs of this section, we only ever used arenas with countably many states and edges.
This implies that we can actually formulate a slightly stronger version of Theorem~\ref{thm:lift} (one-to-two-player lift): chromatic-finite-memory determinacy in one-player \emph{countable} arenas is equivalent to chromatic-finite-memory determinacy in arenas of any cardinality.
\end{remark}

\section{From properties of a winning condition to \texorpdfstring{$\omega$}{omega}-regularity} \label{sec:proof2}
In this section, we fix a language $\wc\subseteq\colors^\omega$ and a skeleton $\memSkel = \memSkelFull$, and we assume that \textbf{$\wc$ is $\memSkel$-prefix-independent and $\memSkel$-cycle-consistent}.
Our goal is to show that $\wc$ can be recognized by a parity automaton defined on top of $\memSkel$ and is thus $\omega$-regular.
To do that, we show in multiple steps how to assign a priority to each transition of $\memSkel$ through a function $\pri\colon \memStates\times\colors\to \{0, \ldots, n\}$ so that $\wc$ is recognized by the parity automaton~$(\memSkel, \pri)$.

\subparagraph*{Simplified notations.}
\begin{sloppypar}
In this section, as we have $\memSkel$-prefix-independence and $\memSkel$-cycle-consistency assumptions about $\wc$, we extend some notations from Section~\ref{sec:preliminaries} for conciseness.
\end{sloppypar}

As $\wc$ is $\memSkel$-prefix-independent, for $\memState\in\memStates$, we write $\inverse{\memState}\wc$ for the set of infinite words that equals $\inverse{\word}\wc$ for any $\word\in\cc{\memPathsOn{\memInit}{\memState}}$.
Notice in particular that $\inverse{\memInit}\wc = \inverse{\emptyWord}\wc = \wc$.
Moreover, as we consider the property of $\memSkel$-cycle-consistency along with $\memSkel$-prefix-independence, the definition of $\memSkel$-cycle-consistency can be written by only quantifying over states of $\memSkel$ and not over all finite words.
The reason is that there are then only finitely many classes of finite words that matter, which correspond to the states of $\memSkel$.
We define a few more notations that only make sense under the $\memSkel$-prefix-independent hypothesis.
Let
\[
\winCyc{\memState} = \{\cyc\in\memCyclesOn{\memState} \mid (\colHatFin(\cyc))^\omega \in \inverse{\memState}\wc\}
\]
be the cycles on $\memState$ that induce winning words when repeated infinitely many times from $\memState$, and $\loseCyc{\memState}$ be their losing counterparts.
In this case, $\wc$ is $\memSkel$-cycle-consistent if and only if for all $\memState\in\memStates$, $(\cc{\winCyc{\memState}})^\omega \subseteq \inverse{\memState}\wc$ and $(\cc{\loseCyc{\memState}})^\omega \subseteq \inverse{\memState}\comp{\wc}$.
We call elements of $\winCyc{\memState}$ (resp.\ $\loseCyc{\memState}$) \emph{winning} (resp.\ \emph{losing}) \emph{cycles on $\memState$}.
The set of winning (resp.\ losing) cycles of $\memSkel$ (on any state) is denoted $\memWinCycles$ (resp.\ $\memLoseCycles$).
We write $\cycVal{\cyc}$ for the \emph{value} of a cycle: $\win$ if $\cyc\in\memWinCycles$, and $\lose$ if $\cyc\in\memLoseCycles$.

\subparagraph*{Proof ideas.}
Our intermediate technical lemmas will focus on cycles of $\memSkel$, how they relate to each other, and what happens when we combine them.
Our main tool is to define a preorder on cycles, which will help assign priorities to transitions of $\memSkel$ --- the aim being to define a parity condition on top of $\memSkel$ that recognizes $\wc$.
Intuitively, for some state $\memState$ of $\memSkel$, $\cyc\in\winCyc{\memState}$, and $\cyc'\in\loseCyc{\memState}$, we look at which cycle ``dominates'' the other, that is whether the combined cycle $\cyc\cyc'$ is in $\winCyc{\memState}$ (in which case $\cyc$ dominates $\cyc'$) or in $\loseCyc{\memState}$ (in which case $\cyc'$ dominates $\cyc$).
We will formalize this and show how to extend this idea to cycles that may not share any common state.

\begin{remark}
	One may wonder why we seek to define a parity condition on top of $\memSkel$ to prove that $\wc$ is $\omega$-regular, rather than a more general \emph{Muller condition} which would achieve the same goal.
	Indeed, using $\memSkel$-cycle-consistency and a recent result by Casares et al.~\cite[Section~5]{CCF21}, it is straightforward to show that we could relabel such a Muller automaton with a parity condition defining the same language.

	One of the obstacles in our context is that we may start with infinitely many colors; in order to prove $\omega$-regularity of $\wc$, we need to show at some point that many colors can be assumed to be equal (for $\wc$) in order to get finitely many classes of ``equivalent'' colors.
	The way we manage that, using the aforementioned idea of ordering cycles, actually brings us very close to directly defining a relevant parity condition on top of $\memSkel$ --- it does not appear that our proof technique can be easily simplified by trying to obtain a Muller condition.
\end{remark}

\subparagraph*{Combining cycles on the same skeleton state.}
We first prove that ``shifting'' the start of a cycle does not alter its value.

\begin{lemma}[Shift independence] \label{lem:indOrd}
	Let $\memState_1, \memState_2\in\memStates$ be two states of $\memSkel$.
	Let $\memPth_1\in\memPathsOn{\memState_1}{\memState_2}$ and $\memPth_2\in\memPathsOn{\memState_2}{\memState_1}$; $\memPth_1\memPth_2$ is a cycle on $\memState_1$ and $\memPth_2\memPth_1$ is a cycle on $\memState_2$.
	Then, $\cycVal{\memPth_1\memPth_2} = \cycVal{\memPth_2\memPth_1}$.
\end{lemma}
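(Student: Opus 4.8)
The plan is to reduce the whole statement to a single application of $\memSkel$-prefix-independence, after passing to color projections. First I would set $u = \colHatFin(\memPth_1) \in \cc\memPathsOn{\memState_1}{\memState_2}$ and $v = \colHatFin(\memPth_2) \in \cc\memPathsOn{\memState_2}{\memState_1}$, so that $\colHatFin(\memPth_1\memPth_2) = uv$ and $\colHatFin(\memPth_2\memPth_1) = vu$. By definition of $\cycVal{\cdot}$ and of $\winCyc{\memState}$ (and since $\inverse{\memState}\wc$ and $\inverse{\memState}\comp{\wc}$ partition $\colors^\omega$, so that $\cycVal{\cdot}$ is well-defined on cycles), the claim $\cycVal{\memPth_1\memPth_2} = \cycVal{\memPth_2\memPth_1}$ is equivalent to the single equivalence
\[
	(uv)^\omega \in \inverse{\memState_1}\wc \quad\Longleftrightarrow\quad (vu)^\omega \in \inverse{\memState_2}\wc .
\]

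Next I would fix a finite word $\word \in \cc\memPathsOn{\memInit}{\memState_1}$, which exists because all states of a skeleton are assumed reachable from its initial state. Then $\word u \in \cc\memPathsOn{\memInit}{\memState_2}$, and by $\memSkel$-prefix-independence we may identify $\inverse{\memState_1}\wc = \inverse{\word}\wc$ and $\inverse{\memState_2}\wc = \inverse{\word u}\wc$ (this is exactly the simplified notation introduced at the start of Section~\ref{sec:proof2}). The chain of equivalences is then immediate:
\[
	(uv)^\omega \in \inverse{\memState_1}\wc
	\iff \word(uv)^\omega \in \wc
	\iff \word u (vu)^\omega \in \wc
	\iff (vu)^\omega \in \inverse{\memState_2}\wc ,
\]
where the middle step uses only the syntactic identity $\word(uv)^\omega = \word u (vu)^\omega$ of infinite words. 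This establishes the required equivalence and hence the lemma.

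I do not expect any genuine obstacle here: the entire content is the observation that $(uv)^\omega$ and $(vu)^\omega$ are the same ultimately periodic word once a finite prefix $u$ is prepended, and that prepending $u$ is precisely what moves the base point from $\memState_1$ to $\memState_2$ inside $\memSkel$ — which $\memSkel$-prefix-independence neutralizes. The only points requiring a little care are to invoke the reachability-of-states assumption so that $\inverse{\memState_1}\wc$ is legitimately written as $\inverse{\word}\wc$, and to phrase everything at the level of color projections, since $\cycVal{\cdot}$ is defined through $\colHatFin$.
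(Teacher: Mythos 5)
Your proposal is correct and follows essentially the same route as the paper: the paper first shows that for $\word_1\in\cc\memPathsOn{\memState_1}{\memState_2}$ and $\word_2\in\colors^\omega$ one has $\word_1\word_2\in\inverse{\memState_1}\wc \Leftrightarrow \word_2\in\inverse{\memState_2}\wc$ (using $\memSkel$-prefix-independence and the fact that a prefix reaching $\memState_1$ extended by $\word_1$ reaches $\memState_2$), and then applies this with the same rotation identity $(\colHatFin(\memPth_1\memPth_2))^\omega = \colHatFin(\memPth_1)(\colHatFin(\memPth_2\memPth_1))^\omega$ that you use. Your fixing of a concrete prefix $\word\in\cc\memPathsOn{\memInit}{\memState_1}$ is just a streamlined instantiation of that auxiliary equivalence, so there is no substantive difference.
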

\begin{proof}
	For all $\word_1\in\cc{\memPathsOn{\memState_1}{\memState_2}}$ and $\word_2\in\colors^\omega$, notice that
	\begin{align*}
		\word_1\word_2\in\inverse{\memState_1}\wc
		&\Longleftrightarrow \exists\word\in\cc{\memPathsOn{\memInit}{\memState_1}}, \word_1\word_2\in\inverse{\word}\wc \\
		&\Longleftrightarrow \exists\word\in\cc{\memPathsOn{\memInit}{\memState_1}}, \word_2\in\inverse{(\word\word_1)}\wc \\
		&\Longleftrightarrow \exists\word'\in\cc{\memPathsOn{\memInit}{\memState_2}}, \word_2\in\inverse{(\word')}\wc \\
		&\Longleftrightarrow \word_2\in\inverse{\memState_2}\wc.
	\end{align*}
	The third equivalence is due to the fact that $\word\word_1$ is in $\cc{\memPathsOn{\memInit}{\memState_2}}$ for the left-to-right implication, and to $\memSkel$-prefix-independence for the right-to-left implication; if there exists $\word'\in\cc{\memPathsOn{\memInit}{\memState_2}}$ such that $\word_2\in\inverse{(\word')}\wc$, then the same is true for any word in $\cc{\memPathsOn{\memInit}{\memState_2}}$.

	Going back to the statement of the lemma, we have that
	\begin{align*}
		\memPth_1&\memPth_2\in\winCyc{\memState_1}\\
		&\Longleftrightarrow (\colHatFin(\memPth_1\memPth_2))^\omega\in\inverse{\memState_1}\wc \\
		&\Longleftrightarrow \colHatFin(\memPth_1)(\colHatFin(\memPth_2\memPth_1))^\omega\in\inverse{\memState_1}\wc &&\text{as $(\colHatFin(\memPth_1\memPth_2))^\omega = \colHatFin(\memPth_1)(\colHatFin(\memPth_2\memPth_1))^\omega$} \\
		&\Longleftrightarrow (\colHatFin(\memPth_2\memPth_1))^\omega\in\inverse{\memState_2}\wc &&\text{by the above property as $\colHatFin(\memPth_1)\in\cc{\memPathsOn{\memState_1}{\memState_2}}$}\\
		&\Longleftrightarrow \memPth_2\memPth_1\in\winCyc{\memState_2}.
	\end{align*}
	Hence, the values of $\memPth_1\memPth_2$ and $\memPth_2\memPth_1$ are always the same.
\end{proof}

In particular, this result implies that swapping two cycles on the same skeleton state does not alter the value: if $\cyc, \cyc'\in\memCyclesOn{\memState}$, then $\cycVal{\cyc\cyc'} = \cycVal{\cyc'\cyc}$.

The next two lemmas are used to show that although cycles of $\memSkel$ that are taken infinitely often might have an impact on the result of a play, their \emph{relative number of repetitions} is not relevant (i.e., $\cycVal{\cyc\cyc'} = \cycVal{\cyc^k(\cyc')^l}$ for any $k, l \ge 1$).
These two proofs and statements are very close to~\cite[Lemmas~9,~10, and~11]{CN06} and are a direct generalization to a larger class of winning conditions.

\begin{lemma} \label{lem:uniformDominating}
	Let $\memState\in\memStates$.
	Let $\memSet, \memSet'\subseteq \memCyclesOn{\memState}$ be non-empty sets of cycles on $\memState$.
	We have
	\[
	\forall \cyc'\in\memSet', \exists \cyc\in\memSet, \cyc\cyc' \in \winCyc{\memState} \implies
	\exists \cyc\in\memSet, \forall \cyc'\in\memSet', \cyc\cyc' \in \winCyc{\memState}.
	\]
\end{lemma}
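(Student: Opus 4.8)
I want to prove the statement
\[
\forall \cyc'\in\memSet', \exists \cyc\in\memSet, \cyc\cyc' \in \winCyc{\memState} \implies
\exists \cyc\in\memSet, \forall \cyc'\in\memSet', \cyc\cyc' \in \winCyc{\memState}
\]
by contraposition combined with $\memSkel$-cycle-consistency. So I assume the right-hand side fails: for every $\cyc\in\memSet$ there is some $\cyc'\in\memSet'$ (depending on $\cyc$) with $\cyc\cyc'\notin\winCyc{\memState}$, i.e.\ $\cyc\cyc'\in\loseCyc{\memState}$ (recall $\cyc\cyc'$ is a cycle on $\memState$, so it is winning or losing). I then want to build, from a fixed $\cyc\in\memSet$, a single losing cycle that exhibits a $\cyc'\in\memSet'$ with $\cyc\cyc'$ losing, and use this to contradict the left-hand side.

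The key construction is the following. Fix $\cyc\in\memSet$. Using the failure of the RHS, pick $\cyc'_\cyc\in\memSet'$ with $\cyc\cyc'_\cyc\in\loseCyc{\memState}$. Now I want to show $\cyc'_\cyc$ "dominates" $\cyc$ in a strong enough sense that $\cyc\cyc'_\cyc$ losing propagates. Concretely: consider the cycle $\cyc\cyc'_\cyc$, which is losing on $\memState$. By $\memSkel$-cycle-consistency (its finite-combination consequence noted after Lemma~\ref{lem:repeating}), any finite concatenation of losing cycles on $\memState$ is again losing on $\memState$, and in fact $(\cc\loseCyc{\memState})^\omega\subseteq\inverse{\memState}\comp{\wc}$. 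So the infinite word $(\colHatFin(\cyc\cyc'_\cyc))^\omega = (\colHatFin(\cyc)\colHatFin(\cyc'_\cyc))^\omega$ is a losing continuation after reaching $\memState$. Here is the crux: suppose toward a contradiction that the LHS holds; instantiate it at $\cyc'_\cyc$ to get some $\cyc''\in\memSet$ with $\cyc''\cyc'_\cyc\in\winCyc{\memState}$. Then both $\cyc''\cyc'_\cyc$ is winning and $\cyc\cyc'_\cyc$ is losing, but these are not directly combinable into a contradiction because $\cyc\neq\cyc''$ in general. The fix — and this is exactly the pattern of the cited Lemma~10/11 of~\cite{CN06} — is to argue more carefully using the *same* $\cyc$ throughout: I should not pick an arbitrary witness per $\cyc$, but rather show that a \emph{single} $\cyc'\in\memSet'$ works against \emph{all} of $\memSet$, which is the dual of what we are proving, so instead I run the contrapositive on the roles of $\win$ and $\lose$.

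Let me restate the clean route. Assume the conclusion fails. Then the set $\memSet$ has the property: $\forall\cyc\in\memSet,\ \exists\cyc'\in\memSet',\ \cyc\cyc'\in\loseCyc{\memState}$. I now claim this forces $\exists\cyc'\in\memSet'$ such that $\forall\cyc\in\memSet,\ \cyc\cyc'\in\loseCyc{\memState}$, which immediately contradicts the hypothesis (the LHS of the implication). To prove this claim, fix any $\cyc_0\in\memSet$ and let $\cyc'_0\in\memSet'$ be such that $\cyc_0\cyc'_0\in\loseCyc{\memState}$. Given an arbitrary $\cyc\in\memSet$, I must show $\cyc\cyc'_0\in\loseCyc{\memState}$. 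Consider the word $w = \colHatFin(\cyc)\big(\colHatFin(\cyc'_0)\colHatFin(\cyc_0)\big)^\omega$. By shift independence (Lemma~\ref{lem:indOrd}) $\cyc'_0\cyc_0$ has the same value as $\cyc_0\cyc'_0$, namely $\lose$, so $\big(\colHatFin(\cyc'_0)\colHatFin(\cyc_0)\big)^\omega\in\inverse{\memState}\comp{\wc}$; but I need a statement about $w$ itself reading $\cyc$ first. Here I use that $\colHatFin(\cyc)\in\cc\memPathsOn{\memState}{\memState}$ and $\memSkel$-prefix-independence (as in the displayed equivalence at the start of the proof of Lemma~\ref{lem:indOrd}): reading $\colHatFin(\cyc)$ from $\memState$ returns to $\memState$, and the set of winning continuations from $\memState$ after reading a path back to $\memState$ is unchanged. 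Hence $w\in\inverse{\memState}\comp{\wc}$. On the other hand, $w$ is also the $\omega$-word obtained by repeating the single cycle $\cyc\cyc'_0\cyc_0$... no — it is $\colHatFin(\cyc\cyc'_0)$ followed by $(\colHatFin(\cyc_0\cyc'_0))$-type blocks; I line up the parenthesization so that $w = (\colHatFin(\cyc\cyc'_0)\colHatFin(\cyc_0))(\colHatFin(\cyc'_0\cyc_0))^\omega$ and read off that $w$ is losing iff the "tail cycle" pattern is losing, forcing $\cyc\cyc'_0\in\loseCyc{\memState}$ via $\memSkel$-cycle-consistency applied to the winning case (if $\cyc\cyc'_0$ were winning, combining it with the winning... ) — the bookkeeping here is the routine part.

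\textbf{Main obstacle.} The delicate point is precisely the combinatorial bookkeeping of \emph{which} parenthesization of an infinite concatenation of cycles I read, and making sure every block is genuinely a cycle on $\memState$ so that $\winCyc{\memState}/\loseCyc{\memState}$ and $\memSkel$-cycle-consistency apply. Shift independence (Lemma~\ref{lem:indOrd}) and $\memSkel$-prefix-independence are the tools that let me re-bracket freely, but I expect the crux to be setting up a single infinite word that is simultaneously seen as "$\cyc\cyc'$ repeated" (to read off the value of $\cyc\cyc'$) and as "a combination of known winning/losing cycles" (to force that value via $\memSkel$-cycle-consistency), and then extracting a single $\cyc'$ that works uniformly against all of $\memSet$. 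This is exactly the generalization of~\cite[Lemmas~10--11]{CN06} that the authors advertise, so I expect the argument to go through with the prefix-independence relativization handled by the equivalence chain already used in Lemma~\ref{lem:indOrd}.
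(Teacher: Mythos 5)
There is a genuine gap, and it sits exactly at the step you flag as ``routine bookkeeping.'' Your ``clean route'' reduces the lemma to the claim that the negation of the conclusion, $\forall\cyc\in\memSet,\ \exists\cyc'\in\memSet',\ \cyc\cyc'\in\loseCyc{\memState}$, already forces a \emph{uniform} witness $\cyc'_0\in\memSet'$ with $\cyc\cyc'_0\in\loseCyc{\memState}$ for all $\cyc\in\memSet$. But that claim is precisely the statement of the lemma itself, applied to $\comp{\wc}$ with the roles of $\memSet$ and $\memSet'$ swapped (using Lemma~\ref{lem:stableProduct} and shift independence), so invoking it is circular; and your attempt to prove it by fixing one $\cyc_0\in\memSet$, choosing $\cyc'_0$ against $\cyc_0$, and arguing that this same $\cyc'_0$ beats every $\cyc\in\memSet$ cannot work, because that implication is false without further hypotheses: a witness chosen against $\cyc_0$ need not dominate other elements of $\memSet$ (e.g., in a one-state parity automaton with $\memSet=\{b,d\}$, $\memSet'=\{c,e\}$ and priorities $\pri(b)=2$, $\pri(d)=4$, $\pri(c)=3$, $\pri(e)=5$, the pointwise-lose hypothesis holds, $bc$ is losing, but $dc$ is winning). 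Concretely, your construction $w=\colHatFin(\cyc)\bigl(\colHatFin(\cyc'_0)\colHatFin(\cyc_0)\bigr)^\omega$ cannot ``read off'' the value of $\cyc\cyc'_0$: after re-bracketing, $w$ is one copy of $\cyc\cyc'_0$ followed by infinitely many copies of the losing cycle $\cyc_0\cyc'_0$, and by $\memSkel$-cycle-consistency plus $\memSkel$-prefix-independence the tail alone makes $w\in\inverse{\memState}\comp{\wc}$ whatever the value of $\cyc\cyc'_0$ is, so no contradiction can be extracted.

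The missing idea is that both hypotheses must be used \emph{simultaneously and infinitely often}: assuming the premise and the negation of the conclusion, one alternates them to build cycles $\cyc_1,\cyc'_1,\cyc_2,\cyc'_2,\ldots$ on $\memState$ with $\cyc_i\cyc'_i\in\loseCyc{\memState}$ (from the negated conclusion) and $\cyc'_i\cyc_{i+1}\in\winCyc{\memState}$ (from the premise, via Lemma~\ref{lem:indOrd}), and then reads the single infinite word $\colHatInf(\cyc_1\cyc'_1\cyc_2\cyc'_2\ldots)$ with two different parenthesizations: grouped as $(\cyc_1\cyc'_1)(\cyc_2\cyc'_2)\ldots$ it lies in $\inverse{\memState}\comp{\wc}$ by $\memSkel$-cycle-consistency, while grouped as $\cyc_1(\cyc'_1\cyc_2)(\cyc'_2\cyc_3)\ldots$ it lies in $\inverse{\memState}\wc$ by $\memSkel$-prefix-independence and $\memSkel$-cycle-consistency --- a contradiction. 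Your first paragraph correctly identifies the obstruction (``$\cyc\neq\cyc''$ in general''), and the remedy is exactly to iterate that mismatch into this infinite alternating sequence rather than to seek a single uniformly dominating witness up front.
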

This lemma says that if all cycles from $\memSet'$ can be made winning by adjoining them a cycle from $\memSet$, then we can actually find a single cycle from $\memSet$ that makes all cycles from $\memSet'$ winning.%
\begin{proof}
	We assume the premise of the implication, and by contradiction, we assume that the conclusion is false.
	We therefore assume that
	\[
		\forall \cyc'\in\memSet', \exists \cyc\in\memSet, \cyc\cyc' \in \winCyc{\memState}
		\quad \text{and} \quad
		\forall\cyc\in\memSet, \exists\cyc'\in\memSet', \cyc\cyc' \in \loseCyc{\memState}.
	\]
	Let $\cyc_1$ be any word in $\memSet$.
	We build inductively an infinite sequence starting with $\cyc_1$ by alternating the use of the two assumptions.
	For $i \ge 1$, we take $\cyc_i'\in\memSet'$ such that $\cyc_i\cyc_i' \in \loseCyc{\memState}$ (using the second assumption), and we then take $\cyc_{i+1}\in\memSet$ such that $\cyc_{i+1}\cyc_i'\in\winCyc{\memState}$ (using the first assumption).

	We consider the infinite sequence $\cyc_1\cyc_1'\cyc_2\cyc_2'\cyc_3\ldots\in(\memStates\times\colors)^\omega$ such that for all $i\ge 1$, $\cyc_i\cyc_i'\in\loseCyc{\memState}$ and $\cyc_i'\cyc_{i+1}\in\winCyc{\memState}$ (we use that the order of cycles on $\memState$ does not matter, shown in Lemma~\ref{lem:indOrd}).
	We show that the infinite word $\colHatInf(\cyc_1\cyc_1'\cyc_2\cyc_2'\ldots)$ is both in $\inverse{\memState}\wc$ and in $\inverse{\memState}\comp{\wc}$ by pairing cycles in two different ways:
	\begin{itemize}
		\item the infinite sequence $(\cyc_1\cyc_1')(\cyc_2\cyc_2')\ldots{}$ is a sequence of losing cycles on $\memState$ and its projection to colors is therefore in $\inverse{\memState}\comp{\wc}$ by using $\memSkel$-cycle-consistency.
		\item the infinite word $\colHatInf( \cyc_1(\cyc_1'\cyc_2)(\cyc_2'\cyc_3)\ldots)$ is in $\inverse{\memState}\wc$ if and only if $\colHatInf((\cyc_1'\cyc_2)(\cyc_2'\cyc_3)\ldots)$ is in $\inverse{\memState}\wc$ by using that $\cyc_1\in\memCyclesOn{\memState}$ and $\memSkel$-prefix-independence of $\wc$.
		The sequence $(\cyc_1'\cyc_2)(\cyc_2'\cyc_3)\ldots$ is a sequence of winning cycles on $\memState$ and its projection to colors is in $\inverse{\memState}\wc$ by using $\memSkel$-cycle-consistency.
	\end{itemize}
	As $\inverse{\memState}\wc \cap \inverse{\memState}\comp{\wc} = \emptyset$, we have our contradiction.
\end{proof}

\begin{lemma}[Repetition independence] \label{lem:indRep}
	Let $\memState\in\memStates$.
	Let $\cyc, \cyc'\in\memCyclesOn{\memState}$ such that $\cyc\cyc'\in\winCyc{\memState}$.
	We have $\cyc(\cyc')^+ \subseteq \winCyc{\memState}$.
\end{lemma}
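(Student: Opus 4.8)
The plan is to prove the statement by induction on the number of repetitions, i.e.\ to show $\cyc(\cyc')^k \in \winCyc{\memState}$ for every $k \ge 1$, with the base case $k=1$ being the hypothesis. For the inductive step, assuming $\cyc(\cyc')^k \in \winCyc{\memState}$, I want to conclude $\cyc(\cyc')^{k+1} \in \winCyc{\memState}$. Since $\cyc(\cyc')^{k+1}$ is a cycle on $\memState$, its value is either $\win$ or $\lose$; I would argue by contradiction that it cannot be $\lose$, i.e.\ suppose $\cyc(\cyc')^{k+1} \in \loseCyc{\memState}$.

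The idea, mirroring~\cite[Lemma~11]{CN06}, is to combine the two cycles $\cyc(\cyc')^k$ (winning, by induction hypothesis) and $\cyc(\cyc')^{k+1}$ (losing, by assumption) into a single infinite word that can be parsed in two incompatible ways. Concretely, I would consider the infinite sequence built by alternating these two cycles, say $\big(\cyc(\cyc')^k\big)\big(\cyc(\cyc')^{k+1}\big)\big(\cyc(\cyc')^k\big)\big(\cyc(\cyc')^{k+1}\big)\cdots$, all taken on $\memState$. On one hand, grouping the factors as written gives an infinite concatenation of losing cycles on $\memState$ (pairing each $\cyc(\cyc')^k$ with the following $\cyc(\cyc')^{k+1}$ — using Lemma~\ref{lem:indOrd} to reorganize, this pair has the same value as $\cyc(\cyc')^k \cdot \cyc(\cyc')^{k+1}$, and I should check this combined cycle is losing), so by $\memSkel$-cycle-consistency its color projection lies in $\inverse{\memState}\comp{\wc}$. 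On the other hand, I want a re-grouping of the same infinite word whose factors are all winning cycles, giving membership in $\inverse{\memState}\wc$; since $\inverse{\memState}\wc \cap \inverse{\memState}\comp{\wc} = \emptyset$, this yields the contradiction. The re-grouping should exploit that $\cyc(\cyc')^{k+1} = \cyc(\cyc')^k \cdot \cyc'$, so after peeling off an initial prefix the tail can be read as repeated blocks each consisting of a winning cycle $\cyc(\cyc')^k$ possibly joined with a stray $\cyc'$-segment; I would use $\memSkel$-prefix-independence to discard the bounded prefix and Lemma~\ref{lem:indOrd} freely to shift cycle boundaries.

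The main obstacle I anticipate is getting the bookkeeping of the two parsings exactly right so that one parsing sees only losing cycles and the other only winning cycles — in particular making sure the ``leftover'' $\cyc'$-pieces in the second parsing get absorbed into genuine cycles on $\memState$ (which they do, since $\cyc'\in\memCyclesOn{\memState}$) and that I never accidentally need the value of a cycle I haven't controlled. An alternative, possibly cleaner, route to the inductive step is to apply Lemma~\ref{lem:uniformDominating} with well-chosen singleton or small sets $\memSet,\memSet'$ of cycles on $\memState$ to force the existence of a uniform dominating cycle and derive the contradiction from there; I would keep that in reserve in case the direct two-parsings argument becomes unwieldy. Once $\cyc(\cyc')^k \in \winCyc{\memState}$ is established for all $k\ge 1$, the statement $\cyc(\cyc')^+ \subseteq \winCyc{\memState}$ follows immediately.
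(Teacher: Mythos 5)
There is a genuine gap in your inductive step. Your plan hinges on the alternating word $\big(\cyc(\cyc')^k\big)\big(\cyc(\cyc')^{k+1}\big)\big(\cyc(\cyc')^k\big)\cdots$ being parseable once into only losing cycles and once into only winning cycles, but neither parsing is available. For the losing parsing you would need the mixed block $\cyc(\cyc')^k\cdot\cyc(\cyc')^{k+1}$ to be losing, and its value is exactly the kind of quantity you do not control --- knowing how a winning and a losing cycle combine is the whole difficulty the lemma machinery is built to handle, so ``I should check this combined cycle is losing'' cannot be discharged from the hypotheses. Worse, even granting either value for that block, a counting argument shows the contradiction cannot be reached: in the hard case ($\cyc\in\winCyc{\memState}$, $\cyc'\in\loseCyc{\memState}$, which you should isolate first, since the case $\cyc'\in\winCyc{\memState}$ is immediate by $\memSkel$-cycle-consistency), every cycle whose value you know to be winning ($\cyc$, $\cyc\cyc'$, $\cyc(\cyc')^k$ and their shifts) contains at most $k$ occurrences of $\cyc'$ per occurrence of $\cyc$, while every known losing cycle ($\cyc'$, $\cyc(\cyc')^{k+1}$ and shifts) contains at least $k+1$; your alternating word has asymptotic ratio $\tfrac{2k+1}{2}$, strictly between $k$ and $k+1$, so it admits no decomposition into uniformly winning blocks nor into uniformly losing blocks. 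The two-parsings trick in the style of Lemma~\ref{lem:uniformDominating} therefore cannot close the step $k\to k+1$. Your fallback --- invoking Lemma~\ref{lem:uniformDominating} with singleton or small sets --- is also empty: with singletons its conclusion coincides with its premise.

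The missing idea is the paper's detour through \emph{powers of both cycles}. After reducing to the case $\cyc$ winning and $\cyc'$ losing, one first shows by induction (using shift independence, Lemma~\ref{lem:indOrd}, and $\memSkel$-cycle-consistency) that $\cyc^n(\cyc')^n\in\winCyc{\memState}$ for all $n\ge 1$; this diagonal fact is what allows Lemma~\ref{lem:uniformDominating} to be applied to the \emph{infinite} families $\memSet=\cyc^+$ and $\memSet'=(\cyc')^+$, yielding a single $n$ with $\cyc^n(\cyc')^+\subseteq\winCyc{\memState}$. Finally, taking $n$ minimal, one rules out $n>1$: if some $\cyc^{n-1}(\cyc')^k$ were losing, then by Lemma~\ref{lem:indOrd} and cycle-consistency $\cyc^{2n-2}(\cyc')^{2k}$ would be losing, contradicting $2n-2\ge n$. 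This exponent-reduction argument, rather than an induction on the number of repetitions of $\cyc'$, is what actually proves $\cyc(\cyc')^+\subseteq\winCyc{\memState}$.
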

\begin{proof}
	We have that $\cyc$ or $\cyc'$ is in $\winCyc{\memState}$ --- otherwise, $\cyc\cyc'$ would be in $\loseCyc{\memState}$ by $\memSkel$-cycle-consistency.
	If $\cyc'$ is in $\winCyc{\memState}$, we notice that any element of $\cyc(\cyc')^+$ can be written as $(\cyc\cyc')(\cyc')^n$ for some $n\ge 0$, which is a combination of winning cycles on $\memState$.
	Using $\memSkel$-cycle-consistency, we thus get that $\cyc(\cyc')^+ \subseteq \winCyc{\memState}$.

	It is left to deal with the case $\cyc\in\winCyc{\memState}$ and $\cyc'\in\loseCyc{\memState}$.
	We first show by induction that for $n\ge 1$, $\cyc^n(\cyc')^n\in\winCyc{\memState}$.
	This is true by hypothesis for $n = 1$.
	We now assume it is true for some $n \ge 1$, and we show it is true for $n + 1$.
	By Lemma~\ref{lem:indOrd}, we have that $\cyc^{n+1}(\cyc')^{n+1}\in\winCyc{\memState}$ if and only if $\cyc^{n}(\cyc')^{n+1}\cyc = (\cyc^{n}(\cyc')^{n})(\cyc'\cyc)\in\winCyc{\memState}$, by swapping the order of $\cyc$ and $\cyc^{n}(\cyc')^{n+1}$.
	By induction hypothesis, $\cyc^{n}(\cyc')^{n}\in\winCyc{\memState}$; by hypothesis and by Lemma~\ref{lem:indOrd}, $\cyc'\cyc\in\winCyc{\memState}$.
	Therefore, by $\memSkel$-cycle-consistency, $(\cyc^{n}(\cyc')^{n})(\cyc'\cyc)$ is also in $\winCyc{\memState}$.

	We now define $\memSet = \cyc^+$ and $\memSet' = (\cyc')^+$.
	We have that for all elements $(\cyc')^n$ of $\memSet'$ (with $n \ge 1$), we have that $\cyc^n$ (an element of $\memSet$) is such that $\cyc^n(\cyc')^n\in\winCyc{\memState}$.
	Therefore the hypothesis of Lemma~\ref{lem:uniformDominating} is verified for $\memSet$ and $\memSet'$, which implies that there exists $n\ge 1$ such that $\cyc^n(\cyc')^+ \subseteq \winCyc{\memState}$.

	We assume w.l.o.g.\ that $n = \min \{n\in\IN \mid \cyc^n(\cyc')^+ \subseteq \winCyc{\memState}\}$.
	For all $k$, $k \ge n$, we also have that $\cyc^k(\cyc')^+ = \cyc^{k-n}(\cyc^n(\cyc')^+) \subseteq \winCyc{\memState}$ by $\memSkel$-cycle-consistency.
	We intend to show that $n = 1$, which would end the proof of the lemma as this would show that $\cyc^1(\cyc')^+ = \cyc(\cyc')^+ \subseteq \winCyc{\memState}$.

	We assume by contradiction that $n > 1$.
	Then there must exist $k\in\IN$ such that $\cyc^{n-1}(\cyc')^k\in\loseCyc{\memState}$.
	We also have that $(\cyc')^k\cyc^{n-1}$ is in $\loseCyc{\memState}$ by Lemma~\ref{lem:indOrd}, which implies that $\cyc^{n-1}(\cyc')^k(\cyc')^k\cyc^{n-1}$ is also in $\loseCyc{\memState}$ by $\memSkel$-cycle-consistency.
	But then by Lemma~\ref{lem:indOrd}, this cycle has the same value as $\cyc^{2n-2}(\cyc')^{2k}$, which must therefore be in $\loseCyc{\memState}$.
	This is a contradiction since $n > 1$ implies that $2n - 2 \ge n$.

	We conclude that $\cyc(\cyc')^+\subseteq \winCyc{\memState}$.
\end{proof}

Thanks to this result, we can now show that any two consecutive cycles can always be swapped without changing the value of a longer cycle.
\begin{corollary}[Cycle-order independence] \label{lem:indOrdThree}
	Let $\memState\in\memStates$.
	Let $\cyc_1, \cyc_2, \cyc_3\in\memCyclesOn{\memState}$.
	Then,\newline $\cycVal{\cyc_1\cyc_2\cyc_3} = \cycVal{\cyc_1\cyc_3\cyc_2}$.
\end{corollary}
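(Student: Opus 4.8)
The plan is to reduce the three-cycle swap to the two-cycle case handled by Lemma~\ref{lem:indOrd} and Lemma~\ref{lem:indRep}, exploiting that all of $\cyc_1,\cyc_2,\cyc_3$ are cycles on the same state $\memState$, so that we may freely regroup and re-sum them (using also the observation after Lemma~\ref{lem:indOrd} that $\cycVal{\cyc\cyc'} = \cycVal{\cyc'\cyc}$ for cycles on $\memState$). The key point to establish is that concatenating winning cycles on $\memState$ keeps them winning and concatenating losing cycles on $\memState$ keeps them losing --- this is exactly $\memSkel$-cycle-consistency restricted to a single state, applied to the (finite) sequences in question. So the argument will be a short case analysis on the values of the individual cycles, turning each $\cycVal{\cyc_1\cyc_2\cyc_3}$ into a statement about which of the $\cyc_i$ are winning.

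\textbf{Key steps.} First I would reduce to showing $\cycVal{\cyc_2\cyc_3} = \cycVal{\cyc_3\cyc_2}$ is not quite enough on its own, since $\cyc_1$ is present; instead I would argue directly that $\cycVal{\cyc_1\cyc_2\cyc_3}$ depends only on the multiset $\{\cyc_1,\cyc_2,\cyc_3\}$, i.e.\ on how many of the three are winning (counted without multiplicity suffices once we know each $\cyc_i$ is a cycle on $\memState$). Concretely: if all three are in $\winCyc{\memState}$, then by $\memSkel$-cycle-consistency any finite concatenation of them --- in any order --- is again in $\winCyc{\memState}$, so both $\cyc_1\cyc_2\cyc_3$ and $\cyc_1\cyc_3\cyc_2$ are winning; symmetrically if all three are losing. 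If exactly one or exactly two are winning, I would use Lemma~\ref{lem:indOrd} to move the ``majority-type'' block together: e.g.\ if $\cyc_1, \cyc_2\in\winCyc{\memState}$ and $\cyc_3\in\loseCyc{\memState}$, then by repeated use of shift-independence (Lemma~\ref{lem:indOrd}) and $\memSkel$-cycle-consistency, $\cyc_1\cyc_2\cyc_3$ has the same value as $(\cyc_1\cyc_2)\cyc_3$ where $\cyc_1\cyc_2\in\winCyc{\memState}$ by cycle-consistency, and $\cyc_1\cyc_3\cyc_2$ has the same value as $(\cyc_1\cyc_2)\cyc_3$ after swapping $\cyc_3$ and $\cyc_2$ via Lemma~\ref{lem:indOrd}; both therefore equal $\cycVal{(\cyc_1\cyc_2)\cyc_3}$. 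The remaining mixed cases ($\cyc_1\in\loseCyc{\memState}$, $\cyc_2,\cyc_3$ split; etc.) are handled the same way, always collapsing the two same-valued cycles into one by $\memSkel$-cycle-consistency and then appealing to Lemma~\ref{lem:indOrd} to justify the transposition $\cyc_2\leftrightarrow\cyc_3$.

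\textbf{Main obstacle.} The only delicate point is the case where $\cyc_2$ and $\cyc_3$ have opposite values and $\cyc_1$ sits ``in between'' them, so that neither $\cyc_1\cyc_2$ nor $\cyc_1\cyc_3$ nor $\cyc_2\cyc_3$ is automatically determined by cycle-consistency alone; here I expect to need the stronger Lemma~\ref{lem:indRep} (repetition independence) or a careful re-pairing argument in the style of Lemma~\ref{lem:uniformDominating}, to argue that, say, the value of $\cyc_1\cyc_2\cyc_3$ equals the value of $\cyc_2\cyc_1\cyc_3$ (valid transpositions of adjacent cycles on $\memState$ by Lemma~\ref{lem:indOrd}), and then reduce to a two-block situation. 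Since Lemma~\ref{lem:indOrd} already gives that any \emph{adjacent} transposition of cycles on $\memState$ in a longer cycle on $\memState$ preserves the value --- reading $\cyc_1\cyc_2\cyc_3$ as $(\cyc_1)(\cyc_2\cyc_3)$ and then as the shift $(\cyc_2\cyc_3)(\cyc_1)$, and likewise peeling off $\cyc_2$ or $\cyc_3$ --- the cleanest route is: write $\cyc_1\cyc_2\cyc_3$, apply Lemma~\ref{lem:indOrd} with $\memState_1 = \memState_2 = \memState$, $\memPth_1 = \cyc_1\cyc_2$, $\memPth_2 = \cyc_3$ to get $\cycVal{\cyc_1\cyc_2\cyc_3} = \cycVal{\cyc_3\cyc_1\cyc_2}$, then again with $\memPth_1 = \cyc_3\cyc_1$, $\memPth_2 = \cyc_2$ to get $\cycVal{\cyc_3\cyc_1\cyc_2} = \cycVal{\cyc_2\cyc_3\cyc_1}$, and once more to return to $\cyc_1\cyc_3\cyc_2$; chasing these cyclic shifts shows $\cycVal{\cyc_1\cyc_2\cyc_3} = \cycVal{\cyc_1\cyc_3\cyc_2}$ without any case analysis at all. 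I would present this shift-chasing version as the main argument, since it sidesteps the obstacle entirely and uses only Lemma~\ref{lem:indOrd}.
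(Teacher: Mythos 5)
There is a genuine gap, and it is in the step you chose as your main argument. Lemma~\ref{lem:indOrd} only licenses \emph{cyclic rotations} of a cycle (and the remark after it, $\cycVal{\cyc\cyc'}=\cycVal{\cyc'\cyc}$, is just the two-block instance of a rotation); it does not licence transposing two adjacent blocks while a third block stays fixed. Starting from $\cyc_1\cyc_2\cyc_3$, the only arrangements reachable by rotations are $\cyc_1\cyc_2\cyc_3$, $\cyc_2\cyc_3\cyc_1$, and $\cyc_3\cyc_1\cyc_2$ (the cyclic orderings); the target $\cyc_1\cyc_3\cyc_2$ lies in the other coset of the cyclic group inside $S_3$ and is simply never reached. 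So in your shift-chasing chain the final step ``and once more to return to $\cyc_1\cyc_3\cyc_2$'' does not exist, and the same error occurs earlier in your case analysis when you claim that $\cyc_1\cyc_3\cyc_2$ ``has the same value as $(\cyc_1\cyc_2)\cyc_3$ after swapping $\cyc_3$ and $\cyc_2$ via Lemma~\ref{lem:indOrd}'' --- that swap is exactly the statement of the corollary, not an instance of Lemma~\ref{lem:indOrd}. Note also that if adjacent transpositions were already free, the corollary would be a triviality and would not need to be placed after Lemma~\ref{lem:indRep}; the only cases your argument genuinely settles are those where all three cycles (or the two being collapsed together with cycle-consistency) share the same value.

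The intended proof cannot avoid Lemma~\ref{lem:indRep} (repetition independence) and runs by contradiction: assume $\cyc_1\cyc_2\cyc_3\in\winCyc{\memState}$ and $\cyc_1\cyc_3\cyc_2\in\loseCyc{\memState}$; by $\memSkel$-cycle-consistency some $\cyc_i$ is winning and some is losing, and (up to symmetry) one may take $\cyc_1$ winning, $\cyc_2$ losing, $\cyc_3$ winning. Then $\cyc_3\cyc_2$ and $\cyc_2\cyc_1$ must be losing (else $\cyc_1(\cyc_3\cyc_2)$, resp.\ $(\cyc_1\cyc_3\cyc_2 \text{ rotated})$, would be winning by cycle-consistency), and one computes
\[
\win=\cycVal{\cyc_1\cyc_2\cyc_3}=\cycVal{(\cyc_3\cyc_1)\cyc_2}=\cycVal{(\cyc_3\cyc_1)(\cyc_2)^2}=\cycVal{\cyc_2\cyc_3\cyc_1\cyc_2},
\]
using rotation, then Lemma~\ref{lem:indRep}, then rotation again; but the last cycle splits into the two losing cycles $\cyc_2\cyc_3$ and $\cyc_1\cyc_2$, hence is losing by $\memSkel$-cycle-consistency, a contradiction. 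Your ``main obstacle'' paragraph correctly sensed that the mixed-value configuration is where Lemma~\ref{lem:indRep} or a repairing argument is needed; the mistake was to then discard that route in favour of the rotation-only shortcut, which cannot work.
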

\begin{proof}
	We assume by contradiction that cycles $\cyc_1\cyc_2\cyc_3$ and $\cyc_1\cyc_3\cyc_2$ have a different value; w.l.o.g., that $\cyc_1\cyc_2\cyc_3\in\winCyc{\memState}$ and that $\cyc_1\cyc_3\cyc_2\in\loseCyc{\memState}$.
	By $\memSkel$-cycle-consistency, at least one cycle among $\cyc_1$, $\cyc_2$ and $\cyc_3$ is winning and one is losing.
	We assume w.l.o.g.\ that $\cyc_1\in\winCyc{\memState}$ and $\cyc_2\in\loseCyc{\memState}$.
	We also assume that $\cyc_3\in\winCyc{\memState}$; the other case can be dealt with by symmetry.
	Notice that we necessarily have that $\cyc_3\cyc_2$ is in $\loseCyc{\memState}$; otherwise, $\cyc_1\cyc_3\cyc_2 = \cyc_1(\cyc_3\cyc_2)$ would be in $\winCyc{\memState}$ by $\memSkel$-cycle-consistency.
	For the same reason, $\cyc_2\cyc_1$ is in $\loseCyc{\memState}$.
	We have
	\begin{align*}
		\win
		&= \cycVal{\cyc_1\cyc_2\cyc_3} &&\text{by hypothesis} \\
		&= \cycVal{(\cyc_3\cyc_1)\cyc_2} &&\text{by Lemma~\ref{lem:indOrd}} \\
		&= \cycVal{(\cyc_3\cyc_1)(\cyc_2)^2} &&\text{by Lemma~\ref{lem:indRep}} \\
		&= \cycVal{\cyc_2\cyc_3\cyc_1\cyc_2} &&\text{by Lemma~\ref{lem:indOrd}}.
	\end{align*}
	However, this last cycle can be written as a combination of two losing cycles $(\cyc_2\cyc_3)$ and $(\cyc_1\cyc_2)$, and should therefore be losing by $\memSkel$-cycle-consistency.
	This is a contradiction.
\end{proof}

\subparagraph*{Combining cycles on different skeleton states.}
We can now also strengthen Lemma~\ref{lem:indRep} (``repetition independence'') to show that even ``non-consecutive subcycles'' in a longer cycle can be repeated without affecting the value of the long cycle.

\begin{corollary}[Repetition independence, strong version] \label{lem:indRepStrong}
	Let $\memState_1, \memState_2\in\memStates$.
	Let $\cyc_1\in\memCyclesOn{\memState_1}$, $\cyc_2\in\memCyclesOn{\memState_2}$, $\wit_1\in\memPathsOn{\memState_1}{\memState_2}$, and $\wit_2\in\memPathsOn{\memState_2}{\memState_1}$.
	Then, $\cycVal{\cyc_1\wit_1\cyc_2\wit_2} = \cycVal{\cyc_1(\wit_1\wit_2)^n\wit_1\cyc_2\wit_2}$ for all $n \ge 0$.
\end{corollary}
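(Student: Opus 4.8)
The idea is to reduce everything to the already-established lemmas about cycles \emph{on a single skeleton state}. The key observation is that $\wit_1\wit_2 \in \memCyclesOn{\memState_1}$ is a genuine cycle on $\memState_1$, and $\cyc_1$ and $\wit_1\cyc_2\wit_2$ are also cycles on $\memState_1$ (the latter by Lemma~\ref{lem:indOrd}, ``shift independence'', since $\wit_1\cyc_2\wit_2$ is the shift of $\cyc_2\wit_2\wit_1 \in \memCyclesOn{\memState_2}$). So the long cycle $\cyc_1(\wit_1\wit_2)^n\wit_1\cyc_2\wit_2$ is, up to reordering its constituent sub-cycles-on-$\memState_1$, a product of the three cycles $\cyc_1$, $(\wit_1\wit_2)^n$, and $\wit_1\cyc_2\wit_2$ — all on $\memState_1$.

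\textbf{Step 1.} First I would record that $\wit_1\cyc_2\wit_2 \in \memCyclesOn{\memState_1}$ and that $\cycVal{\wit_1\cyc_2\wit_2} = \cycVal{\cyc_2\wit_2\wit_1}$, by applying Lemma~\ref{lem:indOrd} with $\memPth_1 = \wit_1$ and $\memPth_2 = \cyc_2\wit_2$ (both are paths between $\memState_1$ and $\memState_2$). Also $\wit_1\wit_2 \in \memCyclesOn{\memState_1}$ directly. Hence all of $\cyc_1$, $\wit_1\wit_2$, and $\wit_1\cyc_2\wit_2$ live in $\memCyclesOn{\memState_1}$, and Corollary~\ref{lem:indOrdThree} (cycle-order independence) applies to any product of finitely many of them.

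\textbf{Step 2.} Handle $n = 0$ as the base case: this is the trivial identity $\cycVal{\cyc_1\wit_1\cyc_2\wit_2} = \cycVal{\cyc_1\wit_1\cyc_2\wit_2}$. For $n \geq 1$, I would proceed by induction on $n$, showing $\cycVal{\cyc_1(\wit_1\wit_2)^n\wit_1\cyc_2\wit_2} = \cycVal{\cyc_1(\wit_1\wit_2)^{n-1}\wit_1\cyc_2\wit_2}$. The cycle on the left can, by Corollary~\ref{lem:indOrdThree} applied repeatedly to swap the final factor $(\wit_1\wit_2)$ toward the right (all factors being cycles on $\memState_1$), be rewritten to have the same value as a cycle of the form $\cyc_1(\wit_1\wit_2)^{n-1}(\wit_1\cyc_2\wit_2)(\wit_1\wit_2)$, i.e., with one copy of $\wit_1\wit_2$ adjacent to $\wit_1\cyc_2\wit_2$. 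Now $(\wit_1\cyc_2\wit_2)(\wit_1\wit_2)$ is a product of two cycles on $\memState_1$; I would argue its value equals that of $(\wit_1\cyc_2\wit_2)$ alone. This is exactly a ``repetition/absorption'' statement: if $\cyc\cyc' \in \winCyc{\memState}$ then appending another $\cyc'$ keeps it winning, and conversely — which follows from Lemma~\ref{lem:indRep} (repetition independence) together with $\memSkel$-cycle-consistency. Concretely: if the long cycle is winning, then $\wit_1\cyc_2\wit_2$ and/or $\wit_1\wit_2$ interact in a way controlled by Lemma~\ref{lem:indRep} so that dropping one copy of $\wit_1\wit_2$ preserves winningness; the losing case is symmetric by Lemma~\ref{lem:stableProduct} (the whole setup is symmetric in $\wc$ and $\comp{\wc}$). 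This collapses the $n$-th copy, completing the induction step.

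\textbf{Expected main obstacle.} The routine-but-delicate part is the absorption argument in Step 2, i.e., showing $\cycVal{(\wit_1\cyc_2\wit_2)(\wit_1\wit_2)^k} = \cycVal{\wit_1\cyc_2\wit_2}$ for all $k \geq 1$ cleanly from Lemma~\ref{lem:indRep} — one must be careful because Lemma~\ref{lem:indRep} is stated for the case where the \emph{combined} cycle is winning, so a short case analysis on the values of $\wit_1\cyc_2\wit_2$ and $\wit_1\wit_2$ (mirroring the proof of Lemma~\ref{lem:indRep}) is needed, invoking $\memSkel$-cycle-consistency to rule out inconsistent value assignments. Once that absorption fact is in hand, the reduction of the statement to cycles on a single skeleton state via Lemma~\ref{lem:indOrd} and Corollary~\ref{lem:indOrdThree} is immediate and the induction goes through without further complications.
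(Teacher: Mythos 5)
Your Step 1 and the reorderings via Lemma~\ref{lem:indOrd} and Corollary~\ref{lem:indOrdThree} are fine, but the crux of your plan --- the ``absorption'' claim $\cycVal{(\wit_1\cyc_2\wit_2)(\wit_1\wit_2)} = \cycVal{\wit_1\cyc_2\wit_2}$ --- is not established by what you cite, and it is exactly where the difficulty of the corollary sits. Lemma~\ref{lem:indRep} only relates $\cycVal{\cyc\cyc'}$ to $\cycVal{\cyc(\cyc')^k}$ for $k\ge 1$; it never lets you drop $\cyc'$ down to $k=0$. Moreover, a case analysis on the values of the two cycles $\wit_1\cyc_2\wit_2$ and $\wit_1\wit_2$ \emph{viewed as opaque cycles on $\memState_1$} cannot settle the case where the values differ: which of two opposite-value cycles dominates is not determined by their values (in Figure~\ref{fig:parityAtmtn}, the winning $(\memState_1,a)(\memState_2,a)$ dominates the losing $(\memState_1,b)$ but is dominated by the losing $(\memState_1,c)$). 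The claim is true only because $\wit_1\wit_2$ is assembled from the connecting paths inside $\wit_1\cyc_2\wit_2$, and to exploit that you must open the cycles at $\memState_2$: for instance, $(\wit_1\cyc_2\wit_2)(\wit_1\wit_2)^k$ is a shift (Lemma~\ref{lem:indOrd}) of $\cyc_2(\wit_2\wit_1)^{k+1}$, to which Lemma~\ref{lem:indRep} applies verbatim with $\cyc=\cyc_2$, $\cyc'=\wit_2\wit_1$. Your argument never leaves $\memState_1$ at this step, so the key move is missing.

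There is a second gap: even granting the standalone absorption fact, your induction step replaces the sub-cycle $(\wit_1\cyc_2\wit_2)(\wit_1\wit_2)$ by the equal-valued $\wit_1\cyc_2\wit_2$ \emph{inside} the longer cycle $\cyc_1(\wit_1\wit_2)^{n-1}\cdots$, and equality of values does not license such a substitution: again in Figure~\ref{fig:parityAtmtn}, $(\memState_1,b)$ and $(\memState_1,c)$ are both losing, yet $(\memState_1,a)(\memState_2,a)(\memState_1,b)$ is winning while $(\memState_1,a)(\memState_2,a)(\memState_1,c)$ is losing. In fact, your induction step with prefix $\cyc_1(\wit_1\wit_2)^{n-1}$ is just the original statement with a different $\cyc_1$ and $n=1$ versus $n=0$, so the induction does not reduce the difficulty. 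This is why the paper works entirely in context: by Lemma~\ref{lem:indRep} at $\memState_1$, $\cycVal{\cyc_1\wit_1\cyc_2\wit_2}=\cycVal{\cyc_1(\wit_1\cyc_2\wit_2)^2}$; inside this long cycle the adjacent cycles $\cyc_2$ and $\wit_2\wit_1$ on $\memState_2$ are swapped by Corollary~\ref{lem:indOrdThree}, and $(\cyc_2)^2$ is contracted by Lemma~\ref{lem:indRep} at $\memState_2$, yielding $\cyc_1(\wit_1\wit_2)\wit_1\cyc_2\wit_2$; the general $n$ then follows by one application of Lemma~\ref{lem:indRep} where the fixed cycle is the whole block $(\wit_1\cyc_2\wit_2)\cyc_1$ (so no substitution inside a cycle is ever needed). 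Also note that appealing to a domination-transfer statement such as Lemma~\ref{lem:domSubcycle} to patch the substitution would be circular, since its proof uses the present corollary.
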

The situation is depicted in Figure~\ref{fig:indCycle} (left).
Notice first that we can see $\cyc_1\wit_1\cyc_2\wit_2$ as a combination of two cycles $\cyc_1$ and $\wit_1\cyc_2\wit_2$ on $\memState_1$, we therefore already know that the value of $\cyc_1\wit_1\cyc_2\wit_2$ on $\memState_1$ is the same as the one of $(\cyc_1)^k(\wit_1\cyc_2\wit_2)^l$ for all $k, l \ge 1$.
This second cycle can be seen as two cycles $\wit_2\wit_1$ and $\cyc_2$ on $\memState_2$, we therefore know that the value of $\wit_1\cyc_2\wit_2$ on $\memState_2$ is the same as the one of $(\cyc_2)^k(\wit_2\wit_1)^l$ for all $k, l \ge 1$.
However, these two facts do not directly give the result as cycle $\wit_1\wit_2$ does not appear ``consecutively'' in $\cyc_1\wit_1\cyc_2\wit_2$.
\begin{proof}
	We have that
	\begin{align*}
		\cycVal{\cyc_1\wit_1\cyc_2\wit_2}
		&= \cycVal{\cyc_1(\wit_1\cyc_2\wit_2)(\wit_1\cyc_2\wit_2)} &&\text{by Lemma~\ref{lem:indRep} on $\memState_1$} \\
		&= \cycVal{\cyc_1\wit_1(\cyc_2)(\wit_2\wit_1)\cyc_2\wit_2} \\
		&= \cycVal{\cyc_1\wit_1(\wit_2\wit_1)(\cyc_2)\cyc_2\wit_2}&&\text{by Lemma~\ref{lem:indOrdThree} on $\memState_2$} \\
		&= \cycVal{\cyc_1(\wit_1\wit_2)\wit_1(\cyc_2)^2\wit_2} \\
		&= \cycVal{\cyc_1(\wit_1\wit_2)\wit_1\cyc_2\wit_2} &&\text{by Lemma~\ref{lem:indRep} on $\memState_2$.}
	\end{align*}
	This shows the result for $n = 1$; applying Lemma~\ref{lem:indRep} gives the result for all $n \ge 1$.
\end{proof}

Another important property that will help define an interesting preorder on cycles is that the value of a combination of two cycles is independent from the skeleton state chosen to compare pairs of cycles: if two cycles both go through two states $\memState_1$ and $\memState_2$ of $\memSkel$, then combining them around $\memState_1$ or around $\memState_2$ yields the same value.

\begin{lemma}[Crossing-point independence] \label{lem:indCycle}
	Let $\memState_1, \memState_2\in\memStates$ be two states of $\memSkel$.
	Let $\memPth_1, \memPth_1'\in\memPathsOn{\memState_1}{\memState_2}$ and $\memPth_2, \memPth_2'\in\memPathsOn{\memState_2}{\memState_1}$.
	We have that $\cycVal{\memPth_1\memPth_2\memPth_1'\memPth_2'} = \cycVal{\memPth_2\memPth_1\memPth_2'\memPth_1'}$.
\end{lemma}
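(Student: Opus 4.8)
The plan is to read the statement as saying that combining two cycles that each cross both $\memState_1$ and $\memState_2$ yields the same value whether the combination is formed at $\memState_1$ or at $\memState_2$. By shift independence (Lemma~\ref{lem:indOrd}), the cycle $\memPth_1\memPth_2\memPth_1'\memPth_2'$ on $\memState_1$ has the same value as $\memPth_2\memPth_1'\memPth_2'\memPth_1$ on $\memState_2$, so the lemma is equivalent to the identity $\cycVal{(\memPth_2\memPth_1')(\memPth_2'\memPth_1)} = \cycVal{(\memPth_2\memPth_1)(\memPth_2'\memPth_1')}$ between cycles on $\memState_2$; that is, to showing that exchanging the two paths $\memPth_1, \memPth_1' \in \memPathsOn{\memState_1}{\memState_2}$ in between the two return paths $\memPth_2, \memPth_2'$ does not change the value of the combined cycle. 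One might hope to derive this by a contradiction argument in the style of Lemma~\ref{lem:uniformDominating} --- building a single infinite word readable both as a concatenation of winning cycles and as a concatenation of losing cycles --- but this runs aground: the two pairings impose incompatible adjacency patterns on the path segments, so no single word witnesses both.

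Instead, I would \emph{unfold} before invoking the structural lemmas, since $\memPth_1\memPth_2\memPth_1'\memPth_2'$ is too short for them to apply as is. Using repetition independence (Lemma~\ref{lem:indRep}) to replace the sub-cycle $\memPth_1'\memPth_2'$ by $(\memPth_1'\memPth_2')^2$ and then shifting (Lemma~\ref{lem:indOrd}), the value equals that of $(\memPth_2\memPth_1')\,\memPth_2'\,(\memPth_1'\memPth_2')\,\memPth_1$, a cycle on $\memState_2$ now genuinely of the form $\cyc_1\wit_1\cyc_2\wit_2$ with $\cyc_1 = \memPth_2\memPth_1' \in \memCyclesOn{\memState_2}$, $\cyc_2 = \memPth_1'\memPth_2' \in \memCyclesOn{\memState_1}$, and connecting paths $\wit_1 = \memPth_2'$, $\wit_2 = \memPth_1$. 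From here the strategy is to alternate: insert copies of the round trip $\wit_1\wit_2 = \memPth_2'\memPth_1$ using strong repetition independence (Corollary~\ref{lem:indRepStrong}), regroup the resulting word into cycles anchored alternately at $\memState_2$ and $\memState_1$, reorder those cycles with cycle-order independence (Corollary~\ref{lem:indOrdThree}), and collapse the superfluous repetitions with Lemma~\ref{lem:indRep} again, driving the expression toward the analogous unfolding of $\memPth_2\memPth_1\memPth_2'\memPth_1'$; matching the two normal forms then yields equality of values, and the symmetric direction is handled by swapping the roles of $\memPth_1$ and $\memPth_1'$.

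I expect the main obstacle to be precisely that the sub-cycle one must relocate to witness the swap does not occur consecutively inside $\memPth_1\memPth_2\memPth_1'\memPth_2'$: shift independence and cycle-order independence only commute cycles that are already anchored at a common skeleton state, so with those tools alone the manipulations keep reducing to a fresh instance of the very swap being proved. Corollary~\ref{lem:indRepStrong} is what breaks this loop, because the round-trip insertions it licenses let a sub-cycle effectively hop between $\memState_1$ and $\memState_2$; the delicate bookkeeping is keeping track of which cycles are anchored at which state after each rewrite so that the two normal forms can be brought into alignment.
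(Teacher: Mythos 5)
Your route is genuinely different from the paper's. The paper never tries to rewrite one side into the other: it assumes the two combinations $\memPth_1\memPth_2\memPth_1'\memPth_2'$ and $\memPth_2\memPth_1\memPth_2'\memPth_1'$ have different values, deduces from $\memSkel$-cycle-consistency the values of the crossed products $\memPth_1\memPth_2'$ and $\memPth_1'\memPth_2$, then of the four-segment cycles $\memPth_1\memPth_2\memPth_1'\memPth_2$ and $\memPth_1\memPth_2'\memPth_1'\memPth_2'$, and derives a contradiction in each remaining case. Your plan is instead a pure rewriting argument from Lemmas~\ref{lem:indOrd} and~\ref{lem:indRep} and Corollaries~\ref{lem:indOrdThree} and~\ref{lem:indRepStrong}, and such a proof does exist; however, as submitted, your proposal stops exactly where the proof has to happen. ``Regroup, reorder, collapse, and match the two normal forms'' is the entire mathematical content, and it is not routine: with only two blocks a transposition is just a rotation, after the first unfolding no adjacent duplicate block is available to collapse, and it is not clear that your particular unfolding $(\memPth_2\memPth_1')\memPth_2'(\memPth_1'\memPth_2')\memPth_1$ with insertions of $(\memPth_2'\memPth_1)^n$ ever reaches the other side. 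So, as written, there is a gap at the decisive step, even though the strategy is sound.

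To substantiate that your toolkit does suffice, here is one chain that closes (writing compositions of the four path segments and grouping sub-cycles by parentheses):
\begin{align*}
\cycVal{\memPth_1\memPth_2\memPth_1'\memPth_2'}
&= \cycVal{(\memPth_1\memPth_2)(\memPth_1'\memPth_2')(\memPth_1'\memPth_2')} &&\text{Lemma~\ref{lem:indRep} on $\memState_1$}\\
&= \cycVal{(\memPth_2\memPth_1')(\memPth_2'\memPth_1')(\memPth_2'\memPth_1)} &&\text{shift (Lemma~\ref{lem:indOrd})}\\
&= \cycVal{(\memPth_2\memPth_1')(\memPth_2'\memPth_1)(\memPth_2'\memPth_1')} &&\text{Corollary~\ref{lem:indOrdThree} on $\memState_2$}\\
&= \cycVal{(\memPth_1'\memPth_2)(\memPth_1'\memPth_2')(\memPth_1\memPth_2')(\memPth_1\memPth_2')} &&\text{shifts, then Lemma~\ref{lem:indRep} on $\memState_1$}\\
&= \cycVal{(\memPth_1\memPth_2')(\memPth_1'\memPth_2')(\memPth_1'\memPth_2)(\memPth_1\memPth_2')} &&\text{Corollary~\ref{lem:indOrdThree} on $\memState_1$, shifts}\\
&= \cycVal{(\memPth_1\memPth_2')(\memPth_1'\memPth_2)(\memPth_1\memPth_2')} &&\text{Corollary~\ref{lem:indRepStrong}}\\
&= \cycVal{(\memPth_1'\memPth_2)(\memPth_1\memPth_2')} &&\text{shift, Lemma~\ref{lem:indRep}}\\
&= \cycVal{\memPth_2\memPth_1\memPth_2'\memPth_1'} &&\text{shift,}
\end{align*}
where Corollary~\ref{lem:indRepStrong} is applied with $\cyc_1 = \memPth_1\memPth_2' \in \memCyclesOn{\memState_1}$, $\wit_1 = \memPth_1'$, $\cyc_2 = \memPth_2\memPth_1 \in \memCyclesOn{\memState_2}$, $\wit_2 = \memPth_2'$, comparing $n = 1$ with $n = 0$ (so a round trip $\memPth_1'\memPth_2'$ is \emph{removed} rather than inserted: the unwanted block is first created next to $\memPth_1\memPth_2'$ by the duplication in the fourth line, moved by the reordering in the fifth, and then cancelled). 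So your instinct that Corollary~\ref{lem:indRepStrong} breaks the circularity is right, but the proposal needed this (or an equivalent) explicit derivation to count as a proof. Comparing the two approaches: the rewriting chain, once found, is shorter and constructive; the paper's contradiction-based case analysis avoids the combinatorial search and stays in the value-splitting style it reuses for Lemmas~\ref{lem:indWit} and~\ref{lem:finHeight}.
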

\begin{figure}[t]
	\centering
	\begin{minipage}{0.5\columnwidth}
		\centering
		\begin{tikzpicture}[every node/.style={font=\small,inner sep=1pt}]
			\draw (0,0) node[diamant] (m1) {$\memState_1$};
			\draw ($(m1)+(2.5,0)$) node[diamant] (m2) {$\memState_2$};
			\draw (m1) edge[-latex',out=180-30,in=180+30,decorate,distance=0.8cm] node[left=3pt] {$\cyc_1$} (m1);
			\draw (m1) edge[-latex',out=30,in=180-30,decorate] node[above=3pt] {$\wit_1$} (m2);
			\draw (m2) edge[-latex',out=180+30,in=-30,decorate] node[below=3pt] {$\wit_2$} (m1);
			\draw (m2) edge[-latex',out=-30,in=+30,decorate,distance=0.8cm] node[right=3pt] {$\cyc_2$} (m2);
		\end{tikzpicture}
	\end{minipage}%
	\begin{minipage}{0.5\columnwidth}
		\centering
		\begin{tikzpicture}[every node/.style={font=\small,inner sep=1pt}]
			\draw (0,0) node[diamant] (s1) {$\memState_1$};
			\draw ($(s1)+(3.5,0)$) node[diamant] (s2) {$\memState_2$};
			\draw (s1) edge[-latex',out=45,in=180-45,decorate,distance=1.4cm] node[above=3pt] {$\memPth_1$} (s2);
			\draw (s2) edge[-latex',out=180-10,in=10,decorate] node[above=3pt] {$\memPth_2$} (s1);
			\draw (s1) edge[-latex',out=-45,in=-180+45,decorate,distance=1.4cm] node[below=3pt] {$\memPth_1'$} (s2);
			\draw (s2) edge[-latex',out=-165,in=-15,decorate] node[below=3pt] {$\memPth_2'$} (s1);
		\end{tikzpicture}
	\end{minipage}
	\caption{Depiction of the statement of Corollary~\ref{lem:indRepStrong} (left) and Lemma~\ref{lem:indCycle} (right).}
	\label{fig:indCycle}
\end{figure}
The intuition of this lemma is that if we take two cycles (in the statement, $\memPth_1\memPth_2$ and $\memPth_1'\memPth_2'$) that have two common states ($\memState_1$ and $\memState_2$), the chosen starting state to combine the two cycles (the combination is $(\memPth_1\memPth_2)(\memPth_1'\memPth_2')$ if $\memState_1$ is chosen, and $(\memPth_2\memPth_1)(\memPth_2'\memPth_1')$ if $\memState_2$ is chosen) has no impact on the value of the combination.
This situation is depicted in Figure~\ref{fig:indCycle} (right).
\begin{proof}
	If $\memPth_1\memPth_2$ and $\memPth_1'\memPth_2'$ are both in $\winCyc{\memState_1}$ or both in $\loseCyc{\memState_1}$, then $\memPth_2\memPth_1$ and $\memPth_2'\memPth_1'$ are also respectively both in $\winCyc{\memState_2}$ or both in $\loseCyc{\memState_2}$ by Lemma~\ref{lem:indOrd}.
	Therefore, we have our result using $\memSkel$-cycle-consistency.

	For the remaining cases, we assume w.l.o.g.\ that $\memPth_1\memPth_2\in\winCyc{\memState_1}$ and $\memPth_1'\memPth_2'\in\loseCyc{\memState_2}$.
	We will assume (again w.l.o.g.) that combining them is winning, i.e., that $\memPth_1\memPth_2\memPth_1'\memPth_2'\in\winCyc{\memState_1}$.
	Our goal is to show that $\memPth_2\memPth_1\memPth_2'\memPth_1'$ is also in $\winCyc{\memState_2}$.
	We assume by contradiction that it is not, i.e., that $\memPth_2\memPth_1\memPth_2'\memPth_1' \in \loseCyc{\memState_2}$.

	Observe that as $\memPth_1\memPth_2\memPth_1'\memPth_2'\in\winCyc{\memState_1}$, we also have $(\memPth_2\memPth_1')(\memPth_2'\memPth_1)\in\winCyc{\memState_2}$ by Lemma~\ref{lem:indOrd}.
	Hence, at least one of $\memPth_2\memPth_1'$ or $\memPth_2'\memPth_1$ must be a winning cycle on $\memState_2$, otherwise their combination would be losing on $\memState_2$ by $\memSkel$-cycle-consistency.
	Equivalently, by Lemma~\ref{lem:indOrd}, at least one of $\memPth_1'\memPth_2$ or $\memPth_1\memPth_2'$ must be a winning cycle on $\memState_1$.

	Similarly, as $\memPth_2\memPth_1\memPth_2'\memPth_1'$ is in $\loseCyc{\memState_2}$, we have that $(\memPth_1\memPth_2')(\memPth_1'\memPth_2)$ is in $\loseCyc{\memState_1}$.
	Hence, at least one of $\memPth_1\memPth_2'$ and $\memPth_1'\memPth_2$ is a losing cycle on $\memState_1$ by $\memSkel$-cycle-consistency.

	Our conclusions imply that exactly one of $\memPth_1\memPth_2'$ and $\memPth_1'\memPth_2$ is winning on $\memState_1$, and one is losing on $\memState_1$.
	Without loss of generality, we assume that $\memPth_1\memPth_2'\in\winCyc{\memState_1}$ and $\memPth_1'\memPth_2\in\loseCyc{\memState_1}$.

	We now have a value for all four two-word cycles on $\memState_1$ (and therefore for all four two-word cycles on $\memState_2$ by Lemma~\ref{lem:indOrd}): $\memPth_1\memPth_2$ and $\memPth_1\memPth_2'$ are in $\winCyc{\memState_1}$, and $\memPth_1'\memPth_2$ and $\memPth_1'\memPth_2'$ are in $\loseCyc{\memState_1}$.
	If we look at four-word cycles, we have already assumed w.l.o.g.\ that $\memPth_1\memPth_2\memPth_1'\memPth_2'\in\winCyc{\memState_1}$ and that $\memPth_1\memPth_2'\memPth_1'\memPth_2\in\loseCyc{\memState_1}$.
	We still do not know whether $\memPth_1\memPth_2\memPth_1'\memPth_2$ and $\memPth_1\memPth_2'\memPth_1'\memPth_2'$ are winning or losing --- no matter how we express them as two two-word cycles, one two-word cycle is winning and the other one is losing.
	We study the value of these two four-word cycles.

	Consider the cycle $(\memPth_2\memPth_1'\memPth_2'\memPth_1)(\memPth_2'\memPth_1\memPth_2\memPth_1')$ on $\memState_2$.
	It is winning, since $\memPth_2\memPth_1'\memPth_2'\memPth_1$ and $\memPth_2'\memPth_1\memPth_2\memPth_1'$ are both in $\winCyc{\memState_2}$: this can be shown using Lemma~\ref{lem:indOrd} and the fact that $\memPth_1\memPth_2\memPth_1'\memPth_2'$ is in $\winCyc{\memState_1}$.
	Therefore, by shifting the start of the cycle, $(\memPth_1'\memPth_2'\memPth_1\memPth_2')(\memPth_1\memPth_2\memPth_1'\memPth_2)$ is in $\winCyc{\memState_1}$.
	By $\memSkel$-cycle-consistency, this means that at least one of $\memPth_1'\memPth_2'\memPth_1\memPth_2'$ (equivalently, $\memPth_1\memPth_2'\memPth_1'\memPth_2'$) and $\memPth_1\memPth_2\memPth_1'\memPth_2$ is winning on~$\memState_1$.

	Similarly, we have that the cycle $(\memPth_2\memPth_1\memPth_2'\memPth_1')(\memPth_2'\memPth_1'\memPth_2\memPth_1)$ is in $\loseCyc{\memState_2}$.
	Therefore, by shifting the start of the cycle, $(\memPth_1\memPth_2'\memPth_1'\memPth_2')(\memPth_1'\memPth_2\memPth_1\memPth_2)$ is in $\loseCyc{\memState_1}$.
	This means that at least one of $\memPth_1\memPth_2'\memPth_1'\memPth_2'$ or $\memPth_1'\memPth_2\memPth_1\memPth_2$ (equivalently, $\memPth_1\memPth_2\memPth_1'\memPth_2$) is in $\loseCyc{\memState_1}$.

	Our conclusions imply that exactly one of $\memPth_1\memPth_2\memPth_1'\memPth_2$ and $\memPth_1\memPth_2'\memPth_1'\memPth_2'$ is winning on $\memState_1$, and one is losing on $\memState_1$.
	We consider both cases and draw a contradiction in each case.

	Assume that $\cycVal{\memPth_1\memPth_2\memPth_1'\memPth_2} = \lose$.
	Now consider the cycle $\cycBis = (\memPth_2'\memPth_1)(\memPth_2\memPth_1')^2$ on $\memState_2$.
	We have
	\begin{align*}
		\cycVal{\cycBis}
		&= \cycVal{\memPth_2'\memPth_1\memPth_2\memPth_1'} &&\text{by Lemma~\ref{lem:indRep}} \\
		&= \cycVal{\memPth_1\memPth_2\memPth_1'\memPth_2'} &&\text{by Lemma~\ref{lem:indOrd}} \\
		&= \win &&\text{by hypothesis.}
	\end{align*}
	However, we also have
	\begin{align*}
		\cycVal{\cycBis}
		&= \cycVal{(\memPth_1\memPth_2\memPth_1'\memPth_2)(\memPth_1'\memPth_2')} &&\text{by Lemma~\ref{lem:indOrd}},
	\end{align*}
	and since $\cycVal{\memPth_1\memPth_2\memPth_1'\memPth_2} = \cycVal{\memPth_1'\memPth_2'} = \lose$, we also have $\cycVal{\cycBis} = \lose$ by $\memSkel$-cycle-consistency.
	This is a contradiction.

	Assume now that $\cycVal{\memPth_1\memPth_2'\memPth_1'\memPth_2'} = \lose$.
	Now consider the cycle $\cycBis = (\memPth_1\memPth_2)(\memPth_1'\memPth_2')^2$ on $\memState_1$.
	We have
	\begin{align*}
		\cycVal{\cycBis}
		&= \cycVal{\memPth_1\memPth_2\memPth_1'\memPth_2'} &&\text{by Lemma~\ref{lem:indRep}} \\
		&= \win &&\text{by hypothesis.}
	\end{align*}
	However, we also have
	\begin{align*}
		\cycVal{\cycBis}
		&= \cycVal{(\memPth_2\memPth_1')(\memPth_2'\memPth_1'\memPth_2'\memPth_1)} &&\text{by Lemma~\ref{lem:indOrd}},
	\end{align*}
	and since $\cycVal{\memPth_1'\memPth_2} = \cycVal{\memPth_1\memPth_2'\memPth_1'\memPth_2'} = \lose$, we also have $\cycVal{\cycBis} = \lose$ by $\memSkel$-cycle-consistency.
	This is a contradiction.
\end{proof}

\begin{remark} \label{rem:crossPoint}
	A consequence of the previous lemma is that when two cycles $\cyc, \cyc'\in\memCycles$ share at least one common state (i.e., $\cycStates{\cyc} \cap \cycStates{\cyc'} \neq \emptyset$), we can write $\cyc\cyc'$ for any cycle that, starting from a common state, sees first $\cyc$ and then $\cyc'$, without necessarily specifying on which common state the cycle starts; we allow such a shortcut as the value of $\cyc\cyc'$ is not impacted by the choice of the common skeleton state.
	This convention is used in the following definition.
\end{remark}

\subparagraph*{Competing cycles.}
\begin{definition}
	Let $\cyc, \cyc' \in \memCycles$ with $\cycVal{\cyc} \neq \cycVal{\cyc'}$.
	We say that \emph{$\cyc$ and $\cyc'$ are competing} if there exists $\wit\in\memCycles$ such that $\cycStates{\cyc} \cap \cycStates{\wit} \neq \emptyset$, $\cycStates{\cyc'} \cap \cycStates{\wit} \neq \emptyset$, $\cycVal{\cyc\wit} = \cycVal{\cyc}$, and $\cycVal{\cyc'\wit} = \cycVal{\cyc'}$.
	In this case, we say that $\wit$ is a \emph{witness that $\cyc$ and $\cyc'$ are competing}, or that \emph{the competition of $\cyc$ and $\cyc'$ is witnessed by $\wit$}.
\end{definition}
Our requirement for cycle $\wit$ means that it intersects the states of both $\cyc$ and $\cyc'$, but is not influential enough to ``alter'' the values of $\cyc$ and $\cyc'$ when it is combined with them.
If $\cycVal{\cyc} \neq \cycVal{\cyc'}$ and $\cycStates{\cyc} \cap \cycStates{\cyc'} \neq \emptyset$, then if $\cycVal{\cyc\cyc'} = \cycVal{\cyc}$ (resp.\ $\cycVal{\cyc\cyc'} = \cycVal{\cyc'}$), we have that $\cyc'$ (resp.\ $\cyc$) witnesses that $\cyc$ and $\cyc'$ are competing (the argument uses Lemma~\ref{lem:indRep}).
In short, any two cycles of opposite values that share a common state are competing, and two cycles of opposite values that do not share a common state may or may not be competing.%

If two cycles are competing, we want to determine which one \emph{dominates} the other.
\begin{definition}
	Let $\cyc, \cyc' \in \memCycles$ with $\cycVal{\cyc} \neq \cycVal{\cyc'}$ be two competing cycles, and $\wit$ be a witness of this competition.
	For some $\memState\in\cycStates{\cyc}$ and $\memState'\in\cycStates{\cyc'}$, it is thus possible to decompose~$\wit$ as two (possibly empty) paths $\wit_1$ and $\wit_2$ such that $\wit = \wit_1\wit_2$, $\wit_1\in\memPathsOn{\memState}{\memState'}$, and $\wit_2\in\memPathsOn{\memState'}{\memState}$.
	We define that \emph{$\cyc$ dominates $\cyc'$} if $\cycVal{\cyc\wit_1\cyc'\wit_2} = \cycVal{\cyc}$, and \emph{$\cyc'$ dominates $\cyc$} if $\cycVal{\cyc\wit_1\cyc'\wit_2} = \cycVal{\cyc'}$.
\end{definition}
To be well-defined, this \emph{domination} notion needs to be independent from the choice of witness.%

\begin{lemma}[Witness independence] \label{lem:indWit}
	Let $\cyc, \cyc' \in \memCycles$ with $\cycVal{\cyc} \neq \cycVal{\cyc'}$.
	Let $\wit_1, \wit_2\in\memCycles$ be two witnesses that $\cyc$ and $\cyc'$ are competing.
	Then, $\cyc$ dominates $\cyc'$ taking $\wit_1$ as a witness if and only if $\cyc$ dominates $\cyc'$ taking $\wit_2$ as a witness.
\end{lemma}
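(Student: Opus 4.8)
The plan is to fix two witnesses $\wit$ and $\witBis$ that $\cyc$ and $\cyc'$ are competing, and assume for contradiction that $\cyc$ dominates $\cyc'$ via $\wit$ (say $\cycVal{\cyc} = \win$ and $\cycVal{\cyc'} = \lose$, so $\cycVal{\cyc\wit_1\cyc'\wit_2} = \win$ where $\wit = \wit_1\wit_2$ with $\wit_1$ going from a state of $\cyc$ to a state of $\cyc'$) while $\cyc'$ dominates $\cyc$ via $\witBis$ (so $\cycVal{\cyc\witBis_1\cyc'\witBis_2} = \lose$). The goal is to build a single cycle on some skeleton state whose value is forced to be both $\win$ and $\lose$ by $\memSkel$-cycle-consistency applied in two different groupings, exactly in the style of the proofs of Lemma~\ref{lem:uniformDominating}, Lemma~\ref{lem:indRep}, and Lemma~\ref{lem:indCycle}.

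The key tool is that the two witness cycles $\wit$ and $\witBis$ both intersect $\cyc$ and both intersect $\cyc'$, so one can travel: pick $\memState\in\cycStates{\cyc}\cap\cycStates{\wit}$, $\memState'\in\cycStates{\cyc'}\cap\cycStates{\wit}$, $\memState''\in\cycStates{\cyc}\cap\cycStates{\witBis}$, $\memState'''\in\cycStates{\cyc'}\cap\cycStates{\witBis}$, and use subpaths of $\cyc$ and $\cyc'$ (which are available since $\cyc$ visits both $\memState$ and $\memState''$, and $\cyc'$ visits both $\memState'$ and $\memState'''$) to stitch everything together. The moves available are: the defining relations $\cycVal{\cyc\wit} = \win$, $\cycVal{\cyc'\wit} = \lose$, $\cycVal{\cyc\witBis} = \win$, $\cycVal{\cyc'\witBis} = \lose$ (using Remark~\ref{rem:crossPoint} to suppress crossing points when cycles share a state); shift independence (Lemma~\ref{lem:indOrd}); repetition independence (Lemma~\ref{lem:indRep}) to duplicate any subcycle freely; cycle-order independence (Corollary~\ref{lem:indOrdThree}); the strong repetition independence (Corollary~\ref{lem:indRepStrong}) to duplicate a "non-consecutive" connecting loop such as $\wit_1\wit_2$ or $\witBis_1\witBis_2$ inside a longer cycle; and crossing-point independence (Lemma~\ref{lem:indCycle}). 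Using these, I would construct a long cycle that contains $\cyc$, $\cyc'$, all of $\wit$, and all of $\witBis$, and show: grouped one way it decomposes as a product of cycles all of value $\win$ (using $\cyc\wit$ is winning, $\cyc\witBis$ is winning, and $\cyc$ is winning — pulling the $\cyc'$-pieces out as part of the winning blocks $\cyc'\wit$... wait, those are losing), so more carefully: grouped around $\memState$/$\memState'$ using witness $\wit$ it looks like the "$\cyc$-dominates-$\cyc'$" situation replicated, forcing value $\win$; grouped around $\memState''$/$\memState'''$ using witness $\witBis$ it looks like the "$\cyc'$-dominates-$\cyc$" situation replicated, forcing value $\lose$.

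Concretely, I expect the winning/losing cycle to be something like $\cyc\,\wit_1\,\cyc'\,\witBis_2'\,\cyc\,\witBis_1\,\cyc'\,\wit_2$ (where primed subpaths of $\cyc$, $\cyc'$ are chosen to match up the crossing points correctly), and the argument is that reading it starting at one crossing point it is a combination of two copies of the "$\wit$-witnessed" competition won by $\cyc$ (hence $\win$ by $\memSkel$-cycle-consistency, after possibly repeating blocks via Lemma~\ref{lem:indRep} and reordering via Corollary~\ref{lem:indOrdThree}), while starting at the other crossing point it is a combination of two copies of the "$\witBis$-witnessed" competition won by $\cyc'$ (hence $\lose$). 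The tricky bookkeeping is that a single cycle $\cyc$ or $\cyc'$ is used twice at two different entry points, so I need to pass through subpaths of $\cyc$ (from $\memState$ to $\memState''$ and back) and of $\cyc'$ (from $\memState'$ to $\memState'''$ and back); here shift independence and especially Corollary~\ref{lem:indRepStrong} let me insert the extra connecting loops without changing the value. I would also need the symmetric remaining cases in $\{\win,\lose\}$ for the values of $\cyc,\cyc'$ and of the combined cycles, but these follow by the symmetry already exploited throughout Section~\ref{sec:proof2} (swap the roles of $\wc$ and $\comp\wc$, or of $\cyc$ and $\cyc'$).

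The main obstacle I anticipate is purely combinatorial rather than conceptual: carefully choosing the crossing points and the subpaths of $\cyc$ and $\cyc'$ so that the stitched-together object is genuinely a cycle on a single skeleton state, and then exhibiting the \emph{two} groupings into sub-cycles of uniform value. Getting the two groupings to both be legitimate — i.e., each factor in each grouping is a cycle on the relevant common state so that $\memSkel$-cycle-consistency actually applies — is where the previously established lemmas (shift independence, repetition independence in both its forms, cycle-order independence, crossing-point independence) all get used in concert, and it is easy to make an off-by-one error in which subpath connects where. Once the construction is pinned down, deriving the contradiction $\cycVal{\nu} = \win$ and $\cycVal{\nu} = \lose$ against $\inverse{\memState}\wc \cap \inverse{\memState}\comp{\wc} = \emptyset$ is immediate and mirrors the end of the proof of Lemma~\ref{lem:uniformDominating}.
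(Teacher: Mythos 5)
Your plan is essentially the paper's proof: fix the two witnesses, assume for contradiction that $\cyc$ dominates $\cyc'$ via $\wit_1$ but $\cyc'$ dominates $\cyc$ via $\wit_2$, stitch the two witnessed combinations into one long cycle using subpaths of $\cyc$ and $\cyc'$ between the crossing points, and force a contradiction by regrouping it in two ways under $\memSkel$-cycle-consistency, using shift, repetition and cycle-order independence. The one point where your anticipated bookkeeping would not go through as stated is the shape of the two groupings: the long cycle is \emph{not} a combination of ``two copies of the $\wit_1$-witnessed competition'' in one reading and ``two copies of the $\wit_2$-witnessed competition'' in the other, because each reading must also absorb the pieces of the \emph{other} witness, which cannot be folded into copies of the first competition. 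In the paper's construction $\cycBis = (\cyc_1\cyc_2\wit_{1,1}\cyc'_1\cyc'_2\wit_{1,2})(\cyc_1\wit_{2,1}\cyc'_2\cyc'_1\wit_{2,2}\cyc_2)$, each grouping mixes \emph{one} domination instance with \emph{one} witness-neutrality cycle: the losing reading splits off $\wit_{1,1}\cyc'_1\cyc'_2\wit_{1,2}$ (losing because $\wit_1$ is a witness for $\cyc'$, i.e.\ $\cycVal{\cyc'\wit_1}=\cycVal{\cyc'}$) and the remainder, which after a shift and Lemma~\ref{lem:indRep} is the $\wit_2$-witnessed combination (losing by assumption); the winning reading, after a cycle-order swap on $\memState_2'$ (Corollary~\ref{lem:indOrdThree}), splits into $\cyc_1\wit_{2,1}\wit_{2,2}\cyc_2$ (winning because $\wit_2$ is a witness for $\cyc$) and $\cyc_1\cyc_2\wit_{1,1}(\cyc'_1\cyc'_2)^2\wit_{1,2}$ (winning by assumption plus repetition independence). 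So the neutrality conditions $\cycVal{\cyc\wit_i}=\cycVal{\cyc}$ and $\cycVal{\cyc'\wit_i}=\cycVal{\cyc'}$ are used inside both groupings, not only in the definition of domination; with that adjustment your construction is exactly the paper's, and the contradiction follows as you describe.
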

\begin{proof}
	We assume w.l.o.g.\ $\cycVal{\cyc} = \win$ and $\cycVal{\cyc'} = \lose$.
	As $\wit_1$ witnesses that $\cyc$ and $\cyc'$ are competing, there exists $\memState_1\in\cycStates{\cyc}$, $\memState_1'\in\cycStates{\cyc'}$ such that $\wit_1 = \wit_{1, 1}\wit_{1, 2}$ with $\wit_{1, 1}\in\memPathsOn{\memState_1}{\memState'_1}$, $\wit_{1, 2}\in\memPathsOn{\memState'_1}{\memState_1}$.
	Similarly, as $\wit_2$ witnesses that $\cyc$ and $\cyc'$ are competing, there exists $\memState_2\in\cycStates{\cyc}$, $\memState_2'\in\cycStates{\cyc'}$ such that $\wit_2 = \wit_{2, 1}\wit_{2, 2}$ with $\wit_{2, 1}\in\memPathsOn{\memState_2}{\memState'_2}$, $\wit_{2, 2}\in\memPathsOn{\memState'_2}{\memState_2}$.
	We can also write $\cyc = \cyc_1\cyc_2$ with $\cyc_1\in\memPathsOn{\memState_1}{\memState_2}$ and $\cyc_2\in\memPathsOn{\memState_2}{\memState_1}$, and $\cyc' = \cyc'_1\cyc'_2$ with $\cyc_1\in\memPathsOn{\memState'_1}{\memState'_2}$ and $\cyc'_2\in\memPathsOn{\memState'_2}{\memState'_1}$.

	\begin{figure}[t]
		\centering
		\begin{tikzpicture}[every node/.style={font=\small,inner sep=1pt}]
			\draw (0,0) node[diamant] (m1) {$\memState_1$};
			\draw ($(m1)-(0,2.8)$) node[diamant] (m2) {$\memState_2$};
			\draw ($(m1)+(3.5,0)$) node[diamant] (m1') {$\memState_1'$};
			\draw ($(m1)+(3.5,-2.8)$) node[diamant] (m2') {$\memState_2'$};
			\draw (m1) edge[-latex',decorate,out=30,in=180-30] node[below=3pt] {$\wit_{1, 1}$} (m1');
			\draw (m1') edge[-latex',decorate,out=-180+30,in=-30] node[below=3pt] {$\wit_{1, 2}$} (m1);
			\draw (m1) edge[-latex',decorate,out=-120,in=120] node[left=3pt] {$\cyc_1$} (m2);
			\draw (m2) edge[-latex',decorate,out=60,in=-60] node[right=3pt] {$\cyc_2$} (m1);
			\draw (m2) edge[-latex',decorate,out=30,in=180-30] node[above=3pt] {$\wit_{2, 1}$} (m2');
			\draw (m2') edge[-latex',decorate,out=-180+30,in=-30] node[above=3pt] {$\wit_{2, 2}$} (m2);
			\draw (m1') edge[-latex',decorate,out=-120,in=120] node[left=3pt] {$\cyc_1'$} (m2');
			\draw (m2') edge[-latex',decorate,out=60,in=-60] node[right=3pt] {$\cyc_2'$} (m1');
		\end{tikzpicture}
		\caption{Situation in the proof of Lemma~\ref{lem:indWit}.}
		\label{fig:indWit}
	\end{figure}

	The situation is depicted in Figure~\ref{fig:indWit}.
	Note that it is possible that $\memState_1 = \memState_2$ (in which case we can assume $\cyc = \cyc_1$, $\cyc_2 = (\memState_1, \emptyPthSymbol)$) or similarly that $\memState_1 = \memState_1'$, $\memState'_1 = \memState'_2$, and/or $\memState_2 = \memState_2'$.

\begin{sloppypar}
	We assume by contradiction that $\cyc$ dominates $\cyc'$ taking $\wit_1$ as a witness, but that~$\cyc'$ dominates~$\cyc$ taking $\wit_2$ as a witness.
	In other words, $\cycVal{\cyc_1\cyc_2\wit_{1, 1}\cyc'_1\cyc'_2\wit_{1, 2}} = \win$ and $\cycVal{\cyc_2\cyc_1\wit_{2,1}\cyc'_2\cyc'_1\wit_{2,2}} = \lose$.
	We consider the concatenation of both these cycles, shifting the second one to make it a cycle on $\memState_1$,
 \end{sloppypar}
	\[
	\cycBis = (\cyc_1\cyc_2\wit_{1, 1}\cyc'_1\cyc'_2\wit_{1, 2})(\cyc_1\wit_{2,1}\cyc'_2\cyc'_1\wit_{2,2}\cyc_2).
	\]
	It is possible to express $\cycBis$ directly as a combination of two losing cycles: $\wit_{1, 1}\cyc'_1\cyc'_2\wit_{1, 2}$ is losing (by definition of witness), and
	\begin{align*}
		\cycVal{\cyc_1\wit_{2,1}\cyc'_2\cyc'_1&\wit_{2,2}\cyc_2\cyc_1\cyc_2} \\
		&= \cycVal{\wit_{2,1}\cyc'_2\cyc'_1\wit_{2,2}(\cyc_2\cyc_1)^2} &&\text{by Lemma~\ref{lem:indOrd}}\\
		&= \cycVal{\wit_{2,1}\cyc'_2\cyc'_1\wit_{2,2}\cyc_2\cyc_1} &&\text{by Lemma~\ref{lem:indRep}} \\
		&= \lose &&\text{as $\cyc'$ dominates $\cyc$ taking $\wit_2$ as a witness}.
	\end{align*}
	Cycle $\cycBis$ is therefore losing by $\memSkel$-cycle-consistency.

	Now, notice that $\cycBis$ can be written as three cycles on $\memState_2'$ after being shifted in the following way:
	\[
	\cycVal{\cycBis} = \cycVal{(\cyc'_2\wit_{1,2}\cyc_1\wit_{2,1})(\cyc'_2\cyc'_1)(\wit_{2,2}\cyc_2\cyc_1\cyc_2\wit_{1,1}\cyc'_1)}.
	\]
	By Lemma~\ref{lem:indOrdThree} (``cycle-order independence''), it has the same value as
	\[
	\cycVal{\cycBis} = \cycVal{(\cyc'_2\wit_{1,2}\cyc_1\wit_{2,1})(\wit_{2,2}\cyc_2\cyc_1\cyc_2\wit_{1,1}\cyc'_1)(\cyc'_2\cyc'_1)}.
	\]
	As before, this cycle can be shifted and written as two winning cycles $\cyc_1\wit_{2,1}\wit_{2,2}\cyc_2$ and\linebreak $\cyc_1\cyc_2\wit_{1,1}(\cyc'_1\cyc'_2)^2\wit_{1,2}$ and is therefore winning by $\memSkel$-cycle-consistency.

	Cycle $\cycBis$ is both winning and losing, a contradiction.
\end{proof}

\begin{example} \label{ex:competition}
	We illustrate competition and domination of cycles on a parity automaton (even though at this point in the proof, we have not yet shown that $\wc$ is $\omega$-regular).
	We consider the parity automaton $\parAtmtn$ from Figure~\ref{fig:parityAtmtn} (left), with winning condition $\wc = \parLang$.
	Condition~$\wc$ is $\memSkel$-prefix-independent and $\memSkel$-cycle-consistent (Lemma~\ref{lem:parConcepts}).
	We give a few examples of competition and domination between cycles.
	The winning cycle $(\memState_1, a)(\memState_2, a)$ dominates losing cycle $(\memState_1, b)$ but is dominated by losing cycle $(\memState_1, c)$.
	Cycle $(\memState_2, b)$ is winning but is not competing with $(\memState_1, b)$.
	In particular, their competition is not witnessed by $(\memState_1, a)(\memState_2, a)$ since combining it with $(\memState_1, b)$ alters its value ($(\memState_1, b)$ is losing but $(\memState_1, b)(\memState_1, a)(\memState_2, a)$ is winning) --- another potential witness is $(\memState_1, c)(\memState_1, a)(\memState_2, a)$, but combining it with $(\memState_2, b)$ alters its value.
	However, cycle $(\memState_2, b)$ is competing with and dominated by $(\memState_1, c)$: their competition is witnessed, e.g., by $(\memState_1, a)(\memState_2, a)$ and by $(\memState_1, b)(\memState_1, a)(\memState_2, a)$.
	\begin{figure}
		\centering
		\begin{minipage}{0.55\columnwidth}
			\centering
			\begin{tikzpicture}[every node/.style={font=\small,inner sep=1pt}]
				\draw (0,0) node[diamant] (m1) {$\memState_1$};
				\draw ($(m1)+(2.5,0)$) node[diamant] (m2) {$\memState_2$};
				\draw ($(m1)+(0,0.8)$) edge[-latex'] (m1);
				\draw (m1) edge[-latex',out=180+60,in=270+30,distance=0.8cm] node[below=3pt] {$c \mid 3$} (m1);
				\draw (m1) edge[-latex',out=180-30,in=180+30,distance=0.8cm] node[left=3pt] {$b \mid 1$} (m1);
				\draw (m1) edge[-latex',out=30,in=180-30] node[above=3pt] {$a \mid 2$} (m2);
				\draw (m2) edge[-latex',out=180+30,in=-30] node[below=3pt] {$a \mid 2$} (m1);
				\draw (m2) edge[-latex',out=-30,in=+30,distance=0.8cm] node[right=3pt] {$b, c \mid 0$} (m2);
			\end{tikzpicture}
		\end{minipage}%
		\begin{minipage}{0.45\columnwidth}
			\centering
			\begin{tikzpicture}[every node/.style={font=\small,inner sep=1pt}]
				\draw (0,0) node[] (c) {$\eqClassCyc{(\memState_1, c)}$};
				\draw ($(c)+(-1.5,-1.5)$) node[] (aa) {$\eqClassCyc{(\memState_1, a)(\memState_2, a)}$};
				\draw ($(aa)+(0,-1.5)$) node[] (b) {$\eqClassCyc{(\memState_1, b)}$};
				\draw ($(c)+(1.5,-1.5)$) node[] (bc) {$\eqClassCyc{(\memState_2, b)}$};
				\draw (c) edge[] node[above left,xshift=-2pt,yshift=-1pt] {\rotatebox[origin=c]{90}{$\cycOrd$}} (aa);
				\draw (aa) edge[] node[left=2pt] {\rotatebox[origin=c]{90}{$\cycOrd$}} (b);
				\draw (c) edge[] node[above right,xshift=2pt,yshift=-1pt] {\rotatebox[origin=c]{90}{$\cycOrd$}} (bc);
			\end{tikzpicture}
		\end{minipage}
		\caption{A parity automaton $\parAtmtn$ (left) with $\colors = \{a, b, c\}$ used in Example~\ref{ex:competition}.
		Notation $\clr \mid k$ on a transition from a state $\memState$ means that $\pri(\memState, \clr) = k$.
	 	A diagram (right) showing the relations between the elements of $\quotient{\memCycles}{\cycEq}$ ordered by partial preorder $\cycOrd$, discussed in Example~\ref{ex:hasse}.}
		\label{fig:parityAtmtn}
	\end{figure}
\end{example}

\subparagraph*{Preorder on cycles.}
For a winning (resp.\ losing) cycle $\cyc\in\memCycles$, we define $\compar{\cyc}$ as the set of losing (resp.\ winning) cycles that $\cyc$ is competing with, and $\dom(\cyc)$ as the set of losing (resp.\ winning) cycles from $\compar{\cyc}$ that are dominated by $\cyc$.

We define an ordering $\cycOrd$ of cycles based on these notions.
We write $\cyc' \cycOrd \cyc$ if $\cyc'\in\dom(\cyc)$.
We extend this definition to cycles with the same value: if $\cycVal{\cyc_1} = \cycVal{\cyc_2}$ and there exists a cycle $\cyc'$ such that $\cycVal{\cyc'} \neq \cycVal{\cyc_1}$ with $\cyc_2\in\dom(\cyc')$ and $\cyc'\in\dom(\cyc_1)$, we write $\cyc_2\cycOrd\cyc_1$ ---
intuitively, this condition implies that $\cyc_2$ is less powerful than $\cyc_1$ as we can find a cycle dominating $\cyc_2$ that is itself dominated by $\cyc_1$.
We show that relation $\cycOrd$ is a strict preorder (which is not total in general).

\begin{lemma} \label{lem:strictPreorder}
	Relation $\cycOrd$ is a strict preorder.
\end{lemma}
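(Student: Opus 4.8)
The plan is to show that $\cycOrd$ is irreflexive and transitive.

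\textbf{Irreflexivity.} I would show that no cycle dominates itself, and more generally that $\cyc \cycOrd \cyc$ fails for every $\cyc\in\memCycles$. For a winning $\cyc$ and a losing $\cyc'$, if $\cyc$ dominates $\cyc'$ then (by definition) $\cycVal{\cyc\wit_1\cyc'\wit_2} = \win \neq \lose$, so $\cyc'$ does not dominate $\cyc$; hence $\cyc' \cycOrd \cyc$ and $\cyc \cycOrd \cyc'$ are mutually exclusive for cycles of opposite value. For the extended definition on cycles of equal value, $\cyc \cycOrd \cyc$ would require some $\cyc'$ of opposite value with $\cyc\in\dom(\cyc')$ and $\cyc'\in\dom(\cyc)$, i.e.\ both $\cyc$ dominates $\cyc'$ and $\cyc'$ dominates $\cyc$, contradicting the mutual exclusivity just established (using Lemma~\ref{lem:indWit} so that "dominates" is witness-independent, and Remark~\ref{rem:crossPoint}/Lemma~\ref{lem:indCycle} so the notion is well-defined). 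So $\cycOrd$ is irreflexive.

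\textbf{Transitivity.} This is the main obstacle, and I expect it to split into cases according to the values of the three cycles $\cyc_1, \cyc_2, \cyc_3$ with $\cyc_1 \cycOrd \cyc_2 \cycOrd \cyc_3$. In the "alternating-value" case (say $\cyc_1, \cyc_3$ losing and $\cyc_2$ winning, so these are all instances of domination between opposite-value cycles), I would pick witnesses $\wit$ for the competition of $\cyc_1,\cyc_2$ and $\wit'$ for the competition of $\cyc_2,\cyc_3$, then build a composite cycle that threads through the states of $\cyc_1$, $\cyc_2$, $\cyc_3$ via $\wit$ and $\wit'$; the key point is to exhibit a single witness for the competition of $\cyc_1$ and $\cyc_3$ (the relevant witness should be something like $\wit\cyc_2\wit'$, whose value is unaffected when combined with $\cyc_1$ or with $\cyc_3$ precisely because $\cyc_2$ is dominated by neither — here I would lean on Lemmas~\ref{lem:indOrd}, \ref{lem:indRep}, Corollaries~\ref{lem:indOrdThree} and~\ref{lem:indRepStrong}, and Lemma~\ref{lem:indCycle} to freely shift, repeat, and re-associate sub-cycles), and then compute $\cycVal{\cyc_1(\cdots)\cyc_3(\cdots)}$ to read off that $\cyc_1 \cycOrd \cyc_3$. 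The same-value cases reduce to the alternating case: if $\cycVal{\cyc_1}=\cycVal{\cyc_2}$, unfold the definition to get an intermediate cycle $\cyc'$ of opposite value with $\cyc_1 \cycOrd \cyc' \cycOrd \cyc_2$, and similarly propagate through $\cyc_2 \cycOrd \cyc_3$; then one or two applications of the alternating-case transitivity chain the dominations together to produce a witness showing $\cyc_1 \cycOrd \cyc_3$.

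\textbf{Bookkeeping.} Throughout I would use Remark~\ref{rem:crossPoint} to write concatenations $\cyc\cyc'$ of cycles sharing a state without specifying the basepoint, and $\memSkel$-cycle-consistency to conclude that a concatenation of two winning (resp.\ losing) cycles is winning (resp.\ losing), hence to derive the final contradiction whenever a composite cycle is shown to be simultaneously winning and losing. The delicate part of the transitivity argument is keeping the witness construction symmetric enough that it genuinely witnesses the competition of $\cyc_1$ and $\cyc_3$ (i.e.\ intersects both and alters neither value); I would verify this using the "not influential enough to alter the value" reading of a witness together with the repetition- and order-independence lemmas, exactly as in the proof of Lemma~\ref{lem:indWit}.
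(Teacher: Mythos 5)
Your irreflexivity argument is essentially the paper's: mutual exclusivity of domination between opposite-value cycles (via witness independence, Lemma~\ref{lem:indWit}) rules out $\cyc\cycOrd\cyc$ in both the primitive and the extended case. The problem is in transitivity, where your case split is inverted. The case you single out as "the main obstacle" --- $\cyc_1\cycOrd\cyc_2\cycOrd\cyc_3$ with $\cyc_1,\cyc_3$ losing and $\cyc_2$ winning --- needs no witness construction at all: since $\cyc_1\in\dom(\cyc_2)$, $\cyc_2\in\dom(\cyc_3)$ and $\cycVal{\cyc_2}\neq\cycVal{\cyc_1}=\cycVal{\cyc_3}$, the conclusion $\cyc_1\cycOrd\cyc_3$ is literally the extended definition of $\cycOrd$ on same-value cycles (this is the paper's one-line first case). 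Moreover "a witness for the competition of $\cyc_1$ and $\cyc_3$" is not a defined notion there, since competition is only defined for cycles of opposite values (and, incidentally, in that configuration $\cyc_2$ \emph{is} dominated by $\cyc_3$, contrary to your parenthetical).

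The genuinely hard case is precisely the one your "reduce to the alternating case and chain" step is meant to absorb: a chain such as $\cyc'_2\cycOrd\cyc_2\cycOrd\cyc_1$ with $\cycVal{\cyc'_2}\neq\cycVal{\cyc_2}=\cycVal{\cyc_1}$, whose conclusion $\cyc'_2\cycOrd\cyc_1$ is a \emph{primitive cross-value domination} and so cannot be read off the definition --- one must exhibit a witness that $\cyc'_2$ and $\cyc_1$ are competing and check who dominates. Unfolding the same-value link only yields a longer alternating chain $\cyc'_2\cycOrd\cyc_2\cycOrd\cyc'_1\cycOrd\cyc_1$ whose endpoints still have opposite values; applying the trivial alternating-case observation to any length-three subchain returns either facts you already have or same-value conclusions that hold by definition, and further unfolding loops without ever producing the needed cross-value conclusion. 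This is where the paper does the real work: writing $\wit,\wit',\wit''$ for the witnesses along the chain and decomposing them and the intermediate cycles suitably, it assembles the composite cycle $\cycBis=\wit_1\cyc'_{1,1}\wit'_1\cyc_{2,1}\wit''_1\wit''_2\cyc_{2,2}\wit'_2\cyc'_{1,2}\wit_2$ and verifies three value computations ($\cyc_1\cycBis$ winning, $\cyc'_2\cycBis$ losing, and $\cyc_1\cycBis_1\cyc'_2\cycBis_2$ winning) using Lemmas~\ref{lem:indOrd}, \ref{lem:indRep}, \ref{lem:indOrdThree}, \ref{lem:indRepStrong} and $\memSkel$-cycle-consistency; the remaining mixed and all-same-value cases then reduce to this one. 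Your threading construction has the right flavour, but it is aimed at the case where it is unnecessary, while the case that requires it is handled by a reduction that does not go through --- so as written the transitivity proof has a genuine gap.
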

\begin{proof}
	We first prove that ${\cycOrd}$ is irreflexive, i.e., that for all $\cyc\in\memCycles$, $\cyc \not\cycOrd \cyc$.
	If $\cyc \cycOrd \cyc$, since $\cycVal{\cyc} = \cycVal{\cyc}$, there exists $\cyc'\in\memCycles$ such that $\cycVal{\cyc'} \neq \cycVal{\cyc}$, $\cyc \in \dom(\cyc')$, and $\cyc' \in \dom(\cyc)$.
	But that is not possible since when $\cyc$ and $\cyc'$ are competing, they cannot both dominate the other (no matter the choice of witness, as shown in Lemma~\ref{lem:indWit}).

	We now prove that $\cycOrd$ is transitive.
	We distinguish four cases (we rename cycles in each case to ease the reading by making it so that cycles with a prime symbol have a different value from cycles without a prime symbol).

	If $\cyc_2 \cycOrd \cyc'$ and $\cyc' \cycOrd \cyc_1$ with $\cycVal{\cyc_2} \neq \cycVal{\cyc'}$ and $\cycVal{\cyc'} \neq \cycVal{\cyc_1}$, then $\cycVal{\cyc_2} = \cycVal{\cyc_1}$, and $\cyc_2 \cycOrd \cyc_1$ by definition.

	Let $\cyc'_2 \cycOrd \cyc_2$ and $\cyc_2 \cycOrd \cyc_1$ with $\cycVal{\cyc'_2} \neq \cycVal{\cyc_2}$ and $\cycVal{\cyc_2} = \cycVal{\cyc_1}$.
	We assume w.l.o.g.\ that $\cycVal{\cyc_2} = \cycVal{\cyc_1} = \win$, so there exists $\cyc'_1\in\memCycles$ such that $\cycVal{\cyc'_1} = \lose$, $\cyc_2\cycOrd\cyc_1'$ and $\cyc_1'\cycOrd\cyc_1$.
	We assume that $\cyc_1'\cycOrd\cyc_1$ is witnessed by $\wit$, that $\cyc_2\cycOrd\cyc_1'$ is witnessed by $\wit'$ and that $\cyc_2' \cycOrd \cyc_2$ is witnessed by $\wit''$.
	We assume that $\wit = \wit_1\wit_2$, $\cyc_1' = \cyc_{1,1}'\cyc_{1,2}'$, $\wit' = \wit_1'\wit_2'$, $\cyc_2 = \cyc_{2,1}\cyc_{2,2}$, and $\wit'' = \wit_1''\wit_2''$.
	We use Figure~\ref{fig:transitive} to illustrate the situation and explain where the common states of these cycles are.

	\begin{figure}[t]
		\centering
		\begin{tikzpicture}[every node/.style={font=\small,inner sep=1pt}]
			\draw (0,0) node[diamant] (m1) {};
			\draw ($(m1)+(2.1,0)$) node[diamant] (m2) {};
			\draw ($(m2)+(2.1,0)$) node[diamant] (m3) {};
			\draw ($(m3)+(2.1,0)$) node[diamant] (m4) {};
			\draw ($(m4)+(2.1,0)$) node[diamant] (m5) {};
			\draw ($(m5)+(2.1,0)$) node[diamant] (m6) {};
			\draw (m1) edge[-latex',decorate,out=150,in=210,distance=0.8cm] node[left=3pt] {$\cyc_1$} (m1);
			\draw (m1) edge[-latex',decorate,out=30,in=180-30] node[above=3pt] {$\wit_1$} (m2);
			\draw (m2) edge[-latex',decorate,out=-180+30,in=-30] node[below=3pt] {$\wit_2$} (m1);
			\draw (m2) edge[-latex',decorate,out=30,in=180-30] node[above=3pt] {$\cyc_{1,1}'$} (m3);
			\draw (m3) edge[-latex',decorate,out=-180+30,in=-30] node[below=3pt] {$\cyc_{1,2}'$} (m2);
			\draw (m3) edge[-latex',decorate,out=30,in=180-30] node[above=3pt] {$\wit_1'$} (m4);
			\draw (m4) edge[-latex',decorate,out=-180+30,in=-30] node[below=3pt] {$\wit_2'$} (m3);
			\draw (m4) edge[-latex',decorate,out=30,in=180-30] node[above=3pt] {$\cyc_{2,1}$} (m5);
			\draw (m5) edge[-latex',decorate,out=-180+30,in=-30] node[below=3pt] {$\cyc_{2,2}$} (m4);
			\draw (m5) edge[-latex',decorate,out=30,in=180-30] node[above=3pt] {$\wit_1''$} (m6);
			\draw (m6) edge[-latex',decorate,out=-180+30,in=-30] node[below=3pt] {$\wit_2''$} (m5);
			\draw (m6) edge[-latex',decorate,out=-30,in=30,distance=0.8cm] node[right=3pt] {$\cyc_2'$} (m6);
		\end{tikzpicture}
		\caption{Situation to show transitivity of $\cycOrd$ in Lemma~\ref{lem:strictPreorder}.}
		\label{fig:transitive}
	\end{figure}

	We want to show that $\cyc_2'\cycOrd\cyc_1$.
	To do so, we show that for $\cycBis_1 = \wit_1\cyc'_{1,1}\wit'_1\cyc_{2,1}\wit''_1$ and $\cycBis_2 = \wit''_2\cyc_{2,2}\wit'_2\cyc'_{1,2}\wit_2$, $\cycBis = \cycBis_1\cycBis_2$ is a witness that $\cyc_1$ and $\cyc_2'$ are competing, and that $\cyc_1\cycBis_1\cyc_2'\cycBis_2$ is winning:
	\begin{itemize}
		\item \underline{Cycle $\cyc_1\cycBis$ is winning.}
		We can split this cycle into $\cyc'_{1,2}\wit_2\cyc_1\wit_1\cyc'_{1,1}$ and $\wit'_1\cyc_{2,1}\wit''_1\wit''_2\cyc_{2,2}\wit'_2$.
		The former cycle can be shifted (Lemma~\ref{lem:indOrd}) to $\cyc_1\wit_1\cyc'_{1,1}\cyc'_{1,2}\wit_2$, which is winning since $\cyc_1' \cycOrd \cyc_1$.
		The latter cycle has the same value as $\wit'_1(\cyc_{2,2}\cyc_{2,1})\cyc_{2,1}\wit''_1\wit''_2\cyc_{2,2}\wit'_2$ (Lemma~\ref{lem:indRepStrong}), which can be shifted to $(\wit'_2\wit'_1\cyc_{2,2}\cyc_{2,1})(\cyc_{2,1}\wit''_1\wit''_2\cyc_{2,2})$.
		Both these cycles are winning since $\wit'$ and $\wit''$ are witnesses for competitions involving $\cyc_2$.
		\item \underline{Cycle $\cyc_2'\cycBis$ is losing.}
		We can split this cycle into $\wit''_1\cyc_2'\wit''_2$ (losing because $\wit''$ witnesses a competition involving $\cyc_2'$) and $\cyc_{2,2}\wit'_2\cyc'_{1,2}\wit_2\wit_1\cyc'_{1,1}\wit'_1\cyc_{2,1}$.
		This latter cycle has the same value as the cycle $\cyc_{2,2}\wit'_2(\cyc'_{1,2}\cyc'_{1,1})\cyc'_{1,2}\wit_2\wit_1\cyc'_{1,1}\wit'_1\cyc_{2,1}$ (Lemma~\ref{lem:indRepStrong}), which can itself be split into $\cyc'_{1,2}\wit_2\wit_1\cyc'_{1,1}$ (losing because $\wit$ witnesses a competition involving $\cyc'$) and $\wit'_2\cyc'_{1,2}\cyc'_{1,1}\wit'_1\cyc_{2,1}\cyc_{2,2}$ (losing because $\cyc_2 \cycOrd \cyc_1'$).
		\item \underline{Cycle $\cyc_1\cycBis_1\cyc_2'\cycBis_2$ is winning.}
		Using Lemma~\ref{lem:indRepStrong}, we can show that $\cyc_1\cycBis_1\cyc_2'\cycBis_2$ has the same value as $\cyc_1\cycBis_1(\wit''_2\cyc_{2,2}\cyc_{2,1}\wit''_1)\cyc_2'\cycBis_2$.
		We can split this cycle into $\cyc_1\cycBis_1\cycBis_2 = \cyc_1\cycBis$ (which is winning, as shown above) and $\wit''_2\cyc_{2,2}\cyc_{2,1}\wit''_1\cyc_2'$ (winning since $\cyc'_2\cycOrd \cyc_2$).
	\end{itemize}
	This shows that $\cyc_2'\cycOrd\cyc_1$.

	There are still two cases left to consider.
	The case $\cyc_2' \cycOrd \cyc_1'$ and $\cyc_1' \cycOrd \cyc_1$ with $\cycVal{\cyc_2'} = \cycVal{\cyc_1'}$ and $\cycVal{\cyc_1'} \neq \cycVal{\cyc_1}$ can be dealt with in the same way as the previous case (after noticing that there exists $\cyc_2\in\memCycles$ such that $\cyc_2'\cycOrd \cyc_2$, $\cyc_2 \cycOrd \cyc_1'$ and $\cycVal{\cyc_2} = \cycVal{\cyc_1}$).

	If $\cyc_3 \cycOrd \cyc_2$ and $\cyc_2 \cycOrd \cyc_1$ with $\cycVal{\cyc_3} = \cycVal{\cyc_2} = \cycVal{\cyc_1}$, then there exists in particular $\cyc'\in\memCycles$ such that $\cyc_3 \cycOrd \cyc'$, $\cyc' \cycOrd \cyc_2$ and $\cycVal{\cyc_3} \neq \cycVal{\cyc'}$.
	By a previous case, we conclude from $\cyc' \cycOrd \cyc_2$ and $\cyc_2 \cycOrd \cyc_1$ that $\cyc' \cycOrd \cyc_1$.
	As $\cyc_3 \cycOrd \cyc'$ and $\cyc' \cycOrd \cyc_1$, we have $\cyc_3 \cycOrd \cyc_1$ as desired.
\end{proof}

We define an equivalence relation on the cycles: we write $\cyc_1 \cycEq \cyc_2$ if $\cycVal{\cyc_1} = \cycVal{\cyc_2}$, $\compar{\cyc_1} = \compar{\cyc_2}$, and $\dom(\cyc_1) = \dom(\cyc_2)$.
We show that cycles that are equivalent for $\cycEq$ are in relation with the same elements for $\cycOrd$.

\begin{lemma}
	Let $\cyc_1, \cyc_2, \cyc'\in\memCycles$.
	If $\cyc_1 \cycEq \cyc_2$ and $\cyc'\cycOrd \cyc_1$, then $\cyc' \cycOrd \cyc_2$.
	If $\cyc_1 \cycEq \cyc_2$ and $\cyc_1 \cycOrd \cyc'$, then $\cyc_2 \cycOrd \cyc'$.
	In other words, preorder $\cycOrd$ is compatible with $\cycEq$.
\end{lemma}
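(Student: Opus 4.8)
The plan is to unfold the definition of $\cycEq$ --- which gives $\cycVal{\cyc_1} = \cycVal{\cyc_2}$, $\compar{\cyc_1} = \compar{\cyc_2}$, and $\dom(\cyc_1) = \dom(\cyc_2)$ --- together with the two clauses defining $\cycOrd$, and then to split each implication according to whether $\cycVal{\cyc'}$ agrees with $\cycVal{\cyc_1}$. The one recurring elementary fact I will use is the following dichotomy for competing cycles: if $\cyc$ and $\cyc'$ are competing, then since $\cycVal{\cyc} \neq \cycVal{\cyc'}$ and each cycle carries a single value, the value of the combined cycle equals exactly one of $\cycVal{\cyc}$ and $\cycVal{\cyc'}$, so exactly one of ``$\cyc \in \dom(\cyc')$'' and ``$\cyc' \in \dom(\cyc)$'' holds (the consistency of this across the choice of witness being exactly Lemma~\ref{lem:indWit}). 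In particular, two competing cycles cannot each dominate the other.

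For the first implication, assume $\cyc_1 \cycEq \cyc_2$ and $\cyc' \cycOrd \cyc_1$. If $\cycVal{\cyc'} \neq \cycVal{\cyc_1}$, then $\cyc' \cycOrd \cyc_1$ means $\cyc' \in \dom(\cyc_1) = \dom(\cyc_2)$, and since $\cycVal{\cyc'} \neq \cycVal{\cyc_1} = \cycVal{\cyc_2}$ this is precisely $\cyc' \cycOrd \cyc_2$. If $\cycVal{\cyc'} = \cycVal{\cyc_1}$, the extended clause provides $\cycBis$ with $\cycVal{\cycBis} \neq \cycVal{\cyc_1}$, $\cyc' \in \dom(\cycBis)$ and $\cycBis \in \dom(\cyc_1) = \dom(\cyc_2)$; as $\cycVal{\cycBis} \neq \cycVal{\cyc_2}$ and $\cycVal{\cyc'} = \cycVal{\cyc_2}$, the very same $\cycBis$ witnesses $\cyc' \cycOrd \cyc_2$.

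For the second implication, assume $\cyc_1 \cycEq \cyc_2$ and $\cyc_1 \cycOrd \cyc'$. The main case is $\cycVal{\cyc_1} \neq \cycVal{\cyc'}$: here $\cyc_1 \in \dom(\cyc')$, so $\cyc'$ and $\cyc_1$ compete and $\cyc'$ dominates $\cyc_1$; thus $\cyc' \in \compar{\cyc_1} = \compar{\cyc_2}$, so $\cyc'$ and $\cyc_2$ compete. By the dichotomy, either $\cyc_2 \in \dom(\cyc')$ or $\cyc' \in \dom(\cyc_2)$, and the second case is impossible, because $\cyc' \in \dom(\cyc_2) = \dom(\cyc_1)$ would make $\cyc_1$ dominate $\cyc'$ while $\cyc'$ already dominates $\cyc_1$. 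Hence $\cyc_2 \in \dom(\cyc')$, i.e. $\cyc_2 \cycOrd \cyc'$. When instead $\cycVal{\cyc_1} = \cycVal{\cyc'}$, the extended clause gives $\cycBis$ with $\cycVal{\cycBis} \neq \cycVal{\cyc_1}$, $\cyc_1 \in \dom(\cycBis)$ and $\cycBis \in \dom(\cyc')$; running the argument just given with $\cycBis$ in the role of $\cyc'$ (from $\cyc_1 \in \dom(\cycBis)$ and $\cyc_1 \cycEq \cyc_2$ one obtains $\cyc_2 \in \dom(\cycBis)$), the cycle $\cycBis$ now witnesses $\cyc_2 \cycOrd \cyc'$.

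I expect the only genuinely delicate point to be keeping track of the direction of domination in the second implication: one must combine the ``no mutual domination'' half of the dichotomy above with the equality $\compar{\cyc_1} = \compar{\cyc_2}$ to transport the competition relation from $\cyc_1$ to $\cyc_2$, and only then invoke $\dom(\cyc_1) = \dom(\cyc_2)$ to exclude the wrong orientation. Everything else --- the value bookkeeping and the reduction of the two equal-value cases to the corresponding differing-value arguments --- is routine.
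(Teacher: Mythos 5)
Your proof is correct and follows essentially the same route as the paper's: the first implication is read off from $\dom(\cyc_1) = \dom(\cyc_2)$ (with the equal-value case reusing the same witness), and the second transports competition via $\compar{\cyc_1} = \compar{\cyc_2}$, excludes the wrong orientation using the fact that competing cycles cannot dominate each other (well-defined by Lemma~\ref{lem:indWit}), and reduces the equal-value case to the differing-value one. The only difference is that you spell out the case analysis that the paper dismisses as trivial for the first item.
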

\begin{proof}
	The first item is trivial, as the elements smaller than $\cyc_1$ for ${\cycOrd}$ are determined by $\dom(\cyc_1)$, and $\dom(\cyc_1) = \dom(\cyc_2)$.
	For the second item, we distinguish two cases:
	\begin{itemize}
		\item if $\cycVal{\cyc_1} \neq \cycVal{\cyc'}$, then $\cyc_1 \cycOrd \cyc'$ means that $\cyc' \in \compar{\cyc_1}$ and $\cyc' \not\in \dom(\cyc_1)$.
		If $\cyc_1 \cycEq \cyc_2$, the same properties also hold for $\cyc_2$, so $\cyc_2\cycOrd \cyc'$.
		\item if $\cycVal{\cyc_1} = \cycVal{\cyc'}$, then $\cyc_1 \cycOrd \cyc'$ means that there exists $\cyc''$ with $\cycVal{\cyc_1} \neq \cycVal{\cyc''}$ such that $\cyc_1\in\dom(\cyc'')$ and $\cyc''\in\dom(\cyc')$.
		By the previous case, if $\cyc_1 \cycEq \cyc_2$, then $\cyc_2 \cycOrd \cyc''$, so $\cyc_2 \cycOrd \cyc'$ as well.
		\qedhere
	\end{itemize}
\end{proof}
Partial preorder $\cycOrd$ therefore also induces a partial preorder on $\quotient{\memCycles}{\cycEq}$.

\begin{example} \label{ex:hasse}
	We represent the relations between all elements of $\quotient{\memCycles}{\cycEq}$ for the parity automaton considered in Example~\ref{ex:competition} in their ``Hasse diagram'', depicted in Figure~\ref{fig:parityAtmtn} (right).
	Elements that are linked by a line segment are comparable for $\cycOrd$, and elements that are above are greater for $\cycOrd$.
	There are four equivalence classes of cycles, two of them winning and two of them losing.
	Notice for instance that any cycle going through transition $(\memState_1, c)$ is equivalent (for $\cycEq$) to cycle $(\memState_1, c)$: indeed, it is necessarily a losing cycle competing with and dominating all the winning cycles in this skeleton.
	Other examples are given by $(\memState_1, a)(\memState_2, a) \cycEq (\memState_1, a)(\memState_2, b)(\memState_2, a)$ and $(\memState_2, b) \cycEq (\memState_2, c)$.
\end{example}

We now prove finiteness of the index of $\cycEq$, by showing that
\begin{itemize}
	\item the \emph{height} of partial preorder ${\cycOrd}$ is finite, i.e., there is no infinite increasing nor decreasing sequence for $\cycOrd$ (Lemma~\ref{lem:finHeight});
	\item the \emph{width} of partial preorder ${\cycOrd}$ on $\quotient{\memCycles}{\cycEq}$ is finite, i.e., there is no infinite set of elements in $\quotient{\memCycles}{\cycEq}$ that are all pairwise incomparable for $\cycOrd$ (Lemma~\ref{lem:finWidth}).
\end{itemize}
We start with two technical lemmas about competition between cycles.

\begin{lemma} \label{lem:weakWitness}
	Let $\cyc_1, \cyc_2, \cyc'\in\memCycles$ be such that $\cycVal{\cyc_1} = \cycVal{\cyc_2} \neq \cycVal{\cyc'}$, $\cyc_2 \cycOrd \cyc'$, and $\cyc' \cycOrd \cyc_1$.
	Let $\wit$ be a witness that $\cyc_2$ and $\cyc'$ are competing such that $\cycStates{\wit} \cap \cycStates{\cyc_1} \cap \cycStates{\cyc_2} \neq \emptyset$.
	Then, $\wit$ also witnesses that $\cyc_1$ and $\cyc'$ are competing.
\end{lemma}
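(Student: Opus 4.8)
The plan is to check the four conditions in the definition of ``$\wit$ witnesses that $\cyc_1$ and $\cyc'$ are competing''. Three of them are immediate: $\cycStates{\cyc_1}\cap\cycStates{\wit}\neq\emptyset$ holds by hypothesis, while $\cycStates{\cyc'}\cap\cycStates{\wit}\neq\emptyset$ and $\cycVal{\cyc'\wit}=\cycVal{\cyc'}$ hold because $\wit$ witnesses that $\cyc_2$ and $\cyc'$ are competing. So everything reduces to proving $\cycVal{\cyc_1\wit}=\cycVal{\cyc_1}$. By the symmetry between $\wc$ and $\comp{\wc}$ (Lemma~\ref{lem:stableProduct}), I may assume $\cycVal{\cyc_1}=\cycVal{\cyc_2}=\win$ and $\cycVal{\cyc'}=\lose$, so the goal is $\cycVal{\cyc_1\wit}=\win$. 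I would then fix a state $\memState\in\cycStates{\wit}\cap\cycStates{\cyc_1}\cap\cycStates{\cyc_2}$ (the hypothesis) and a state $\memState'\in\cycStates{\wit}\cap\cycStates{\cyc'}$, rebase $\cyc_1,\cyc_2$ to cycles on $\memState$ and $\cyc'$ to a cycle on $\memState'$ (shift independence, Lemma~\ref{lem:indOrd}), and write $\wit=\wit_1\wit_2$ with $\wit_1\in\memPathsOn{\memState}{\memState'}$ and $\wit_2\in\memPathsOn{\memState'}{\memState}$. Two facts are then available for free: $\cycVal{\cyc_2\wit_1\wit_2}=\win$ (since $\wit$ witnesses the competition of $\cyc_2$ and $\cyc'$), and $\cycVal{\cyc_2\wit_1\cyc'\wit_2}=\lose$ --- the latter because $\cyc_2\cycOrd\cyc'$ says $\cyc'$ dominates $\cyc_2$, and by witness independence (Lemma~\ref{lem:indWit}) this domination can be read off the witness $\wit$.

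The crux is to notice that $\cyc_2\wit$ is itself a witness that $\cyc_1$ and $\cyc'$ compete: it shares $\memState$ with $\cyc_1$ and $\memState'$ with $\cyc'$; combining the two winning cycles $\cyc_1$ and $\cyc_2\wit$ on $\memState$ gives $\cycVal{\cyc_1(\cyc_2\wit)}=\win=\cycVal{\cyc_1}$ by $\memSkel$-cycle-consistency; and combining $\cyc'$ with $\cyc_2\wit$ around $\memState'$ and shifting (Lemma~\ref{lem:indOrd}) produces a cycle of the same value as $\cyc_2\wit_1\cyc'\wit_2$, so $\cycVal{\cyc'(\cyc_2\wit)}=\lose=\cycVal{\cyc'}$. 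Since $\cyc'\cycOrd\cyc_1$, i.e.\ $\cyc_1$ dominates $\cyc'$, witness independence (Lemma~\ref{lem:indWit}) applied to the witness $\cyc_2\wit$ --- decomposed around $\memState$ and $\memState'$ as $(\cyc_2\wit_1)\wit_2$ --- yields $\cycVal{\cyc_1\cyc_2\wit_1\cyc'\wit_2}=\win$; I will refer to this as $(\star)$.

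Finally I would prove $\cycVal{\cyc_1\wit}=\win$ by contradiction. Assume $\cycVal{\cyc_1\wit_1\wit_2}=\lose$ and consider the cycle $\cycBis=(\cyc_1\wit_1\wit_2)(\cyc_2\wit_1\cyc'\wit_2)$ on $\memState$. On the one hand, $\cycBis$ is a concatenation of two losing cycles, so $\cycVal{\cycBis}=\lose$ by $\memSkel$-cycle-consistency. On the other hand, treating $\cyc_1$, $\wit_1\wit_2$, $\cyc_2$, $\wit_1\cyc'\wit_2$ as four cycles on $\memState$ and reordering them (cycle-order independence, Lemma~\ref{lem:indOrdThree}, together with Lemma~\ref{lem:indOrd}), $\cycBis$ has the same value as $\cyc_1\cyc_2\wit_1\wit_2\wit_1\cyc'\wit_2$, which by the strong repetition-independence corollary (Corollary~\ref{lem:indRepStrong}, with $\cyc_1\cyc_2$ as the cycle on $\memState$, $\cyc'$ as the cycle on $\memState'$, and $n=1$) has the same value as $\cyc_1\cyc_2\wit_1\cyc'\wit_2$, namely $\win$ by $(\star)$. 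This is a contradiction, so $\cycVal{\cyc_1\wit}=\win$ and $\wit$ witnesses that $\cyc_1$ and $\cyc'$ are competing.

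The main obstacle will be the bookkeeping in the last two steps: carrying out the cycle reorderings (which implicitly rely on the fact, obtained by iterating Lemmas~\ref{lem:indOrd} and~\ref{lem:indOrdThree}, that the value of a product of cycles on a common state is invariant under permuting the factors), matching the rearranged cycle to the statement of Corollary~\ref{lem:indRepStrong} with $n=1$, and making sure that the witness-decomposition of the auxiliary cycle $\cyc_2\wit$ is precisely the one for which Lemma~\ref{lem:indWit} delivers $(\star)$. One should also double-check that combining cycles via the shorthand of Remark~\ref{rem:crossPoint} is legitimate wherever it is used here, which it is since all cycles involved pairwise share the relevant state $\memState$ or $\memState'$.
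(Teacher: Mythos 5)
Your proposal is correct and follows essentially the same route as the paper: you build the auxiliary witness $\cyc_2\wit$ (the paper uses $\wit\cyc_2$), invoke witness independence with the domination $\cyc'\cycOrd\cyc_1$ to get that $\cyc_1\cyc_2\wit_1\cyc'\wit_2$ is winning, and then use Corollary~\ref{lem:indRepStrong} to insert $\wit_1\wit_2$ and split the cycle into $\cyc_1\wit$ and the losing $\cyc_2\wit_1\cyc'\wit_2$, forcing $\cycVal{\cyc_1\wit}=\cycVal{\cyc_1}$ by $\memSkel$-cycle-consistency. The only differences are cosmetic (phrasing the last step as an explicit contradiction and a slightly different ordering of the factors), so nothing further is needed.
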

\begin{proof}
	We already know that $\cycVal{\cyc'\wit} = \cycVal{\cyc'}$ and that $\cycStates{\cyc'}\cap\cycStates{\wit}\neq \emptyset$ (as $\wit$ witnesses a competition involving $\cyc'$) and that $\cycStates{\cyc_1}\cap\cycStates{\wit}\neq \emptyset$ (by hypothesis).
	It is left to show that $\cycVal{\cyc_1\wit} = \cycVal{\cyc_1}$.
	Let $\memState\in\cycStates{\wit} \cap \cycStates{\cyc_1} \cap \cycStates{\cyc_2}$; we represent the situation in Figure~\ref{fig:weakWitness} (left), with $\wit = \wit_1\wit_2$.
	Consider first cycle $\wit_1\wit_2\cyc_2$: this cycle is a witness that $\cyc_1$ and $\cyc'$ are competing, since it has common states with those cycles, $\cycVal{\cyc_1(\wit_1\wit_2\cyc_2)} = \cycVal{\cyc_1}$ (both $\cyc_1$ and $\wit_1\wit_2\cyc_2$ have the same value), and $\cycVal{\cyc'(\wit_2\cyc_2\wit_1)} = \cycVal{\cyc'}$ (since $\cyc_2\cycOrd\cyc'$ and $\wit$ is a witness of the competition).
	Therefore, as $\cyc' \cycOrd \cyc_1$, the cycle $\cycBis = \cyc_1\wit_1\cyc'\wit_2\cyc_2$ has the same value as $\cyc_1$.
	By Lemma~\ref{lem:indRepStrong}, cycle $\cycBis$ has the same value as $\cyc_1(\wit_1\wit_2)\wit_1\cyc'\wit_2\cyc_2$, which can be split into $\wit_1\cyc'\wit_2\cyc_2$ (which has the same value as $\cyc'$ since $\cyc_2 \cycOrd \cyc'$ and $\wit$ is a witness of the competition) and $\cyc_1\wit_1\wit_2$.
	Therefore, $\cyc_1\wit_1\wit_2 = \cyc_1\wit$ cannot have the same value as $\cyc'$, otherwise $\cycBis$ would also have the same value as $\cyc'$ by $\memSkel$-cycle-consistency.%
	\begin{figure}[t]
		\centering
		\begin{minipage}{0.45\columnwidth}
			\centering
			\begin{tikzpicture}[every node/.style={font=\small,inner sep=1pt}]
				\draw (0,0) node[diamant] (m1) {$\memState$};
				\draw ($(m1)+(2.5,0)$) node[diamant] (m2) {};
				\draw (m1) edge[-latex',out=90,in=150,decorate,distance=0.8cm] node[above left=2pt] {$\cyc_1$} (m1);
				\draw (m1) edge[-latex',out=210,in=270,decorate,distance=0.8cm] node[below left=2pt] {$\cyc_2$} (m1);
				\draw (m1) edge[-latex',out=30,in=180-30,decorate] node[above=3pt] {$\wit_1$} (m2);
				\draw (m2) edge[-latex',out=180+30,in=-30,decorate] node[below=3pt] {$\wit_2$} (m1);
				\draw (m2) edge[-latex',out=-30,in=30,decorate,distance=0.8cm] node[right=3pt] {$\cyc'$} (m2);
			\end{tikzpicture}
		\end{minipage}%
		\begin{minipage}{0.55\columnwidth}
			\centering
			\begin{tikzpicture}[every node/.style={font=\small,inner sep=1pt}]
				\draw (0,0) node[diamant] (m1) {};
				\draw ($(m1)+(2.1,0)$) node[diamant] (m2) {};
				\draw ($(m2)+(2.1,0)$) node[diamant] (m3) {};
				\draw ($(m1)!0.5!(m2)$) node[] () {$\wit$};
				\draw ($(m2)!0.5!(m3)$) node[] () {$\cyc_1'$};
				\draw (m1) edge[-latex',decorate,out=150,in=210,distance=0.8cm] node[left=3pt] {$\cyc$} (m1);
				\draw (m1) edge[-latex',decorate,out=30,in=180-30] (m2);
				\draw (m2) edge[-latex',decorate,out=-180+30,in=-30] (m1);
				\draw (m2) edge[-latex',decorate,out=30,in=180-30] (m3);
				\draw (m3) edge[-latex',decorate,out=-180+30,in=-30] node[below=3pt] {} (m2);
				\draw (m3) edge[-latex',decorate,out=-30,in=30,distance=0.8cm] node[right=3pt] {$\cyc_2'$} (m3);
			\end{tikzpicture}
		\end{minipage}
		\caption{Situation in the proof of Lemma~\ref{lem:weakWitness} (left) and in the proof of Lemma~\ref{lem:transWeak} (right).}
		\label{fig:weakWitness}
	\end{figure}
\end{proof}

\begin{lemma} \label{lem:transWeak}
	Let $\cyc, \cyc'_1\in\memCycles$ be such that $\cycVal{\cyc} \neq \cycVal{\cyc'_1}$ and $\cyc_1' \cycOrd \cyc$.
	Let $\cyc_2'$ be a cycle such that $\cycVal{\cyc_2'} = \cycVal{\cyc_1'}$ and $\cycStates{\cyc_2'} \cap \cycStates{\cyc_1'} \neq \emptyset$.
	Then, $\cyc$ and $\cyc_2'$ are competing.
\end{lemma}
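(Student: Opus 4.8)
The plan is to build, from the witness of the competition between $\cyc$ and $\cyc_1'$, an explicit witness of the competition between $\cyc$ and $\cyc_2'$, by routing it through a common state of $\cyc_1'$ and $\cyc_2'$.

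Since $\cyc_1' \cycOrd \cyc$ and $\cycVal{\cyc_1'} \neq \cycVal{\cyc}$, cycle $\cyc$ dominates $\cyc_1'$; by Lemma~\ref{lem:indWit} this does not depend on the chosen witness, so fix a cycle $\wit$ witnessing that $\cyc$ and $\cyc_1'$ are competing, with $\memState_1 \in \cycStates{\cyc} \cap \cycStates{\wit}$, $\memState_1' \in \cycStates{\cyc_1'} \cap \cycStates{\wit}$, and $\wit = \wit_1\wit_2$ for $\wit_1 \in \memPathsOn{\memState_1}{\memState_1'}$ and $\wit_2 \in \memPathsOn{\memState_1'}{\memState_1}$. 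The witness property gives $\cycVal{\cyc_1'\wit} = \cycVal{\cyc_1'}$, and the domination $\cyc_1' \cycOrd \cyc$ gives $\cycVal{\cyc\wit_1\cyc_1'\wit_2} = \cycVal{\cyc}$; here and below I freely use Lemma~\ref{lem:indOrd} and the convention of Remark~\ref{rem:crossPoint} to root cycles wherever convenient. Pick $\memState'' \in \cycStates{\cyc_1'} \cap \cycStates{\cyc_2'}$ (non-empty by assumption), and split $\cyc_1' = \cyc_{1,1}'\cyc_{1,2}'$ with $\cyc_{1,1}' \in \memPathsOn{\memState_1'}{\memState''}$ and $\cyc_{1,2}' \in \memPathsOn{\memState''}{\memState_1'}$ (one of the two paths being empty if $\memState_1' = \memState''$).

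Consider the cycle $\witBis := \wit_1\,\cyc_1'\,\wit_2 = \wit_1\,\cyc_{1,1}'\cyc_{1,2}'\,\wit_2$ on $\memState_1$. It visits $\memState_1 \in \cycStates{\cyc}$ and $\memState'' \in \cycStates{\cyc_1'} \subseteq \cycStates{\witBis}$ with $\memState'' \in \cycStates{\cyc_2'}$, so $\cycStates{\witBis}$ meets both $\cycStates{\cyc}$ and $\cycStates{\cyc_2'}$. Rooted at $\memState_1$, the cycle $\cyc\witBis$ is $\cyc\wit_1\cyc_1'\wit_2$, so $\cycVal{\cyc\witBis} = \cycVal{\cyc}$ by domination. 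Rooted at $\memState''$, the cycle $\witBis$ becomes $\cyc_{1,2}'\wit_2\wit_1\cyc_{1,1}'$, whose value equals $\cycVal{\cyc_1'\wit_2\wit_1} = \cycVal{\cyc_1'\wit} = \cycVal{\cyc_1'} = \cycVal{\cyc_2'}$ by Lemma~\ref{lem:indOrd} and the witness property. Hence, rooted at $\memState''$, the cycle $\cyc_2'\witBis$ is a concatenation of two cycles on $\memState''$ that have the same value $\cycVal{\cyc_2'}$, so by $\memSkel$-cycle-consistency $\cycVal{\cyc_2'\witBis} = \cycVal{\cyc_2'}$. Therefore $\witBis$ witnesses that $\cyc$ and $\cyc_2'$ are competing, as wanted.

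The only genuine idea is the choice of $\witBis$: one cannot reuse $\wit$ directly, since it need not meet $\cyc_2'$, and an arbitrary connecting detour could perturb the value of $\cyc\witBis$; inserting $\cyc_1'$ is exactly what keeps $\cycVal{\cyc\witBis} = \cycVal{\cyc}$ (this is literally the domination hypothesis) while, read from $\memState''$, making $\witBis$ look like $\cyc_1'$, hence like $\cyc_2'$. I expect the only delicate point to be the bookkeeping of crossing points — checking that each concatenation is a legitimate path of $\memSkel$ and keeping track of which re-rooting is used at each step — which is routine given Lemma~\ref{lem:indOrd} and the conventions already in place.
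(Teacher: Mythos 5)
Your proof is correct and follows essentially the same route as the paper: the paper also takes the combined cycle $\wit\cyc_1'$ as the new witness, checks $\cycVal{\cyc(\wit\cyc_1')} = \cycVal{\cyc}$ via the domination $\cyc_1' \cycOrd \cyc$, and uses $\cycVal{\wit\cyc_1'} = \cycVal{\cyc_1'} = \cycVal{\cyc_2'}$ together with $\memSkel$-cycle-consistency to get $\cycVal{\cyc_2'(\wit\cyc_1')} = \cycVal{\cyc_2'}$. Your version merely makes the crossing-point and re-rooting bookkeeping (via Lemma~\ref{lem:indOrd} and Remark~\ref{rem:crossPoint}) explicit.
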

\begin{proof}
	Let $\wit$ be a witness that $\cyc$ and $\cyc'_1$ are competing; we represent the situation in Figure~\ref{fig:weakWitness} (right).
	We show that $\wit\cyc_1'$ is a witness that $\cyc$ and $\cyc_2'$ are competing.
	As $\cycStates{\wit} \cap \cycStates{\cyc} \neq \emptyset$, we have $\cycStates{\wit\cyc_1'} \cap \cycStates{\cyc} \neq \emptyset$.
	Similarly, as $\cycStates{\cyc_1'} \cap \cycStates{\cyc_2'} \neq \emptyset$, we have $\cycStates{\wit\cyc_1'} \cap \cycStates{\cyc_2'} \neq \emptyset$.
	As $\cyc_1' \cycOrd \cyc$ with witness $\wit$, we have that $\cycVal{\cyc(\wit\cyc_1')} = \cycVal{\cyc}$.
	Moreover, since $\wit$ is a witness for $\cyc_1'$ (and~$\cyc'$), $\cycVal{\wit\cyc_1'} = \cycVal{\cyc_1'}$.
	Therefore $\cycVal{\wit\cyc_1'} = \cycVal{\cyc_2'}$, which implies by $\memSkel$-cycle-consistency that $\cycVal{\cyc_2'(\wit\cyc_1')} = \cycVal{\cyc_2'}$.
\end{proof}

\begin{lemma} \label{lem:finHeight}
	The height of partial preorder ${\cycOrd}$ is finite.
\end{lemma}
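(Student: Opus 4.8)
The plan is to argue by contradiction: from an infinite $\cycOrd$-chain I would build a \emph{single} infinite word that lies in both $\inverse{\memState}\wc$ and $\inverse{\memState}\comp{\wc}$ for some skeleton state $\memState$, contradicting their disjointness. I would treat the case of an infinite decreasing chain $\cyc_0 \cycOrdInv \cyc_1 \cycOrdInv \cdots$; an infinite increasing chain is handled by the analogous argument with the roles of $\win$ and $\lose$ exchanged.

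First I would normalize the chain. Since there are only two possible cycle values and $\cycOrd$ is transitive (Lemma~\ref{lem:strictPreorder}), and since $\cyc_{i+1} \cycOrd \cyc_i$ with $\cycVal{\cyc_{i+1}} = \cycVal{\cyc_i}$ forces by definition the existence of an intermediate cycle of the opposite value, I can pass to an infinite decreasing chain $d_0 \cycOrdInv d_1 \cycOrdInv \cdots$ whose consecutive cycles have opposite values; say $d_i$ is winning iff $i$ is even. For such a pair, $d_{i+1} \cycOrd d_i$ means exactly that $d_i$ dominates $d_{i+1}$, witnessed by some cycle $\wit_i$ meeting both $d_i$ and $d_{i+1}$.

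Next I would bring cycles of both values onto one skeleton state. Let $\delta_i$ be the cycle obtained by combining $d_i$ and $d_{i+1}$ around $\wit_i$ (legitimate up to basepoint by Lemma~\ref{lem:indCycle} and Remark~\ref{rem:crossPoint}): domination gives $\cycVal{\delta_i} = \cycVal{d_i}$, and $\delta_i$ traverses all of $d_i$ and all of $d_{i+1}$, so $\cycStates{d_{i+1}} \subseteq \cycStates{\delta_i} \cap \cycStates{\delta_{i+1}}$. As $\memStates$ is finite and each $d_j$ is nonempty, some state $\memState$ lies on $d_j$ for infinitely many $j$; for each such $j$, both $\delta_{j-1}$ and $\delta_j$ are cycles through $\memState$, and $\cycVal{\delta_{j-1}} = \cycVal{d_{j-1}} \neq \cycVal{d_j} = \cycVal{\delta_j}$. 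Hence the chain induces infinitely many winning cycles on $\memState$ and infinitely many losing cycles on $\memState$. Using the shift/repetition/reordering lemmas (Lemma~\ref{lem:indOrd}, Lemma~\ref{lem:indRep}, Corollary~\ref{lem:indRepStrong}, Corollary~\ref{lem:indOrdThree}) together with the witness-transfer Lemmas~\ref{lem:weakWitness} and~\ref{lem:transWeak}, I would extract from these a sequence $R_0, R_1, R_2, \ldots$ of cycles \emph{on $\memState$} with $\cycVal{R_i} = \win$ iff $i$ is even and such that, for each $i$, $R_i$ dominates $R_{i+1}$ on their common state $\memState$ --- which for cycles sharing a state means precisely $\cycVal{R_iR_{i+1}} = \cycVal{R_i}$ (by Lemma~\ref{lem:indRep} and Lemma~\ref{lem:indWit}). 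Thus $R_{2k}R_{2k+1} \in \winCyc{\memState}$ and $R_{2k+1}R_{2k+2} \in \loseCyc{\memState}$ for all $k \geq 0$.

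Finally I would mimic the proof of Lemma~\ref{lem:uniformDominating}: fix $\word_0 \in \cc\memPathsOn{\memInit}{\memState}$ and let $W := \word_0 R_0 R_1 R_2 \cdots$. Parsing the tail of $W$ as $(R_0R_1)(R_2R_3)\cdots$ presents it as an infinite concatenation of winning cycles on $\memState$, so $W \in \wc$ by $\memSkel$-cycle-consistency and $\memSkel$-prefix-independence; parsing the same word instead as $R_0(R_1R_2)(R_3R_4)\cdots$, where $R_0 \in \memCyclesOn{\memState}$ (so $\word_0 R_0$ still reaches $\memState$) and the remainder is an infinite concatenation of losing cycles on $\memState$, gives $W \in \comp{\wc}$. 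This contradiction rules out infinite decreasing chains, and the symmetric argument rules out infinite increasing ones, so the height of $\cycOrd$ is finite. The hard part will be the third step: an arbitrary $\cycOrd$-chain need not consist of pairwise state-sharing cycles (only of competing ones, linked through witnesses), so pinning down a single state $\memState$ that carries an alternating, each-dominates-the-next sequence of cycles is exactly where the lemmas about repeating and reordering non-consecutive sub-cycles, and about transferring witnesses through a common state, must be combined with care.
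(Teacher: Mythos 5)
Your overall strategy matches the paper's: normalize the chain so that values alternate, exploit finiteness of $\memStates$ to find a recurring state, re-base everything as cycles on a single state $\memState$ with alternating values and consecutive domination, and then derive a contradiction by parsing one infinite word in two ways against $\memSkel$-cycle-consistency and $\memSkel$-prefix-independence (your final step is essentially identical to the paper's construction of the word $\xi$). However, the proposal has a genuine gap exactly where you flag ``the hard part'': the existence of the sequence $R_0, R_1, R_2, \ldots$ of cycles on $\memState$ with alternating values and $\cycVal{R_iR_{i+1}} = \cycVal{R_i}$ is asserted, not proven, and it does not follow directly from the chain $d_{i+1} \cycOrd d_i$ together with a list of lemmas. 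Concretely, if $\delta_i$ is the combination of $d_i$ and $d_{i+1}$ around the witness $\wit_i$, then knowing $d_{i+1}\cycOrd d_i$ tells you $\cycVal{\delta_i}=\cycVal{d_i}$, but it does not tell you that $\delta_{j-1}$ dominates $\delta_j$ (i.e.\ that $\cycVal{\delta_{j-1}\delta_j}=\cycVal{\delta_{j-1}}$): that combined cycle involves four of the original cycles plus two witnesses, and establishing its value requires exactly the kind of explicit cycle surgery (splittings justified by $\memSkel$-cycle-consistency, Lemma~\ref{lem:indOrd}, Corollary~\ref{lem:indRepStrong}, Lemma~\ref{lem:weakWitness}) that constitutes the bulk of the paper's proof. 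Moreover, the recurring state $\memState$ you pick may lie only on the $d_j$ of one value, and the ``good'' indices need not be consecutive, so you would additionally have to transfer domination across gaps in the chain; none of this is spelled out.

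For comparison, the paper organizes this step differently so that the needed verifications become tractable: it fixes \emph{two} recurring states, $\memState$ on infinitely many winning cycles and $\memState'$ on infinitely many losing ones, uses transitivity (Lemma~\ref{lem:strictPreorder}) to re-index the alternating chain so that every winning cycle contains $\memState$ and every losing cycle contains $\memState'$, and then, for each adjacent pair, explicitly builds a new witness passing through \emph{both} $\memState$ and $\memState'$ (first a cycle $\witBisBar_i$ assembled from the old witnesses and the neighbouring cycles, then Lemma~\ref{lem:weakWitness} to transfer it), checking by concrete splittings that it really witnesses the competition with the correct values. Only after that can the combined cycles be read as cycles on the single state $\memState$ and fed into the double-parsing argument. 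Your outline gestures at this machinery but leaves the central construction and its value computations undone, so as it stands the proof is incomplete at its critical step.
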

\begin{proof}
	By contradiction, let $\cyc_0 \cycOrdInv \cyc_0' \cycOrdInv \cyc_1 \cycOrdInv \cyc_1' \cycOrdInv \cyc_2 \cycOrdInv \ldots{}$ be an infinitely decreasing sequence for $\cycOrd$.
	We assume w.l.o.g.\ that for all $i \ge 0$, $\cycVal{\cyc_i} = \win$ and $\cycVal{\cyc_i'} = \lose$ (if two consecutive cycles are, for example, both winning, we can always insert an intermediate losing cycle between them, by definition).

	For $i \ge 0$, let $\memState$ (resp.\ $\memState'$) be a state of $\memSkel$ that is part of infinitely many sets $\cycStates{\cyc_i}$ (resp.\ $\cycStates{\cyc_i'}$) --- such states necessarily exist as the state space of $\memSkel$ is finite.
	Thanks to transitivity of $\cycOrd$ (Lemma~\ref{lem:strictPreorder}), by keeping only winning cycles $\cyc_i$ such that $\memState\in\cycStates{\cyc_i}$ alternating with losing cycles $\cyc_i'$ such that $\memState'\in\cycStates{\cyc_i'}$, we keep an infinitely decreasing sequence for $\cycOrd$.
	We can therefore assume w.l.o.g., up to renaming cycles, that for all $i\in\IN$, $\memState\in\cycStates{\cyc_i}$ and $\memState'\in\cycStates{\cyc_i'}$.

	\begin{figure}[t]
		\centering
		\begin{tikzpicture}[every node/.style={font=\small,inner sep=1pt}]
			\draw (0,0) node[diamant] (m1) {$\memState_1$};
			\draw ($(m1)-(0,2.8)$) node[diamant] (m2) {$\memState_2$};
			\draw ($(m1)!0.5!(m2)-(2.3,0)$) node[diamant] (m) {$\memState$};
			\draw ($(m1)+(3.5,0)$) node[diamant] (m1') {$\memState_1'$};
			\draw ($(m1)+(3.5,-2.8)$) node[diamant] (m2') {$\memState_2'$};
			\draw (m1) edge[-latex',decorate,out=30,in=180-30] node[above=3pt] {$\wit_{i, 1}$} (m1');
			\draw (m1') edge[-latex',decorate,out=-180+30,in=-30] node[below=3pt] {$\wit_{i, 2}$} (m1);
			\draw (m1') edge[-latex',decorate,out=-60,in=60] node[right=3pt] {$\cyc_{i,1}'$} (m2');
			\draw (m2') edge[-latex',decorate,out=120,in=-120] node[left=3pt] {$\cyc_{i,2}'$} (m1');
			\draw (m2') edge[-latex',decorate,out=180+30,in=-30] node[below=3pt] {$\wit_{i, 1}'$} (m2);
			\draw (m2) edge[-latex',decorate,out=30,in=180-30] node[above=3pt] {$\wit_{i, 2}'$} (m2');
			\draw (m) edge[-latex',decorate,out=75,in=180+15] (m1);
			\draw (m1) edge[-latex',decorate,out=-105,in=15] (m);
			\draw ($(m)!0.5!(m1)$) node[] () {$\cyc_i$};
			\draw (m) edge[-latex',decorate,out=-15,in=105] (m2);
			\draw (m2) edge[-latex',decorate,out=165,in=-75] (m);
			\draw ($(m)!0.5!(m2)$) node[] () {$\cyc_{i+1}$};
			\draw ($(m1)!0.5!(m1')$) node[] () {$\wit_i$};
			\draw ($(m1')!0.5!(m2')$) node[] () {$\cyc_i'$};
			\draw ($(m2')!0.5!(m2)$) node[] () {$\wit_i'$};
		\end{tikzpicture}
		\caption{Situation in the proof of Lemma~\ref{lem:finHeight}.
				 Competition of $\cyc_i$ and $\cyc_i'$ is witnessed by $\wit_i$, and competition of $\cyc_i'$ and $\cyc_{i+1}$ is witnessed by $\wit_i'$.
				 State $\memState'$ appears somewhere along $\cyc_i'$ and is not represented.}
		\label{fig:finHeight}
	\end{figure}%
	We show that the competition of each contiguous pair in sequence $\cyc_0, \cyc_0', \cyc_1, \cyc_1', \cyc_2, \ldots{}$ has a witness that intersects the winning cycle at $\memState$, and the losing cycle at $\memState'$.
	For all $i\ge 0$, let $\wit_i = \wit_{i,1}\wit_{i,2}$ be a witness that $\cyc_i$ and~$\cyc_i'$ are competing, and $\wit_i' = \wit_{i,1}'\wit_{i,2}'$ be a witness that $\cyc_i'$ and~$\cyc_{i+1}$ are competing.
	Let $i\ge 0$; we depict part of the situation in Figure~\ref{fig:finHeight}, with $\cyc_i' = \cyc_{i,1}'\cyc_{i,2}'$.
	Based on the cycles that we already know, we consider the cycle $\witBisBar_i = \wit_{i,1}\cyc_{i,1}'\wit_{i,1}'\cyc_{i+1}\wit_{i,2}'\cyc_{i,2}'\wit_{i,2}$.
	We have that $\memState, \memState'\in\cycStates{\witBisBar_i}$ since $\cyc_{i+1}$ and $\cyc_{i}'$ are part of $\witBisBar$.
	We show that $\witBisBar_i$ witnesses that $\cyc_i$ and~$\cyc_i'$ are competing:
	\begin{itemize}
		\item $\cycVal{\cyc_i\witBisBar_i} = \win$ since $\cyc_i\witBisBar_i$ can be split into $\cyc_{i,2}'\wit_{i,2}\cyc_i\wit_{i,1}\cyc_{i,1}'$ (winning since $\cyc_i \cycOrdInv \cyc_i'$) and $\wit_{i,1}'\cyc_{i+1}\wit_{i,2}'$ (winning since $\wit_i'$ witnesses a competition involving $\cyc_{i+1}$);
		\item $\cycVal{\cyc_i'\witBisBar_i} = \lose$ since $\cyc_i'\witBisBar_i$ can be split into $\cyc_i'\wit_{i,2}\wit_{i,1}$ (losing since $\wit_i$ witnesses a competition involving $\cyc_i'$) and $\cyc_{i,1}'\wit_{i,1}'\cyc_{i+1}\wit_{i,2}'\cyc_{i,2}'$ (losing since $\cyc_i' \cycOrdInv \cyc_{i+1}$).
		We use Remark~\ref{rem:crossPoint} in order to write ``$\cyc_i'\witBisBar_i$''.
	\end{itemize}

	We can perform a symmetric reasoning to show that the competition of any pair $\cyc_i', \cyc_{i+1}$, $i \ge 0$, is witnessed by a cycle $\witBisBar_i'\in\memCycles$ such that $\memState', \memState\in\cycStates{\witBisBar_i'}$.

	By Lemma~\ref{lem:weakWitness}, $\witBisBar'_i$ is not only a witness that~$\cyc_i'$ and~$\cyc_{i+1}$ are competing, but also that $\cyc_i$ and~$\cyc_i'$ are competing (indeed, $\cycVal{\cyc_i} = \cycVal{\cyc_{i+1}} \neq \cycVal{\cyc_i'}$, $\cyc_{i+1}\cycOrd\cyc_i'$, $\cyc_i' \cycOrd \cyc_i$, $\witBisBar_i'$ witnesses that~$\cyc_{i+1}$ and~$\cyc_i'$ are competing, and $\memState\in\cycStates{\witBisBar_i'}\cap \cycStates{\cyc_i} \cap \cycStates{\cyc_{i+1}}$).

	For $i \ge 0$, we can write $\witBisBar'_i = \witBisBar'_{i,1}\witBisBar'_{i,2}$ with $\witBisBar'_{i,1}\in\memPathsOn{\memState'}{\memState}$ and $\witBisBar'_{i,2}\in\memPathsOn{\memState}{\memState'}$.
	We now consider the infinite sequence
	\[
	\xi = \cyc_0\witBisBar'_{0,2}\cyc_0'\witBisBar'_{0,1}
	\cyc_1\witBisBar'_{1,2}\cyc_1'\witBisBar'_{1,1}
	\cyc_2\ldots
	\]
	Notice that for all $i\ge 0$, $\cyc_i\witBisBar'_{i,2}\cyc_i'\witBisBar'_{i,1}$ is a winning cycle on $\memState$ since $\cyc_i \cycOrdInv \cyc_i'$; hence $\colHatInf(\xi)\in\inverse{\memState}\wc$ by $\memSkel$-cycle-consistency.
	Also, for all $i\ge 0$, $\witBisBar'_{i,2}\cyc_i'\witBisBar'_{i,1}
	\cyc_{i+1}$ is a losing cycle on $\memState$ since $\cyc_i' \cycOrdInv \cyc_{i+1}$; hence $\colHatInf(\xi)\in\inverse{\memState}\comp{\wc}$ by $\memSkel$-prefix-independence and $\memSkel$-cycle-consistency.
	This is a contradiction.

	A proof for infinitely increasing sequences can be done in a symmetric way.
\end{proof}

\begin{lemma} \label{lem:finWidth}
	The width of partial preorder $\cycOrd$ on $\quotient{\memCycles}{\cycEq}$ is finite.
\end{lemma}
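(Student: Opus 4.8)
The plan is to argue by contradiction. Suppose $\quotient{\memCycles}{\cycEq}$ has an infinite antichain for $\cycOrd$, and choose cycles $(\cyc_i)_{i\in\IN}$ representing pairwise distinct, pairwise $\cycOrd$-incomparable classes. Since $\memStates$ is finite, there are only finitely many possible values of $\cycVal{\cyc_i}$ and of the support $\cycStates{\cyc_i}\subseteq\memStates$; by pigeonhole I pass to an infinite subfamily in which all the $\cyc_i$ have the same value --- say $\win$, the case $\lose$ being symmetric --- and the same support $T\subseteq\memStates$. It then suffices to produce two indices $i\neq j$ with $\cyc_i \cycEq \cyc_j$, since this contradicts that their classes are distinct.

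The heart of the argument is an ``interchangeability'' property: for every $\memState\in T$ and every cycle $\cyc^\circ\in\memCyclesOn{\memState}$, the value $\cycVal{\cyc_i\cyc^\circ}$ is independent of $i$ (here I view $\cyc_i$ as a cycle on $\memState$ via shift-independence, Lemma~\ref{lem:indOrd}, and use the crossing-point convention of Remark~\ref{rem:crossPoint}). To prove this, I first observe that the set $D_\memState(\cyc_i):=\{\cyc'\in\loseCyc{\memState}\mid \cyc_i\cyc'\in\winCyc{\memState}\}$ is the same for all $i$ in the subfamily: if some $\cyc'$ belonged to $D_\memState(\cyc_j)\setminus D_\memState(\cyc_i)$, then, by the characterization of domination between cycles sharing a state (which uses Lemma~\ref{lem:indRep}), $\cyc'$ would dominate $\cyc_i$ and $\cyc_j$ would dominate $\cyc'$, whence $\cyc_i\cycOrd\cyc_j$ by definition of $\cycOrd$ on equal-value cycles --- contradicting incomparability. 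Interchangeability then follows by cases: if $\cyc^\circ$ is winning, $\cyc_i\cyc^\circ$ is a combination of two winning cycles on $\memState$, hence winning by $\memSkel$-cycle-consistency, regardless of $i$; if $\cyc^\circ$ is losing, $\cyc_i\cyc^\circ$ is winning exactly when $\cyc^\circ\in D_\memState(\cyc_i)$, a condition independent of $i$.

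From interchangeability I deduce $\compar{\cyc_i}=\compar{\cyc_j}$ and $\dom(\cyc_i)=\dom(\cyc_j)$ for all $i,j$, which gives $\cyc_i\cycEq\cyc_j$. For $\compar$: if a cycle $\wit$ witnesses that $\cyc_i$ and a losing cycle $\cyc'$ are competing, then $\wit$ meets $\cyc_i$ at some state $\memState\in\cycStates{\cyc_i}=T=\cycStates{\cyc_j}$, so $\wit$ also meets $\cyc_j$; since $\wit$ is a cycle on $\memState$, interchangeability gives $\cycVal{\cyc_j\wit}=\cycVal{\cyc_i\wit}=\cycVal{\cyc_i}=\cycVal{\cyc_j}$, while the condition $\cycVal{\cyc'\wit}=\cycVal{\cyc'}$ is untouched, so $\wit$ also witnesses that $\cyc_j$ and $\cyc'$ compete; hence $\compar{\cyc_i}\subseteq\compar{\cyc_j}$, and symmetrically equality. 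For $\dom$: if $\cyc_i$ dominates $\cyc'$, then $\cyc'\in\compar{\cyc_i}=\compar{\cyc_j}$, and by witness-independence (Lemma~\ref{lem:indWit}) I may test domination of $\cyc'$ by $\cyc_j$ with the same witness $\wit$; writing $\wit=\wit_1\wit_2$ through $\memState\in T$ and a state of $\cyc'$, the combination $\cyc_j\wit_1\cyc'\wit_2$ is $\cyc_j$ composed with the $\memState$-cycle $\wit_1\cyc'\wit_2$, so interchangeability makes its value equal to that of $\cyc_i\wit_1\cyc'\wit_2$, which is $\cycVal{\cyc_i}=\cycVal{\cyc_j}$ because $\cyc_i$ dominates $\cyc'$; thus $\cyc_j$ dominates $\cyc'$, and $\dom(\cyc_i)=\dom(\cyc_j)$ follows by symmetry.

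The main obstacle is the interchangeability step, and in particular the realization that incomparability of an infinite antichain forces the domination sets $D_\memState(\cdot)$ to stabilize on \emph{every} state common to the family: this is why the extra pigeonhole on the support $T$ is indispensable --- it guarantees that a witness touching $\cyc_i$ automatically touches $\cyc_j$, so that the very same witness can be reused in the $\compar$ and $\dom$ transfers. Once interchangeability holds, those transfers are routine and use only Lemmas~\ref{lem:indOrd},~\ref{lem:indRep},~\ref{lem:indWit}, Remark~\ref{rem:crossPoint}, and $\memSkel$-cycle-consistency. Together with Lemma~\ref{lem:finHeight}, finiteness of both height and width shows $\cycEq$ has finite index (a partial preorder of height $h$ and width $w$ has at most $h\cdot w$ classes), which is what the construction of the priority function $\pri$ will rely on.
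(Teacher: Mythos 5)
Your proof is correct, and it rests on the same core idea as the paper's: the support $\cycStates{\cyc}\subseteq\memStates$ of a cycle ranges over a finite set, and two cycles with the same support interact identically with other cycles (witnesses and combinations at shared states can be transferred from one to the other), so an antichain in $\quotient{\memCycles}{\cycEq}$ cannot be larger than the number of supports. The paper phrases this directly: it shows that any two cycles $\cyc_1,\cyc_2$ with $\cycStates{\cyc_1}=\cycStates{\cyc_2}$ are comparable for $\cycOrd$ or equivalent for $\cycEq$, giving the explicit bound $2^{\card{\memStates}}$ on the width, with a case analysis that invokes Lemma~\ref{lem:transWeak} in one branch and, in the other, argues that a witness failing to transfer must itself be a losing cycle sitting strictly between $\cyc_2$ and $\cyc_1$. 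You instead argue contrapositively from an infinite antichain, pigeonhole additionally on the value, and replace that case analysis by your ``interchangeability'' step: incomparability forces the sets $\{\cyc'\in\loseCyc{\memState}\mid\cyc_i\cyc'\in\winCyc{\memState}\}$ to coincide across the family (via the characterization of domination for cycles sharing a state, Lemma~\ref{lem:indRep}, and the definition of $\cycOrd$ on equal-value cycles), after which the transfers of $\compar{\cdot}$ and $\dom(\cdot)$ via Lemmas~\ref{lem:indOrd} and~\ref{lem:indWit} are routine. The two arguments are logically equivalent reorganizations of the same dichotomy; yours avoids Lemma~\ref{lem:transWeak} entirely and is self-contained on this point, while the paper's version yields the quantitative bound $2^{\card{\memStates}}$ rather than bare finiteness (though your pigeonhole implicitly gives $2\cdot 2^{\card{\memStates}}$).
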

\begin{proof}
	We recall that $\memSkel = \memSkelFull$.
	We will show that any two cycles $\cyc_1$ and $\cyc_2$ such that $\cycStates{\cyc_1} = \cycStates{\cyc_2}$ are necessarily comparable for $\cycEq$ or $\cycOrd$.
	This will show that the cardinality of a maximal set of pairwise incomparable (for ${\cycOrd}$) elements in $\quotient{\memCycles}{\cycEq}$ is necessarily bounded by $2^{\card{\memStates}}$, which implies that the width of partial preorder $\cycOrd$ is finite.
	Let $\cyc_1$ and $\cyc_2$ be two cycles such that $\cycStates{\cyc_1} = \cycStates{\cyc_2}$ (we recall that there are infinitely many transitions in $\memSkel$ if $\colors$ is infinite, and that two cycles going through the same states may use different transitions and have a different value).

	If $\cycVal{\cyc_1} \neq \cycVal{\cyc_2}$, then as $\cyc_1$ and $\cyc_2$ share a common state, they are competing --- we have either $\cyc_1 \cycOrd \cyc_2$ or $\cyc_2 \cycOrd \cyc_1$ (depending on the value of $\cyc_1\cyc_2$).

	We now assume that $\cycVal{\cyc_1} = \cycVal{\cyc_2}$; we assume w.l.o.g.\ that $\cyc_1$ and $\cyc_2$ are winning.
	If $\cyc_1$ and $\cyc_2$ are such that $\compar{\cyc_1} = \compar{\cyc_2}$, then they can necessarily be compared with $\cycEq$ or $\cycOrd$; indeed,
	\begin{itemize}
		\item if $\dom(\cyc_1) = \dom(\cyc_2)$, then $\cyc_1 \cycEq \cyc_2$;
		\item if $\dom(\cyc_1) \neq \dom(\cyc_2)$, then there is $i\in\{1, 2\}$ and a losing cycle $\cyc'$ in $\dom(\cyc_i)$ that is competing with $\cyc_{3 - i}$ but that is not an element of $\dom(\cyc_{3-i})$.
		Therefore, $\cyc_{3-i} \cycOrd \cyc' \cycOrd \cyc_i$, which means that $\cyc_{3-i} \cycOrd \cyc_i$.
	\end{itemize}
	It is left to deal with the case $\compar{\cyc_1} \neq \compar{\cyc_2}$.
	W.l.o.g., let $\cyc'$ be in $\compar{\cyc_1} \setminus \compar{\cyc_2}$.
	There are two cases to discuss: whether $\cyc_1 \cycOrd \cyc'$ or $\cyc' \cycOrd \cyc_1$.
	\begin{itemize}
		\item Assume $\cyc_1 \cycOrd \cyc'$.
		By Lemma~\ref{lem:transWeak}, as $\cycVal{\cyc_1} = \cycVal{\cyc_2}$ and $\cycStates{\cyc_1} \cap \cycStates{\cyc_2} \neq \emptyset$, $\cyc'$ is also competing with $\cyc_2$, which is a contradiction.
		\item Assume $\cyc' \cycOrd \cyc_1$.
		Let $\wit$ be a witness that $\cyc_1$ and $\cyc'$ are competing.
		We therefore have that $\cyc_1\wit$ is winning and $\cyc'\wit$ is losing.
		As $\cyc_2$ is not competing with $\cyc'$, $\wit$ cannot be a witness that~$\cyc_2$ and~$\cyc'$ are competing.
		Since $\cycStates{\cyc_1} = \cycStates{\cyc_2}$ has a non-empty intersection with $\cycStates{\wit}$, the only possibility for that to happen is that $\cyc_2\wit$ is losing (all other conditions are satisfied).
		This means that $\wit$ must itself be a losing cycle.
		But then, observe that $\wit$ is competing both with $\cyc_1$ and $\cyc_2$ (as $\wit$ has a common state with and a different value than $\cyc_1$ and $\cyc_2$) and $\cyc_2 \cycOrd \wit \cycOrd \cyc_1$ (as $\cyc_2\wit$ is losing and $\cyc_1\wit$ is winning).
		This implies that~$\cyc_2 \cycOrd \cyc_1$.
	\end{itemize}
\end{proof}

Lemmas~\ref{lem:finHeight} and~\ref{lem:finWidth} imply together that $\cycEq$ has finite index, and thus that ${\cycOrd}$ (partially) orders only finitely many classes of cycles in $\quotient{\memCycles}{\cycEq}$.
Therefore, for some $n\in\IN$, there exists a function $\priCyc\colon \quotient{\memCycles}{\cycEq} \to \{0, \ldots, n\}$ that is a monotonic function (assuming $\quotient{\memCycles}{\cycEq}$ is preordered with $\cycOrd$ and $\{0,\ldots,n\}$ is ordered with the usual order on $\IN$); such a function is sometimes called a \emph{linear extension of the partial order}.
We extend it to a function $\priCyc\colon \memCycles \to \{0, \ldots, n\}$ such that $\priCyc(\cyc) = \priCyc(\eqClassCyc{\cyc})$.
Moreover, we assume w.l.o.g.\ that $\cycVal{\cyc} = \win$ if and only if $\priCyc(\cyc)$ is even (this might require to increase $n$, but it is always possible).

We fix $n$ and any such function $\priCyc$ for the rest of the proof.

\subparagraph*{Parity automaton on top of \texorpdfstring{$\memSkel$}{M}.}
At this point, it would already be possible to describe words of $\wc$ in terms of the cycles of $\memSkel$ that they visit through function $\memWordSolo$ (there may be multiple such decompositions) and their values by $\priCyc$, but that does not directly correspond to a classical acceptance condition for automata on infinite words.
We can actually obtain something more satisfying: we show that we can assign priorities to \emph{transitions} of $\memSkel$ to recognize $\wc$, in a way that corresponds to a parity acceptance condition on transitions.
We transfer function $\priCyc$ to transitions of $\memSkel$: for $(\memState, \clr) \in \memStates\times\colors$, we define
\begin{align} \label{eq:priDef}
\pri(\memState, \clr) = \min \{\priCyc(\cyc) \mid \cyc\in\memCycles, (\memState, \clr) \in \cyc\}.
\end{align}
We now have a well-defined function assigning priorities to every transition of $\memSkel$.

\begin{example}
	We illustrate our definitions for $\priCyc$ and $\pri$.
	We again consider the example from Figure~\ref{fig:parityAtmtn} (for which, unlike $\wc$, we already know that it describes an $\omega$-regular language).
	For the sake of the example, let us ignore the already-defined priority function $\pri$ of this parity automaton.
	We show that we can recover priorities defining the same language starting from our preorder $\cycOrd$ and our definitions for $\priCyc$ and $\pri$.
	There were four equivalence classes of $\cycEq$, ordered as follows: $\eqClassCyc{(\memState_1, b)} \cycOrd \eqClassCyc{(\memState_1, a)(\memState_2, a)} \cycOrd \eqClassCyc{(\memState_1, c)}$ and $\eqClassCyc{(\memState_2, b)} \cycOrd \eqClassCyc{(\memState_1, c)}$.
	Function $\priCyc$ must be any function that respects the order given by the diagram and that assigns even priorities to winning classes of cycles, and odd priorities to losing classes.
	One such possible choice is $\priCyc(\eqClassCyc{(\memState_1, c)}) = 5$, $\priCyc(\eqClassCyc{(\memState_1, a)(\memState_2, a)}) = 2$, $\priCyc(\eqClassCyc{(\memState_2, b)}) = 4$, and $\priCyc(\eqClassCyc{(\memState_1, b)}) = 1$.
	From this choice of function $\priCyc$, our definition~\eqref{eq:priDef} of function $\pri$ entails $\pri((\memState_1, c)) = 5$, $\pri((\memState_1, a)) = \pri((\memState_2, a)) = 2$, $\pri((\memState_2, b)) = \pri((\memState_2, c)) = 4$, and $\pri((\memState_1, b)) = 1$.
	This choice of priorities defines the same language as the original parity automaton.
\end{example}

We will prove that $(\memSkel, \pri)$ recognizes the language $\wc$.
In our proof, we will need to relate the cycles dominated by a cycle $\cyc$ and the ones dominated by cycles in a ``decomposition'' of $\cyc$, i.e., cycles that can be obtained from iteratively removing cycles from $\cyc$.
We formally define this notion and prove two related results.

\newpage
\begin{definition}[Cycle decomposition] \label{def:cycDec}
	Let $\cyc = (\memState_1, \clr_1)\ldots(\memState_k, \clr_k)\in\memCycles$, and $\cyc_1, \ldots, \cyc_l\in\memCycles$.
	We say that $(\cyc_1, \ldots, \cyc_l)$ is a \emph{cycle decomposition of $\cyc$} if
	\begin{itemize}
		\item either $l = 1$ and $\cyc = \cyc_1$,
		\item or $l > 1$ and there exist $i, i' \in \{1, \ldots, k\}$, $i \le i'$, such that cycle $\cyc_1 = (\memState_i, \clr_i)\ldots(\memState_{i'}, \clr_{i'})$, and $(\cyc_2, \ldots, \cyc_l)$ is a cycle decomposition of the smaller cycle \[(\memState_1, \clr_1)\ldots(\memState_{i-1}, \clr_{i-1})(\memState_{i'+1}, \clr_{i'+1})\ldots(\memState_k, \clr_k).\]
	\end{itemize}
\end{definition}

\begin{lemma} \label{lem:domSubcycle}
	Let $\cyc, \cyc_1, \cyc_2, \cyc'\in\memCycles$ be cycles such that $\cyc = \cyc_1\cyc_2$.
	If $\cyc' \cycOrd \cyc_1$, then $\cyc' \cycOrd \cyc$.
\end{lemma}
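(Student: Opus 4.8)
The plan is to reduce everything to the structural lemmas already established for cycles. Since $\cyc = \cyc_1\cyc_2$, the cycles $\cyc_1$ and $\cyc_2$ share a common state $\memState$ (Remark~\ref{rem:crossPoint}), and I would regard $\cyc_1$, $\cyc_2$, and $\cyc$ all as cycles on $\memState$, so that $\cycStates{\cyc} = \cycStates{\cyc_1}\cup\cycStates{\cyc_2}$. First I would reduce to the case $\cycVal{\cyc'} \neq \cycVal{\cyc_1}$: if instead $\cycVal{\cyc'} = \cycVal{\cyc_1}$, then by definition of $\cycOrd$ there is a cycle $\cyc''$ with $\cycVal{\cyc''} \neq \cycVal{\cyc_1}$, $\cyc' \cycOrd \cyc''$, and $\cyc'' \cycOrd \cyc_1$; applying the reduced statement to $(\cyc'', \cyc_1)$ gives $\cyc'' \cycOrd \cyc$, and transitivity of $\cycOrd$ (Lemma~\ref{lem:strictPreorder}) yields $\cyc' \cycOrd \cyc$. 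So from now on $\cyc' \in \dom(\cyc_1)$, and w.l.o.g.\ $\cycVal{\cyc_1} = \win$ and $\cycVal{\cyc'} = \lose$; I then split on the value of $\cyc$.

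If $\cycVal{\cyc} = \lose$, I would first establish $\cyc_1 \cycOrd \cyc$: the cycles $\cyc$ and $\cyc_1$ share $\memState$ and have opposite values, hence compete, and
\[
\cycVal{\cyc\cyc_1} = \cycVal{(\cyc_1\cyc_2)\cyc_1} = \cycVal{\cyc_1^2\cyc_2} = \cycVal{\cyc_1\cyc_2} = \cycVal{\cyc}
\]
by cycle-order independence (Corollary~\ref{lem:indOrdThree}) and repetition independence (Lemma~\ref{lem:indRep}, in the losing form given by Lemma~\ref{lem:stableProduct}), so $\cyc$ dominates $\cyc_1$. Since $\cycVal{\cyc'} = \cycVal{\cyc} \neq \cycVal{\cyc_1}$, the configuration $\cyc' \cycOrd \cyc_1$ together with $\cyc_1 \cycOrd \cyc$ is exactly what the definition of $\cycOrd$ on cycles of equal value uses to declare $\cyc' \cycOrd \cyc$.

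If $\cycVal{\cyc} = \win = \cycVal{\cyc_1}$, I would fix a witness $\wit = \wit_1\wit_2$ that $\cyc_1$ and $\cyc'$ compete, with $\wit_1 \in \memPathsOn{\memState_1}{\memState'}$, $\wit_2 \in \memPathsOn{\memState'}{\memState_1}$, $\memState_1 \in \cycStates{\cyc_1}$, $\memState' \in \cycStates{\cyc'}$; after shifting, write $\cyc_1 = qp$ with $q \in \memPathsOn{\memState_1}{\memState}$, $p \in \memPathsOn{\memState}{\memState_1}$, so that $\cyc = q\cyc_2 p$ as a cycle on $\memState_1$. The claim is that $\wit$ is also a witness that $\cyc$ and $\cyc'$ compete and that $\cyc$ dominates $\cyc'$ via $\wit$; as $\memState_1 \in \cycStates{\cyc}$, $\memState' \in \cycStates{\cyc'}\cap\cycStates{\wit}$, and $\cycVal{\cyc'\wit} = \cycVal{\cyc'}$ are immediate, the only things to check are $\cycVal{\cyc\wit} = \cycVal{\cyc}$ and $\cycVal{\cyc\wit_1\cyc'\wit_2} = \cycVal{\cyc}$. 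Both follow the same template: apply strong repetition independence (Corollary~\ref{lem:indRepStrong}) to insert a factor $qp = \cyc_1$ directly before $\cyc$ (legitimate since $\cyc_2$ sits at $\memState$, which lies on $\cyc_1$), then reorder the resulting product of cycles on $\memState_1$ with cycle-order independence (Corollary~\ref{lem:indOrdThree}) so that it reads as the combination of the two winning cycles $\cyc$ and $\cyc_1\wit$ (respectively $\cyc$ and $\cyc_1\wit_1\cyc'\wit_2$, which is winning because $\cyc_1$ dominates $\cyc'$); $\memSkel$-cycle-consistency then forces the value $\win = \cycVal{\cyc}$. This gives $\cyc' \in \dom(\cyc)$, i.e.\ $\cyc' \cycOrd \cyc$, finishing the proof. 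The real obstacle is precisely this last computation: the argument stays at the level of cycles on $\memState_1$, and one must check that every splicing used in Corollaries~\ref{lem:indRepStrong} and~\ref{lem:indOrdThree} takes place at a genuinely common state. The conceptual content is slight — the hypothesis $\cycVal{\cyc} = \cycVal{\cyc_1}$ lets $\cyc$ be absorbed as an extra winning cycle in any competition that $\cyc_1$ already wins — but making this precise is where the bookkeeping lives.
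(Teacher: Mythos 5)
Your proof is correct and takes essentially the same route as the paper's: the cases $\cycVal{\cyc}\neq\cycVal{\cyc_1}$ and $\cycVal{\cyc'}=\cycVal{\cyc_1}$ are handled exactly as in the paper (via $\cyc_1\cycOrd\cyc$ plus transitivity, respectively via an intermediate cycle of opposite value), and in the remaining case you show, just as the paper does, that a witness of the $\cyc_1$/$\cyc'$ competition also witnesses the $\cyc$/$\cyc'$ competition and its domination by inserting an extra copy of $\cyc_1$ with Corollary~\ref{lem:indRepStrong} and splitting the result into known-value cycles via $\memSkel$-cycle-consistency. There is no gap; only the bookkeeping (shifts and where the copy of $\cyc_1$ is inserted) differs cosmetically from the paper's write-up.
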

\begin{proof}
	We assume $\cyc'\cycOrd \cyc_1$.

	If $\cycVal{\cyc_1} \neq \cycVal{\cyc}$, then $\cyc_1 \cycOrd \cyc$ --- indeed, they share at least one state and $\cyc_1\cyc = (\cyc_1)^2\cyc_2$ has the same value as $\cyc_1\cyc_2 = \cyc$ by Lemma~\ref{lem:indRep}.
	Therefore, by transitivity of $\cycOrd$ (Lemma~\ref{lem:strictPreorder}), $\cyc'\cycOrd \cyc$.

	We now assume $\cycVal{\cyc_1} = \cycVal{\cyc}$ and $\cycVal{\cyc'} \neq \cycVal{\cyc_1}$.
	Let $\witBar$ be a witness that $\cyc'$ and $\cyc_1$ are competing.
	We prove that $\witBar$ also witnesses that $\cyc'$ and $\cyc$ are competing: to do so, it is left to show that $\cycBis = \witBar\cyc$ has the same value as $\cyc$.
	We have that $\cycBis$ can be written as $\witBar\cyc_{1,1}\cyc_2\cyc_{1,2}$ for some paths $\cyc_{1,1}$ and $\cyc_{1,2}$ such that $\cyc_1 = \cyc_{1, 1}\cyc_{1, 2}$.
	Cycle $\cycBis$ has the same value as $\witBar(\cyc_{1,1}\cyc_{1,2})\cyc_{1,1}\cyc_2\cyc_{1,2}$ by Lemma~\ref{lem:indRepStrong}.
	This last cycle can be split into $\witBar\cyc_1$ and $\cyc$, which both have the same value as~$\cyc$.
	Therefore $\witBar$ is also a witness that $\cyc'$ and $\cyc$ are competing.
	We can show with a very similar argument that $\cyc'\witBar_1\cyc\witBar_2$ also has the same value as $\cyc$, hence~$\cyc'\cycOrd \cyc$.

	If $\cycVal{\cyc_1} = \cycVal{\cyc}$ and $\cycVal{\cyc'} = \cycVal{\cyc_1}$, then there exists $\cyc''$ with $\cycVal{\cyc''} \neq \cycVal{\cyc_1}$ such that $\cyc' \in \dom(\cyc'')$ and $\cyc'' \in \dom(\cyc_1)$, so $\cyc'\cycOrd\cyc''\cycOrd\cyc_1$.
	By the previous case, $\cyc'' \cycOrd \cyc$, and by transitivity, $\cyc' \cycOrd \cyc$.
\end{proof}

\begin{lemma} \label{lem:decompCyc}
	Let $\cyc$ be a cycle of $\memSkel$ and $(\cyc_1, \ldots, \cyc_l)$ be a cycle decomposition of $\cyc$.
	For all $i\in\{1,\ldots,l\}$, for all $\cyc'\in\memCycles$, if $\cyc'\cycOrd \cyc_i$, then $\cyc'\cycOrd \cyc$.
\end{lemma}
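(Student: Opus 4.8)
The plan is to induct on the length $l$ of the cycle decomposition $(\cyc_1,\ldots,\cyc_l)$ of $\cyc$, using Lemma~\ref{lem:domSubcycle} as the only real tool in the inductive step. The base case $l=1$ is immediate: by Definition~\ref{def:cycDec} we then have $\cyc=\cyc_1$, the only relevant index is $i=1$, and the hypothesis $\cyc'\cycOrd\cyc_i$ is literally the conclusion $\cyc'\cycOrd\cyc$.

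For the inductive step, assume $l>1$ and write $\cyc=(\memState_1,\clr_1)\ldots(\memState_k,\clr_k)$. By Definition~\ref{def:cycDec} there are $i\le i'$ with $\cyc_1=(\memState_i,\clr_i)\ldots(\memState_{i'},\clr_{i'})$ and with $(\cyc_2,\ldots,\cyc_l)$ a cycle decomposition of the residual cycle $\cyc^-:=(\memState_1,\clr_1)\ldots(\memState_{i-1},\clr_{i-1})(\memState_{i'+1},\clr_{i'+1})\ldots(\memState_k,\clr_k)$. I would first record a structural observation: since $\cyc_1\in\memCycles$ is a cycle on $\memState_i$ while $\cyc$ is a cycle on $\memState_1$, the update function of $\memSkel$ forces $\memState_{i'+1}=\memState_i$ (or $\memState_i=\memState_1$ when $i'=k$); in particular $\cyc_1$ and $\cyc^-$ share the state $\memState_i$, and reading $\cyc$ starting from $\memState_i$ produces exactly $\cyc_1$ followed by the rotation of $\cyc^-$ beginning at $\memState_i$. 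Hence $\cyc$ is, up to a cyclic rotation, the combined cycle $\cyc_1\cyc^-$ in the sense of Remark~\ref{rem:crossPoint} (and, by Lemma~\ref{lem:indOrd}, equally a rotation of $\cyc^-\cyc_1$); as in the rest of the section, such rotations and the choice of crossing point are immaterial for $\cycOrd$, since rotations of a cycle have the same states and — by Lemma~\ref{lem:indOrd} — the same value when combined with any cycle, so they are $\cycEq$-equivalent and $\cycOrd$ is compatible with $\cycEq$.

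Now take an index $i$ with $\cyc'\cycOrd\cyc_i$ and split two cases. If $i=1$, apply Lemma~\ref{lem:domSubcycle} to the decomposition $\cyc=\cyc_1\cyc^-$ (with $\cyc_1$ as the first part): from $\cyc'\cycOrd\cyc_1$ we obtain $\cyc'\cycOrd\cyc$. If $i>1$, note that $\cyc_i$ is the $(i-1)$-th entry of the cycle decomposition $(\cyc_2,\ldots,\cyc_l)$ of $\cyc^-$, with $i-1\in\{1,\ldots,l-1\}$, so the induction hypothesis applied to $\cyc^-$ yields $\cyc'\cycOrd\cyc^-$; then apply Lemma~\ref{lem:domSubcycle} again, this time to $\cyc=\cyc^-\cyc_1$ with $\cyc^-$ as the first part, to conclude $\cyc'\cycOrd\cyc$. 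In both cases we are done, which completes the induction.

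The only delicate point — and it is bookkeeping rather than a genuine obstacle — is matching the recursive shape of Definition~\ref{def:cycDec} with the combination notation: one must verify that the contiguous subcycle $\cyc_1$ and the residual $\cyc^-$ really do share a skeleton state so that Remark~\ref{rem:crossPoint} is applicable, and stay aware that $\cyc$ equals $\cyc_1\cyc^-$ only up to a rotation, which forces the brief $\cycEq$-compatibility remark above. Once that is in place, the proof is a short two-case induction resting entirely on Lemma~\ref{lem:domSubcycle} and the shift/rotation invariance provided by Lemma~\ref{lem:indOrd}.
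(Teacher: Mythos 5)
Your proof is correct and follows essentially the same route as the paper's: induction on $l$ with the two cases $i=1$ and $i>1$, the latter via the induction hypothesis on the residual cycle, both concluded by Lemma~\ref{lem:domSubcycle}. The only difference is that you spell out the shift/rotation bookkeeping (shared state at the junction, compatibility of $\cycOrd$ with rotations) that the paper dismisses with "up to a shift of $\cyc$ and of the cycle decomposition," which is a fair bit of extra care but not a different argument.
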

\begin{proof}
	We proceed by induction on $l$.
	If $l = 1$, then the statement is trivial as $\cyc = \cyc_1$.
	For $l > 1$, we now assume that the property holds for $l-1$, and we show that it also holds for $l$.
	Up to a shift of $\cyc$ and of the cycle decomposition, we assume that $\cyc$ is equal to $\cyc_1\cycBis$, where $(\cyc_2,\ldots,\cyc_l)$ is a cycle decomposition of $\cycBis$.

	Let $\cyc'\in\memCycles$ be such that $\cyc' \cycOrd \cyc_i$ for some $i\in\{1,\ldots,l\}$.
	This implies that $\cyc'\cycOrd\cyc_1$ if $i = 1$ or, using the induction hypothesis, that $\cyc'\cycOrd\cycBis$.
	In any case, by Lemma~\ref{lem:domSubcycle}, we immediately have that $\cyc'\cycOrd \cyc$.
\end{proof}

We can now prove that $\wc$ is recognized by the parity automaton $(\memSkel, \pri)$.
We do this in the next two results.
First, we show that winning cycles of $\memSkel$ are exactly the ones that have an even maximal priority given by $\pri$.
It is then straightforward to conclude that infinite words in $\wc$ are exactly the ones whose maximal infinitely visited priority is even.

\begin{lemma} \label{lem:winCyc}
	Let $\cyc = (\memState_1, \clr_1)\ldots(\memState_k, \clr_k)\in\memCycles$.
	Then, $\cyc$ is winning if and only if\linebreak $\max_{1\le i\le k} \pri(\memState_i, \clr_i)$ is even.
\end{lemma}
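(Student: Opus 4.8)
The plan is to reduce the statement to the identity $\priCyc(\cyc) = \max_{1\le i\le k}\pri(\memState_i,\clr_i)$ and then read off the parity. Indeed, by the choice of $\priCyc$ we have $\cycVal{\cyc}=\win$ if and only if $\priCyc(\cyc)$ is even, and $\cyc$ is winning if and only if $\cycVal{\cyc}=\win$; so once the displayed identity is established, $\cyc$ is winning exactly when $\max_{1\le i\le k}\pri(\memState_i,\clr_i)$ is even, which is what we want. One inequality is immediate: for each $i$ the transition $(\memState_i,\clr_i)$ occurs in $\cyc$, so by definition~\eqref{eq:priDef} of $\pri$ as a minimum over cycles containing a transition, $\pri(\memState_i,\clr_i)\le\priCyc(\cyc)$, and taking the maximum over $i$ gives $\max_{1\le i\le k}\pri(\memState_i,\clr_i)\le\priCyc(\cyc)$.

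The substance is the reverse inequality $\priCyc(\cyc)\le\max_{1\le i\le k}\pri(\memState_i,\clr_i)$. First I would record a decomposition lemma: if $(\cyc_1,\dots,\cyc_l)$ is a cycle decomposition of a cycle $\mu$ (Definition~\ref{def:cycDec}), then $\priCyc(\mu)=\max_{1\le s\le l}\priCyc(\cyc_s)$. For $\priCyc(\mu)\ge\priCyc(\cyc_s)$: after shifting $\mu$ so that $\cyc_s$ is a prefix (writing $\mu=\cyc_s\rho$ with $\rho$ a cycle on the same state), when $\cycVal{\cyc_s}\ne\cycVal{\mu}$ one checks with Lemma~\ref{lem:indRep} that $\mu$ dominates $\cyc_s$, hence $\cyc_s\cycOrd\mu$ and $\priCyc(\cyc_s)<\priCyc(\mu)$ by monotonicity; the equal-value case follows through an intermediate opposite-value cycle as in the definition of $\cycOrd$, via Lemmas~\ref{lem:domSubcycle} and~\ref{lem:decompCyc}. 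For $\priCyc(\mu)\le\max_s\priCyc(\cyc_s)$ I would induct on $l$, reducing to two cycles $\cyc_1,\cyc_2$ on a common state and proving $\priCyc(\cyc_1\cyc_2)\le\max(\priCyc(\cyc_1),\priCyc(\cyc_2))$ by a case analysis on $\cycVal{\cyc_1},\cycVal{\cyc_2}$, using $\memSkel$-cycle-consistency together with the shift-, repetition-, and cycle-order-independence results (Lemmas~\ref{lem:indOrd}, \ref{lem:indRep}, \ref{lem:indRepStrong}, Corollary~\ref{lem:indOrdThree}) to show $\dom(\cyc_1\cyc_2)$ and $\compar{\cyc_1\cyc_2}$ coincide with those of the higher-priority factor.

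Given the decomposition lemma, I would conclude by induction on the length $k$ of $\cyc$. If $\cyc$ is not simple, it has a proper contiguous sub-cycle and (up to a shift) $\cyc=\cyc'\cyc''$ with both factors shorter; the induction hypothesis identifies $\priCyc(\cyc')$ and $\priCyc(\cyc'')$ with the maxima of $\pri$ over their transitions, and the decomposition lemma then gives $\priCyc(\cyc)=\max(\priCyc(\cyc'),\priCyc(\cyc''))\le\max_{1\le i\le k}\pri(\memState_i,\clr_i)$. The remaining case is $\cyc$ simple: for each transition $(\memState_i,\clr_i)$ fix a cycle $\nu_i$ through it with $\priCyc(\nu_i)=\pri(\memState_i,\clr_i)$ (which by the decomposition lemma may be taken simple), graft the $\nu_i$'s onto $\cyc$ to form $\Xi=\nu_1(\memState_1,\clr_1)\nu_2(\memState_2,\clr_2)\cdots\nu_k(\memState_k,\clr_k)$, a cycle of $\memSkel$ in which $\cyc$ is an indivisible part of every cycle decomposition, and play the decomposition obtained by peeling off the $\nu_i$'s against other decompositions, using the domination machinery (Lemmas~\ref{lem:indWit}, \ref{lem:weakWitness}, \ref{lem:transWeak}) to force $\priCyc(\cyc)\le\max_i\priCyc(\nu_i)=\max_{1\le i\le k}\pri(\memState_i,\clr_i)$.

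The main obstacle is precisely this last, simple-cycle case — equivalently, the assertion that the priority class of a simple cycle never exceeds the largest value $\pri$ assigns to one of its transitions. That is where one must track carefully which cycles dominate which when cycles of opposite values are combined around shared states, and it is the step that genuinely consumes the accumulated machinery of Section~\ref{sec:proof2}; once it is in place, the conclusion of the overall proof (that $\wc$ is recognized by $(\memSkel,\pri)$) will be routine from this lemma.
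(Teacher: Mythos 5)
Your reduction rests on the identity $\priCyc(\cyc) = \max_{1\le i\le k}\pri(\memState_i,\clr_i)$, and that identity is false, so the plan collapses at its first step. The function $\priCyc$ is only \emph{some} monotone linear extension of $\cycOrd$ on $\quotient{\memCycles}{\cycEq}$, while $\pri$ is defined in~\eqref{eq:priDef} as a \emph{minimum} over all cycles through a transition; nothing prevents every transition of $\cyc$ from also lying on a cycle of a strictly smaller class. The paper's own running example (Figure~\ref{fig:parityAtmtn}, Examples~\ref{ex:competition} and~\ref{ex:hasse}) exhibits this: since $(\memState_1,a)(\memState_2,b)(\memState_2,a)\cycEq(\memState_1,a)(\memState_2,a)$, definition~\eqref{eq:priDef} gives $\pri(\memState_2,b)\le 2$, whereas the self-loop $\cyc=(\memState_2,b)$ lies in a different class (it is not competing with $(\memState_1,b)$) with $\priCyc(\cyc)=4$; so $\max_i\pri(\memState_i,\clr_i)=2<4=\priCyc(\cyc)$, and only the \emph{parities} agree --- which is precisely the statement of Lemma~\ref{lem:winCyc}, so there is nothing to reduce to. (Incidentally, this also shows the value $\pri(\memState_2,b)=4$ quoted in the example text is not consistent with the asserted equivalence; by~\eqref{eq:priDef} it is $2$, which recognizes the same language.) The same example refutes your intermediate ``decomposition lemma'' $\priCyc(\mu)=\max_s\priCyc(\cyc_s)$ in both directions: $\mu=(\memState_1,a)(\memState_2,b)(\memState_2,a)$ has the cycle decomposition $\bigl((\memState_2,b),(\memState_1,a)(\memState_2,a)\bigr)$, with $\priCyc(\mu)=2$ but $\max_s\priCyc(\cyc_s)=4$. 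Your argument for $\priCyc(\cyc_s)\le\priCyc(\mu)$ is sound only when $\cycVal{\cyc_s}\ne\cycVal{\mu}$ (where the subcycle is strictly dominated, as in Lemma~\ref{lem:domSubcycle}); in the equal-value case it fails, and the reverse inequality fails too. What survives is the one-sided monotonicity of Lemmas~\ref{lem:domSubcycle} and~\ref{lem:decompCyc} together with value-consistency, and proving value-consistency for decompositions mixing winning and losing cycles is exactly the hard content of the lemma, not a tool available on the way to it.

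As a consequence, the step you postpone to the end --- the ``simple cycle'' case where you graft cycles $\nu_i$ realizing the minima onto $\cyc$ and ``play decompositions against each other'' --- is not a residual technicality but the whole proof, and your proposal does not supply it (and its stated goal, $\priCyc(\cyc)\le\max_i\priCyc(\nu_i)$, is again the false identity). The paper's argument proceeds by contradiction on the parity statement only: assuming, say, $\cyc$ losing and $\maxPri=\max_i\pri(\memState_i,\clr_i)$ even (hence $\maxPri<\priCyc(\cyc)$), it picks for each transition $\edge_i$ a cycle $\edge_i\memPth_i$ with $\priCyc(\edge_i\memPth_i)=\pri(\edge_i)$, then shows by induction that $\edge_1\ldots\edge_k\memPth_k\ldots\memPth_1$ is winning and $\edge_1(\memPth_1\edge_1)\ldots\edge_k(\memPth_k\edge_k)$ is losing --- using $\memSkel$-cycle-consistency, domination deduced from comparing $\priCyc$-values ($\edge_l\memPth_l\cycOrd\edge_1\memPth_1$, respectively $\memPth_l\edge_l\cycOrd\cyc^{(l-1)}$ via Lemma~\ref{lem:decompCyc}) --- and finally transforms one cycle into the other by the value-preserving operations of Lemma~\ref{lem:indRep} and Corollary~\ref{lem:indOrdThree}, yielding the contradiction. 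None of this machinery is replaced by your outline, so the proposal has a genuine gap: the announced reduction is unprovable, and the core combinatorial argument is missing.
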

\begin{proof}
	For conciseness, let $\maxPri = \max_{1\le i\le k} \pri(\memState_i, \clr_i)$ and $\edge_i = (\memState_i, \clr_i)$.
	By definition of function~$\pri$, for all $i\in\{1,\ldots,k\}$, $\pri(\edge_i) \le \priCyc(\cyc)$.
	Hence, $\maxPri \le \priCyc(\cyc)$.
	We want to show that $\cyc$ is winning if and only if $\maxPri$ is even.
	By contradiction, we assume that we do not have this equivalence.
	We assume w.l.o.g.\ that $\cyc$ is losing and that $\maxPri$ is even; we could obtain in a symmetric way a contradiction for $\cyc$ winning and $\maxPri$ odd.

	As $\cyc$ is losing, we have that $\priCyc(\cyc)$ is odd --- as $\maxPri$ is even, $\maxPri < \priCyc(\cyc)$.
	We assume (up to a shift of the transitions) that $\maxPri = \pri(\edge_1)$.
	Since $\maxPri < \priCyc(\cyc)$, there exists, for all $i\in\{1,\ldots,k\}$, a cycle $\cyc_i \neq \cyc$ such that $\edge_i\in\cyc_i$ and $\pri(\edge_i) = \priCyc(\cyc_i)$.
	We assume $\cyc_i = \edge_i\memPth_i$ for a suitable path $\memPth_i$.
	The situation is represented in Figure~\ref{fig:winCyc}.

	The rest of the proof consists in exhibiting two cycles, building on the ones we know, showing that one of them is winning and one of them is losing, and finally showing that they must have the same value, which provides a contradiction.

	\begin{figure}[t]
		\centering
		\begin{tikzpicture}[every node/.style={font=\small,inner sep=1pt}]
			\draw (0,0) node[diamant] (m1) {$\memState_1$};
			\draw ($(m1)+(2.5,0)$) node[diamant] (m2) {$\memState_2$};
			\draw ($(m2)+(1,-2.5)$) node[diamant] (m3) {$\memState_3$};
			\draw ($(m3)+(-4.5,0)$) node[diamant] (mk) {$\memState_k$};
			\draw (m1) edge[-latex'] node[below=3pt] {$\edge_1$} (m2);
			\draw (m2) edge[-latex',decorate,out=150,in=30] node[above=3pt] {$\memPth_1$} (m1);
			\draw (m2) edge[-latex'] node[left=3pt] {$\edge_2$} (m3);
			\draw (m3) edge[-latex',decorate,out=60,in=-30] node[right=4pt] {$\memPth_2$} (m2);
			\draw (m3) edge[dashed] (mk);
			\draw (mk) edge[-latex'] node[right=3pt] {$\edge_k$} (m1);
			\draw (m1) edge[-latex',decorate,out=210,in=120] node[left=4pt] {$\memPth_k$} (mk);
		\end{tikzpicture}
		\caption{Situation in the proof of Lemma~\ref{lem:winCyc}, with $\cyc = \edge_1\ldots\edge_k$.}
		\label{fig:winCyc}
	\end{figure}

	We will first consider cycle $\edge_1\ldots\edge_k\memPth_k\ldots\memPth_1$ (on $\memState_1$).
	We will prove by induction that it is winning.
	First, $\edge_1\memPth_1$ is winning since $\priCyc(\edge_1\memPth_1) = \maxPri$ is even.
	Assume now that for $1 < l \le k$, $\edge_1\ldots\edge_{l-1}\memPth_{l-1}\ldots\memPth_1$ is winning.
	We show that $\edge_1\ldots\edge_{l-1}(\edge_l\memPth_l)\memPth_{l-1}\ldots\memPth_1$ is winning.
	\begin{itemize}
		\item If $\edge_l\memPth_l$ is a winning cycle, it follows from $\memSkel$-cycle-consistency.
		\item If $\edge_l\memPth_l$ is a losing cycle, we distinguish two cases.
		If $\edge_2\ldots\edge_{l-1}(\edge_l\memPth_l)\memPth_{l-1}\ldots\memPth_2$ is winning, then so is $\edge_1\ldots\edge_{l-1}(\edge_l\memPth_l)\memPth_{l-1}\ldots\memPth_1$ because we just concatenate the winning cycle $\memPth_1\edge_1$\linebreak to a winning cycle ($\memSkel$-cycle-consistency).
		If $\edge_2\ldots\edge_{l-1}(\edge_l\memPth_l)\memPth_{l-1}\ldots\memPth_2$ is losing, then\linebreak $\edge_2\ldots\edge_{l-1}\memPth_{l-1}\ldots\memPth_2$ witnesses that $\edge_1\memPth_1$ and $\edge_l\memPth_l$ are competing.
		Since $\priCyc(\edge_l\memPth_l)$ is odd, and $\priCyc(\edge_1\memPth_1)$ is even and is equal to the maximum of $i\mapsto \priCyc(\edge_i\memPth_i)$, we have that $\priCyc(\edge_l\memPth_l) < \priCyc(\edge_1\memPth_1)$.
		Since $\priCyc$ is monotonic and $\edge_1\memPth_1$ and $\edge_l\memPth_l$ are competing, this implies $\edge_l\memPth_l \cycOrd \edge_1\memPth_1$.
		Thus $\edge_1\ldots\edge_{l-1}(\edge_l\memPth_l)\memPth_{l-1}\ldots\memPth_1$ is winning.
	\end{itemize}

	We now consider the cycle $\edge_1(\memPth_1\edge_1)\ldots\edge_k(\memPth_k\edge_k)$ (on $\memState_1$).
	We show by induction that it is losing.
	We start from $\cyc$, which is losing by hypothesis, and we add cycles $\memPth_i\edge_i$ one by one.
	We denote $\cyc^{(l)} = \edge_1(\memPth_1\edge_1)\ldots\edge_l(\memPth_l\edge_l)\edge_{l+1}\ldots\edge_k$.
	Assume that $\cyc^{(l-1)}$ is losing for $1 < l \le k$.
	We want to show that $\cyc^{(l)}$ is also losing.

	\begin{itemize}
		\item If $\memPth_l\edge_l$ is a losing cycle, it follows from $\memSkel$-cycle-consistency.
		\item If $\memPth_l\edge_l$ is a winning cycle, then as $\priCyc(\memPth_l\edge_l) \le \maxPri < \priCyc(\cyc)$ and $\memPth_l\edge_l$ is competing with $\cyc$ (they share common states), we have $\memPth_l\edge_l \cycOrd \cyc$.
		Notice that $(\memPth_1\edge_1, \ldots, \memPth_{l-1}\edge_{l-1}, \cyc)$ is a cycle decomposition of $\cyc^{(l-1)}$ as in Definition~\ref{def:cycDec}.
		Thus by Lemma~\ref{lem:decompCyc}, as $\memPth_l\edge_l \cycOrd \cyc$, we also have $\memPth_l\edge_l \cycOrd \cyc^{(l-1)}$.
		We conclude that $\cyc^{(l)}$ is also losing.
	\end{itemize}

	We have now considered two cycles on $\memState_1$: the winning $\edge_1\ldots\edge_k\memPth_k\ldots\memPth_1$ and the losing $\edge_1(\memPth_1\edge_1)\ldots\edge_k(\memPth_k\edge_k)$.
	We show that it is possible to transform the latter into the former using only value-preserving transformations (given by Lemmas~\ref{lem:indRep} and~\ref{lem:indOrdThree}), which provides the desired contradiction.

	We show inductively that for all $l\in\{1,\ldots,k\}$, cycle $\edge_1(\memPth_1\edge_1)\ldots\edge_k(\memPth_k\edge_k)$ can be transformed into
	\[
	\cycBis^{(l)} = (\edge_1\ldots\edge_l\memPth_l\ldots\memPth_1\edge_1\ldots\edge_l)\edge_{l+1}(\memPth_{l+1}\edge_{l+1})\ldots\edge_k(\memPth_k\edge_k)
	\]
	using value-preserving transformations.
	Notice that $\edge_1(\memPth_1\edge_1)\ldots\edge_k(\memPth_k\edge_k)$ is equal to $\cycBis^{(1)}$, which deals with the base case of the induction.
	Now assume that $\edge_1(\memPth_1\edge_1)\ldots\edge_k(\memPth_k\edge_k)$ has the same value as $\cycBis^{(l-1)}$ for $1 < l \le k$.
	In the expression of $\cycBis^{(l-1)}$, notice that $\memPth_{l-1}\ldots\memPth_1\edge_1\ldots\edge_{l-1}$ and $\edge_{l}\memPth_{l}$ are two consecutive cycles on $\memState_{l}$.
	By Lemma~\ref{lem:indOrdThree}, they can thus be swapped while keeping a cycle with the same value.
	Notice that this gives exactly the cycle $\cycBis^{(l)}$.

	We obtain that $\edge_1(\memPth_1\edge_1)\ldots\edge_k(\memPth_k\edge_k)$ has the same value as $\cycBis^{(k)} = \edge_1\ldots\edge_k\memPth_k\ldots\memPth_1\edge_1\ldots\edge_k$, which has the same value as $\edge_1\ldots\edge_k\memPth_k\ldots\memPth_1$ by Lemma~\ref{lem:indRep}.
\end{proof}

\begin{proposition}
	Let $\word = \clr_1\clr_2\ldots\in\colors^\omega$ with $\memWord{\word} = (\memState_1, \clr_1)(\memState_2, \clr_2)\ldots\in (\memStates\times\colors)^\omega$.
	Then,
	\[\word\in\wc\ \textnormal{if and only if}\ \limsup_{i\ge 1} \pri(\memState_i, \clr_i)\ \textnormal{is even}.
	\]
\end{proposition}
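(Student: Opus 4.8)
The plan is to deduce this from \Cref{lem:winCyc} by decomposing $\word$ into a prefix that reaches a well-chosen state $\memState$ of $\memSkel$, followed by an infinite concatenation of cycles on $\memState$ that all have the same value. Write $\memSkel = \memSkelFull$ and set $\maxPri = \limsup_{i\ge 1}\pri(\memState_i, \clr_i)$; this is well-defined because $\pri$ takes values in the finite set $\{0, \ldots, n\}$. By the definition of $\limsup$ over a finite range, only finitely many indices $i$ satisfy $\pri(\memState_i, \clr_i) > \maxPri$, whereas infinitely many satisfy $\pri(\memState_i, \clr_i) = \maxPri$. Since $\memStates$ is finite, there is a state $\memState\in\memStates$ that is the source $\memState_i$ of infinitely many transitions $(\memState_i, \clr_i)$ with $\pri(\memState_i, \clr_i) = \maxPri$.

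Next I would choose where to cut the run. Let $i_1 < i_2 < \ldots$ enumerate the indices $i$ with $\memState_i = \memState$ and $\pri(\memState_i, \clr_i) = \maxPri$; after discarding a finite prefix of this enumeration I may also assume that $i_1$ exceeds every index $i$ with $\pri(\memState_i, \clr_i) > \maxPri$, so that every transition at a position $\ge i_1$ has priority at most $\maxPri$. Put $u = \clr_1\cdots\clr_{i_1 - 1}$, so that $\memUpdHat(\memInit, u) = \memState$ (with $u = \emptyWord$ and $\memState = \memInit$ if $i_1 = 1$), and for each $j\ge 1$ let $\cyc_j = (\memState_{i_j}, \clr_{i_j})\cdots(\memState_{i_{j+1}-1}, \clr_{i_{j+1}-1})$; this is a cycle of $\memSkel$ on $\memState$, and $\word = u\,\colHatInf(\cyc_1\cyc_2\cdots)$.

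Now each $\cyc_j$ contains the transition $(\memState_{i_j}, \clr_{i_j})$, whose priority is $\maxPri$, and every transition of $\cyc_j$ lies at a position $\ge i_1$ and so has priority at most $\maxPri$; hence the maximal priority occurring in $\cyc_j$ equals $\maxPri$. By \Cref{lem:winCyc}, $\cyc_j$ is winning if $\maxPri$ is even and losing if $\maxPri$ is odd, uniformly in $j$. Recall that by $\memSkel$-prefix-independence $\inverse{u}\wc = \inverse{\memState}\wc$ (with $\inverse{\memInit}\wc = \wc$). If $\maxPri$ is even, then $\colHatInf(\cyc_1\cyc_2\cdots)\in(\cc\winCyc{\memState})^\omega\subseteq\inverse{\memState}\wc$ by $\memSkel$-cycle-consistency, so $\word = u\,\colHatInf(\cyc_1\cyc_2\cdots)\in\wc$; symmetrically, if $\maxPri$ is odd, then $\colHatInf(\cyc_1\cyc_2\cdots)\in(\cc\loseCyc{\memState})^\omega\subseteq\inverse{\memState}\comp{\wc}$, so $\word\notin\wc$. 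This proves both implications at once. The only point requiring care is the index bookkeeping in the second paragraph: one must make sure the cutting positions lie beyond the finitely many transitions of priority exceeding $\maxPri$, so that each extracted cycle genuinely has maximal priority $\maxPri$ and \Cref{lem:winCyc} applies. Beyond that, the argument invokes no new reasoning about cycles.
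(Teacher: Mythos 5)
Your proposal is correct and follows essentially the same route as the paper's proof: isolate $\maxPri$, cut the run after the finitely many transitions of priority exceeding $\maxPri$ at a state recurring with a $\maxPri$-priority transition, apply Lemma~\ref{lem:winCyc} to each extracted cycle, and conclude by $\memSkel$-prefix-independence and $\memSkel$-cycle-consistency. The index bookkeeping you flag is handled the same way in the paper (choosing the cut beyond all high-priority transitions), so there is no gap.
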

\begin{proof}
	Let $\maxPri = \limsup_{i\ge 1} \pri(\memState_i, \clr_i)$.
	Let $j \ge 1$ be an index such that for all $i\ge j$, $\pri(\memState_i, \clr_i) \le \maxPri$.
	Let $I^* = \{i \ge j \mid \pri(\memState_i, \clr_i) = \maxPri\}$ be the infinite set of indices of transitions with priority $\maxPri$ occurring after index $j$.
	We write $i_1, i_2,\ldots{}$ for the elements of $I^*$ in order.
	Let $\memState^*$ be a state appearing infinitely often in $\{\memState_i\mid i\in I^*\}$ (such a state exists necessarily as the state space of $\memSkel$ is finite).
	This implies that $\memWord{\word}$ can be written as the concatenation of a finite prefix $(\memState_1, \clr_1)\ldots(\memState_{i_1 - 1}, \clr_{i_1 - 1})$ and infinitely many cycles $\cyc_k = (\memState_{i_k}, \clr_{i_k})\ldots(\memState_{i_{k+1} - 1}, \clr_{i_{k+1} - 1})$ with $\memState_{i_k} = \memState^*$ and $\pri(\memState_{i_k}, \clr_{i_k}) = \maxPri$, for $k\ge 1$.

	For all $k\ge 1$, we have that $\max_{i_k \le i < i_{k+1}} \pri(\memState_i, \clr_i) = \maxPri$ (it is $\le \maxPri$ as $i_k \ge j$, and it is $\ge \maxPri$ as $\pri(\memState_{i_k}, \clr_{i_k}) = \maxPri$).
	By Lemma~\ref{lem:winCyc}, we conclude that cycles $\cyc_k$ are all cycles on $\memState^*$ that have the same value: they are winning if $\maxPri$ is even, and losing if $\maxPri$ is odd.
	By $\memSkel$-prefix-independence and $\memSkel$-cycle-consistency, $\word$ is in $\wc$ if $\maxPri$ is even, and $\word$ is in $\comp{\wc}$ if $\maxPri$ is odd.
\end{proof}

We have therefore reached our goal for this section.

\begin{corollary}[Second item of Theorem~\ref{thm}]
	\secondItem
\end{corollary}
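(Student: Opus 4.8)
The plan is to prove the statement by exhibiting an explicit deterministic transition-based parity automaton built on top of $\memSkel$ that recognizes $\wc$; once such an automaton is produced, $\omega$-regularity is immediate. The starting point is that, under $\memSkel$-prefix-independence, all finite words reaching a fixed state $\memState$ of $\memSkel$ share the same winning continuations, so one may speak of $\inverse{\memState}\wc$ and classify each cycle of $\memSkel$ as \emph{winning} or \emph{losing} (according to whether its color projection, repeated infinitely after reaching $\memState$, yields a winning or losing word); $\memSkel$-cycle-consistency guarantees this value behaves well under concatenation. The overall idea is that these cycle values, together with a well-chosen ``dominance'' preorder on cycles, carry exactly the information a parity condition needs, and that the dominance preorder has finitely many classes, giving finitely many priorities.

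First I would build a toolbox of invariance properties for the value $\cycVal{\cdot}$ of a cycle: shifting the starting point of a cycle does not change its value (Lemma~\ref{lem:indOrd}, proved from the $\inverse{\memState}\wc$ characterization and $\memSkel$-prefix-independence); combining two cycles on a common state, the relative numbers of repetitions are irrelevant (Lemmas~\ref{lem:uniformDominating} and~\ref{lem:indRep}); consecutive subcycles can be swapped without changing the value of a longer cycle (Corollaries~\ref{lem:indOrdThree} and~\ref{lem:indRepStrong}); and, crucially, when two cycles share \emph{two} common states, the value of their combination does not depend on which common state is chosen as the gluing point (Lemma~\ref{lem:indCycle}). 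These generalize the cycle manipulations of Colcombet--Niwiński to the $\memSkel$-relative setting; the crossing-point independence lemma is the first genuinely delicate step, as it requires tracking the values of all two- and four-letter cycles formed from four fixed connecting paths and eliminating each bad configuration via $\memSkel$-cycle-consistency.

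Next I would formalize, for two cycles of opposite value that \emph{compete} (i.e., admit a witness cycle meeting the states of both and preserving each of their values when combined), the relation ``$\cyc$ dominates $\cyc'$'', check it is independent of the chosen witness (Lemma~\ref{lem:indWit}), and extend this to a relation $\cycOrd$ on all cycles, including those of equal value. I would verify $\cycOrd$ is a strict preorder (Lemma~\ref{lem:strictPreorder}), compatible with the equivalence $\cycEq$ that identifies cycles having the same value, same set of competitors, and same dominated set, and then prove $\cycEq$ has finite index: the height of $\cycOrd$ is finite (an infinite descending chain, after pigeonholing a common state for the winning cycles and one for the losing cycles and manufacturing witnesses through those states, would produce an infinite word lying both in $\inverse{\memState}\wc$ and $\inverse{\memState}\comp{\wc}$ — Lemma~\ref{lem:finHeight}), and the width of $\cycOrd$ on $\quotient{\memCycles}{\cycEq}$ is finite because any two cycles visiting exactly the same set of skeleton states are $\cycEq$- or $\cycOrd$-comparable, bounding an antichain by $2^{\card{\memStates}}$ (Lemma~\ref{lem:finWidth}). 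The height argument, which chains together several of the invariance lemmas and the weak-witness results, is the technical heart of the section and the step I expect to be hardest.

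Finally, picking a linear extension $\priCyc\colon\quotient{\memCycles}{\cycEq}\to\{0,\ldots,n\}$ that assigns even values precisely to winning classes, I would push priorities to transitions via $\pri(\memState,\clr)=\min\{\priCyc(\cyc)\mid(\memState,\clr)\in\cyc\}$ and prove that a cycle is winning if and only if the maximal $\pri$-value along it is even (Lemma~\ref{lem:winCyc}): assuming, say, a losing cycle with even maximal priority, one picks around each transition a cycle realizing its priority, then assembles two long cycles — one forced winning, one forced losing by $\memSkel$-cycle-consistency together with $\cycOrd$ and the cycle-decomposition lemma (Definition~\ref{def:cycDec}, Lemma~\ref{lem:decompCyc}) — which are interconvertible by value-preserving moves, a contradiction. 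The Proposition then follows by decomposing $\memWord{\word}$, after a suitable index, into cycles on a fixed state all of priority $\limsup_i\pri(\memState_i,\clr_i)$, hence all of the same value, and invoking $\memSkel$-prefix-independence and $\memSkel$-cycle-consistency; this says exactly that $\wc$ is the language of the parity automaton $(\memSkel,\pri)$, so $\wc$ is $\omega$-regular and recognized by a parity automaton on top of $\memSkel$, which is the claim.
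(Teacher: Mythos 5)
Your proposal is correct and follows essentially the same route as the paper: the same cycle-value invariance lemmas (shift, repetition, crossing-point independence), the same competition/domination preorder with finite height and width, and the same transfer of priorities to transitions via $\pri(\memState,\clr)=\min\{\priCyc(\cyc)\mid(\memState,\clr)\in\cyc\}$ culminating in the cycle-parity correspondence and the final decomposition argument. Nothing to add.
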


\begin{remark}
	As discussed in Remark~\ref{rem:finColors}, our proof shows as a by-product that even if $\colors$ is infinite, many colors can be assumed to be equal (w.r.t.\ $\wc$) --- there are only finitely many classes of truly different colors.
\end{remark}

\section{Applications} \label{sec:application}
We provide a thorough application of our concepts to a discounted-sum winning condition.
We then discuss more briefly mean-payoff and total-payoff winning conditions.

\subsection{Discounted sum} \label{sec:DS}
We apply our results to a \emph{discounted-sum} condition in order to illustrate our notions.
A specificity of this example is that its $\omega$-regularity depends on some chosen parameters --- we use our results to characterize the parameters for which it is $\omega$-regular or, equivalently (Theorem~\ref{thm:regular}), chromatic-finite-memory determined.
The $\omega$-regularity of discounted-sum conditions has also been studied in~\cite{CDH09,BCV18} with different techniques and goals.

Let $\colors \subseteq \IQ$ be non-empty and bounded.
For $\disc\in\intervaloo{0, 1}\cap\IQ$, we define the \emph{discounted-sum function} $\DSSolo{\disc}\colon \colors^\omega \to \IR$ such that for $\word = \clr_1\clr_2\ldots\in\colors^\omega$,
\[
	\DS{\disc}{\word} = \sum_{i = 1}^{\infty} \disc^{i-1} \cdot \clr_i.
\]
This function is always well-defined for a bounded $\colors$, and takes values in $\intervalcc{\frac{\inf \colors}{1-\disc}, \frac{\sup \colors}{1 - \disc}}$.

We define the winning condition $\DSObj = \{\word\in\colors^\omega \mid \DS{\disc}{\word} \ge 0\}$ as the set of infinite words whose discounted sum is non-negative, and let $\prefEq$ be its right congruence.
We will analyze cycle-consistency and prefix-independence of $\DSObj$ to conclude under which conditions (on~$\colors$ and $\disc$) it is chromatic-finite-memory determined.
First, we discuss a few properties of the discounted-sum function.

\subparagraph*{Basic properties.}
We extend function $\DSSolo{\disc}$ to finite words in a natural way: for $\word\in\colors^*$, we define $\DS{\disc}{\word} = \DS{\disc}{\word0^\omega}$.
For $\word\in\colors^*$, we define $\card{\word}$ as the length of $\word$ (so $\word\in\colors^{\card{\word}}$).
First, we notice that for $\word\in\colors^*$ and $\word'\in\colors^\omega$, we have
\[
	\DS{\disc}{\word\word'} = \DS{\disc}{\word} + \disc^{\card{\word}}\DS{\disc}{\word'}.
\]
Therefore,
\[
	\word\word'\in\DSObj \Longleftrightarrow \frac{\DS{\disc}{\word}}{\disc^{\card{\word}}} \ge -\DS{\disc}{\word'}.
\]
This provides a characterization of the winning continuations of a finite word $\word\in\colors^*$ by comparing their discounted sum to the value $\frac{\DS{\disc}{\word}}{\disc^{\card{\word}}}$.

This leads us to define the \emph{gap} of a finite word $\word\in\colors^*$, following ideas in~\cite{BHO15}, as
\[
\gap{\word} =
\begin{cases*}
	\top &\text{if $\frac{\DS{\disc}{\word}}{\disc^{\card{\word}}} \ge -\frac{\inf\colors}{1-\disc}$}, \\
	\bot &\text{if $\frac{\DS{\disc}{\word}}{\disc^{\card{\word}}} < -\frac{\sup\colors}{1-\disc}$}, \\
	\frac{\DS{\disc}{\word}}{\disc^{\card{\word}}} &\text{otherwise}.
\end{cases*}
\]
Intuitively, the gap of a finite word $\word\in\colors^*$ represents how far it is from going back to $0$: if $\word'\in\colors^\omega$ is such that $\DS{\disc}{\word'} = -\gap{\word}$, then $\DS{\disc}{\word\word'} = 0$.
We can see that for all words $\word\in\colors^*$, if $\gap{\word} = \top$, then all continuations are winning (i.e., $\inverse{\word}\wc = \colors^\omega$) as it is not possible to find an infinite word with a discounted sum less than $\frac{\inf\colors}{1-\disc}$.
Similarly, if $\gap{\word} = \bot$, then all continuations are losing (i.e., $\inverse{\word}\wc = \emptyset$).

\subparagraph*{Cycle-consistency.}
We can show that $\DSObj$ is always $\memSkelTriv$-cycle-consistent.

\begin{restatable}{proposition}{DSCycCons} \label{prop:DSCycCons}
	For all bounded $\colors \subseteq \IQ$, $\disc\in\intervaloo{0, 1}\cap\IQ$, winning condition $\DSObj$ is $\memSkelTriv$-cycle-consistent.
\end{restatable}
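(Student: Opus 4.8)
The plan is to recast everything in terms of the \emph{normalized} discounted sum $v(\word) := \DS{\disc}{\word}/\disc^{\card{\word}}$ of a finite word $\word\in\colors^*$ (this equals $\gap{\word}$ whenever the latter is a number; the cases $\gap{\word}\in\{\top,\bot\}$ are immediate, since then every continuation of $\word$ is winning, resp.\ losing, so there is nothing to prove and no cycle of the opposite kind). Since $\memSkelTriv$ has a single state, its cycles are exactly the non-empty finite words $\word'\in\colors^+$, and the first step is to characterize winning and losing cycles after a word $\word$. Writing $c(\word') := -\DS{\disc}{(\word')^\omega} = -\DS{\disc}{\word'}/(1-\disc^{\card{\word'}})$, which depends only on $\word'$, and using $\DS{\disc}{\word\word''} = \DS{\disc}{\word} + \disc^{\card{\word}}\DS{\disc}{\word''}$, a short computation yields $v(\word\word') - v(\word) = \frac{1-\disc^{\card{\word'}}}{\disc^{\card{\word'}}}\bigl(v(\word) - c(\word')\bigr)$. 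Hence $\word'$ is a winning cycle after $\word$ (i.e.\ $(\word')^\omega\in\inverse{\word}\wc$) if and only if $v(\word\word') \ge v(\word)$, and a losing cycle after $\word$ if and only if $v(\word\word') < v(\word)$.

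Now fix $\word\in\colors^*$ and a sequence $\word_1,\word_2,\ldots$ of cycles, and set $h_k := \word\word_1\cdots\word_k$ and $v_k := v(h_k)$, so $v_0 = v(\word)$. We will use repeatedly the identity $\DS{\disc}{\word\word_1\word_2\cdots} = \disc^{\card{h_k}}\bigl(v_k + \DS{\disc}{\word_{k+1}\word_{k+2}\cdots}\bigr)$, valid for every $k$, together with the fact that any infinite discounted sum lies in $\intervalcc{\frac{\inf\colors}{1-\disc}, \frac{\sup\colors}{1-\disc}}$. Suppose first that all $\word_i$ are winning cycles after $\word$, i.e.\ $v_0 \ge c(\word_i)$. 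By induction $(v_k)_k$ is non-decreasing: if $v_{k-1}\ge v_0$, then $v_{k-1}\ge c(\word_k)$, so $\word_k$ is still a winning cycle after $h_{k-1}$, whence $v_k\ge v_{k-1}$. If $(v_k)_k$ is unbounded, the tail term in the identity is bounded below, so the bracket is positive for large $k$ and $\DS{\disc}{\word\word_1\word_2\cdots}\ge 0$; if $(v_k)_k$ is bounded, it converges, and since $\card{h_k}\to\infty$ (cycles are non-empty) the identity forces $\DS{\disc}{\word\word_1\word_2\cdots} = 0 \ge 0$. Either way $\word\word_1\word_2\cdots\in\wc$, giving $(\cc\winCycWordAtmtn{\word}{\memSkelTriv})^\omega \subseteq \inverse{\word}\wc$.

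The losing case is set up symmetrically — if all $\word_i$ are losing cycles after $\word$ then $(v_k)_k$ is \emph{strictly} decreasing — but it needs one genuinely extra ingredient, which I expect to be the only real obstacle: one must show $v_k\to-\infty$, because otherwise $(v_k)_k$ would converge and the identity above would wrongly yield $\DS{\disc}{\word\word_1\word_2\cdots} = 0$, i.e.\ a winning word. This follows from a uniform lower bound on the decrements: for $k\ge 2$, using $\frac{1-\disc^n}{\disc^n}\ge\frac{1-\disc}{\disc}$ for all $n\ge 1$, the strict monotonicity $v_{k-1}\le v_1$, and $c(\word_k) > v_0$ (losing after $\word$), one gets $v_{k-1} - v_k = \frac{1-\disc^{\card{\word_k}}}{\disc^{\card{\word_k}}}\bigl(c(\word_k) - v_{k-1}\bigr) \ge \frac{1-\disc}{\disc}\,(v_0 - v_1) > 0$. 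Hence $v_k\to-\infty$, and then the identity together with the tail term being bounded above makes the bracket negative for large $k$, so $\DS{\disc}{\word\word_1\word_2\cdots} < 0$, i.e.\ $(\cc\loseCycWordAtmtn{\word}{\memSkelTriv})^\omega \subseteq \inverse{\word}\comp{\wc}$. Combining the two cases is precisely $\memSkelTriv$-cycle-consistency of $\DSObj$.
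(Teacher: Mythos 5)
Your proof is correct, and it is organized differently from the paper's. The paper's argument is a single direct series computation: from losingness of each cycle it extracts the bound $\DS{\disc}{\word_k} < -\DS{\disc}{\word}\frac{1 - \disc^{\card{\word_k}}}{\disc^{\card{\word}}}$ (with a strict margin $\varepsilon$ taken from the \emph{first} cycle only), expands $\DS{\disc}{\word\word_1\word_2\ldots}$ as $\DS{\disc}{\word} + \disc^{\card{\word}}\sum_k \disc^{\sum_{i<k}\card{\word_i}}\DS{\disc}{\word_k}$, and concludes via a telescoping series that the total is at most $-\disc^{\card{\word}}\varepsilon < 0$; the winning case is the same estimate without the $\varepsilon$. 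You instead track the normalized value $v(h_k) = \DS{\disc}{h_k}/\disc^{\card{h_k}}$ (essentially the $\gapSolo$ function used later in the paper for prefix-independence), observe that appending a winning (resp.\ losing) cycle after $h_{k-1}$ is equivalent to $v_k \ge v_{k-1}$ (resp.\ $v_k < v_{k-1}$), and argue by monotonicity of $(v_k)$ plus the identity $\DS{\disc}{\word\word_1\word_2\cdots} = \disc^{\card{h_k}}(v_k + \DS{\disc}{\word_{k+1}\word_{k+2}\cdots})$ with bounded tails: bounded $(v_k)$ forces the total to be exactly $0$, unbounded $(v_k)$ fixes the sign for large $k$. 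Your handling of the delicate point — strictness in the losing case — is sound: the uniform decrement bound $v_{k-1}-v_k \ge \frac{1-\disc}{\disc}(v_0-v_1) > 0$ plays exactly the role of the paper's $\varepsilon$ (both ultimately exploit the strict margin provided by the first cycle), and it correctly rules out convergence of $(v_k)$, which would otherwise yield the value $0$ and hence a winning word. The trade-off: the paper's computation is shorter and self-contained, while your version makes the mechanism more transparent (cycles act monotonically on the gap), connects directly to the gap-based analysis of $\prefEq$ in Proposition~\ref{prop:DSPrefInd}, and cleanly explains the asymmetry between the non-strict winning case and the strict losing case.
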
%
The proof of this result is elementary and is provided in Appendix~\ref{app:DS}.

\subparagraph*{Prefix-independence.}
If $\colors = \intervalcc{-k, k} \cap \IQ$ for some $k\in\IN\setminus \{0\}$, winning condition $\DSObj$ is not $\memSkel$-prefix-independent for any $\memSkel$, as $\prefEq$ has infinite index.
Indeed, for $i\ge 1$ and $\word_i = \frac{1}{i} \in \colors^*$, we have $\word_1 \strictInvPrefOrd \word_2 \strictInvPrefOrd \ldots{}$ --- we can see how to use this to exhibit an arena in which $\Pone$ can win but needs infinite memory to do so in Figure~\ref{fig:DSInfBasic}.%
\begin{figure}[t]
	\centering
	\begin{tikzpicture}[every node/.style={font=\small,inner sep=1pt}]
		\draw (0,0) node[carre] (s1) {$\s_1$};
		\draw ($(s1)+(2.5,0)$) node[rond] (s2) {$\s_2$};
		\draw ($(s2)+(2.5,0)$) node[carre] (s3) {$\s_3$};

		\draw (s1) edge[-latex',out=60,in=180-60,distance=1.1cm] node[above=2pt] {$1$} (s2);
		\draw (s1) edge[-latex',out=15,in=180-15] node[above=2pt] {$\frac{1}{2}$} (s2);
		\draw (s1) edge[draw=none,out=-15,in=180+15] node[] {$\vdots$} (s2);

		\draw (s2) edge[-latex',out=60,in=180-60,distance=1.1cm] node[above=2pt] {$\frac{-1}{\disc}$} (s3);
		\draw (s2) edge[-latex',out=15,in=180-15] node[above=2pt] {$\frac{-1}{2\disc}$} (s3);
		\draw (s2) edge[draw=none,out=-15,in=180+15] node[] {$\vdots$} (s3);

		\draw (s3) edge[-latex',out=30,in=-30,distance=0.8cm] node[right=3pt] {$0$} (s3);
	\end{tikzpicture}
	\caption{Arena with infinitely many edges in which $\Pone$ needs infinite memory to win for condition $\DSObj$ from $\s_1$ for any $\disc\in\intervaloo{0,1}\cap\IQ$, with $\colors = \intervalcc{-k, k} \cap \IQ$ for $k$ sufficiently large.}
	\label{fig:DSInfBasic}
\end{figure}%

For finite $\colors\subseteq \IZ$, the picture is more complicated; for $\colors = \intervalcc{-k, k}\cap\IZ$ for some $k\in\IN$, we characterize when $\DSObj$ is $\memSkel$-prefix-independent for some finite skeleton $\memSkel$.
We give an intuition of the two situations in which that happens: $(i)$ if $\colors$ is too small, then the first non-zero color seen determines the outcome of the game, as it is not possible to compensate this color to change the sign of the discounted sum; $(ii)$ if $\disc = \frac{1}{n}$ for some integer $n \ge 1$, then the $\gapSolo$ function actually takes only finitely many values, which is not the case for a different $\disc$.

\begin{proposition} \label{prop:DSPrefInd}
	Let $\disc\in\intervaloo{0, 1}\cap\IQ$, $k \in \IN$, and $\colors = \intervalcc{-k, k}\cap\IZ$.
	Then, the right congruence $\prefEq$ of $\DSObj$ has finite index if and only if $k < \frac{1}{\disc} - 1$ or $\disc$ is equal to $\frac{1}{n}$ for some integer $n \ge 1$.
\end{proposition}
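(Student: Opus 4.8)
The plan is to read off the right congruence $\prefEq$ from the $\gapSolo$ function: for a finite word $\word$, the characterization of winning continuations recalled above gives $\inverse{\word}\wc = \colors^\omega$ if $\gap{\word} = \top$, $\inverse{\word}\wc = \emptyset$ if $\gap{\word} = \bot$, and $\inverse{\word}\wc = \{\word'\in\colors^\omega \mid \DS{\disc}{\word'} \ge -\gap{\word}\}$ if $\gap{\word}\in\IR$. In particular the class of $\word$ is determined by $\gap{\word}$, so it suffices to understand how many values $\gapSolo$ takes and when two distinct real values induce distinct continuation sets. I will split into three cases: the first two establish finiteness of the index, the third establishes infiniteness, and together they give the stated equivalence.

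If $k < \frac1\disc - 1$, then $\frac{k\disc}{1-\disc} < 1$. For a prefix $\word = 0^a\clr\word''$ whose first non-zero color is $\clr$ and any continuation $\word'\in\colors^\omega$ one has $\DS{\disc}{\word\word'} = \disc^a(\clr + \disc\,\DS{\disc}{\word''\word'})$ with $|\DS{\disc}{\word''\word'}| \le \frac{k}{1-\disc}$; since $|\clr| \ge 1 > \frac{k\disc}{1-\disc}$, the sign of $\DS{\disc}{\word\word'}$ equals that of $\clr$, so $\inverse{\word}\wc$ is $\colors^\omega$ when $\clr \ge 1$ and $\emptyset$ when $\clr \le -1$; hence $\prefEq$ has at most three classes (the third being $\eqClass{\emptyWord}$). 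If instead $\disc = \frac1n$ for some integer $n$, then for every $\word = \clr_1\ldots\clr_m$ one computes $\frac{\DS{\disc}{\word}}{\disc^m} = \sum_{i=1}^m \clr_i\,n^{m+1-i}\in\IZ$, so the real values of $\gapSolo$ are integers, necessarily inside $\intervalco{-\frac{k}{1-\disc}, \frac{k}{1-\disc}}$; together with $\top$ and $\bot$ this is a finite set, hence $\prefEq$ has finite index.

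The main case is: $k \ge \frac1\disc - 1$ and $\disc$ not of the form $\frac1n$; I claim $\prefEq$ then has infinite index. First, these hypotheses force $k > \frac1\disc - 1$ strictly and $k\ge 1$, since $k = \frac1\disc - 1$ would give $\disc = \frac1{k+1}$. Write $\disc = \frac pq$ in lowest terms, so $p \ge 2$ and $p,q$ coprime. The key arithmetic observation is that if $\word$ has length $m$ and first letter $\clr_1 = 1$, then $\frac{\DS{\disc}{\word}}{\disc^m} = \frac1{p^m}N$ with $N = \sum_{i=1}^m \clr_i\,q^{m+1-i}p^{i-1}\in\IZ$ and $N \equiv q^m \not\equiv 0 \pmod p$; hence, whenever this value lies in $\intervalco{-\frac{k}{1-\disc}, \frac{k}{1-\disc}}$ so that it equals $\gap{\word}$, it is a rational whose reduced denominator is exactly $p^m$. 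To get such a word of every length, I will prove a staying-in-range lemma: for every $r\in\intervalco{-\frac{k}{1-\disc}, \frac{k}{1-\disc}}$ there is $\clr\in\intervalcc{-k, k}\cap\IZ$ with $\frac{r+\clr}{\disc}\in\intervalco{-\frac{k}{1-\disc}, \frac{k}{1-\disc}}$; the proof distinguishes whether $\frac{r-k}{\disc}\ge -\frac{k}{1-\disc}$ (then $\clr = -k$ works, as $\frac{r-k}{\disc} < \frac{k}{1-\disc}$ follows from $r < \frac{k}{1-\disc}$) or not (then the smallest candidate $\frac{r+\clr}{\disc}$ that is $\ge -\frac{k}{1-\disc}$ exists and is $< -\frac{k}{1-\disc} + \frac1\disc \le \frac{k}{1-\disc}$, using $\frac1\disc \le \frac{2k}{1-\disc}$, which follows from $k > \frac1\disc - 1$). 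Since $\gap{1} = \frac1\disc$ is itself in range (this is exactly where $k > \frac1\disc - 1$ is needed), iterating the lemma yields for each $m \ge 1$ a length-$m$ word $\word^{(m)}$ with first letter $1$ and $\gap{\word^{(m)}}\in\IR$; by the denominator computation the values $\gap{\word^{(m)}}$ are pairwise distinct.

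It remains to show distinct real gap values give distinct $\prefEq$-classes, which then finishes the proof. From $k > \frac1\disc - 1$ we get $\disc > \frac1{k+1} \ge \frac1{2k+1}$, so by a standard $\disc$-expansion argument (again using $k \ge \frac12(\frac1\disc - 1)$) the image $\{\DS{\disc}{\word'}\mid\word'\in\colors^\omega\}$ is the full interval $\intervalcc{-\frac{k}{1-\disc}, \frac{k}{1-\disc}}$. Given real $\gap{\word_1}\ne\gap{\word_2}$, pick $\word'\in\colors^\omega$ with $\DS{\disc}{\word'} = -\max(\gap{\word_1},\gap{\word_2})$, which belongs to that interval; then $\word_i\word'$ is winning iff $\gap{\word_i} = \max(\gap{\word_1},\gap{\word_2})$, so $\word'$ is a winning continuation of exactly one of $\word_1,\word_2$, whence $\word_1 \not\prefEq \word_2$. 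Therefore $\prefEq$ has infinitely many classes, and combining the three cases gives the claimed characterization. I expect the main obstacle to be precisely the third case: the staying-in-range lemma must handle the boundary behaviour of $\gapSolo$ carefully and is where the hypothesis $k \ge \frac1\disc - 1$ is used essentially, and it must be coupled with the denominator bookkeeping to guarantee genuine distinctness of the gap values; the first two cases are short computations.
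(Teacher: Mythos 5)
Your proof is correct and follows essentially the same route as the paper's: the classes are characterized via the $\gapSolo$ function, the $k < \frac{1}{\disc}-1$ and $\disc=\frac{1}{n}$ cases are the same short computations, and the infinite-index case rests on the same two ingredients, namely the reduced-denominator-$p^m$ argument for in-range words starting with $1$ and the surjectivity of $\DSSolo{\disc}$ onto $\intervalcc{-\frac{k}{1-\disc},\frac{k}{1-\disc}}$ (proved in the paper by adapting the greedy expansion, and which you instead invoke as the standard covering condition $2k\ge\frac{1}{\disc}-1$). The only cosmetic difference is that the paper keeps gaps in range through the explicit adaptive choice $\clr_i=-\floor{\gap{\clr_1\ldots\clr_{i-1}}}$, whereas you prove a general staying-in-range lemma; both arguments are valid.
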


\begin{proof}
	We define $\maxDS = \frac{\sup \colors}{1-\disc} = \frac{k}{1 - \disc}$ and $\minDS = -\frac{k}{1 - \disc}$, as respectively the maximal and minimal discounted-sum value achievable with colors in $\colors$.

	The key property that we will show is that gaps characterize equivalence classes of prefixes: for $\word_1, \word_2\in\colors^*$,
	\begin{equation} \label{eq:gapChar}
	\word_1 \prefEq \word_2 \Longleftrightarrow
	\gap{\word_1} = \gap{\word_2}.
	\end{equation}
	Once this is proven, it is left to determine the number of different gap values, which will correspond to the index of $\prefEq$.
	The right-to-left implication of~\eqref{eq:gapChar} is clear: if the gaps are $\top$, all the continuations are winning; if the gaps are $\bot$, all the continuations are losing; else, for any continuation, the final discounted-sum values will have the same sign.
	We prove the left-to-right implication for each case of the disjunction from the statement and discuss the number of gap values.

	We first assume $k < \frac{1}{\disc} - 1$.
	The case $k = 0$ is trivial (as all words are winning) --- we assume $k \ge 1$.
	The inequality $k < \frac{1}{\disc} - 1$ is equivalent to $\frac{1}{\disc} > \frac{k}{1 - \disc}$.
	In this case, there are only three possible gaps:
	\begin{itemize}
		\item for $\word\in 0^*$, $\gap{\word} = 0$.
		\item for $\word\in 0^*\clr$ with $\clr \ge 1$, then $\frac{\DS{\disc}{\word}}{\disc^{\card{\word}}} = \frac{c}{\disc} \ge \frac{1}{\disc} > \frac{k}{1 - \disc} = -\minDS$ --- so for any word $\word\in 0^*\clr\colors^*$, $\gap{\word} = \top$.
		\item for $\word\in 0^*\clr\colors^*$ with $\clr \le -1$, symmetrically, $\gap{\word} = \bot$.
	\end{itemize}
	These three possible gaps clearly correspond to different equivalence classes of the right congruence $\prefEq$, so there are three such equivalence classes.
	Hence the minimal-state automaton~$\minStateAtmtn$ has three states $\eqClass{\emptyWord}$, $\eqClass{1}$, and $\eqClass{-1}$.

	We now assume that $k \ge \ceil{\frac{1}{\disc} - 1}$.
	The left-to-right implication of~\eqref{eq:gapChar} is clear in the cases in which all, or none, of the continuations are winning.
	The difficult case is when both $\word_1$ and~$\word_2$ have a rational gap.
	We show that if their gaps are different rational numbers, then they have different winning continuations.
	We assume w.l.o.g.\ that $\gap{\word_2} < \gap{\word_1}$.
	We show that there is an infinite continuation that has a discounted sum exactly equal to $-\gap{\word_1}$: this infinite continuation is winning after $\word_1$ but losing after $\word_2$.

	Showing that there exists $\word\in\colors^\omega$ such that $\DS{\disc}{\word} = - \gap{\word_1}$ amounts to showing that there is a representation of $- \gap{\word_1}$ in the (rational but not necessarily integral) base $\frac{1}{\disc}$ with digits in $\colors$, with one digit before the decimal point.
	We can adapt the well-known \emph{greedy expansion}~\cite{Ren57} to our context to show this (details in Appendix~\ref{app:DS}, Proposition~\ref{prop:numbRep}).

	It is left to show that there are finitely many gap values if and only if $\disc$ equals $\frac{1}{n}$ for some integer $n\ge 1$.
	One implication is clear: if $\disc = \frac{1}{n}$ for some integer $n \ge 1$, then there are finitely many possible gaps as gaps are then always integers between $\minDS$ and $\maxDS$, $\top$, or $\bot$.
	We illustrate this implication by depicting the minimal-state automaton of $\prefEq$ for $\disc = \frac{1}{2}$ and $k = 2$ in Figure~\ref{fig:minStateDS12}.%
	\begin{figure}[t]
		\centering
		\begin{tikzpicture}[every node/.style={font=\small,inner sep=1pt}]
			\draw (0,0) node[diamant] (s0) {$0$};
			\draw ($(s0)+(2.5,0)$) node[diamant] (s2) {$2$};
			\draw ($(s2)+(2.5,0)$) node[diamant] (top) {$\top$};
			\draw ($(s0)-(2.5,0)$) node[diamant] (s-2) {$-2$};
			\draw ($(s-2)-(2.5,0)$) node[diamant] (s-4) {$-4$};
			\draw ($(s-4)-(2.5,0)$) node[diamant] (bot) {$\bot$};

			\draw ($(s0)+(0,.8)$) edge[-latex'] (s0);
			\draw (s0) edge[-latex',out=270-30,in=270+30,distance=0.8cm] node[below=2pt] {$0$} (s0);
			\draw (s0) edge[-latex',out=15,in=180-15] node[above=2pt] {$1$} (s2);
			\draw (s0) edge[-latex',out=180-15,in=15] node[above=2pt] {$-1$} (s-2);
			\draw (s0) edge[-latex',out=45,in=180-45] node[above=2pt] {$2$} (top);
			\draw (s0) edge[-latex',out=180-45,in=45] node[above=2pt] {$-2$} (s-4);

			\draw (s2) edge[-latex',out=270-30,in=270+30,distance=0.8cm] node[below=2pt] {$-1$} (s2);
			\draw (s2) edge[-latex'] node[above=2pt] {$0,1,2$} (top);
			\draw (s2) edge[-latex',out=180+15,in=-15] node[below=2pt] {$-2$} (s0);

			\draw (s-2) edge[-latex',out=270-30,in=270+30,distance=0.8cm] node[below=2pt] {$1$} (s-2);
			\draw (s-2) edge[-latex',out=180-45,in=45] node[above=2pt] {$-2,-1$} (bot);
			\draw (s-2) edge[-latex',out=-15,in=180+15] node[below=2pt] {$2$} (s0);
			\draw (s-2) edge[-latex'] node[above=2pt] {$0$} (s-4);

			\draw (s-4) edge[-latex',out=270-30,in=270+30,distance=0.8cm] node[below=2pt] {$2$} (s-4);
			\draw (s-4) edge[-latex'] node[above=2pt] {$\colors \setminus \{2\}$} (bot);

			\draw (top) edge[-latex',out=-30,in=30,distance=0.8cm] node[right=2pt] {$\colors$} (top);
			\draw (bot) edge[-latex',out=180-30,in=180+30,distance=0.8cm] node[left=2pt] {$\colors$} (bot);
		\end{tikzpicture}
		\caption{Minimal-state automaton of $\prefEq_{\DSObj}$ for $\disc = \frac{1}{2}$ and $\colors = \{-2, -1, 0, 1, 2\}$.
		The value in a state is the gap characterizing the equivalence class of $\prefEq$ corresponding to that state.
		Here, $\frac{\sup\colors}{1-\disc} = 4$ and $\frac{\inf\colors}{1-\disc} = -4$.
		The asymmetry around $0$ comes from the $\ge 0$ inequality in the definition of the condition: when state $-4$ is reached, there is exactly one winning continuation ($2^\omega$), but a state with gap value $4$ would only have winning continuations (hence, it is part of state $\top$).
		Notice that we can define a parity condition on top of this automaton that recognizes $\DSObj$: an infinite word is winning as long as it does not reach $\bot$.}
		\label{fig:minStateDS12}
	\end{figure}
	The proof of the other implication is provided in Appendix~\ref{app:DS}, Proposition~\ref{prop:infGaps}.
\end{proof}

\newpage
\begin{corollary}
	Let $\disc\in\intervaloo{0, 1}\cap\IQ$, $k \in \IN$, and $\colors = \intervalcc{-k, k}\cap \IZ$.
	\begin{itemize}
		\item If $k < \frac{1}{\disc} - 1$, then $\DSObj$ is memoryless-determined.
		\item If $k \ge \ceil{\frac{1}{\disc} - 1}$, then $\DSObj$ is chromatic-finite-memory determined if and only if $\disc$ is equal to $\frac{1}{n}$ for some integer $n \ge 1$.
	\end{itemize}
\end{corollary}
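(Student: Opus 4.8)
The plan is to treat the two bullets separately. For the first bullet, assume $k < \frac{1}{\disc}-1$; I would first extract from the analysis in the proof of Proposition~\ref{prop:DSPrefInd} that in this regime the gap of a finite word takes only the three values $0$, $\top$, $\bot$, so that $\inverse{\word}\DSObj$ equals $\colors^\omega$, $\emptyset$, or $\DSObj$ according to whether $\word \in 0^*\cdot\{1,\dots,k\}\cdot\colors^*$, $\word \in 0^*\cdot\{-k,\dots,-1\}\cdot\colors^*$, or $\word \in 0^*$. Unwinding this identity, $\DSObj = \{\word\in\colors^\omega \mid \text{the first non-zero letter of } \word, \text{ if any, is positive}\}$; this is exactly the safety condition whose set of forbidden finite prefixes is $\{\word\in\colors^* \mid \word \text{ contains a negative letter occurring before any positive letter}\}$, a set that is closed under extension. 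Safety conditions (equivalently, reachability conditions for the opponent) are memoryless-determined for both players in arbitrary arenas, with infinite branching allowed: the safety player stays forever, memorylessly and uniformly, in the region from which the reachability player cannot force the forbidden set, while the reachability player follows a (memoryless, uniform) rank-decreasing attractor strategy. Hence $\DSObj$ is memoryless-determined; I would either cite this classical fact or include the short attractor argument.

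For the second bullet I would assemble results already established. For the ``if'' direction, assume $\disc = \frac{1}{n}$. Proposition~\ref{prop:DSPrefInd} gives that the right congruence $\prefEq$ of $\DSObj$ has finite index, so its minimal-state automaton $\minStateAtmtn$ is a finite skeleton and $\DSObj$ is $\minStateAtmtn$-prefix-independent by definition; Proposition~\ref{prop:DSCycCons} gives that $\DSObj$ is $\memSkelTriv$-cycle-consistent, and since $\memSkelTriv\memProduct\minStateAtmtn$ is isomorphic to $\minStateAtmtn$, Lemma~\ref{lem:stableProduct} upgrades this to $\minStateAtmtn$-cycle-consistency. The second item of Theorem~\ref{thm} then yields that $\DSObj$ is $\omega$-regular, hence chromatic-finite-memory determined by (the easy direction of) Theorem~\ref{thm:regular}. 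For the ``only if'' direction, if $\DSObj$ is chromatic-finite-memory determined, then by the first item of Theorem~\ref{thm} (there is a skeleton $\memSkel$ with $\DSObj$ one-player $\memSkel$-determined, whence $\prefEq$ has finite index) --- or equivalently by Theorem~\ref{thm:regular} together with the fact that $\omega$-regular languages have finite-index right congruences --- the right congruence $\prefEq$ has finite index. Since we are in the regime $k \ge \ceil{\frac{1}{\disc}-1}$, i.e.\ $k \ge \frac{1}{\disc}-1$, the first disjunct of Proposition~\ref{prop:DSPrefInd} fails, so we must have $\disc = \frac{1}{n}$ for some integer $n \ge 1$.

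I expect the main obstacle to be the first bullet. Theorem~\ref{thm} applied with $\minStateAtmtn$ only delivers that a three-state memory suffices, so obtaining genuine memorylessness requires identifying $\DSObj$ with an honest safety condition and appealing to the classical memoryless determinacy of safety/reachability games --- while being careful about uniformity, namely that a single memoryless strategy wins from all of each player's winning states. Everything else is bookkeeping: the substantive content (cycle-consistency of $\DSObj$, and the exact characterization of when $\prefEq$ has finite index) is already contained in Propositions~\ref{prop:DSCycCons} and~\ref{prop:DSPrefInd}, and the corollary is a careful assembly of these with Theorem~\ref{thm}, Theorem~\ref{thm:regular}, and Lemma~\ref{lem:stableProduct}.
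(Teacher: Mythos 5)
Your second bullet is correct and follows the same route as the paper: finite index of $\prefEq$ from Proposition~\ref{prop:DSPrefInd}, $\memSkelTriv$-cycle-consistency from Proposition~\ref{prop:DSCycCons} upgraded via Lemma~\ref{lem:stableProduct}, then Theorem~\ref{thm} and Theorem~\ref{thm:regular} in both directions. Your identification of $\DSObj$ in the regime $k < \frac{1}{\disc}-1$ (winning words are exactly those whose first non-zero color, if any, is positive) is also correct and matches the three-class gap analysis in the paper.

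The gap is in how you conclude memoryless determinacy from that identification. The claim that ``safety conditions are memoryless-determined for both players in arbitrary arenas'' is false in this paper's setting, where a winning condition is a language over \emph{colors} labeling \emph{edges}: the paper's own Example~\ref{ex:abC}, $\wc = ab\colors^\omega$, is a closed language (its bad prefixes are closed under extension, so it is a safety condition in your sense), yet in the one-state arena with self-loops $a$ and $b$ the player can win but no memoryless strategy wins, so it is not memoryless-determined. The classical attractor argument you invoke needs the forbidden object to be a set of arena states or edges; here ``the forbidden set'' is a set of histories, and ``the region from which the opponent cannot force the forbidden set'' is not well-defined as a subset of $\states$, since whether a state is dangerous depends on whether the history is still in $0^*$ or has already seen a non-zero color. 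What rescues memorylessness for this particular condition is precisely the observation the paper makes: the three-state automaton $\minStateAtmtn\memProduct\memSkelTriv$ recognizing $\DSObj$ has only one class, $\eqClass{\emptyWord}$, in which the outcome is undecided, while in $\eqClass{1}$ and $\eqClass{-1}$ every continuation is winning (resp.\ losing), so any strategy may behave as if the memory were always in $\eqClass{\emptyWord}$, and the memory can be discarded. Equivalently, one may first declare positive edges immediate wins for $\Pone$ and negative edges immediate wins for $\Ptwo$, and only then run a state-based attractor argument on the arena. Without this extra step (which you do not make), your reduction to ``classical safety'' does not go through.
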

\begin{proof}
	This follows from Propositions~\ref{prop:DSCycCons} and~\ref{prop:DSPrefInd}, thanks to Theorem~\ref{thm}.
	The only thing to clarify is that memoryless strategies suffice in case $k < \frac{1}{\disc} - 1$.
	The proof of Proposition~\ref{prop:DSPrefInd} tells us that in this case, $\DSObj$ is $\omega$-regular and can be recognized by a parity automaton that can be defined on top of $\minStateAtmtn \memProduct \memSkelTriv$, which has three states.
	To use this skeleton as a memory, we can notice that the game is already over in states $\eqClass{1}$ and $\eqClass{-1}$ (as every continuation wins or every continuation loses).
	Thus, it is not necessary to use these states to play, and we can consider that we always stay in state $\eqClass{\emptyWord}$.
\end{proof}

\subsection{Other winning conditions} \label{sec:otherObj}
\subparagraph*{Mean payoff.}
Let $\colors \subseteq \IQ$ be non-empty.
We define the \emph{mean-payoff function} $\MPSolo\colon \colors^\omega \to \IR\cup\{-\infty, \infty\}$ such that for $\word = \clr_1\clr_2\ldots\in\colors^\omega$,
\[
\MP{\word} = \limsup_{n\to\infty} \frac{1}{n} \sum_{i=1}^n \clr_i.
\]
We define the winning condition $\MPObj = \{\word\in\colors^\omega \mid \MP{\word} \ge 0\}$ as the set of infinite words whose mean payoff is non-negative.
This condition is $\memSkelTriv$-prefix-independent for any set of colors.
However, it is known that infinite-memory strategies may be required to play optimally in some infinite arenas~\cite[Section~8.10]{Put94}; the example provided uses infinitely many colors.
Here, we show that chromatic-finite-memory strategies do not suffice to play optimally even when $\colors = \{-1, 1\}$.
Let us analyze cycle-consistency of $\MPObj$.
If we consider, for $n \in \IN$,
\[
	\word_n = \underbrace{1\ldots1}_{n\ \text{times}}\underbrace{-1\ldots{-1}}_{n+1\ \text{times}},
\]
we have that $(\word_n)^\omega$ is losing for all $n\in\IN$, but the infinite word $\word_0\word_1\word_2\ldots{}$ has a mean payoff of~$0$ and is thus winning.
This shows directly that $\MPObj$ is not $\memSkelTriv$-cycle-consistent.
The argument can be adapted to show that $\MPObj$ is not $\memSkel$-cycle-consistent for any skeleton $\memSkel$ (see Appendix~\ref{app:DS}, Lemma~\ref{lem:MPCyc}).

\begin{remark}
	As memoryless strategies suffice to play optimally for both players in \emph{finite} arenas for mean-payoff games~\cite{EM79}, winning condition $\MPObj$ distinguishes finite-memory determinacy in finite and in infinite arenas.
	For a skeleton $\memSkel$, $\memSkel$-determinacy in \emph{finite} arenas has also been shown~\cite{GZ05,BLORV20} to be equivalent to the conjunction of a property about prefixes (\emph{$\memSkel$-monotony}) and a property about cycles (\emph{$\memSkel$-selectivity}).
	The concepts of $\memSkel$-selectivity and $\memSkel$-cycle-consistency share the similar idea that combining losing cycles cannot produce a winning word, but they are distinct notions with different quantification on the families of cycles considered.
	Here, $\MPObj$ is $\memSkelTriv$-selective but not $\memSkelTriv$-cycle-consistent.
\end{remark}

\subparagraph*{Total payoff.}
Let $\colors \subseteq \IQ$ be non-empty.
We define the \emph{total-payoff function} $\TPSolo\colon \colors^\omega \to \IR\cup\{-\infty, \infty\}$ such that for $\word = \clr_1\clr_2\ldots\in\colors^\omega$,
\[
\TP{\word} = \limsup_{n\to\infty} \sum_{i=1}^n \clr_i.
\]
We define the winning condition $\TPObj = \{\word\in\colors^\omega \mid \TP{\word} \ge 0\}$ as the set of infinite words whose total payoff is non-negative.

The right congruence $\prefEq$ of $\TPObj$ does not have finite index, even for $\colors = \{-1, 1\}$: we indeed have that $(-1) \strictInvPrefOrd (-1)(-1) \strictInvPrefOrd\ldots$.
Condition $\TPObj$ is therefore not $\memSkel$-prefix-independent for any skeleton $\memSkel$.
We can also show that $\TPObj$ is not $\memSkel$-cycle-consistent for any $\memSkel$, using the exact same argument as for $\MPObj$.
Chromatic-finite-memory strategies are therefore insufficient to play optimally for $\TPObj$ in infinite arenas.
Once again, this situation contrasts with the case of finite arenas, in which memoryless strategies suffice to play optimally~\cite{GZ04}.

\section{Conclusion}
We proved an equivalence between chromatic-finite-memory determinacy of a winning condition in games on infinite graphs and $\omega$-regularity of the corresponding language of infinite words, generalizing a result by Colcombet and Niwi\'nski~\cite{CN06}.
A ``strategic'' consequence of our result is that chromatic-finite-memory determinacy in one-player games of both players implies the seemingly stronger chromatic-finite-memory determinacy in zero-sum games.
A ``language-theoretic'' consequence is a relation between the representation of $\omega$-regular languages by parity automata and the memory structures used to play optimally in zero-sum games, using as a tool the minimal-state automata classifying the equivalence classes of the right congruence.

For future work, one possible improvement over our result is to deduce tighter chromatic memory requirements in two-player games compared to one-player games.
Our proof technique gives as an upper bound on the two-player memory requirements a product between the minimal-state automaton and a sufficient skeleton for one-player arenas, but smaller skeletons often suffice.
We do not know whether the product with the minimal-state automaton is necessary in general in order to play optimally in two-player arenas (although it is necessary in Theorem~\ref{thm} to describe $\wc$ using a parity automaton).
This behavior contrasts with the case of finite arenas, in which it is known that a skeleton sufficient for both players in finite one-player arenas also suffices in finite two-player arenas~\cite{BLORV20,BORV21}.
More generally, it would be interesting to characterize precisely the (chromatic) memory requirements of $\omega$-regular winning conditions, extending work on the subclass of Muller conditions~\cite{DJW97,Cas22}.

\printbibliography


\appendix
\section{Proof of claim from Section~\ref{sec:intro}} \label{app:CF13}
We argue that the result on the strategy complexity of finitary games in~\cite[Theorem~2]{CF13}, mentioned in Section~\ref{sec:intro} (paragraph \textbf{Related works}), also apply to our setting.
The main difference is that our setting considers arenas with \emph{edges} labeled with colors rather than \emph{states}.
As we have discussed, this difference may in general have an impact on the strategy complexity~\cite{GW06,CN06}.
However, we argue that this difference has no impact for the strategy complexity of finitary games as defined in~\cite{CF13}.
A similar argument was stated in~\cite{CN06} to transfer the memoryless determinacy of parity conditions from state-labeled to edge-labeled arenas.
We refer to~\cite{CF13} for formal definitions of finitary conditions.
Informally, finitary B\"uchi is a winning condition defined over the alphabet of colors $\colors = \{0, 1\}$.
It contains the infinite words for which there exists a uniform bound $N\in\IN$ such that each $1$ is followed by a $0$ after at most $N$ steps.

\begin{lemma}[{\cite[Theorem~2]{CF13}} for edge-labeled arenas]
	For finitary B\"uchi games, $\Pone$ has a memoryless optimal strategy in all infinite (edge-labeled) arenas.
\end{lemma}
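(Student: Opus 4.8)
The plan is to reduce the edge-labeled case to the state-labeled case handled by~\cite{CF13}, via the classical edge-subdivision construction, and to check that it preserves both the finitary B\"uchi condition and memorylessness of strategies.

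Given an edge-labeled arena $\arena = \arenaFull$ (with $\colors = \{0,1\}$), I would build a state-labeled arena $\arena'$ by splitting every edge in two: its state space is $\states \disjUnion \{v_\edge \mid \edge \in \edges\}$, for $\s \in \states_i$ there is an edge from $\s$ to $v_\edge$ for each $\edge \in \edges$ with $\edgeIn(\edge) = \s$, and from each $v_\edge$ a unique edge to $\edgeOut(\edge)$. The states $v_\edge$ are non-branching, so their owner is irrelevant and I put them in $\states_2$, keeping the set $\states_1$ of $\Pone$-controlled states unchanged. Colours now live on states: I set $\col(v_\edge) = \col(\edge)$ and $\col(\s) = 1$ for every $\s \in \states$, and equip $\arena'$ with the finitary B\"uchi condition.

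Then I would verify the reduction is faithful. Plays of $\arena$ from a state $\s_0$ and plays of $\arena'$ from $\s_0$ are in canonical bijection, with $\edge_1\edge_2\ldots$ mapped to $\s_0\,v_{\edge_1}\,\edgeOut(\edge_1)\,v_{\edge_2}\,\edgeOut(\edge_2)\ldots$; this bijection restricts to bijections between the strategies of each player and between their consistent plays, and sends memoryless strategies to memoryless strategies --- here it is essential that $\states_1$ is untouched and the new states are non-branching. The colour sequence of the image play is $1\,\col(\edge_1)\,1\,\col(\edge_2)\,1\ldots$, which carries a $1$ at every odd position; hence a $0$ occurs in it only at an even position $2k$, exactly when $\col(\edge_k) = 0$. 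Since an infinite word over $\{0,1\}$ satisfies the finitary B\"uchi condition precisely when its maximal runs of $1$'s are bounded in length, a short analysis of the positions above shows that the image sequence is winning if and only if the original colour sequence $\col(\edge_1)\col(\edge_2)\ldots$ is. Consequently the winning region of $\Pone$ in $\arena'$ meets $\states$ exactly in the winning region of $\Pone$ in $\arena$.

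Finally, applying~\cite[Theorem~2]{CF13} to the state-labeled arena $\arena'$ yields a memoryless optimal strategy of $\Pone$ in $\arena'$, which transports along the bijection to a memoryless strategy of $\Pone$ in $\arena$ winning from every state from which $\Pone$ wins, hence optimal. I expect the only delicate point to be the choice of colours on $\arena'$: one must colour the new states so as not to alter membership in the finitary B\"uchi language, and colouring the original states $1$ (colouring them $0$ would spuriously make every play winning) is what does the job --- the resulting ``$1$ at every odd position'' makes the equivalence of winning transparent.
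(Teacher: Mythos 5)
Your proposal is correct and follows essentially the same route as the paper's proof: subdivide each edge into a fresh state carrying the edge's color, color the original states $1$, observe that the play/strategy bijection preserves memorylessness and that the interleaved $1$'s do not affect membership in the finitary B\"uchi language, then apply~\cite[Theorem~2]{CF13} to the state-labeled arena. The only cosmetic difference is that you assign all subdivision states to $\Ptwo$ while the paper gives them to the owner of the edge's source, which is immaterial since they are non-branching.
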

\begin{proof}[Proof (sketch)]
	Let $\colors = \{0, 1\}$.
	Let $\arena = \arenaFull$ be an (edge-labeled) arena, as defined in Section~\ref{sec:preliminaries}, and $\wc$ be a finitary B\"uchi condition.
	We transform $\arena$ into a \emph{state-labeled arena} $\arena' = (\states', \states_1', \states_2', \edges', \clr)$ such that $\edges'\subseteq \states\times \states$ and $\clr\colon \states \to \colors$ labels each state with a color.
	For each edge $\edge\in\edges$, we define a new state $\s_\edge$ to which we assign the color $\col(\edge)$, and we insert this state between $\edgeIn(\edge)$ and $\edgeOut(\edge)$.
	To any other state, we assign the color $1$.
	We define
	\begin{align*}
		&\states'_1 = \states_1 \disjUnion \{\s_\edge \mid \edge\in\edges, \edgeIn(\edge)\in\states_1\}, \states'_2 = \states_2 \disjUnion \{\s_\edge \mid \edge\in\edges, \edgeIn(\edge)\in\states_2\}, \states' = \states_1'\disjUnion\states_2', \\
		&\edges' = \{(\edgeIn(\edge), \s_\edge) \mid \edge\in\edges\} \disjUnion \{(\s_\edge, \edgeOut(\edge)) \mid \edge\in\edges\}, \\
		&\text{for $\s\in\states$, $\clr(\s) = 1$, and for $\edge\in\edges$, $\clr(\s_\edge) = \col(\edge)$}.
	\end{align*}
	There is a natural bijection between strategies on $\arena$ and strategies on $\arena'$ which preserves the memoryless feature of strategies.
	Moreover, the winning feature of strategies is also preserved: the infinite words of colors seen in the state-labeled arena will be the same as the ones in the edge-labeled arenas, with extra $1$'s in every other position.
	This alteration does not change the winning status of infinite words for the finitary B\"uchi condition.
	We can therefore reuse~\cite[Theorem~2]{CF13} on $\arena'$, and recover a memoryless optimal strategy of $\Pone$ on~$\arena$.
\end{proof}

An example showing that $\Ptwo$ needs infinite memory to play optimally in some state-labeled arena for finitary B\"uchi conditions~\cite[Example~1]{CF13} can also be easily transformed to work with edge-labeled arenas.
Similar arguments can be used to show that $\Pone$ has optimal strategies based on a skeleton for finitary parity conditions in edge-labeled arenas.

\section{Missing proofs of Section~\ref{sec:application}} \label{app:DS}
We prove statements that were given without proof in Section~\ref{sec:DS} about the discounted-sum winning condition: Proposition~\ref{prop:DSCycCons}, as well as the two properties used in the proof of Proposition~\ref{prop:DSPrefInd} (Propositions~\ref{prop:numbRep} and~\ref{prop:infGaps}).

\DSCycCons*
\begin{proof}
	Let $\word\in\colors^*$.
	We show that $(\cc{\loseCycWordAtmtn{\word}{\memSkelTriv}})^\omega \subseteq \inverse{\word}\comp{\DSObj}$ --- we discuss how to adapt the proof to show that $(\cc{\winCycWordAtmtn{\word}{\memSkelTriv}})^\omega \subseteq \inverse{\word}\DSObj$ at the end.
	Let $\word_1, \word_2, \ldots\in\cc{\loseCycWordAtmtn{\word}{\memSkelTriv}}$.
	We want to show that $\word\word_1\word_2\ldots\in\comp{\DSObj}$, i.e., that $\DS{\disc}{\word\word_1\word_2\ldots} < 0$.

	For $k\ge 1$, as $\word_k\in\loseCycWordAtmtn{\word}{\memSkelTriv}$, we have $\DS{\disc}{\word\word_k^\omega} < 0$.
	Since, moreover,
	\begin{align*}
		\DS{\disc}{\word\word_k^\omega}
		&= \DS{\disc}{\word} + \disc^{\card{\word}}\DS{\disc}{\word_k^\omega} \\
		&= \DS{\disc}{\word} + \disc^{\card{\word}}\sum_{i=0}^\infty \disc^{i\card{\word_k}}\DS{\disc}{\word_k} \\
		&= \DS{\disc}{\word} + \disc^{\card{\word}}\DS{\disc}{\word_k}\frac{1}{1 - \disc^{\card{\word_k}}},
	\end{align*}
	we obtain
	\begin{equation} \label{eq:strict}
		\DS{\disc}{\word_k} < -\DS{\disc}{\word}\frac{1 - \disc^{\card{\word_k}}}{\disc^{\card{\word}}}.
	\end{equation}
	In particular, for $k = 1$, there exists $\varepsilon > 0$ such that
	\begin{equation} \label{eq:eps}
		\DS{\disc}{\word_1} = -\varepsilon - \DS{\disc}{\word}\frac{1 - \disc^{\card{\word_1}}}{\disc^{\card{\word}}}.
	\end{equation}
	We have that
	\begin{align*}
		\DS{\disc}{\word\word_1\word_2\ldots}
		&= \DS{\disc}{\word} + \disc^{\card{\word}}\sum_{k = 1}^{\infty}\disc^{\sum_{i = 1}^{k-1} \card{\word_i}} \DS{\disc}{\word_k} \\
		&\le \DS{\disc}{\word} - \disc^{\card{\word}}\varepsilon -
		\disc^{\card{\word}}\sum_{k = 1}^\infty \disc^{\sum_{i = 1}^{k-1} \card{\word_i}}\DS{\disc}{\word}\frac{1 - \disc^{\card{\word_k}}}{\disc^{\card{\word}}} \\
		&\le \DS{\disc}{\word} - \disc^{\card{\word}}\varepsilon -
			\DS{\disc}{\word}\sum_{k = 1}^{\infty}\disc^{\sum_{i = 1}^{k-1} \card{\word_i}}(1 - \disc^{\card{\word_k}}),
	\end{align*}
	where the second line uses~\eqref{eq:eps} for $k = 1$, and~\eqref{eq:strict} for $k \ge 2$.
	The series
	\[
		\sum_{k = 1}^{\infty}\disc^{\sum_{i = 1}^{k-1} \card{\word_i}}(1 - \disc^{\card{\word_k}}) = \sum_{k = 1}^{\infty}\disc^{\sum_{i = 1}^{k-1} \card{\word_i}} - \disc^{\sum_{i = 1}^{k} \card{\word_i}}
	\]
	is telescoping (we can expand it as $1 - \disc^{\card{\word_1}} + \disc^{\card{\word_1}} - \disc^{\card{\word_1} + \card{\word_2}} + \disc^{\card{\word_1} + \card{\word_2}} - \ldots$).
	As $\lim_{k\to\infty} \disc^{\sum_{i = 1}^{k}\card{\word_i}} = 0$, this series converges to $1$.
	We conclude that
	\[
		\DS{\disc}{\word\word_1\word_2\ldots}
		\le - \disc^{\card{\word}}\varepsilon < 0,
	\]
	as required.
	A proof that $(\cc{\winCycWordAtmtn{\word}{\memSkelTriv}})^\omega \subseteq \inverse{\word}\DSObj$ can be done in a similar way, with no need to extract an $\varepsilon$ as we are then only looking for a non-strict inequality.
\end{proof}

We now prove the two properties used in Proposition~\ref{prop:DSPrefInd} whose proofs were omitted.
We use notations from the proof of Proposition~\ref{prop:DSPrefInd} itself.

\begin{proposition} \label{prop:numbRep}
	Let $\disc\in\intervaloo{0, 1}\cap\IQ$, $k \in \IZ$ such that $k \ge \ceil{\frac{1}{\disc} - 1}$, and $\colors = \intervalcc{-k, k} \cap \IZ$.
	For any real number $x$, $\frac{-k}{1-\disc} \le x \le \frac{k}{1-\disc}$, there exists $\word\in\colors^\omega$ such that $x = \DS{\disc}{\word}$.
\end{proposition}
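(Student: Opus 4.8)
The plan is to adapt the greedy expansion of a real number in a (possibly non-integer) base, in the spirit of~\cite{Ren57}, to the digit set $\colors$. Write $\beta = \frac{1}{\disc}$, so $\beta > 1$; since $\disc\in\intervaloo{0,1}$ we have $\ceil{\tfrac{1}{\disc}-1}\ge 1$, hence $k\ge 1$ and $\frac{k}{\beta-1}\ge 1 > \frac12$. Finding $\word = \clr_1\clr_2\ldots\in\colors^\omega$ with $\DS{\disc}{\word} = x$ amounts to writing $x = \sum_{i\ge 1}\disc^{i-1}\clr_i$, i.e.\ a $\beta$-expansion of $x$ with digits in $\colors$ and a single digit before the radix point.

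First I would define reals $x_1, x_2, \ldots$ and digits $\clr_1, \clr_2, \ldots$ inductively: set $x_1 = x$; given $x_i$, let $\clr_i$ be an element of $\colors$ closest to $x_i$ (so $\clr_i = k$ if $x_i\ge k$, $\clr_i = -k$ if $x_i\le -k$, and otherwise $\clr_i$ is a nearest integer to $x_i$, which lies in $\colors$ since $\colors = \intervalcc{-k,k}\cap\IZ$); then set $x_{i+1} = \frac{1}{\disc}(x_i - \clr_i)$. The invariant I would maintain by induction on $i$ is $|x_i|\le\frac{k}{1-\disc}$, which holds for $i=1$ by hypothesis. Assuming it for $i$: if $|x_i|\le k$ then $|x_i - \clr_i|\le\frac12$, and otherwise $|x_i - \clr_i| = |x_i| - k \le \frac{k}{1-\disc} - k = \frac{k\disc}{1-\disc}$; since $\frac12\le\frac{k}{\beta-1} = \frac{k\disc}{1-\disc}$, in both cases $|x_i - \clr_i|\le\frac{k\disc}{1-\disc}$, whence $|x_{i+1}| = \frac{1}{\disc}|x_i - \clr_i|\le\frac{k}{1-\disc}$, closing the induction.

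It then remains to relate the digits $\clr_i$ to $x$. From $x_{i+1} = \frac{1}{\disc}(x_i - \clr_i)$ we get $\clr_i = x_i - \disc x_{i+1}$, so the partial sums telescope:
\[
\sum_{i=1}^n \disc^{i-1}\clr_i = \sum_{i=1}^n \bigl(\disc^{i-1}x_i - \disc^{i}x_{i+1}\bigr) = x - \disc^n x_{n+1}.
\]
Since $|\disc^n x_{n+1}|\le\disc^n\cdot\frac{k}{1-\disc}\to 0$ as $n\to\infty$, we conclude $\DS{\disc}{\word} = \sum_{i\ge 1}\disc^{i-1}\clr_i = x$, as desired.

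The only delicate point is verifying the invariant, and within it the inequality $\frac12\le\frac{k\disc}{1-\disc}$: this is precisely where the hypothesis $k\ge\ceil{\tfrac{1}{\disc}-1}$ (equivalently $k\ge\beta-1$) is used, and it is the reason this threshold on $k$ appears in Proposition~\ref{prop:DSPrefInd}. Everything else is routine bookkeeping; the degenerate case $k=0$ cannot occur here since $\disc<1$ forces $k\ge 1$.
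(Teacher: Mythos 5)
Your proof is correct, but it follows a genuinely different route from the paper's. The paper adapts Rényi's greedy expansion directly: it splits on the sign of $x$, chooses each digit as the largest element of $\{0,\ldots,k\}$ keeping the partial sum below $x$, and proves convergence to $x$ by contradiction (if the series fell short by $\varepsilon$, all digits would eventually be forced to equal $k$, and the inequality $\frac{k\disc}{1-\disc}\ge 1$ shows a larger digit should have been chosen earlier). You instead run a ``balanced'' remainder recursion over the full symmetric digit set $\colors$: pick the digit nearest to the current remainder (saturating at $\pm k$), propagate the invariant $|x_i|\le\frac{k}{1-\disc}$, and conclude by a direct telescoping identity, with no sign split and no contradiction argument. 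Your verification is sound: in the saturated case $|x_i-\clr_i|\le\frac{k\disc}{1-\disc}$ by the invariant, in the interior case $|x_i-\clr_i|\le\frac12\le\frac{k\disc}{1-\disc}$, and the boundedness of the remainders makes $\disc^n x_{n+1}\to 0$, so $\DS{\disc}{\word}=x$. Your route is arguably more elementary and in fact uses less than the hypothesis: it only needs $\frac{k\disc}{1-\disc}\ge\frac12$, i.e.\ $2k+1\ge\frac{1}{\disc}$, whereas the paper's greedy argument uses the full $\frac{k\disc}{1-\disc}\ge 1$ coming from $k\ge\ceil{\frac{1}{\disc}-1}$. One small caveat on your closing remark: since your argument works below the stated threshold, it is not quite accurate to say this proposition is \emph{the} reason the threshold $k\ge\ceil{\frac{1}{\disc}-1}$ appears in Proposition~\ref{prop:DSPrefInd}; that threshold is the complement of the small-alphabet case $k<\frac{1}{\disc}-1$ treated there, and the paper's greedy proof happens to use it at full strength. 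This does not affect the correctness of your proof of the statement as given.
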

\begin{proof}
	This problem can be rephrased as a number representation problem: we are looking for a sequence of ``digits'' $(x_i)_{i\ge 0}$ in $\colors$ such that $x = x_0.x_1x_2\ldots{}$ in base $\frac{1}{\disc}$, i.e., such that
	\[
		x = \sum_{i = 0}^\infty x_i\disc^i.
	\]
	Notice that $\sum_{i = 0}^\infty x_i\disc^i = \DS{\disc}{x_0x_1\ldots}$.
	It is known that every number $x\in\intervalco{0, 1}$ has (at least) one representation $0.x_1x_2\ldots{}$ in base $\frac{1}{\disc}$ with digits in $\{0, 1, \ldots, \ceil{\frac{1}{\disc} - 1}\}$, and one such representation can be found using the \emph{greedy expansion}~\cite{Ren57}.

	We adapt this greedy expansion to our setting (for a potentially greater $x$ and larger set $\colors$ of digits).
	Let $x\in\IR$ such that $\frac{-k}{1-\disc} \le x \le \frac{k}{1-\disc}$.
	We deal with the case $x \ge 0$ --- the negative case is symmetric.
	We set $x_0 = \min(k, \floor{x})$; clearly, $x_0 \le x$.
	Then inductively, if $x_0, \ldots, x_{l-1}$ have been defined, we define $x_l$ as the greatest integer in $\{0, \ldots, k\}$ such that
	\[
		\sum_{i = 0}^{l} x_i\disc^i \le x.
	\]
	The series $\sum_{i = 0}^{\infty} x_i\disc^i$ is converging, as every term is non-negative and it is bounded from above by $x$.
	We show that it converges to $x$, which ends the proof.
	Let $\varepsilon \ge 0$.
	Assume by contradiction that $\sum_{i = 0}^{\infty} x_i\disc^i \le x - \varepsilon$.
	Let $j$ be the least index such that $\disc^j \le \varepsilon$.
	Clearly, for any $j' \ge j$, $x_{j'} = k$ --- otherwise, a greater digit could have been picked during the inductive greedy selection.
	Still, not every digit $x_0, x_1, \ldots{}$ can be $k$, as $\sum_{i = 0}^{\infty} k\disc^i = \frac{k}{1-\disc} > x - \varepsilon$.
	Let $l$ be the greatest index such that $x_l \neq k$.
	We show that a digit $\ge x_l + 1$ should have been picked instead of $x_l$ for the digit at index~$l$, leading to a contradiction.
	To do so, it is sufficient to show that
	\[
	(x_l + 1)\disc^l + \sum_{i = 0}^{l-1} x_i\disc^i \le x.
	\]

	We have
	\begin{align*}
	(x_l + 1)\disc^l + \sum_{i = 0}^{l-1} x_i\disc^i
	&= \sum_{i = 0}^{\infty} x_i\disc^i + \disc^l - \sum_{i=l + 1}^{\infty} x_i\disc^i \\
	&= \sum_{i = 0}^{\infty} x_i\disc^i + \disc^l - \sum_{i=l + 1}^{\infty} k\disc^i &&\text{as $x_i = k$ for $i \ge l+1$}\\
	&= \sum_{i = 0}^{\infty} x_i\disc^i + \disc^l - \frac{k\disc^{l+1}}{1 - \disc} \\
	&= \sum_{i = 0}^{\infty} x_i\disc^i + \disc^l(1 - \frac{k\disc}{1 - \disc}).
	\end{align*}

	Since $\frac{k\disc}{1 - \disc} \ge \frac{\ceil{\frac{1}{\disc} - 1}\disc}{1 - \disc} = \ceil{\frac{1-\disc}{\disc}}\frac{\disc}{1 - \disc} \ge 1$, we have that $\disc^l(1 - \frac{k\disc}{1 - \disc}) \le 0$, which implies that
	\[
		(x_l + 1)\disc^l + \sum_{i = 0}^{l-1} x_i\disc^i \le \sum_{i = 0}^{\infty} x_i\disc^i < x,
	\]
	a contradiction.
	We conclude that $x = \DS{\disc}{x_0x_1\ldots}$.
\end{proof}

\begin{proposition} \label{prop:infGaps}
	Let $\disc\in\intervaloo{0, 1}\cap\IQ$, $k \in \IZ$ such that $k \ge \ceil{\frac{1}{\disc} - 1}$, and $\colors = \intervalcc{-k, k} \cap \IZ$.
	If $\disc \neq \frac{1}{n}$ for all integers $n\ge 1$, the $\gapSolo$ function takes infinitely many values.
\end{proposition}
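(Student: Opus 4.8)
The plan is to exhibit, for every integer $m \ge 1$, a finite word $\word^{(m)}$ of length $m$ whose gap is a \emph{rational} number $g_m$ (not $\top$ or $\bot$), chosen so that the $g_m$ are pairwise distinct; since $\gap{\word^{(m)}} = g_m$, this immediately shows that $\gapSolo$ takes infinitely many values. I use the notation of the proof of Proposition~\ref{prop:DSPrefInd}; recall that $\maxDS = \frac{k}{1-\disc}$, that a word $\word$ of length $m$ has rational gap if and only if $\frac{\DS{\disc}{\word}}{\disc^{m}} \in \intervalco{-\maxDS, \maxDS}$ (in which case this value \emph{is} its gap), and that $\DS{\disc}{\word\clr} = \DS{\disc}{\word} + \disc^{\card{\word}}\clr$, so that appending a colour $\clr$ to a word of rational gap $g$ yields gap $\frac{g+\clr}{\disc}$. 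Observe that $k \ge 1$, and that since $k$ is an integer while $\frac{1}{\disc} \notin \IZ$ we in fact have $k \ge \ceil{\frac{1}{\disc} - 1} > \frac{1}{\disc} - 1$, i.e.\ $\frac{1}{\disc} < \maxDS$.

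First I would obtain distinctness through a valuation argument. Write $\disc = p/q$ in lowest terms; since $\disc \ne \frac{1}{n}$ for every $n$ and $0 < \disc < 1$, we have $q > p \ge 2$, so $p$ has a prime divisor $\pi$, and $\gcd(p,q) = 1$ forces $\pi \nmid q$. For $\word = \clr_1 \ldots \clr_m \in \colors^*$ one has $\frac{\DS{\disc}{\word}}{\disc^{m}} = \sum_{i=1}^m \clr_i (q/p)^{m-i+1}$, hence the integer $p^{m}\cdot\frac{\DS{\disc}{\word}}{\disc^{m}} = \sum_{i=1}^m \clr_i\, q^{m-i+1}\, p^{i-1}$ is congruent to $\clr_1 q^m$ modulo $\pi$, because every term with $i \ge 2$ contains the factor $p^{i-1}$, hence $\pi$. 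Therefore, provided the first colour is $\clr_1 = 1$, this integer is not divisible by $\pi$, so $\frac{\DS{\disc}{\word}}{\disc^{m}} \ne 0$ and its $\pi$-adic valuation equals $-m\,v_\pi(p)$. In particular, any two words that start with the colour $1$, have different lengths, and both have rational gap, have distinct gaps. It thus remains to produce, for each $m$, a word of length $m$ that starts with $1$ and has rational gap.

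For this, it suffices to show that, starting from the gap $\frac{1}{\disc}$ reached after reading the single colour $1$ (this lies in $\intervalco{-\maxDS,\maxDS}$ because $\frac{1}{\disc} < \maxDS$), one can keep choosing colours of $\colors$ so that the running gap stays in $\intervalco{-\maxDS,\maxDS}$ forever; each length-$m$ prefix of such an infinite play is then a valid $\word^{(m)}$. I would establish this by producing a forward-invariant sub-interval: set $\mathcal{I} = \intervalco{\alpha,\beta}$ with $\alpha = -\frac{k-1}{1-\disc} \le 0$ and $\beta$ an arbitrary rational number in $\intervaloo{\frac{1}{\disc},\,\maxDS}$, a non-empty interval by the first paragraph. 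Then $\frac{1}{\disc} \in \mathcal{I} \subseteq \intervalco{-\maxDS,\maxDS}$, and for every $g \in \mathcal{I}$ there is $\clr \in \colors$ with $\frac{g+\clr}{\disc}\in\mathcal{I}$: indeed, the admissible colours are the integers of the interval $\intervalco{\alpha\disc - g,\ \beta\disc - g}$, whose length $(\beta-\alpha)\disc$ exceeds $1$ (since $\beta - \alpha > \frac{1}{\disc}$), whose left endpoint satisfies $\alpha\disc - g \le \alpha\disc - \alpha = k-1$, and whose right endpoint satisfies $\beta\disc - g > \beta\disc - \beta = -\beta(1-\disc) > -k$; a short case analysis on the position of $g$ then shows this interval always contains an integer of $\{-k,\dots,k\}$. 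Iterating such choices from $g = \frac{1}{\disc}$ yields, for each $m$, a length-$m$ word $\word^{(m)}$ beginning with colour $1$ whose gap (and indeed the gap of every prefix) is rational, which completes the proof.

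The step I expect to be the main obstacle is this forward-invariance claim and, with it, checking that a suitable window $\mathcal{I}$ exists at all. It is genuinely false that all of $\intervalco{-\maxDS,\maxDS}$ is forward-invariant — from running gaps close to $\pm\maxDS$ the map $g \mapsto \frac{g+\clr}{\disc}$ is pushed outside $\intervalco{-\maxDS,\maxDS}$ for every $\clr \in \colors$ — so $\mathcal{I}$ must be trimmed near both ends while remaining wide enough (width $> \frac{1}{\disc}$) for an admissible colour in $\{-k,\dots,k\}$ to exist, and while still containing $\frac{1}{\disc}$. Reconciling these constraints is precisely where both hypotheses $k \ge \ceil{\frac{1}{\disc} - 1}$ and $\disc \ne \frac{1}{n}$ come into play.
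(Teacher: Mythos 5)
Your proof is correct, and it follows the same overall strategy as the paper's: construct a single infinite word beginning with the colour $1$ whose prefixes all have rational gaps, and derive pairwise distinctness of those gaps from the powers of $p$ (where $\disc = p/q$ in lowest terms) in their denominators --- exactly where $\disc \neq \frac{1}{n}$ enters, via $p \ge 2$. The two halves are implemented differently, though. For distinctness, the paper inductively shows that the reduced denominator of the $i$-th gap is exactly $p^i$, whereas you read off the $\pi$-adic valuation $-m\,v_\pi(p)$ directly from the closed form $\sum_{i} \clr_i q^{m-i+1}p^{i-1}$; this decouples distinctness from the particular colours chosen after the first one, a mild simplification. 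For keeping the gap rational, the paper makes the explicit greedy choice $\clr_i = -\floor{\gap{\clr_1\ldots\clr_{i-1}}}$, which confines the running gap to $\intervaloo{0, \frac{1}{\disc}}$ (safely below $\maxDS$ since $k > \frac{1}{\disc} - 1$), whereas you only prove that \emph{some} admissible colour exists, via the forward-invariant window $\intervalco{-\frac{k-1}{1-\disc}, \beta}$ with $\frac{1}{\disc} < \beta < \maxDS$. That invariance argument is sound: the admissible colours form a half-open interval of length $(\beta-\alpha)\disc > 1$ with left endpoint $L \le k-1$ and right endpoint $> -k$, and the two-case check you leave implicit (take $\lceil L\rceil$ if $L \ge -k$, and $-k$ otherwise) is immediate from these bounds, so I do not count it as a gap. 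Both hypotheses are used exactly where you say they are; the net effect is a proof that is slightly more abstract but interchangeable with the paper's.
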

\begin{proof}
	We assume that $\disc = \frac{p}{q}$ with $p, q\in\IN$ co-prime, $p\ge 2$ and $q > p$, and we show that the $\gapSolo$ function takes infinitely many values.
	To do so, we exhibit an infinite word $\word = \clr_1\clr_2\ldots\in\colors^\omega$ such that the sequence of rationals $(\gap{\clr_1\ldots\clr_i})_{i\ge 1}$ never takes the same value twice.

	We will use the following inductive property of gaps: for $\word\in\colors^*$ and $\clr\in\colors$,
	\begin{align} \label{eq:gapInd}
		\gap{\word\clr} = \frac{\gap{\word}}{\disc} + \frac{\disc^{\card{\word}-1}\clr}{\disc^{\card{\word}}}
						= \frac{1}{\disc}\left(\gap{\word} + \clr\right),
	\end{align}
	unless some gap in this equation equals $\top$ or $\bot$.
	Notice that under our hypotheses, $\ceil{\frac{1}{\disc} - 1} = \floor{\frac{1}{\disc}}$ (this equality is not verified when $\disc = \frac{1}{n}$ for some integer $n \ge 1$).

	We set $\clr_1 = 1$.
	Then, $\gap{\clr_1} = \frac{1}{\disc} = \frac{q}{p}$.
	Inductively, if $\clr_1, \ldots, \clr_{i-1}$ are defined, we set $\clr_i = - \floor{\gap{\clr_1\ldots\clr_{i-1}}}$ (we remove the largest possible integer from the current gap, while keeping a positive gap value).
	We set $g_i = \gap{\clr_1\ldots\clr_i}$ for conciseness.

	We first show that if all $g_i$'s are rational, then no two $g_i$'s can be equal.
	To do so, we show inductively that the reduced denominator of fraction $g_i$ is $p^i$ for all $i \ge 1$.
	This is true for $i = 1$.
	For $i > 1$, assume it is true for $i - 1$.
	Then, $g_{i-1} = \frac{m}{p^{i-1}}$ for some $m$ co-prime with $p$.
	Using~\eqref{eq:gapInd},
	\[
	g_i
	= \frac{1}{\disc}\cdot\left(g_{i-1} + \clr_i\right)
	= \frac{1}{\disc}\cdot\left(\frac{m}{p^{i-1}} + c_i\right)
	= \frac{q(m + c_ip^{i-1})}{p^i}.
	\]
	This last fraction is irreducible: $q$ and $p$ are co-prime, and the fact that $m$ and $p$ are co-prime implies that $m + c_ip^{i-1}$ and $p$ are co-prime.

	We now prove by induction that our scheme is well-defined by showing that for all $i\ge 1$, $\clr_i \in \colors$ and $0 < g_i \le \frac{1}{\disc}$.
	This is true for $i = 1$ (as $k \ge 1$ for any possible value of $\disc$).
	For $i > 1$, if this is true for $i - 1$, then $-\clr_i = \floor{g_{i-1}} \le \floor{\frac{1}{\disc}}$, so $\clr_i\in\colors$.
	Moreover, $g_i = \frac{1}{\disc}\left(g_{i-1} + \clr_i\right)$.
	Since $g_i$'s cannot be integers (as their reduced denominator is not $1$ by an earlier property), we have that $g_{i-1} + \clr_i$ is not an integer either.
	Therefore, $0 < g_{i-1} + \clr_i < 1$, so $0 < g_i < \frac{1}{\disc}$.
	As $\frac{1}{\disc} < \frac{k}{1-\disc} = \maxDS$, the values of the considered gaps are never $\top$ or $\bot$.
\end{proof}

\begin{remark}
	For $\disc\in\intervaloo{0, 1}\cap\IQ$ with $\disc\neq\frac{1}{n}$ for all integers $n\ge 1$, $k \ge \ceil{\frac{1}{\disc} - 1}$, and $\colors = \intervalcc{-k, k} \cap \IZ$, Proposition~\ref{prop:DSPrefInd} along with Theorem~\ref{thm} implies that any chromatic skeleton is insufficient to play optimally (for at least one player).
	However, this does not directly give an explicit arena in which some player requires infinite memory to play optimally.
	Here, we show how to construct such an arena given the extra results from Appendix~\ref{app:DS}.

	The proof of Proposition~\ref{prop:infGaps} gives us $\clr_1\clr_2\ldots\in\colors^\omega$ such that $(\gap{\clr_1\ldots\clr_i})_{i\ge 1}$ is a sequence of distinct values in $\intervalcc{0, \maxDS}$.
	Hence, by compacity of $\intervalcc{0, \maxDS}$, there exists a subsequence $(i_j)_{j\ge 1}$ and $x\in\intervalcc{0, \maxDS}$ such that $\lim_{j\to\infty}\gap{\clr_1\ldots\clr_{i_j}} = x$.
	We can further extract a subsequence such that either all elements are greater than $x$, or all elements are less than $x$.
	We assume w.l.o.g.\ that for all $j \ge 1$, $\gap{\clr_1\ldots\clr_{i_j}} < x$ (this implies that $x\neq 0$).
	The proof is symmetric if all the gaps are greater than $x$ (which would imply that $x\neq \maxDS$).

	By Proposition~\ref{prop:numbRep}, for all $\varepsilon > 0$ sufficiently small, there exists $\word_\varepsilon\in\colors^\omega$ such that $\DS{\disc}{\word_\varepsilon} = -x + \varepsilon$.
	We can define an infinite arena in which $\Ptwo$ needs infinite memory to win, depicted in Figure~\ref{fig:DSInf}.
	In this arena, $\Pone$ may choose to reach a gap arbitrarily close (but not equal) to $x$ in $\s_2$, and then $\Ptwo$ is always able to bring the discounted sum below $0$ by choosing a word reaching a discounted sum sufficiently close to $-x$.
	\begin{figure}[t]
		\centering
		\begin{tikzpicture}[every node/.style={font=\small,inner sep=1pt}]
			\draw (0,0) node[rond] (s1) {$\s_1$};
			\draw ($(s1)+(2.5,0)$) node[carre] (s2) {$\s_2$};
			\draw ($(s2)+(2.5,0)$) node[inner sep=3pt] (dots) {$\dots$};
			\draw ($(s2)+(2,1.5)$) node[inner sep=3pt] (dots2) {\reflectbox{$\ddots$}};

			\draw (s1) edge[-latex',out=60,in=180-60,decorate,distance=1.1cm] node[above=3pt] {$\clr_1\ldots\clr_{i_1}$} (s2);
			\draw (s1) edge[-latex',out=15,in=180-15,decorate] node[above=2pt] {$\clr_1\ldots\clr_{i_2}$} (s2);
			\draw (s1) edge[draw=none,out=-15,in=180+15,decorate] node[] {$\vdots$} (s2);

			\draw (s2) edge[-latex',decorate] node[above=5pt,xshift=-2pt] {$\word_{1}$} (dots2);
			\draw (s2) edge[-latex',decorate] node[above=3pt] {$\word_{\frac{1}{2}}$} (dots);
			\draw (s2) edge[draw=none,out=-15,decorate] node[] {$\vdots$} ($(s2)+(2.5,-1.5)$);
		\end{tikzpicture}
		\caption{Infinite arena in which $\Ptwo$ needs infinite memory to win from $\s_1$ for condition $\DSObj$ for $\disc\in\intervaloo{0, 1}\cap\IQ$ with $\disc\neq\frac{1}{n}$ for all integers $n\ge 1$, $k = \ceil{\frac{1}{\disc} - 1}$, and $\colors = \intervalcc{-k, k} \cap \IZ$.}
		\label{fig:DSInf}
	\end{figure}%
\end{remark}

We now prove the claim about the mean-payoff condition (Section~\ref{sec:otherObj}) stating that it is not $\memSkel$-cycle-consistent for any $\memSkel$.
\begin{lemma} \label{lem:MPCyc}
	For all skeletons $\memSkel$, $\MPObj$ is not $\memSkel$-cycle-consistent.
\end{lemma}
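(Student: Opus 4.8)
The plan is to replay the $\memSkelTriv$-argument from the main text, but relative to a well-chosen state of $\memSkel$. Fix $\colors = \{-1, 1\}$ and $\memSkel = \memSkelFull$. Since $\MPObj$ is prefix-independent it is $\memSkel$-prefix-independent, so $\memSkel$-cycle-consistency may be checked state by state: it requires $(\cc\loseCyc{\memState})^\omega \subseteq \comp{\MPObj}$ for every $\memState \in \memStates$ (and the dual inclusion for $\winCyc{\memState}$). For a cycle $\cyc = (\memState_1, \clr_1)\ldots(\memState_k, \clr_k)$ write $\mathsf{net}(\cyc) = \sum_{i} \clr_i$; since $\MP{(\colHatFin(\cyc))^\omega} = \mathsf{net}(\cyc)/\card{\cyc}$, a cycle is losing exactly when $\mathsf{net}(\cyc) < 0$ and winning exactly when $\mathsf{net}(\cyc) \ge 0$. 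Thus it suffices to produce a single state $\memState^*$ and cycles $\cycBis_1, \cycBis_2, \ldots \in \memCyclesOn{\memState^*}$ with $\mathsf{net}(\cycBis_n) < 0$ for all $n$, yet $\colHatInf(\cycBis_1\cycBis_2\cdots) \in \MPObj$.

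First I would find $\memState^*$, namely a state carrying both a cycle of strictly positive net and a cycle of strictly negative net. Take a sink strongly connected component $\mathcal{C}$ of $\memSkel$ (one with no edge leaving it; it exists since $\memStates$ is finite and nonempty). As $\memUpd$ is total and $\colors = \{-1,1\}$, every state has a $1$- and a $(-1)$-successor, and $\mathcal{C}$ is closed under $\memUpd$; iterating $\memUpd(\cdot,1)$ inside the finite set $\mathcal{C}$ reaches a state $\memState_+$ carrying a cycle reading only $1$'s, and symmetrically $\mathcal{C}$ contains a state $\memState_-$ carrying a cycle $\cyc^-_0$ reading only $(-1)$'s. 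As $\memState_+, \memState_-$ lie in the same component, pick $\memPth \in \memPathsOn{\memState_+}{\memState_-}$ and $\memPth' \in \memPathsOn{\memState_-}{\memState_+}$. Then $\memState^* := \memState_+$ carries the positive-net cycle (the all-$1$ cycle, call it $\cyc^+$) and also the cycle $\cyc^- := \memPth (\cyc^-_0)^L \memPth'$, whose net is $\mathsf{net}(\memPth) + \mathsf{net}(\memPth') - L\card{\cyc^-_0}$, which is negative once $L$ is large enough. Put $s_+ = \mathsf{net}(\cyc^+) \ge 1$, $s_- = -\mathsf{net}(\cyc^-) \ge 1$, and $\ell_\pm = \card{\cyc^\pm}$.

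Then I would set $\cycBis_n = (\cyc^+)^{s_- n}(\cyc^-)^{s_+ n + 1} \in \memCyclesOn{\memState^*}$ for $n \ge 1$. Its net is $s_- n s_+ - (s_+ n + 1)s_- = -s_- < 0$, so each $\cycBis_n$ is losing, while $\card{\cycBis_n} = s_- n \ell_+ + (s_+ n + 1)\ell_-$ is linear in $n$ and tends to infinity. For $\word^* := \colHatInf(\cycBis_1\cycBis_2\cdots)$, after the prefix $\cycBis_1\cdots\cycBis_n$ the running sum equals $-n s_-$ (linear in $n$) while the running length $L_n = \sum_{k \le n}\card{\cycBis_k}$ is quadratic in $n$; moreover the next block $\cycBis_{n+1}$ has length linear in $n$, so inside it every partial sum stays within $\card{\cycBis_{n+1}}$ of $-n s_-$. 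Hence every running average from block $n{+}1$ on is of order $1/n \to 0$, so $\MP{\word^*} = 0 \ge 0$ and $\word^* \in \MPObj$. Since $\colHatFin(\cycBis_k) \in \cc\loseCyc{\memState^*}$ for every $k$, this shows $(\cc\loseCyc{\memState^*})^\omega \not\subseteq \comp{\MPObj}$, so $\MPObj$ is not $\memSkel$-cycle-consistent. The two points requiring care — and which make this more than the $\memSkelTriv$ case — are the existence of $\memState^*$ (the sink-SCC plus pumping step is needed precisely because an all-$1$ cycle and an all-$(-1)$ cycle could a priori sit in incomparable components) and the mean-payoff estimate of the last step, which must confirm that the running averages genuinely return to $0$ rather than hovering strictly below it; both are routine once the setup above is in place.
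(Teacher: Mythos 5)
Your proof is correct, and while the overall strategy is the paper's (defeat $\memSkel$-cycle-consistency by exhibiting one skeleton state and a sequence of losing cycles on it whose concatenation has mean payoff $0$), the two supporting steps are carried out quite differently. The paper works with the single family $\word_n = 1^n(-1)^{n+1}$: pumping $\word_n$ inside the finite skeleton yields a cycle $\word_n^{k_n}$ on some state, a pigeonhole over $n$ gives one state carrying such a cycle for infinitely many $n$, and the winning concatenation is built adaptively, choosing at each step an $n_i$ large enough that the running sum climbs above $0$ during the block of $1$'s, so the partial sums are non-negative infinitely often and the limsup of the averages is $\ge 0$. You instead invest in finding a state carrying cycles of both signs (bottom SCC, closed under both letters, hence containing an all-$1$ cycle and an all-$(-1)$ cycle that can be joined), and then use the explicitly balanced blocks $(\cyc^+)^{s_-n}(\cyc^-)^{s_+n+1}$, each of net exactly $-s_-$, so that the cumulative sum grows linearly while the length grows quadratically and all running averages tend to $0$. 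Your route replaces the paper's pigeonhole-over-$n$ and adaptive subsequence by an SCC existence argument plus a closed-form estimate, which makes the membership of the concatenation in $\MPObj$ a short computation; the paper's choice of cycles sidesteps your co-location step, since each $\word_{n}^{k}$ is already a losing cycle on whatever state it loops on. Two harmless points to make explicit: the reduction to ``checking state by state'' uses the paper's standing assumption that every skeleton state is reachable from $\memInit$ (needed to produce a word $\word$ with $\memUpdHat(\memInit, \word) = \memState^*$), and in the degenerate case $\memState_+ = \memState_-$ the connecting paths $\memPth, \memPth'$ should be taken to be the empty paths allowed by the paper's conventions.
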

\begin{proof}
	Let $\memSkel = \memSkelFull$ be a skeleton.
	For $n \in \IN$, let
	\[
		\word_n = \underbrace{1\ldots1}_{n\ \text{times}}\underbrace{-1\ldots{-1}}_{n+1\ \text{times}}.
	\]
	Let $\memStates_n$ be the set of states $\memState\in\memStates$ such that there exists $k \ge 1$ with $\memState = \memUpdHat(\memState, \word_n^k)$.
	Each~$\memStates_n$ is non-empty as, $\memStates$ being finite, iterating the function $\memState \mapsto \memUpdHat(\memState, \word_n)$ necessarily goes multiple times through at least one state.
	Let $\memState\in\memStates$ be a state in set $\memStates_n$ for infinitely many $n$'s, and~$\word$ be any finite word in $\cc{\memPathsOn{\memInit}{\memState}}$.
	Let $n_1, n_2, \ldots{}$ be the indices such that $\memState\in\memStates_{n_i}$, and let $k_1, k_2, \ldots{}$ be such that $\memState = \memUpdHat(\memState, \word_{n_i}^{k_i})$.
	Every word $\word_{n_i}^{k_i}$ is a losing cycle after any finite word (in particular after $\word$).
	However, it is possible to find a subsequence of $(\word_{n_i}^{k_i})_{i\ge 1}$ with a non-negative mean payoff by always taking a word $\word_{n_i}$ that bring the sum of the colors above $0$ during the first $n_i$ $1$'s.
\end{proof}
\end{document}